\setlist[itemize]{leftmargin=*}
\pgfplotsset{compat=1.16}
\let\C\relax
\newcommand{\R}{\mathbb{R}}
\newcommand{\C}{\mathbb{C}}
\newcommand{\Q}{\mathbb{Q}}
\newcommand{\N}{\mathbb{N}}
\newcommand{\Z}{\mathbb{Z}}
\renewcommand{\S}{\mathbb{S}}
\newcommand{\mcS}{\mathcal{S}}
\newcommand{\mcG}{\mathcal{G}}
\newcommand{\mcC}{\mathcal{C}}
\newcommand{\mcT}{\mathcal{T}}
\newcommand{\mcL}{\mathcal{L}}
\newcommand{\mcD}{\mathcal{D}}
\newcommand{\mcA}{\mathcal{A}}
\newcommand{\mcN}{\mathcal{N}}
\newcommand{\mcE}{\mathcal{E}}
\newcommand{\x}{\vec{x}}
\newtheoremstyle{theorems}
  {3pt}
  {3pt}
  {\itshape}
  {}
  {\bfseries}
  {.}
  { }
  {}
\newtheoremstyle{proofparts}
  {3pt}
  {0pt}
  {}
  {\parindent}
  {\scshape}
  {:}
  {\newline}
  {}
\newtheoremstyle{claims}
  {2pt}
  {2pt}
  {}
  {\parindent}
  {\bfseries}
  {.}
  { }
  {}
\newcommand{\newreptheorem}[2]{\newtheorem*{rep@#1}{\rep@title}\newenvironment{rep#1}[1]{\def\rep@title{#2 \ref*{##1}}\begin{rep@#1}}{\end{rep@#1}}}
\theoremstyle{theorems}
\newtheorem{thm}{Theorem}[section]
\newtheorem{lemma}[thm]{Lemma}
\newtheorem*{lemma*}{Lemma}
\newtheorem{prop}[thm]{Proposition}
\theoremstyle{definition}
\newtheorem{defn}[thm]{Definition}
\newtheorem{fakedef}[thm]{``Definition''}
\newtheorem{remark}[thm]{Remark}
\newtheorem{notation}[thm]{Notation}
\theoremstyle{proofparts}
\theoremstyle{claims}
\newtheorem*{claim*}{Claim}
\crefname{thm}{theorem}{theorems}
\crefname{problem}{problem}{problems}
\crefname{lemma}{lemma}{lemmas}
\crefname{cor}{corollary}{corollaries}
\crefname{prop}{proposition}{propositions}
\crefname{conj}{conjecture}{conjectures}
\crefname{defn}{definition}{definitions}
\crefname{note}{note}{notes}
\crefname{ex}{example}{examples}
\crefname{remark}{remark}{remarks}
\crefname{notation}{notation}{notations}
\crefname{assumption}{assumption}{assumptions}
\crefname{claim}{claim}{claims}
\crefname{claim*}{claim}{claims}
\renewcommand{\subsectionmark}[1]{}
\newcommand{\Biggg}{\bBigg@{3}}
\newcommand{\vast}{\bBigg@{4}}
\newcommand{\Vast}{\bBigg@{5}}
\newcommand{\norm}[1]{\left\Vert #1 \right\Vert}
\newcommand{\abs}[1]{\left\vert #1 \right\vert}
\DeclareMathOperator{\supp}{supp}
\DeclareMathOperator{\dist}{dist}
\DeclareMathOperator{\Vol}{Vol}
\DeclareMathOperator{\conv}{conv}
\definecolor{emphcolor}{rgb}{0,0,1}           
\newcommand{\ip}[2]{\left\langle #1 \middle\vert #2 \right\rangle}
\newcommand{\longip}[3]{\left\langle #1 \middle\vert #2 \middle\vert #3 \right\rangle}
\newcommand{\expect}[1]{\left\langle #1 \right\rangle}
\newcommand{\ud}{\,\textnormal{d}}
\let\epsilon\varepsilon
\let\varepsilon\epsilon
\newcommand{\eps}{\epsilon}
\title{Ground state energy of the dilute spin-polarized Fermi gas: Upper bound via cluster expansion}
\author{Asbjørn Bækgaard Lauritsen\thanks{\href{mailto:alaurits@ist.ac.at}{\nolinkurl{alaurits@ist.ac.at}}}\,\,}
\author{Robert Seiringer\thanks{\href{mailto:robert.seiringer@ist.ac.at}{\nolinkurl{robert.seiringer@ist.ac.at}}}}
\affil{IST Austria, Am Campus 1, 3400 Klosterneuburg, Austria}
\begin{document}
\maketitle
\begin{abstract}
We prove an upper bound on the ground state energy of the dilute spin-polarized  Fermi gas
capturing the leading correction to the kinetic energy resulting from repulsive interactions. 
One of the main ingredients in the proof is a rigorous implementation of the fermionic cluster expansion of 
Gaudin, Gillespie and Ripka (Nucl. Phys. A, 176.2 (1971), pp. 237–260).
\end{abstract}

\tableofcontents

\section{Introduction and main results}
We consider a Fermi gas of $N$ particles in a box $\Lambda = \Lambda_L = [-L/2, L/2]^d$
in $d$ dimensions, $d=1,2,3$. We will mostly focus on the case $d=3$. 
The particles interact via a two-body interaction $v$, which we assume to be positive, radial and of compact support.
In particular we allow for $v$ to have a hard core, i.e. $v(x)=\infty$ for $|x|\leq r$ for some $r>0$.
In natural units where $\hbar=1$ and the mass of the particles is $m=1/2$ the Hamiltonian of the system takes the form 
\[ 
	H_N = \sum_{i=1}^{N} -\Delta_{x_i} + \sum_{j < k} v(x_j - x_k).
\]
We are interested in spin-polarized fermions, meaning that all the spins are aligned. 
We may thus equivalently forget about the spin. 
This means that the Hamiltonian should be realized on the fermionic $N$-particle space of antisymmetric wavefunctions 
$L^2_a(\Lambda^{N}) = \bigwedge^N L^2(\Lambda)$. 
We consider the ground state energy density in the thermodynamic limit
\[
	e_d(\rho) 
		= \lim_{\substack{L\to \infty \\ N/L^d \to \rho}} \inf_{\substack{\Psi_N \in L^2_a(\Lambda^N) \\ \norm{\Psi_N}_{L^2}^2 = 1}} 
		\frac{\longip{\Psi_N}{H_N}{\Psi_N}}{L^d}.
\]
It is a result of Robinson \cite{Robinson.1971} that the thermodynamic limit exists, and that 
it is independent  of boundary conditions (say, Dirichlet, Neumann or periodic).

We study the dilute limit, where the inter-particle spacing is large compared to the length scale set by the interaction. 
For spin-polarized fermions, the relevant lengthscale  is the \emph{$p$-wave scattering length} $a$ 
which we define below.
Our main theorem is the upper bound 
\[
	e_{d=3}(\rho) 
		\leq \frac{3}{5}(6\pi^2)^{2/3} \rho^{5/3} 
    + \frac{12\pi}{5}(6\pi)^{2/3} a^3\rho^{8/3}
    \biggl[
		1 - \frac{9 }{35}(6\pi^2)^{2/3}a_0^2 \rho^{2/3} 
		+ o\left((a^3\rho)^{2/3}\right)
	\biggr]
\]
in the dilute limit $a^3\rho \ll 1$, where $a_0$ is another length related to the scattering length and effective range, also defined below.
The leading term $\frac{3}{5}(6\pi^2)^{2/3} \rho^{5/3}$ is the kinetic energy density of the free Fermi gas. 
The next term $\frac{12\pi}{5}(6\pi)^{2/3} a^3\rho^{8/3}$ naturally results from the two-body interactions using that the two-body density vanishes quadratically at incident points, leading to the cubic behavior in the scattering length. 
Finally, the correction term of order $a^3a_0^2 \rho^{10/3}$ is a consequence of the fourth-order behaviour of the two-particle density.

This formula is expected to be sharp \cite{Ding.Zhang.2019}. 
(How the length $a_0$ is related to the effective range appearing in \cite{Ding.Zhang.2019} is perhaps not immediate. We discuss this below.)
To order $a^3\rho^{8/3}$ the formula follows by truncating expansion formulas of Jastrow \cite{Jastrow.1955}, 
Iwamoto and Yamada \cite{Iwamoto.Yamada.1957}, Clark and Westhaus \cite{Clark.Westhaus.1968,Westhaus.Clark.1968} 
or Gaudin, Gillespie and Ripka \cite{Gaudin.Gillespie.ea.1971}.
Additionally the formula (to order $a^3\rho^{8/3}$) is claimed by Efimov and Amus'ya \cite{Efimov.Amusya.1965,Efimov.1966}, 
see also \cite{Wellenhofer.Drischler.ea.2020} and references therein.
Our result thus verifies this formula from the physics literature, at least as an upper bound.
An important ingredient in our proof is a rigorous implementation of the cluster expansion introduced by Gaudin, Gillespie and Ripka \cite{Gaudin.Gillespie.ea.1971}.

For the dilute Fermi gas one can also study the setting where different spins are present. 
This is studied in \cite{Lieb.Seiringer.ea.2005,Falconi.Giacomelli.ea.2021,Giacomelli.2023}, see also \cite{Giacomelli.2022}.
This system is realized by having the Hamiltonian $H_N$ act on a definite spin-sector 
$L^2_a(\Lambda^{N_\uparrow})\otimes L^2_a(\Lambda^{N_\downarrow})$,
where one fixes the number of spin-up and -down particles to be $N_\uparrow$ and $N_\downarrow = N - N_\uparrow$ respectively. 
The energy density satisfies (in $3$ dimensions)
\[
	e_{d=3}(\rho_\uparrow,\rho_\downarrow) = \frac{3}{5}(6\pi^2)^{2/3} \left(\rho_\uparrow^{5/3} + \rho_\downarrow^{5/3}\right)
		+ 8\pi a_{s} \rho_\downarrow \rho_\uparrow + o(a_{s}\rho^2),
\]
where $\rho_\sigma$ denotes the density of particles of spin $\sigma\in\{\uparrow,\downarrow\}$ 
and $\rho = \rho_\downarrow + \rho_\uparrow$.
Here $a_{s}$ is the $s$-wave scattering length of the interaction. 
The leading term is again the kinetic energy density of a free Fermi gas.
The next to leading order correction was first shown in \cite{Lieb.Seiringer.ea.2005} 
and later in \cite{Falconi.Giacomelli.ea.2021,Giacomelli.2023} using different methods.
The next correction is conjectured to be the Huang--Yang term \cite{Huang.Yang.1957} of order $a_{s}^2\rho^{7/3}$,
see \cite{Giacomelli.2022,Giacomelli.2023}.
Note that even the Huang--Yang term of order $a_{s}^2\rho^{7/3}$ is much larger than the leading correction 
in the spin-polarized case of order $a^3 \rho^{8/3}$.

For the dilute Fermi gas with spin, effectively only fermions of different spins interact (to leading order).
For fermions of different spins, the Pauli exclusion principle does not give any restriction, 
and the energy correction of the interaction is the same as for a dilute Bose gas (to leading order). 
For fermions of the same spin, the Pauli exclusion principle gives an inherent repulsion 
between the fermions. This gives the effect that the energy correction of the interaction is much smaller 
for fermions all of the same spin.

In addition to the dilute Fermi gas, much work has been done on  dilute Bose gases.
Here one realizes the Hamiltonian on 
the bosonic $N$-particle space of symmetric functions $L^2_s(\Lambda^N) = L^2(\Lambda)^{\otimes_{\textnormal{sym}}N}$ instead. 
One has the asymptotic formula (in $3$ dimensions)
\[ 
e_{d=3}(\rho) 
	= 4\pi a_{s} \rho^2\left(1 
	+ \frac{128}{15\sqrt{\pi}} (a_{s}^3\rho)^{1/2} 
	+ o\left( (a_{s}^3\rho)^{1/2}\right)\right).
\]
The leading term was shown by Dyson \cite{Dyson.1957} for an upper bound and Lieb and Yngvason \cite{Lieb.Yngvason.1998}
for the lower bound. The next correction, known as the Lee--Huang--Yang correction \cite{Lee.Huang.ea.1957},
was shown as an upper bound in \cite{Yau.Yin.2009,Basti.Cenatiempo.ea.2021} 
and as a lower bound in \cite{Fournais.Solovej.2020,Fournais.Solovej.2022}.
In some sense, the term $\frac{12\pi}{5}(6\pi)^{2/3}a^3\rho^{8/3}$ for the same-spin fermions is 
the fermionic analogue of the $4\pi a_{s} \rho^2$ for the bosons. 
It is the leading correction to the energy of the free Fermi/Bose gas.

Finally also some lower-dimensional problems have been studied. 
The $2$-dimensional dilute Fermi gas with different spins present is studied in \cite{Lieb.Seiringer.ea.2005},
where the leading correction to the kinetic energy is shown. 
Recently also the bosonic problem has been studied in \cite{Fournais.Girardot.ea.2022}.
We show that for the spin-polarized setting in $2$ dimensions we have the upper bound 
\[
	e_{d=2}(\rho) \leq 
      2\pi\rho^2 + 4\pi^2a^2\rho^{3} 
		[1 + o(1)]
    \qquad 
    \textnormal{as } a^2\rho \to 0.
\]
Additionally, the $1$-dimensional spin-polarized Fermi gas is studied in \cite{Agerskov.Reuvers.ea.2022}.
Agerskov, Reuvers and Solovej \cite{Agerskov.Reuvers.ea.2022} show that 
\[
	e_{d=1}(\rho) = \frac{\pi^2}{3}\rho^3 + \frac{2\pi^2}{3}a\rho^{4} \left[1 + o(1)\right]
  \qquad 
  \textnormal{as } a\rho \to 0.
\]
We give a new proof of this as an upper bound with an improved error term.

\subsection{Precise statement of results}
We now give the precise statement of our main theorems. 
We start with the $3$-dimensional setting.
First, we define the $p$-wave scattering length.
(See also \cites[Appendix A]{Lieb.Yngvason.2001}{Seiringer.Yngvason.2020}.)
\begin{defn}\label{defn.scattering.length}
The \emph{$p$-wave scattering length} $a$ of the interaction $v$ is defined by the  minimization problem
\[
12\pi a^3 = \inf \left\{\int_{\R^3} |x|^2\left( |\nabla f_0(x)|^2 + \frac{1}{2}v(x) |f_0(x)|^2\right) \ud x : f_0(x)\to 1 \textnormal{ as } |x|\to \infty \right\}.
\]
The minimizer $f_0$ is the \emph{($p$-wave) scattering function}.
(In case $v$ has a hard core, i.e. $v(x)=\infty$ for $|x|\leq r$ one has to interpret $v(x)\ud x$ as a measure. 
Necessarily then the minimizer has $f_0(x)=0$ for $|x|\leq r$.)
\end{defn}

\noindent 
We collect properties of the scattering function $f_0$ in \Cref{sec.scattering.function}.
We  define the length $a_0$ as follows.
\begin{defn}
The length $a_0$ is given by 
\[
	3a_0^2 = \frac{1}{12\pi a^3} \int_{\R^3} |x|^4 \left( |\nabla f_0(x)|^2 + \frac{1}{2}v(x) |f_0(x)|^2\right) \ud x,
\]
where $f_0$ is the scattering function of \Cref{defn.scattering.length}.
The normalization is chosen so that a hard core interaction of radius $R_0$ has $a_0=a=R_0$, see \Cref{rmk.hard.core.a.a0}.
(If $v$ has a hard core we interpret $v(x)\ud x$ as a measure as in \Cref{defn.scattering.length}.)
\end{defn}

\noindent
We can now state our main theorem. 
\begin{thm}\label{thm.main}
Suppose that $v\geq 0$ is radial and compactly supported.
Then, for sufficiently small $a^3\rho$, the ground-state energy density satisfies 
\[
\begin{aligned}
	e_{d=3}(\rho) 
		& \leq \frac{3}{5}(6\pi^2)^{2/3} \rho^{5/3} 
		\\ & \qquad 
		+ \frac{12\pi}{5}(6\pi)^{2/3} a^3\rho^{8/3}
		\biggl[
		1 - \frac{9 }{35}(6\pi^2)^{2/3}a_0^2 \rho^{2/3} 
		+ O\left( (a^3\rho)^{2/3 + 1/21} |\log(a^3\rho)|^6\right)
		\biggr].
\end{aligned}
\]
\end{thm}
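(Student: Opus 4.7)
The strategy is a variational upper bound. I would use a Jastrow--Slater trial state
\[
    \Psi_{\mathrm{trial}} \;=\; F\, \Psi_{\mathrm{FS}},
    \qquad F(x_1,\dots,x_N) \;=\; \prod_{1\le i<j\le N} f(x_i-x_j),
\]
where $\Psi_{\mathrm{FS}}$ is the Slater determinant filling the Fermi sea in $\Lambda_L$ (so it contributes the leading kinetic density $\tfrac{3}{5}(6\pi^2)^{2/3}\rho^{5/3}$), and $f$ is a truncation of the $p$-wave scattering function $f_0$: take $f=f_0$ on $\{|x|\le b\}$ and smoothly glue to $f\equiv 1$ beyond, for a cutoff length $b$ with $a\ll b\ll \rho^{-1/3}$ to be optimized at the end.

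Using that $f_0$ solves the $p$-wave scattering equation and integrating by parts against $F$, the trial-state energy density can be brought into the Jastrow form
\[
    \tfrac{3}{5}(6\pi^2)^{2/3}\rho^{5/3} \;+\; \tfrac12 \int \bigl(|\nabla f|^2 + \tfrac12 v f^2\bigr)(r)\,\bar\rho_2^{\mathrm{trial}}(r)\,\mathrm{d}r
    \;+\; (\text{higher cluster terms}),
\]
where $\bar\rho_2^{\mathrm{trial}}(r)$ is the translation-averaged pair density of the trial state. Because $\bar\rho_2^{\mathrm{trial}}$ vanishes at the origin quadratically by Pauli exclusion, expanding for small $|r|$,
\[
    \bar\rho_2^{\mathrm{trial}}(r) \;=\; C_1\,\rho^{8/3}|r|^2 \;+\; C_2\,\rho^{10/3}|r|^4 \;+\; \cdots,
\]
and pairing with the integrand $|\nabla f_0|^2+\tfrac12 v f_0^2$ --- whose second and fourth moments are $12\pi a^3$ and $36\pi a^3 a_0^2$ by the definitions of $a$ and $a_0$ --- yields exactly the two announced terms $\tfrac{12\pi}{5}(6\pi)^{2/3} a^3 \rho^{8/3}$ and its $a_0^2\rho^{2/3}$ correction.

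To compute the correlation functions $\rho_n^{\mathrm{trial}}$ rigorously I would implement the Gaudin--Gillespie--Ripka cluster expansion. Setting $g=f^2-1$ (supported on scale $b$) and writing $F^2=\prod_{i<j}(1+g_{ij})$, one expands
\[
    \rho_n^{\mathrm{trial}}(x_1,\dots,x_n)
    \;=\; \frac{\int \det\!\bigl[K(x_i-x_j)\bigr]\, \prod_{i<j}(1+g_{ij})\,\mathrm{d}x_{n+1}\cdots\mathrm{d}x_N}
               {\int \det\!\bigl[K(x_i-x_j)\bigr]\, \prod_{i<j}(1+g_{ij})\,\mathrm{d}x_1\cdots\mathrm{d}x_N}
\]
as a sum over connected cluster diagrams involving the free Fermi kernel $K$. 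Each additional $g$-factor costs roughly $\|g\|_{L^1}\,\rho \sim a b^2 \rho \ll 1$, so the series is organized in a small parameter. Computing diagrams up to order $(a b^2\rho)^2$ exactly recovers the leading interaction term and the $a_0^2$ correction; the remaining diagrams contribute only to the error.

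The hard part will be to make this fermionic cluster analysis rigorous to the required precision. Three issues look delicate: (i) proving absolute convergence of the expansion despite the sign cancellations in the determinants and the merely polynomial decay of $K$; (ii) honestly tracking the logarithmic factors that arise when integrating $K$ against products of $g$, which account for the $|\log(a^3\rho)|^6$ in the stated error; and (iii) optimizing the cutoff $b$ against the competition between the truncation error (replacing $f$ by $f_0$), the cluster expansion tail, and boundary effects from the soft cutoff, so that the residual error is $O\bigl((a^3\rho)^{2/3+1/21}|\log(a^3\rho)|^6\bigr)$. This quantitative cluster analysis for fermions is the core technical content of the paper.
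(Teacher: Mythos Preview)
Your overall strategy is correct and matches the paper's: Jastrow--Slater trial state, Gaudin--Gillespie--Ripka expansion for the reduced densities, then read off the two leading interaction terms from the second and fourth moments of $|\nabla f_0|^2+\tfrac12 v f_0^2$ against the Taylor expansion of $\rho^{(2)}$. But there is a genuine gap in how you propose to make the cluster expansion converge.

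You take $\Psi_{\mathrm{FS}}$ to be the Slater determinant over the Fermi \emph{ball}. The convergence criterion for the GGR expansion involves the quantity $\int_\Lambda |\gamma_N^{(1)}(x)|\,\mathrm{d}x$, i.e.\ the Lebesgue constant of the momentum set. For the Fermi ball this is $\sim N^{1/3}$, so absolute convergence requires $N^{1/3} a^3\rho\log(b/a)<c$. On the other hand the finite-size error in the kinetic energy density is $\sim N^{-1/3}\rho^{5/3}$, and you need this to be smaller than the interaction term $\sim a^3\rho^{8/3}$, i.e.\ $N^{1/3}\gg (a^3\rho)^{-1}$. These two constraints are incompatible: with the Fermi ball there is no admissible choice of $N$. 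The paper's fix is to replace $B_F$ by a carefully constructed \emph{Fermi polyhedron} $P_F$ (a polyhedral approximation with $s$ vertices), whose Lebesgue constant is only $\sim s(\log N)^3$; one then has an extra parameter $s$ to tune, and the kinetic error from the polyhedral approximation is $O(s^{-2}\rho^{5/3})$. Optimizing $s$, $b$, and $N$ together is what produces the peculiar exponent $2/3+1/21$ and the $|\log|^6$. A related point you omit is that even with the polyhedron the convergence is not uniform in $N$, so one cannot take the thermodynamic limit directly; the paper uses a box method, gluing together copies of the finite-$N$ trial state with Dirichlet boundary conditions.

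Two smaller corrections. First, your small parameter estimate $\|g\|_{L^1}\rho\sim ab^2\rho$ is off: since $f_0(x)\approx 1-a^3/|x|^3$ for $|x|>a$, one has $\int|g|\lesssim a^3\log(b/a)$, and the actual expansion parameter is $s\,a^3\rho\log(b/a)(\log N)^3$. Second, the leading and $a_0^2$ terms do not come from summing cluster diagrams to second order; they come directly from the free two-body density $\rho^{(2)}$ in the zeroth-order term of the expansion, while the diagram sums are error terms that must be bounded (this is where the ``derivative Lebesgue constants'' and the explicit computation of small diagrams enter).
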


\noindent
The essential steps in the proof are as follows.
\begin{enumerate}[(1)]

\item Show the absolute convergence of the formal cluster expansion formulas of \cite{Gaudin.Gillespie.ea.1971} for the reduced densities of a 
Jastrow-type trial state. The criterion for absolute convergence will not hold uniformly in the system size, and in order to allow for a larger particle number we need to 
 introduce the ``Fermi polyhedron'', described in \Cref{sec.polyhedron}, as an approximation to the Fermi ball. 
The formulas of \cite{Gaudin.Gillespie.ea.1971} 
are computed in \Cref{sec.calc.C_N,sec.calc.rho1,sec.calc.rho2,sec.calc.rho3} and stated in \Cref{thm.gaudin.expansion}. 
The absolute convergence is proven in \Cref{sec.abs.conv}.

\item Bound the energy of the Jastrow-type trial state. 
For this we shall in particular need bounds on ``derivative Lebesgue constants'' given in \Cref{lem.derivative.lebesgue.constant}
and proven in \Cref{sec.derivative.lebesgue.constant}.
The computation of the energy of such a Jastrow-type trial state is given in \Cref{sec.energy.in.box}.

\item Use a box method to glue together trial states in smaller boxes to obtain a bound in the thermodynamic limit. 
This is done in \Cref{sec.box.method}.
\end{enumerate}

\begin{remark}
The term of order $a^3a_0^2\rho^{10/3}$ is in fact the same as is claimed in \cite{Ding.Zhang.2019}. 
To see this we relate the \emph{effective range} $R_{\textnormal{eff}}$ to the length $a_0$.
In the physics literature the effective range is defined via the formula 
\begin{equation}
k^3 \cot \delta(k) = -\frac{3}{a^3} - \frac{1}{2R_{\textnormal{eff}}} k^2 + \textnormal{higher order in $k$}
\qquad k\to 0
\label{eqn.delta(k)}
\end{equation}
for the phase shift $\delta(k)$ of low energy $p$-wave scattering.
A formula for the effective range is found in \cite[Equation (56)]{Hammer.Lee.2010}. 
With this we find 
\begin{prop}\label{prop.effective.range}
The effective range is given by 
\begin{equation*}
R_{\textnormal{eff}}^{-1} = \frac{18}{5} a_0^2 a^{-3}.
\end{equation*}
\end{prop}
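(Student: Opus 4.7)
The plan is to relate the formula for the effective range in [Hammer--Lee (2010), Eq.~(56)] to the variational expression in the definition of $a_0$. The Euler--Lagrange equation of the minimization problem defining $a$ shows that $f_0$ satisfies the zero-energy $p$-wave scattering equation
\begin{equation*}
-f_0'' - \tfrac{4}{r} f_0' + \tfrac{1}{2} v f_0 = 0,
\end{equation*}
with $f_0(r) = 1 - a^3/r^3$ outside the support of $v$. After translating the Hammer--Lee formula into our conventions, the effective range can be written as
\begin{equation*}
\frac{a^6}{2 R_{\textnormal{eff}}} = \frac{1}{5}\int_0^\infty r^6 \bigl[|f_0'|^2 + \tfrac{1}{2} v |f_0|^2\bigr] \, dr.
\end{equation*}
Combining this with the identity
\begin{equation*}
\int_0^\infty r^6 \bigl[|f_0'|^2 + \tfrac{1}{2} v |f_0|^2\bigr] \, dr = 9\, a_0^2\, a^3,
\end{equation*}
which follows from the definition of $a_0$ by passing to spherical coordinates in $\R^3$, yields the claimed identity $R_{\textnormal{eff}}^{-1} = \tfrac{18}{5}\, a_0^2\, a^{-3}$.

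To verify the first displayed formula, I would derive the low-energy expansion $g_k(r) = f_0(r) + k^2 f_1(r) + O(k^4)$ of the $p$-wave scattering solution $\psi_k(x) = x_1\, g_k(|x|)$, matching outside the support of $v$ to a linear combination of the spherical Bessel functions $j_1(kr)/r$ and $n_1(kr)/r$ with phase shift $\delta(k)$. Using the effective range expansion \eqref{eqn.delta(k)} for $k^3 \cot \delta(k)$, one reads off
\begin{equation*}
f_1(r) = -\frac{r^2}{10} - \frac{a^3}{2r} + \frac{a^3}{6 R_{\textnormal{eff}}}
\end{equation*}
outside $\supp v$. Since $f_1$ satisfies the inhomogeneous equation $-f_1'' - (4/r) f_1' + \tfrac{1}{2} v f_1 = f_0$, multiplying by $r^4 f_0$ and integrating by parts twice cancels the potential against the zero-energy equation for $f_0$ and gives the Wronskian-type identity $\int_0^R r^4 f_0^2\,dr = R^4\bigl[f_0'(R) f_1(R) - f_0(R) f_1'(R)\bigr]$ for $R$ outside $\supp v$. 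A further integration by parts then converts $\int_0^\infty r^6 [|f_0'|^2 + \tfrac{1}{2} v |f_0|^2]\,dr$ into a linear combination of terms involving $\int_0^R r^4 f_0^2\,dr$ and explicit boundary expressions, producing the coefficient $5/2$ in the identity above.

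The main technical subtlety lies in keeping track of the several boundary terms at infinity appearing in these integrations by parts: they individually diverge (as $R^5$, $R^2$ or $R^{-1}$) but cancel in the correct combination using the explicit asymptotics of $f_0$ and $f_1$, leaving the finite constant $a^6/(2 R_{\textnormal{eff}})$. Beyond this bookkeeping, the argument is a direct scattering-theory computation together with integration by parts, and the cancellation is essentially forced by the fact that the scattering length $a$ is already encoded in the leading asymptotics of $f_0$.
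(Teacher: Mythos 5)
Your proposal is correct: I checked that your $f_1(r) = -\tfrac{r^2}{10} - \tfrac{a^3}{2r} + \tfrac{a^3}{6R_{\textnormal{eff}}}$ outside $\supp v$ follows from matching to $j_1,n_1$ with the expansion \eqref{eqn.delta(k)}, that the Wronskian identity $\int_0^R r^4 f_0^2\,dr = R^4[f_0'f_1 - f_0 f_1']$ evaluated at such $R$ gives $\int_0^R r^4 f_0^2\,dr = \tfrac{R^5}{5} - a^3R^2 - \tfrac{a^6}{R} + \tfrac{a^6}{2R_{\textnormal{eff}}}$, and that the subsequent integrations by parts against the scattering equation and the definition of $a_0$ (which indeed gives $\int_0^\infty r^6[|f_0'|^2 + \tfrac12 v f_0^2]\,dr = 9a_0^2a^3$) yield $R_{\textnormal{eff}}^{-1} = \tfrac{18}{5}a_0^2a^{-3}$. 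The back end of your argument coincides with the paper's: both reduce the claim to converting a weighted integral of $f_0^2$ into $\int_0^\infty r^6[|f_0'|^2+\tfrac12 v f_0^2]\,dr$ via integration by parts and the Euler--Lagrange equation, and then invoke the definition of $a_0$. The difference is at the front end: the paper simply cites \cite[Equations (22), (27), (56)]{Hammer.Lee.2010}, identifying their $u$ with $-\tfrac{a^3}{r^2}f_0$, to obtain the relation between $R_{\textnormal{eff}}$ and $\int_0^R r^4 f_0^2\,dr$, whereas you re-derive that relation from scratch through the order-$k^2$ expansion $g_k = f_0 + k^2 f_1 + O(k^4)$ and the Wronskian identity. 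Your route is self-contained (and in fact reproduces the cited formula exactly), at the cost of the extra scattering-theory bookkeeping; note also that the ambiguity $f_1 \mapsto f_1 + \beta f_0$ in normalizing the $O(k^2)$ term drops out of the Wronskian, so no choice needs to be justified there, and the only slip in your write-up is cosmetic: the boundary contribution of order $R^{-1}$ does not diverge but vanishes as $R\to\infty$, leaving the finite constant $\tfrac{a^6}{2R_{\textnormal{eff}}}$ as you state.
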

\noindent
Using this formula we recover the formula \cite[Equation (15)]{Ding.Zhang.2019} to order $\rho^{10/3}$. 
The formula of \cite{Ding.Zhang.2019} reads 
\begin{equation*}
\frac{e_{d=3}(\rho)}{\rho}
= k_F^2
  \left[\frac{3}{5} 
  + \frac{2}{5\pi}a^3k_F^3 
  - \frac{1}{35\pi} a^6 R_{\textnormal{eff}}^{-1} k_F^5 
  + \frac{2066 - 312\log 2}{10395\pi^2} a^6 k_F^6 
  + \textnormal{higher order}
  \right],
\end{equation*}
with $k_F = (6\pi^2\rho)^{1/3}$ the Fermi momentum.
We give the proof of \cref{prop.effective.range} in \Cref{sec.scattering.function} below.
\end{remark}

\begin{remark}[Numerical investigation]
The validity of the formula in \Cref{thm.main} is investigated numerically in \cite{Bertaina.Tarallo.ea.2023} using Quantum Monte Carlo simulations.
We plot their findings in \Cref{fig.plot} and compare them to the formula in \Cref{thm.main} 
and the claimed formula to order $\rho^{11/3}$ of \cite[Equation (15)]{Ding.Zhang.2019}.
\begin{figure}[htb]
\centering
\includegraphics{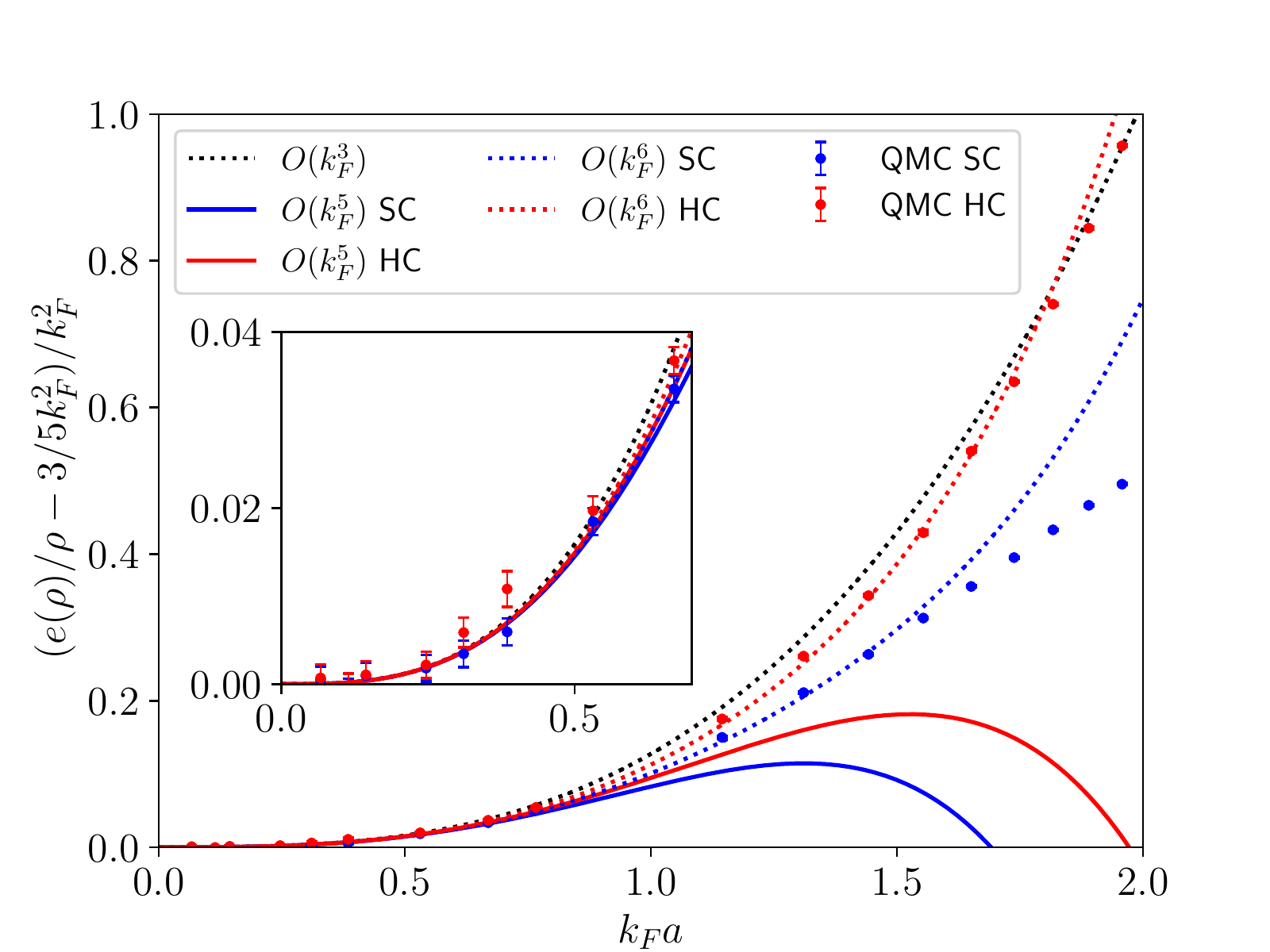}
\caption{Energies of dilute Fermi gasses. The curves and points labelled {\sf HC} are for a \emph{Hard Core} interaction of radius $a$. The curves and points labelled {\sf SC} 
are for a \emph{Soft Core} interaction of radius $2a$ and strength $V_0$ chosen so that it's scattering length is $a$.
(Meaning $v(x) = V_0 \chi_{|x|\leq 2a}$, $\chi$ being the characteristic function.)
The points (labelled {\sf QMC}) are \emph{Quantum Monte Carlo} simulations from \cite{Bertaina.Tarallo.ea.2023}.
The curves include the (conjectured) corrections up to the labelled order in $k_F = (6\pi^2\rho)^{1/3}$.}
\label{fig.plot}
\end{figure}
\end{remark}

\begin{remark}
One may weaken the assumptions on the interaction $v$ a bit at the cost of a longer proof. 
The compact support and that $v\geq 0$ are not strictly necessary. 
Essentially, we just need sufficiently good bounds on integrals of the scattering function $f_0$ as used in \Cref{sec.abs.conv,sec.energy.in.box} and 
that the ``stability condition'' of the tree-graph bound \cites[Proposition 6.1]{Poghosyan.Ueltschi.2009}{Ueltschi.2018}
used in \Cref{sec.abs.conv} is satisfied. 
\end{remark}

\begin{remark}
With the same method one should be able to improve the error bound slightly.
At best one could get the error to be $O_\eps( \rho^{5/3}(a^3\rho)^{2-\eps})$ for any $\eps > 0$ 
(i.e., the error-term in \Cref{thm.main}, $O((a^3\rho)^{2/3 + 1/21} |\log(a^3\rho)|^6)$ 
could be replaced by $O_\eps( (a^3\rho)^{1 - \eps})$). 
The bound of the error-term in \Cref{thm.main} arises from bounding the tail of the Gaudin-Gillespie-Ripka-expansion.
Exact calculation for small diagrams (meaning small number of involved particles) reveal that this bound is very crude. 
Using such exact calculations for more diagrams would improve the error-bound as stated.
This is somewhat similar to the recent work on the Bose gas \cite{Basti.Cenatiempo.ea.2022a}.
We shall discuss this further in \Cref{rmk.more.small.diagrams}.
\end{remark}

\noindent
We consider  the lower-dimensional problems next. 
We start with $2$ dimensions, where the scattering length is defined as follows.
\begin{defn}\label{defn.scattering.length.d=2}
The ($2$-dimensional) \emph{$p$-wave scattering length} $a$ of the interaction $v$ is defined by the  minimization problem
\[
	4\pi a^2 = \inf 
	\left\{\int_{\R^2} |x|^2 \left(|\nabla f_0(x)|^2 + \frac{1}{2}v(x) |f_0(x)|^2\right)\ud x 
			: f_0(x) \to 1 \textnormal{ as } |x|\to \infty \right\}
\]
The minimizer $f_0$ is the ($2$-dimensional) \emph{($p$-wave) scattering function}.
\end{defn}

\noindent
With this, we may state the $2$-dimensional analogue of \Cref{thm.main}.
\begin{thm}[Two dimensions]\label{thm.main.d=2}
Suppose that $v\geq 0$ is radial and compactly supported. 
Then, for sufficiently small $a^2\rho$, the ground-state energy density satisfies 
\[
\begin{aligned}
	e_{d=2}(\rho) 
	& \leq	
    2\pi\rho^2 + 4\pi^2a^2\rho^{3} 
		\biggl[
		1
		+ O\left(a^2\rho |\log(a^2\rho)|^2\right)
		\biggr].
\end{aligned}
\]
\end{thm}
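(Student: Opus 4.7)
The proof follows the same three-step strategy as for \Cref{thm.main}, adapted to two dimensions. The trial state is again Jastrow-type, $\Psi = F \cdot D$, where $D$ is the Slater determinant built from plane waves $e^{ik\cdot x}$ for momenta $k$ lying in a Fermi polyhedron approximation of the $2$-dimensional Fermi disk, and $F(x_1,\dots,x_N) = \prod_{i<j} f(x_i-x_j)$ with $f$ a suitable cut-off of the $2$-dimensional $p$-wave scattering function of \Cref{defn.scattering.length.d=2}. The combinatorial and algebraic structure of the Gaudin--Gillespie--Ripka expansion (cf.\ \Cref{sec.calc.C_N,sec.calc.rho1,sec.calc.rho2,sec.calc.rho3}) is dimension-independent, so the formulas for the reduced densities of $\Psi$ carry over to the $2$-dimensional setting with no change in their form.

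The first task is to re-run the absolute convergence proof of \Cref{sec.abs.conv} in two dimensions. Here the relevant small parameter becomes $a^2\rho$ and the basic two-point kernel is the projection onto the Fermi polyhedron in $2$D. A key difference from the $3$D case is that the derivative Lebesgue constants associated with a polygonal approximation of the $2$D Fermi disk are expected to grow only logarithmically in the number of faces; this is ultimately the origin of the factor $|\log(a^2\rho)|^2$ in the error term. Since the theorem only requires the leading correction (and not a subleading term involving an analogue of $a_0$), the tail analysis of the expansion is considerably simpler than in the $3$D setting, and one does not have to perform exact calculations for small diagrams beyond those already controlling the leading correction.

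Second, the trial energy is evaluated as in \Cref{sec.energy.in.box} using the one- and two-particle reduced densities produced by the cluster expansion. The leading kinetic contribution yields the free $2$D Fermi gas energy density $2\pi\rho^2$. The leading correction comes from the two-particle density, whose quartic vanishing at coincident points (with prefactor proportional to $\rho^3$, reflecting $k_F^2 = 4\pi\rho$ and $p$-wave antisymmetry) is integrated against the scattering combination $|\nabla f_0|^2 + \tfrac{1}{2}v f_0^2$; by \Cref{defn.scattering.length.d=2}, this integral equals $4\pi a^2$, producing the correction $4\pi^2 a^2\rho^3$. All further terms of the cluster expansion, together with the polyhedral approximation error and the localization error, are shown to be of relative size at most $a^2\rho|\log(a^2\rho)|^2$. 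Finally, the box method of \Cref{sec.box.method} glues the local bound into the desired thermodynamic bound; this step is essentially dimension-independent.

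The main obstacle is the $2$D absolute convergence argument together with the sharp bookkeeping of logarithmic factors coming from the $2$D Fermi polyhedron. Unlike the $3$D case, where the surface of the Fermi ball is efficiently approximated by polyhedra whose derivative Lebesgue constants grow polynomially, the boundary of the $2$D disk forces a logarithmic trade-off, and the number of faces of the polygon must be calibrated carefully against $a^2\rho$ in order to recover the stated error. Once this calibration and the associated $2$D kernel estimates are in place, the remainder of the proof is a direct translation of the $3$D argument.
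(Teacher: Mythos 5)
Your proposal follows essentially the same route as the paper: a Jastrow--Slater trial state built on a polygonal approximation of the $2$D Fermi disk (\Cref{defn.P_F.d=2}), the dimension-independent Gaudin--Gillespie--Ripka expansion with the $2$D absolute-convergence criterion $sa^2\rho\log(b/a)(\log N)^2<c$, evaluation of the energy via the $2$D analogues of the density and diagram bounds, and the box method for the thermodynamic limit. Two side remarks in your sketch are inaccurate, though neither derails the strategy: in the paper the Lebesgue constants and derivative Lebesgue constants of the Fermi polygon still grow \emph{linearly} in the number of corners $s$, namely $Cs(\log N)^2$, $CsN^{1/2}(\log N)^2$ and $CsN(\log N)^3$ (\Cref{lem.lebesgue.constant.fermi.d=2,lem.derivative.lebesgue.constant.d=2}), exactly parallel to the $3$D case, so there is no special ``logarithmic trade-off'' forced by the disk boundary; and the factor $|\log(a^2\rho)|^2$ in the error comes from the $(\log(b/a))^2$ factors in the cluster-expansion remainders after the optimization $b=a(a^2\rho)^{-1/2}$, $s\sim(a^2\rho)^{-4/7}|\log(a^2\rho)|^{-10/7}$, not from the Lebesgue constants. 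Also, the pair density of the Slater determinant vanishes quadratically, $\rho^{(2)}\approx\pi\rho^{3}|x_1-x_2|^2$ (\Cref{prop.2.density.d=2}), not quartically; your identification of the resulting coefficient $4\pi^2a^2\rho^3$ via \Cref{defn.scattering.length.d=2} is nevertheless correct.
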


\noindent
We sketch in \Cref{sec.two.dimensions} how to adapt the proof in the $3$-dimensional setting to $2$ dimensions.

Finally, we consider the $1$-dimensional problem.
The scattering length is defined as follows.
\begin{defn}\label{defn.scattering.length.d=1}
The ($1$-dimensional) \emph{$p$-wave scattering length} $a$ of the interaction $v$ is defined by the  minimization problem
\[
	2 a = \inf 
	\left\{\int_{\R} |x|^2 \left(|\partial f_0(x)|^2 + \frac{1}{2}v(x) |f_0(x)|^2\right)\ud x 
			: f_0(x) \to 1 \textnormal{ as } |x|\to \infty \right\}
\]
The minimizer $f_0$ is the ($1$-dimensional) \emph{($p$-wave) scattering function}.
\end{defn}

\noindent
We show in \Cref{prop.odd.wave.=.p.wave} that \Cref{defn.scattering.length.d=1} 
agrees with the (seemingly different) definition of the scattering length in \cite{Agerskov.Reuvers.ea.2022}.
With this, we may state the $1$-dimensional analogue of \Cref{thm.main}.
\begin{thm}[One dimension]\label{thm.main.d=1}
Suppose that $v\geq 0$ is even and compactly supported. 
Suppose moreover that $\int \left(\frac{1}{2}vf_0^2 + |\partial f_0|^2\right) \ud x < \infty$, 
where $f_0$ denotes the ($p$-wave) scattering function.
Then, for sufficiently small $a\rho$, the ground-state energy density satisfies 
\[
\begin{aligned}
	e_{d=1}(\rho) 
	& \leq	
		\frac{\pi^2}{3}\rho^3 + \frac{2\pi^2}{3}a\rho^{4} 
		\biggl[
		1
    + O\left((a\rho)^{9/13}\right)
		\biggr].
\end{aligned}
\]
\end{thm}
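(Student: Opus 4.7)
The plan is to mimic, in one dimension, the three-step strategy outlined after \Cref{thm.main}. First I would introduce a Jastrow-type trial state on a box $\Lambda = [-\ell/2,\ell/2]$ of the form $\Psi_N = D \cdot F$, where $D$ is the Slater determinant of the $N$ lowest-energy plane waves (the non-interacting ground state at density $\rho = N/\ell$) and $F(x_1,\ldots,x_N) = \prod_{i<j} f(x_i - x_j)$ for a truncation $f$ of the $1$-dimensional $p$-wave scattering function $f_0$ at some length $b \gg a$.

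Next, the expectation $\langle \Psi_N | H_N | \Psi_N\rangle / \|\Psi_N\|^2$ is reduced, via integration by parts against the scattering equation $-\partial^2 f_0 + \tfrac12 v f_0 = 0$, to integrals of the one- and two-particle densities of $\Psi_N$ weighted by local quantities built from $f$, $v$ and their derivatives. These densities are then expanded using the Gaudin-Gillespie-Ripka expansion (as in \Cref{thm.gaudin.expansion}) in powers of $f^2 - 1$, which is supported in a set of measure $O(b)$ per pair. The major simplification in $d=1$ is that the Fermi sea is simply the interval $[-k_F,k_F]$ with $k_F = \pi \rho$, so no polyhedral approximation is needed and the one-body density of $D$ is the Dirichlet-type kernel $\sin(k_F(x-y))/(\pi(x-y))$. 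Absolute convergence of the series then follows from the same tree-graph bound used in dimension three, provided $\rho b$ is small.

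The leading contribution gives the kinetic-energy density $\tfrac{\pi^2}{3}\rho^3$ of the free Fermi gas, and the first nontrivial interaction diagram contributes a term proportional to $\rho^2 \cdot \tfrac{k_F^2}{3} \cdot \int_\R \bigl( \tfrac12 v f_0^2 + |\partial f_0|^2 \bigr)\ud x$; by \Cref{defn.scattering.length.d=1} the last integral equals $2a$, and combined with the quadratic vanishing of the free-fermion pair correlation at coincidence (the factor $k_F^2/3$) this yields the advertised $\tfrac{2\pi^2}{3} a \rho^4$. A final box-gluing argument as in \Cref{sec.box.method} then passes to the thermodynamic limit with only controlled surface corrections.

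The main obstacle will be extracting the specific error exponent $(a\rho)^{9/13}$. This forces a careful joint optimization over the box length $\ell$, the truncation radius $b$ of $f_0$, and how many small diagrams are evaluated exactly versus bounded by the tree-graph estimate; as noted in the remark following \Cref{thm.main}, treating additional small diagrams exactly improves the error, and $9/13$ presumably reflects the optimum achievable with this bookkeeping.
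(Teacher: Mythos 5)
Your outline reproduces the paper's program (Jastrow state times the filled Fermi ball, GGR expansion, box method), but it has a genuine gap exactly where the one-dimensional case differs from $d=2,3$: you reduce the energy to ``integrals of the one- and two-particle densities'' and never confront the three-body term in \Cref{eqn.compute.energy.first}. In $d=1$ that term cannot be disposed of by the crude bounds that suffice in higher dimensions. The naive estimate $\rho^{(3)}_{\textnormal{Jas}}\leq C\rho^3 f_{12}^2f_{13}^2f_{23}^2$ combined with $\int_0^b f\,\partial f\,\ud x=\tfrac12(1-f(0)^2)\sim 1$ (no factor of $a$ in one dimension!) gives a contribution of order $N\rho^2$, i.e.\ an energy density of order $\rho^3$ — comparable to the kinetic energy itself — and even the uniform GGR bound on subleading three-particle diagrams is of order $a\rho^4\log(b/a)\log N$, which is not smaller than the target term $\tfrac{2\pi^2}{3}a\rho^4$. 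The paper therefore has to Taylor-expand (some of) the subleading three-particle diagrams as well, gaining factors $|x_i-x_j|^2$ that turn $\int f\,\partial f\sim 1$ into $\int |x|^2 f\,\partial f\sim ab$; this is the content of the second bound in \Cref{lem.derivative.sum.linked.diagrams.d=1} and the small-diagram computations behind it, and without this step your error terms swamp the interaction term.

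Two smaller points. First, your identification of the leading interaction term misquotes \Cref{defn.scattering.length.d=1}: the integral that equals $2a$ carries the weight $|x|^2$, i.e.\ $2a=\int |x|^2\bigl(|\partial f_0|^2+\tfrac12 v f_0^2\bigr)\ud x$; the unweighted integral you wrote equals $1/(2a_0)$ and is finite only because of the extra hypothesis of the theorem — that hypothesis is used in the error analysis (zeroth-moment integrals), not to produce $2a$. The $|x|^2$ weight must come from the quadratic vanishing of the Slater pair density, which is the mechanism you gesture at but do not make consistent. Second, the exponent $9/13$ does not come from deciding how many small diagrams to evaluate exactly (that remark concerns improving the $3$-dimensional error); it comes from the fact that in $d=1$ the error terms depend on $N$ beyond logarithms (e.g.\ a term of size $Na^3b^4\rho^{10}$ in \Cref{eqn.energy.density.w.all.errors.d=1.reduced}), so one must jointly optimize the box size $N\sim(a\rho)^{-\alpha}$, the cutoff $b=a(a\rho)^{-\beta}$, and the Dirichlet buffer $d=a(a\rho)^{-\delta}$ of the box-gluing step; the choice $\alpha=33/13$, $\beta=9/13$, $\delta=24/13$ yields the stated rate. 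Your proposal omits the buffer parameter and the $N$-growing error entirely, so the optimization it envisions would not produce (or justify) the claimed exponent.
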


\noindent
We remark that Agerskov, Reuvers and Solovej \cite{Agerskov.Reuvers.ea.2022} 
recently showed (almost) the same result with a matching lower bound 
$e_{d=1}(\rho)\geq\frac{\pi^2}{3}\rho^3 + \frac{2\pi^2}{3}a\rho^{4} (1 + o(1))$.
Compared to their result we treat a slightly different class of potentials and obtain an improved error bound.
The conjectured next contribution is of order $a^2\rho^5$, see \cite{Agerskov.Reuvers.ea.2022}.


\begin{remark}[{On the assumptions on $v$}]
Any smooth interaction or an interaction with a hard core (meaning that $v(x) = +\infty$ for $|x|\leq a_0$ for some $a_0 > 0$) 
satisfies $\int \left(\frac{1}{2}vf_0^2 + |\partial f_0|^2\right) \ud x < \infty$, 
see \Cref{prop.h.c.implies.int.vf2,prop.smooth.implies.int.vf2}.
\end{remark}

\noindent
We sketch in \Cref{sec.one.dimension} how to adapt the proof in the $3$-dimensional setting to $1$ dimension.
This turns out to be more involved than adapting the argument to $2$ dimensions.

The paper is structured as follows. 
In \Cref{sec.preliminary.computations} we give some preliminary computations 
and in particular we introduce the ``Fermi polyhedron'', 
a polyhedral approximation to the Fermi ball.
In \Cref{sec.gaudin.expansion} we introduce the fermionic cluster expansion of Gaudin, Gillespie and Ripka 
\cite{Gaudin.Gillespie.ea.1971} and we find conditions on absolute convergence of the resulting formulas. 
In the subsequent  \Cref{sec.energy.in.box} we compute the energy of a Jastrow-type trial state 
and glue  many of them together using a box method to form trial states of arbitrary many particles.
Finally, in \Cref{sec.lower.dimensions} we sketch how to adapt the argument to the lower-dimensional settings.
In \Cref{sec.small.diagrams} we give computations of ``small diagrams'' needed for some bounds in \Cref{sec.energy.in.box,sec.one.dimension}
and in \Cref{sec.derivative.lebesgue.constant} we give the proof of \Cref{lem.derivative.lebesgue.constant}, 
an important lemma used in \Cref{sec.energy.in.box}.

\section{Preliminary computations}\label{sec.preliminary.computations}
We will construct a trial state using a box method, and bound the energy of such trial state.
To use such a box method we need to use Dirichlet boundary conditions in each smaller box.
In \Cref{lem.change.b.c.} we show that we may construct trial states with Dirichlet boundary condition 
out of trial states with periodic boundary conditions. We will thus 
use periodic boundary conditions in the box $\Lambda = [-L/2, L/2]^3$.
For periodic boundary conditions, the Hamiltonian is given by 
\[
	H_N = H_{N,L}^\textnormal{per} = \sum_{j=1}^N -\Delta_{j} + \sum_{i < j} v_{\textnormal{per}}(x_i - x_j),
\]
where $\Delta_{j}$ denotes the Laplacian on the $j$'th coordinate  
and $v_{\textnormal{per}}(x) = \sum_{n\in \Z^3} v(x + nL)$, the periodized interaction.
By a slight abuse of notation we write $v=v_{\textnormal{per}}$, since we will choose $L$ bigger than the range of $v$.

The trial state in each smaller box is given by the Jastrow-type \cite{Jastrow.1955} trial state 
(also known as a Bijl-Dingle-Jastrow-type trial state)
\begin{equation}\label{eqn.define.trial.state}
  \psi_N = \frac{1}{\sqrt{C_N}} \prod_{i < j} f(x_i-x_j) D_N(x_1,\ldots,x_N),
\end{equation}
where $f$ is a scaled and cut-off version of the scatting function $f_0$, 
$D_N$ is an appropriately chosen Slater determinant,
and $C_N$ is a normalization constant.
More precisely, 
\[ 
	f(x) = \begin{cases}
	\frac{1}{1-a^3/b^3}f_0(|x|) & 
  |x| \leq b,
	\\ 1 & 
  |x|\geq b	,
	\end{cases}
	\quad 
	D_N(x_1,\ldots,x_N) = \det\left[u_k(x_i)\right]_{\substack{1 \leq i \leq N \\ k \in P_F}},
  \quad
  u_k(x) = \frac{1}{L^{3/2}} e^{ikx},
\]
where $f_0$ is the $p$-wave scattering function, 
$\abs{\cdot}:= \min_{n\in \Z^3}\abs{\cdot - nL}_{\R^3}$ (with $\abs{\cdot}_{\R^3}$ denoting the norm on $\R^3$),
$b > R_0$, the range of $v$, is some cut-off to be chosen later, 
$P_F$ is a polyhedral approximation 
to the Fermi ball $B_F$ of radius $k_F$ 
described in \Cref{sec.polyhedron},
and the number of particles is $N = \#P_F$, the number of points in $P_F$.
We  choose $b$ to be larger than the range of $v$; 
in particular, then $f$ is continuous.
(Note that the metric on the torus is $d(x,y) = |x-y|$.
We will abuse notation slightly and denote by $\abs{\cdot}$ also the absolute value of some number or the norm on $\R^3$.)

Before going further with the proof we first fix some notation.
\begin{notation}
We introduce the following.
\begin{itemize}
\item
For any function $h$ and edge (of some graph) $e=(i,j)$ we will write $h_e = h_{ij} = h(x_i - x_j)$.

\item 
We denote by $C$ a generic positive constant whose value may change line by line.

\item 
For expressions $A, B$ we write $A \lesssim B$ if there exists some constant $C > 0$ such that $A \leq CB$.
If both $A\lesssim B$ and $B \lesssim A$ we write $A \sim B$.

\item
For a vector $x=(x^1,\ldots,x^d)\in\R^d$ we write $x^1,\ldots,x^d$ for its components.
\end{itemize}
We will fix the Fermi momentum $k_F$ and then choose $L,N$ large but finite depending on $k_F$.
The density of particles in the trial state $\psi_N$ is $\rho := N/L^3$. 
The limit of small density $a^3\rho \to 0$ will  be realized as $k_F a \to 0$.
\end{notation}

\noindent
To compute the energy of the trial state $\psi_N$ note that for (real-valued) functions $F,G$ we have 
\[
	\int \abs{\nabla (FG)}^2 = \int |\nabla F|^2 |G|^2 - \int |F|^2 G\Delta G.
\]
Using this on $F = \prod_{i<j}f_{ij}$ and $G = D_N$ we have 
\begin{equation}\label{eqn.compute.energy.first}
\begin{aligned}
\longip{\psi_N}{H_N}{\psi_N}
	& 	
	= E_0 + 2 \sum_{j < k} \longip{\psi_N}{ \abs{\frac{\nabla f(x_j - x_k)}{f(x_j - x_k)}}^2 + \frac{1}{2}v(x_j - x_k)}{\psi_N}
  \\ & \qquad 
		+ 6\sum_{i < j < k} \longip{\psi_N}{\frac{\nabla f_{ij} \nabla f_{jk}}{f_{ij} f_{jk}}}{\psi_N}
	\\ &
	= E_0 + \iint \rho^{(2)}_{\textnormal{Jas}}(x_1, x_2) \left(\abs{\frac{\nabla f(x_1 - x_2)}{f(x_1-x_2)}}^2 + \frac{1}{2}v(x_1 - x_2)\right) \ud x_1 \ud x_2
	\\ & \qquad 
	+ \iiint \rho_{\textnormal{Jas}}^{(3)}(x_1,x_2,x_3) \frac{\nabla f_{12} \nabla f_{23}}{f_{12} f_{23}} \ud x_1 \ud x_2 \ud x_3,
\end{aligned}
\end{equation}	
where $E_0 = \sum_{k \in P_F} |k|^2$ is the kinetic energy of $\frac{1}{\sqrt{N!}}D_N$ and 
$\rho^{(n)}_{\textnormal{Jas}}$ denotes the $n$-particle reduced density of the trial state $\psi_N$, given by 
\begin{equation}\label{eqn.define.rho(n)}
	\rho^{(n)}_{\textnormal{Jas}}(x_1,\ldots,x_n) = N(N-1)\cdots (N-n+1) \idotsint \abs{\psi_N(x_1,\ldots,x_N)}^2 \ud x_{n+1} \ldots \ud x_N,
	\quad
	n=1,\ldots,N.
\end{equation}
The division by $f$ is non-problematic even where $f=0$, since it cancels with the corresponding factors of $f$ in $\psi_N$.
We need to compute $\rho^{(2)}_{\textnormal{Jas}}$ and bound $\rho_{\textnormal{Jas}}^{(3)}$. 
Before we start on this endeavour we first recall some properties of the scattering function.

\subsection{The scattering function}\label{sec.scattering.function}
The scattering function $f_0$ is defined by the minimization problem in \Cref{defn.scattering.length},
see also \cites[Appendix A]{Lieb.Yngvason.2001}{Seiringer.Yngvason.2020}. 
In particular $f_0$ satisfies the corresponding Euler-Lagrange equation
\[
-4 x \cdot \nabla f_0 - 2 |x|^2 \Delta f_0 + |x|^2 v f_0 = 0.
\] 
The minimizer $f_0$ is radial and with a slight abuse of notation we sometimes write $f_0(|x|) = f_0(x)$.
In radial coordinates the Euler-Lagrange equations reads
\begin{equation}\label{eqn.f.scatt.radial}
-\partial_r^2 f_0 - \frac{4}{r}\partial_r f_0 + \frac{1}{2}vf_0 = 0,
\end{equation}
where $\partial_r$ denotes the derivative in the radial direction.
This is the same equation as for $s$-wave scattering in $5$ dimensions, see \cite[Appendix A]{Lieb.Yngvason.2001}.
Thus, properties of this carry over. In particular	
$f_0(x) = 1 - \frac{a^3}{|x|^3}$ for $x$ outside the support of $v$.
Moreover
\begin{lemma}[{\cite[Lemma A.1]{Lieb.Yngvason.2001}}]
\label{prop.bound.f0.hc}
The scattering function $f_0$ satisfies $\left[1 - \frac{a^3}{|x|^3}\right]_+ \leq f_0(x) \leq 1$ for all $x$ and 
$|\nabla f_0(x)|\leq \frac{3a^3}{|x|^4}$ for $|x| > a$.
\end{lemma}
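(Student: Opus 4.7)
The plan is to exploit that $f_0$ is a minimizer and that, by radiality, the Euler--Lagrange equation \eqref{eqn.f.scatt.radial} can be rewritten as a first-order ODE for $r^4 f_0'$. Throughout I would fix the minimizer $f_0$ and work with the radial variable $r=|x|$.

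\textbf{Step 1: WLOG $f_0\ge 0$.} The functional $\int |x|^2(|\nabla f|^2 + \tfrac12 v f^2)\ud x$ depends only on $f^2$ and $|\nabla f|^2$, and $|\nabla |f||\leq |\nabla f|$ a.e. Since $v\geq 0$, replacing $f_0$ by $|f_0|$ yields another admissible minimizer, so I may assume $f_0\geq 0$.

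\textbf{Step 2: Integrated ODE and the gradient bound.} Multiply \eqref{eqn.f.scatt.radial} by $r^4$ to obtain the conservative form $(r^4 f_0')' = \tfrac12 r^4 v f_0$. Integrating from $0$ to $r$ and using that the regularity of $f_0$ at the origin forces $r^4 f_0'(r)\to 0$ as $r\to 0$, I get
\[
r^4 f_0'(r) = \tfrac12 \int_0^r s^4 v(s) f_0(s) \ud s,
\]
which is nonnegative and non-decreasing (since $v,f_0\ge 0$). For $r$ outside $\supp v$, the identity $f_0(r)=1-a^3/r^3$ (already recorded in the excerpt) gives $r^4 f_0'(r)=3a^3$, so the limit as $r\to\infty$ of the monotone quantity $r^4 f_0'(r)$ equals $3a^3$. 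Hence $0\le r^4 f_0'(r)\le 3a^3$ for all $r$, which yields $|\nabla f_0(x)|=|f_0'(|x|)|\leq 3a^3/|x|^4$ (nontrivial exactly when $|x|>a$, which explains the form of the statement).

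\textbf{Step 3: The pointwise bounds on $f_0$.} From $f_0'\ge 0$ and $f_0(r)\to 1$ as $r\to\infty$ I immediately conclude $f_0\le 1$. For the lower bound I consider the comparison function $h(r)=f_0(r)-(1-a^3/r^3)$. By Step 2, $h'(r)=f_0'(r)-3a^3/r^4\leq 0$, so $h$ is non-increasing; since $h(r)\to 0$ as $r\to\infty$, this gives $h(r)\ge 0$, i.e.\ $f_0(r)\ge 1-a^3/r^3$. Combining with $f_0\ge 0$ from Step 1 produces $f_0(r)\ge [1-a^3/r^3]_+$, completing the proof.

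The only mildly delicate point is justifying $r^4 f_0'(r)\to 0$ at the origin (needed to drop the boundary term when integrating the conservative form). This follows from the fact that the minimizer $f_0$ is bounded and $C^1$ away from the support of $v$ with sufficient regularity at $0$; in the hard-core case $f_0\equiv 0$ on $\{r\le R_0\}$ so the issue is vacuous. Everything else is calculus once the conservative form of \eqref{eqn.f.scatt.radial} is in hand.
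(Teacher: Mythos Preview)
Your proof is correct and follows essentially the same approach as the paper: rewrite \eqref{eqn.f.scatt.radial} in the conservative form $(r^4 f_0')' = \tfrac12 r^4 v f_0 \ge 0$, use monotonicity of $r^4 f_0'$ together with its value $3a^3$ outside $\supp v$ to obtain the gradient bound, and then integrate the comparison with the hard-core solution $f_{\textnormal{hc}}$ to get the pointwise bounds. Your write-up is in fact more complete than the paper's sketch, since you make explicit the steps $f_0\ge 0$, $f_0\le 1$, and the vanishing of $r^4 f_0'$ at the origin.
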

\noindent
We give a short proof here for completeness.
\begin{proof}
From the radial Euler-Lagrange equation \eqref{eqn.f.scatt.radial} we have $\partial_r (r^4 \partial_r f_0) = v r^4 f_0/2 \geq 0$.
Denote by $f_{\textnormal{hc}} = \left[1 - \frac{a^3}{|x|^3}\right]_+$ the solution for a hard core potential of range $a$.
Then 
\[
	r^4 \partial_r f_{\textnormal{hc}} = \begin{cases}
	3a^3 & r > a \\ 0 & r < a
	\end{cases}
\]
In particular $\partial_r (r^4 \partial_r f_{\textnormal{hc}}) = 0$ for $r > a$. 
We thus see that $\partial_r f_0 \leq \partial_r f_{\textnormal{hc}} = 3a^3 r^{-4}$ 
and $f_0 \geq f_\textnormal{hc}$ for $r > a$ by integrating.
Trivially $f_0 \geq 0 = f_{\textnormal{hc}}$ for $r \leq a$.
\end{proof}

\begin{remark}\label{rmk.hard.core.a.a0}
A hard core interaction of range $R_0> 0$,
\[
	v_{\textnormal{hc}}(x) = \begin{cases}
	+\infty & |x|\leq R_0, \\ 0 & |x|> R_0,
	\end{cases}
\]
has $f_0(x) = f_{\textnormal{hc}}(x) = \left[1 - \frac{a^3}{|x|^3}\right]_+$ and thus $a_0 = a = R_0$.
\end{remark}

\noindent
Finally, we give the 
\begin{proof}[Proof of \Cref{prop.effective.range}]
Let $R_0$ denote the range of the interaction. 
Then the effective range is given by \cite[Equation (56)]{Hammer.Lee.2010} 
\begin{equation}
  -R_{\textnormal{eff}}^{-1} = -\frac{2}{R} - \frac{2R^2}{a^3} + \frac{2R^5}{5a^6} - 2\int_0^R u(r)^2 \ud r,
  \qquad 
  \textnormal{for }
  R \geq R_0
  \label{eqn.Reff.ini}
\end{equation}
where
$u$ solves \cite[Equation (22)]{Hammer.Lee.2010}
\begin{equation*}
-\partial_r^2 u + \frac{2}{r^2}u + \frac{1}{2}v u = 0
\end{equation*}
with $\partial_r$ denoting the radial derivative.
In particular then $f_0 = -\frac{a^3}{r^2} u$ satisfies the scattering equation, \Cref{eqn.f.scatt.radial}.
For $r \geq R_0$ we find using \cite[Equation (27)]{Hammer.Lee.2010} and \Cref{eqn.delta(k)}
\begin{equation*}
  u(r) = \lim_{k\to 0} \frac{\sin(kr + \delta(k)) - kr\cos(kr+\delta(k))}{r \sin\delta(k)}
    = \frac{-r^2}{a^3}\left[1 - \frac{a^3}{r^3}\right].
\end{equation*}
We conclude that $f_0 = -\frac{a^3}{r^2} u$ is indeed the scattering function. 
Thus \eqref{eqn.Reff.ini} reads
\begin{equation*}
  -R_{\textnormal{eff}}^{-1} 
  = -\frac{2}{R} - \frac{2R^2}{a^3} + \frac{2R^5}{5a^6}
    - \frac{2}{a^6} \int_0^R r^4 f_0^2 \ud r,
    \qquad 
  \textnormal{for }
  R \geq R_0
\end{equation*}
The remainder of the proof is a simple calculation using integration by parts and the scattering equation.
We omit the details.
This concludes the proof of \Cref{prop.effective.range}.
\end{proof}

\subsection{The ``Fermi polyhedron''}\label{sec.polyhedron}
We introduce a polyhedral approximation $P_F$ of the Fermi ball
$B_F = \{k\in \frac{2\pi}{L}\Z^3 : |k|\leq k_F\}$. 
The main properties we will need of the polyhedral approximation 
are given in \Cref{lem.KE.polyhedron,lem.lebesgue.constant.fermi.polyhedron,lem.derivative.lebesgue.constant}.
We discuss why we need such a polyhedral approximation in \Cref{rmk.why.Fermi.polyhedron}.
The problem is that  
\[
  \int_{[0,L]^3} \frac{1}{L^3}\abs{\sum_{k \in B(k_F)\cap \frac{2\pi}{L}\Z^3} e^{ikx}} \ud x
  = \frac{1}{(2\pi)^3} \int_{[0,2\pi]^3} \abs{\sum_{q \in B(cN^{1/3})\cap \Z^3} e^{iqu}} \ud u 
  \sim N^{1/3}
\]
for large $N$ (see \cite{Ganzburg.Liflyand.2019,Liflyand.2006} and references therein) is too big for our purposes. 
Note that this behaviour is a consequence of taking the absolute value. 
In fact we have that 
$\frac{1}{L^3}\int \sum_{k \in B(k_F)\cap \frac{2\pi}{L}\Z^3} e^{ikx} \ud x= 1$.

This type of quantity is referred to as the \emph{Lebesgue constant} \cite{Ganzburg.Liflyand.2019,Liflyand.2006} 
of some domain $\Omega$,
\[
\mcL(\Omega) := \frac{1}{(2\pi)^3}\int_{[0,2\pi]^3} \abs{\sum_{q \in \Omega\cap \Z^3} e^{iqu}} \ud u
\]
These kinds of integrals appear in estimates in \Cref{sec.abs.conv,sec.energy.in.box}. 
For an overview of such Lebesgue constants, see \cite{Ganzburg.Liflyand.2019,Liflyand.2006}. 
Of particular relevance for us is the fact that the Lebesgue constants are much smaller for polyhedral domains than for balls. 
Hence we introduce the polyhedron $P=P(N)$ as an approximation of the unit ball.
Then the scaled version $P_F = k_F P \cap \frac{2\pi}{L} \Z^3$ approximates the Fermi ball.
We will refer to $P_F$ as the \emph{Fermi polyhedron}. 
In \cite[Theorem 4.1]{Kolomoitsev.Lomako.2018} it is shown that for any fixed convex polyhedron $P'$ of $s$ vertices  
\begin{equation}\label{eqn.Lebesgue.constant.polyhedron}
	\mcL(RP')=\frac{1}{(2\pi)^3}\int_{[0,2\pi]^3} \abs{\sum_{q \in RP'\cap \Z^3} e^{iqu}} \ud u \leq C s (\log R)^{3} + C(s) (\log R)^2
\end{equation}
for any $R>2$, in particular for $R\sim N^{1/3}$,  where $C(s)$ is some unknown function of $s$.
We will improve on this bound for the specific polyhedron $P = P(N)$ to control the $s$-dependence of the subleading (in $R$) terms, i.e. of $C(s)$. 
For the specific polyhedron $P$ we have $C(s) \leq Cs$. This is the content of \Cref{lem.lebesgue.constant.fermi.polyhedron} below.
We first give an almost correct definition of the polyhedron $P$.

\begin{fakedef}[Simple definition]\label{defn.P_F}
The polyhedron $P$ is chosen to be the convex hull of $s=s(N)$ points $\kappa_1,\ldots,\kappa_s$ on a sphere of radius $1+\delta$,
where $\delta$ is chosen such that $\Vol(P) = 4\pi/3$. 
We moreover choose the set of points to have the following properties. 
\begin{itemize}
	\item 
	The points are \emph{evenly distributed}, meaning that the distance $d$ between any pair of points satisfies $d\gtrsim s^{-1/2}$, 
  and that for any $k$ on the sphere of radius $1+\delta$ 
  the distance from $k$ to the closest point is $\lesssim s^{-1/2}$.
	That is, for some constants $c,C > 0$ we have $d \geq cs^{-1/2}$ and $\inf_j |k - \kappa_j| \leq Cs^{-1/2}$.


	\item 
	$P$ is invariant under any map $(k^1, k^2, k^3) \mapsto (\pm k^a, \pm k^b, \pm k^c)$ for $\{a,b,c\} = \{1,2,3\}$, 
	i.e. reflection in or permutation of any of the axes.
\end{itemize}
The \emph{Fermi polyhedron} is the rescaled version defined as $P_F := k_FP \cap \frac{2\pi}{L}\Z^3$,
where $L$ is chosen large (depending on $k_F$) such that $k_FL$ is large.
\end{fakedef}

\begin{remark}
Note that the symmetry constraint adds a restriction on $s$. For instance, a generic point 
away from any plane of symmetry (i.e. $k^1,k^2,k^3$ all different and non-zero) 
has $48$ images (including itself) when reflected by the maps 
$(k^1, k^2, k^3) \mapsto (\pm k^a, \pm k^b, \pm k^c)$ for $\{a,b,c\} = \{1,2,3\}$.

For $s$ points on a sphere of radius $1+\delta$, the natural lengthscale is $(1+\delta)s^{-1/2} \sim s^{-1/2}$. 
The requirement that the points are evenly distributed then ensures that all pairs of close points 
(for any reasonable definition of ``close points'')
have a pairwise distance of this order. 
\end{remark}


\begin{remark}
For all purposes apart from the technical argument in \Cref{sec.derivative.lebesgue.constant} one may take this as the definition. 
In particular, the convergence criterion of the cluster expansion formulas of Gaudin, Gillespie and Ripka \cite{Gaudin.Gillespie.ea.1971}, 
given in \Cref{thm.gaudin.expansion}, holds also for this simpler definition of $P$.
We provide this simpler definition to better give an intuition of the construction.
\end{remark}

\noindent
We now give the actual definition of $P$. 
We first give the construction. Then in \Cref{rmk.comment.def.P_F} we give a few comments and in \Cref{rmk.motivation.P_F}
we give a short motivation.

\begin{defn}[Actual definition]\label{defn.P_F.true}
The polyhedron $P$ 
with $s$ corners 
and the ``centre'' $z$ is constructed as follows. 
\begin{itemize}
\item 
First, choose a big number $Q$, the ``size of the primes'' satisfying 
\[
	Q^{-1/4} \leq Cs^{-1}, 
	\qquad 
	N^{4/3} \ll Q \leq CN^C
\]
in the limit $N\to \infty$.

\item 
Pick three large distinct primes $Q_1, Q_2, Q_3$ with $Q_j \sim Q$.

\item 
Place $s$ \emph{evenly distributed} points $\kappa_1^\R,\ldots,\kappa_{s}^\R$ 
on the sphere of radius $Q^{-3/4}$ and
such that the set of points $\{\kappa_1^\R,\ldots,\kappa_{s}^\R\}$ is invariant under the symmetries 
$(k^1, k^2, k^3) \mapsto (\pm k^a, \pm k^b, \pm k^c)$ for $\{a,b,c\} = \{1,2,3\}$.

Here, \emph{evenly distributed} means that the distance between any pair of points is $d\gtrsim s^{-1/2}Q^{-3/4}$
and that for any $k$ on the sphere of radius $Q^{-3/4}$ the distance from $k$ to the nearest point is $\lesssim s^{-1/2}Q^{-3/4}$.
That is, $d \geq cs^{-1/2}Q^{-3/4}$ and $\inf_{j} \abs{k - \kappa_j^\R} \leq Cs^{-1/2}Q^{-3/4}$ for some constants $c,C > 0$.

\item 
Find points $\kappa_1,\ldots,\kappa_s$ of the form
\begin{equation}\label{eqn.define.kappa_j.polyhedron}
	\kappa_j = \left(\frac{p^1_j}{Q_1}, \frac{p^2_j}{Q_2}, \frac{p^3_j}{Q_3} \right), 
		\qquad p^\mu_j \in \Z, 
		\quad \mu=1,2,3,
		\quad j=1,\ldots,s,
\end{equation}
such that $\abs{\kappa_j - \kappa_j^\R} \lesssim Q^{-1}$ for all $j=1,\ldots,s$
and 
such that the set of points $\{\kappa_1,\ldots,\kappa_{s}\}$ is invariant under the symmetries 
$(k^1, k^2, k^3) \mapsto (\pm k^1, \pm k^2, \pm k^3)$.




\item 
Define $\tilde P$ as the convex hull of all the points $\kappa_1,\ldots,\kappa_s$.
That is, $\tilde P = \conv\{\kappa_1,\ldots,\kappa_s\}$.


\item 
Define $P$ as $\sigma \tilde P$, where $\sigma$ is chosen such that $\Vol(P) = 4\pi/3$.
We will refer to the scaled points $\sigma\kappa_j = \sigma(p^1_j / Q_1, p^2_j / Q_2, p^3_j/Q_3)$ for $j=1,\ldots,s$ 
as \emph{corners} of $P$.

\item 
Define $P^\R = \sigma \conv\{\kappa_1^\R,\ldots,\kappa_s^\R\}$ as the scaled convex hull of all the initial points $\kappa_1^\R,\ldots,\kappa_{s}^\R$.

\item 
Define the centre as $z = \sigma ( 1/Q_1, 1/Q_2, 1/Q_3)$.
\end{itemize}
The \emph{Fermi polyhedron} is the rescaled version defined as $P_F := k_FP \cap \frac{2\pi}{L}\Z^3$,
where $L$ is chosen large (depending on $k_F$) such that $\frac{k_FL}{2\pi}$ is rational and large.

We additionally define $P_F^\R := k_F P^\R \cap \frac{2\pi}{L}\Z^3$.
\end{defn}

\begin{remark}\label{rmk.choose.L.not.N}
We choose $N := \# P_F$, so that the Fermi polyhedron is filled. 
The dependence in $N$ of, for instance, $Q$ should therefore more precisely be given in terms 
of a dependence on $k_F L$. 
Note that $N = \rho L^3 \sim (k_FL)^3$ and $k_F = (6\pi^2\rho)^{1/3}(1 + O(N^{-1/3}))$.

We will choose also $s$ depending on $N$ (i.e. on $k_FL$) satisfying $s\to \infty$ as $N\to \infty$.
\end{remark}

\begin{remark}[Comments on and properties of the construction]\label{rmk.comment.def.P_F}
We collect here some properties of the Fermi polyhedron, some of which will only be needed in \Cref{sec.derivative.lebesgue.constant}.
\begin{itemize}



\item 
The points $\kappa_1,\ldots,\kappa_s$ are \emph{evenly distributed} on a thickened sphere of radius $Q^{-3/4}$ 
-- their radial coordinates are $|\kappa_j| = Q^{-3/4} + O(Q^{-1})$.
Indeed, the points $\kappa_1^\R,\ldots,\kappa_s^\R$ are \emph{evenly distributed} and $Q^{-1} \ll s^{-1/2}Q^{-3/4}$. 
For $s$ points on a thickened sphere of radius $Q^{-3/4}$, the natural lengthscale between points is $s^{-1/2}Q^{-3/4}$.

\item
There is some constraint on the number of points $s$. 
A generic point $\kappa$ (with $\kappa^1,\kappa^2,\kappa^3$ all different and non-zero) has $48$ images, including itself.
The constraint on $s$ is more or less the same as for the simpler \Cref{defn.P_F}. 

\item 
By choosing the points $\kappa_1,\ldots,\kappa_s$ as in \Cref{eqn.define.kappa_j.polyhedron} we break the symmetries 
of permuting the coordinates, i.e. 
$(k^1,k^2,k^3)\mapsto (k^a, k^b, k^c)$ if $(a,b,c) \ne (1,2,3)$.
These symmetries are however still almost satisfied, see \Cref{lem.preserve.symmetry}.

\item
We choose $s, Q$ such that $Q^{-1/4} \ll s^{-1/2}$ in the limit of large $N$.
Hence, for $N$ sufficiently large, all the chosen points $\{\kappa_1,\ldots,\kappa_s\}$ are extreme points of $\tilde P$,
i.e. all corners are extreme points of the polyhedron $P$.
That is, the name ``corner'' is well-chosen, and we do not have any superfluous points in the construction.


\item 
For any three points $(x_i, y_i, z_i)\in \R^3$, $i=1,2,3$ the plane through them is given by the equation 
\[
\begin{pmatrix}
	(y_2-y_1)(z_3 - z_1) - (y_3 - y_1)(z_2 - z_1)
	\\ (z_2 - z_1)(x_3 - x_1) - (z_3 - z_1)(x_2 - x_1)
	\\ (x_2 - x_1)(y_3 - y_1) - (x_3 - x_1)(y_2 - y_1)
\end{pmatrix}
\cdot 
\begin{pmatrix}
x \\ y \\ z
\end{pmatrix}
	= \textnormal{const.}
\]
Hence, for three points $K_1, K_2, K_3$ of the form $K_i = (p_{i}^1 / Q_1, p_{i}^2 / Q_2, p_{i}^3/Q_3), p_{i}^\mu \in \Z$, $i,\mu=1,2,3$ 
the plane through them is given by
\begin{equation}\label{eqn.plane.def.P}
	\frac{\alpha_1}{Q_2Q_3} k^1 + \frac{\alpha_2}{Q_1Q_3}k^2 + \frac{\alpha_3}{Q_1Q_2}k^3 = \gamma \in \Q, 	
\end{equation}
where 
\[
	\alpha_1 = (p_{2}^2-p_{1}^2)(p_{3}^3-p_{1}^3) - (p_{3}^2-p_{1}^2)(p_{2}^3-p_{1}^3) \in \Z
\] 
and similarly for $\alpha_2, \alpha_3$.
From these formulas it is immediate that for $K_i$'s corners of $P$ or the centre $z$ we have $|\alpha_j| \leq C\sqrt{Q}$ for $j=1,2,3$.
For some planes we may have $\alpha_j=0$ for some $j$.

\item
We claim that $\sigma = Q^{3/4}(1 + O(s^{-1}))$. In particular, that any point 
on the boundary $\partial P$ has radial coordinate $1 + O(s^{-1})$.
To see this, note that $Q^{3/4}\tilde P$ is a polyhedron whose corners are evenly spaced and have radial coordinates $r$ with 
$r = 1 + O(Q^{-1/4})$. 
Thus, by scaling $Q^{3/4}\tilde P$ by $1-CQ^{-1/4}$ we get that $(1-CQ^{-1/4})Q^{3/4}\tilde P \subset B_1(0)$ 
so that this has volume $\leq \frac{4\pi}{3}$.
It follows that $\sigma \geq Q^{3/4}(1-CQ^{-1/4})$.
On the other hand, scaling $Q^{3/4}\tilde P$ by $1+Cs^{-1}$ we have that $(1+Cs^{-1})Q^{3/4}\tilde P \supset B_1(0)$.
Indeed, since the distance from any point $k$ on the sphere of radius $1$ to any corner of $Q^{3/4}\tilde P$ 
is $\lesssim s^{-1/2}$, and the sphere is locally quadratic, the smallest radial coordinate $r$ of a point on the boundary 
$\partial (Q^{3/4}\tilde P)$ is $r\geq 1 - Cs^{-1}$. It follows that $\sigma \leq (1 + Cs^{-1})Q^{3/4}$.
Since $Q^{-1/4} \leq Cs^{-1}$ this shows the desired.


\item
Note moreover that $\sigma$ is irrational. Indeed, the volume of a polyhedron with rational corners is rational.
(This is easily seen for tetrahedra, of which any polyhedron is an essentially disjoint union.)
Thus $\sigma^3 = \pi r$ for a rational $r$.
Hence the equations of the planes defined by corners of $P$ (i.e. scaled points) are of the form \Cref{eqn.plane.def.P}
with an irrational constant $\sigma \gamma$ on the right-hand side. 
Indeed, the corners of $P$ (and the central point $z$) 
are all scaled by $\sigma$ compared to points of the form $(p^1 / Q_1, p^2 / Q_2, p^3/Q_3)$.
The equation of the plane through three scaled points only differ by scaling the constant term. 
Since $\sigma$ is irrational, and the constant term was rational for the unscaled points,
this shows the desired.

\item 
We now construct a triangulation of $\partial P$. 
For all ($2$-dimensional) triangular faces of $P$ simply consider these as part of the triangulation.
That is, we construct edges between any pair of the three corners of such a triangle.
Some of the ($2$-dimensional) faces of $P$ may be polygons of more than $3$ sides ($1$-dimensional faces). 
Construct edges between all pairs of corners sharing a side (i.e. a $1$-dimensional face)
and choose one corner and construct edges from this corner to all other corners of the polygon.

Doing this constructs a triangulation of $\partial P$ and we will refer to all pairs of corners with an edge between them as 
\emph{close} or \emph{neighbours}.
Since the points $\{\kappa_1,\ldots,\kappa_s\}$ are evenly distributed, that the distance 
between any pair of close corners is $d\sim s^{-1/2}$.

\item
Additionally, one may note that the corners of $P$ have $\leq C$ many neighbours 
since the points are evenly distributed.

\item
The reason we need $\frac{Lk_F}{2\pi}$ rational will only become apparent in \Cref{sec.derivative.lebesgue.constant}
and will be explained there.
\end{itemize}
\end{remark}

\begin{remark}[Motivation of construction]\label{rmk.motivation.P_F}
The purpose of the construction is twofold.
Firstly we avoid a casework argument as in the proof of \cite[Lemma 3.5]{Kolomoitsev.Lomako.2018} 
of whether the coefficients of the planes are rational or not. 
The argument in \Cref{lem.lemma.3.6,lem.lemma.3.9} is heavily inspired by \cite[Lemmas 3.6, 3.9]{Kolomoitsev.Lomako.2018},
where such casework is required.
Secondly we have good control over how many (and which) lattice points (i.e. points in $\frac{2\pi}{L}\Z^3$)  can lie on each plane 
(or, rather, a closely related plane, see \Cref{sec.derivative.lebesgue.constant} for the details). 

These are technical details only needed in \Cref{sec.derivative.lebesgue.constant}. 
We reiterate, that apart from the arguments in \Cref{sec.derivative.lebesgue.constant}, 
the reader may have the simpler \Cref{defn.P_F} in mind instead.
\end{remark}

\noindent
As mentioned in \Cref{rmk.comment.def.P_F} the Fermi polyhedron is almost symmetric under permutation of the axes.
This is formalized as follows.
\begin{lemma}\label{lem.preserve.symmetry}
For $\mu\ne \nu$ let $F_{\mu\nu}$ be the map that permutes $k^\mu$ and $k^\nu$
(i.e. $F_{12}(k^1,k^2,k^3) = (k^2,k^1,k^3)$, etc.).
Then for any function $t \geq 0$ we have 
\[
	\sum_{k\in \frac{2\pi}{L}\Z^3} \abs{\chi_{(k\in P_F)} - \chi_{(k\in F_{\mu\nu}(P_F))}} t(k)
	\lesssim Q^{-1/4} N \sup_{|k| \sim k_F} t(k)
	\lesssim N^{2/3} \sup_{|k| \sim k_F} t(k),
\]
where $Q$ is as in \Cref{defn.P_F.true} and $\chi$ denotes the indicator function.
\end{lemma}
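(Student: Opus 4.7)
The plan is to express the sum as a weighted count of lattice points in the symmetric difference $P_F \triangle F_{\mu\nu}(P_F) = k_F(P \triangle F_{\mu\nu}(P)) \cap \frac{2\pi}{L}\Z^3$, and to bound this count by $NQ^{-1/4}$ via a volume estimate on $P \triangle F_{\mu\nu}(P)$. The second inequality $Q^{-1/4}N \lesssim N^{2/3}$ is then immediate from $N^{4/3} \lesssim Q$.

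First I would establish a Hausdorff distance bound $d_H(P, F_{\mu\nu}(P)) \lesssim Q^{-1/4}$. By construction $\{\kappa_j^\R\}$ is invariant under $F_{\mu\nu}$, so there is a permutation $\pi$ with $F_{\mu\nu}(\kappa_j^\R) = \kappa_{\pi(j)}^\R$. Combining the triangle inequality with the estimates $|\kappa_j - \kappa_j^\R| \lesssim Q^{-1}$ and $\sigma \lesssim Q^{3/4}$ from \Cref{rmk.comment.def.P_F} gives
\[
\abs{F_{\mu\nu}(\sigma\kappa_j) - \sigma\kappa_{\pi(j)}} \lesssim \sigma Q^{-1} \lesssim Q^{-1/4},
\]
so every corner of $F_{\mu\nu}(P)$ lies within $CQ^{-1/4}$ of some corner of $P$. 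Convexity of both polyhedra upgrades this corner-by-corner estimate to the full Hausdorff bound.

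Consequently $P \triangle F_{\mu\nu}(P)$ is contained in the $CQ^{-1/4}$-neighborhood of $\partial P$. Since $\partial P$ is within $O(Q^{-1/4})$ of the unit sphere it has surface area $O(1)$, and hence $|P\triangle F_{\mu\nu}(P)| \lesssim Q^{-1/4}$. Moreover every $k$ contributing to the sum satisfies $\abs{|k|/k_F - 1}\lesssim Q^{-1/4}$, so $|k|\sim k_F$ for $Q$ large; this lets $\sup_{|k|\sim k_F}t(k)$ be pulled out. The remaining lattice-point count is then bounded by
\[
\#\Bigl(\tfrac{2\pi}{L}\Z^3 \cap k_F(P\triangle F_{\mu\nu}(P))\Bigr) \lesssim \Bigl(\tfrac{k_FL}{2\pi}\Bigr)^3 Q^{-1/4} \sim NQ^{-1/4}.
\]

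The main technical subtlety is this last lattice count. A crude Minkowski-type bound on lattice points in a shell of thickness $k_FQ^{-1/4}$ produces an additional surface-layer contribution of size $(k_FL)^2 \sim N^{2/3}$, which in the regime $Q \gg N^{4/3}$ dominates $NQ^{-1/4}$ and would only give the weaker second bound directly. To secure the sharper $Q^{-1/4}N$ bound one exploits the primes construction of \Cref{defn.P_F.true}: by \Cref{rmk.comment.def.P_F} the faces of $k_FP$ lie in planes whose affine constant is irrational while the normals have rational coefficients with denominators controlled by $Q_1Q_2Q_3$, so no lattice hyperplane of $\frac{2\pi}{L}\Z^3$ can be parallel to a face and simultaneously contained in the $O(k_FQ^{-1/4})$-thick shell. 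The lattice count over each face-slab then scales with the slab's volume rather than its area, and summing over the $O(s)$ faces yields the claimed $NQ^{-1/4}$ bound.
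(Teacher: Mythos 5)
Your main argument is essentially the paper's. The paper compares both $P_F$ and $F_{\mu\nu}(P_F)$ with the exactly permutation-symmetric set $P_F^\R$ via a triangle inequality, whereas you compare the corners of $P$ and $F_{\mu\nu}(P)$ directly through the permutation of the $\kappa_j^\R$ and a Hausdorff-distance bound; both routes produce the same geometric fact, namely that the discrepancy set is contained in a shell of width $\lesssim k_F Q^{-1/4}$ around $\partial(k_F P)$, and from there the paper, like you, pulls out $\sup_{|k|\sim k_F}t(k)$ and bounds the number of lattice points by $L^3$ times the shell volume $\lesssim k_F^3 Q^{-1/4}$, giving $Q^{-1/4}N\lesssim N^{2/3}$.

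Your final paragraph, however, rests on a false claim. Clearing denominators in \Cref{eqn.plane.def.P}, each face of $k_FP$ lies in a plane whose normal is proportional to the integer vector $(\alpha_1Q_1,\alpha_2Q_2,\alpha_3Q_3)$, so lattice planes of $\frac{2\pi}{L}\Z^3$ parallel to the face certainly exist; their spacing along the normal is $2\pi/(L\abs{v})$ for the primitive integer normal $v$, which is generically far smaller than the slab thickness $\sim k_FQ^{-1/4}$, so many parallel lattice planes do lie inside the slab. The irrationality of the constant term only prevents lattice points from lying exactly on the face plane; it gives no quantitative separation between the face and the nearest parallel lattice plane. Hence the dichotomy you invoke fails, and the assertion that the face-slab count ``scales with the slab's volume rather than its area'' does not follow from it. Note also that the paper does not attempt this refinement: its proof simply converts the lattice count into the volume estimate without addressing the thin-shell issue you raise, and in all subsequent applications (e.g. in the proof of \Cref{prop.2.density} and around \Cref{eqn.sumk4.P_F.part2}) only the weaker bound $N^{2/3}\sup_{|k|\sim k_F}t(k)$ is used, which your crude Minkowski-type count already delivers, since the number of lattice points within distance $\max(L^{-1},k_FQ^{-1/4})$ of a surface of area $\sim k_F^2$ is $\lesssim N^{2/3}$.
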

\begin{proof}
Note that 
\begin{multline}\label{eqn.preserve.symmetry.triangle.ineq}
\sum_{k\in \frac{2\pi}{L}\Z^3} \abs{\chi_{(k\in P_F)} - \chi_{(k\in F_{\mu\nu}(P_F))}} t(k)
	\\
	\leq \sum_{k\in \frac{2\pi}{L}\Z^3} \abs{\chi_{(k\in P_F)} - \chi_{(k\in P_F^\R)}} t(k)
		+ \sum_{k \in \frac{2\pi}{L}\Z^3} \abs{\chi_{(k\in F_{\mu\nu}(P_F))} - \chi_{(k\in F_{\mu\nu}(P_F^\R))}} t(k)
\end{multline}
since $P_F^\R$ is invariant under permutation of the axes, i.e. $F_{\mu\nu}(P_F^\R) = P_F^\R$.
The points $\{\kappa_j\}_{j=1,\ldots,s}$ only differ from $\{\kappa_j^\R\}_{j=1,\ldots,s}$ by at most $\sim Q^{-1}$ thus 
the points 
$\{\sigma \kappa_j\}_{j=1,\ldots,s}$ (the corners of $P$) 
only differ from the points $\{\sigma\kappa_j^\R\}_{j=1,\ldots,s}$ by $\sim Q^{-1/4}$.
Hence, the support of $\chi_{(k\in P_F)} - \chi_{(k\in P_F^\R)}$ 
is contained in a shell of width $\sim k_FQ^{-1/4}$ around the surface $\partial(k_F P)$.
That is,
\[
	\supp \left(\chi_{(k\in P_F)} - \chi_{(k\in P_F^\R)}\right)
	\subset 
	\left\{k \in \frac{2\pi}{L}\Z^3 : \dist(k, \partial(k_FP)) \lesssim k_F Q^{-1/4}\right\}.
\]
The surface $\partial(k_F P)$ has area $\sim k_F^2$ so 
\[
	\Vol\left(\left\{k \in \R^3 : \dist(k, \partial(k_FP)) \lesssim k_F Q^{-1/4}\right\}\right)
	\lesssim k_F^3 Q^{-1/4}.
\] 
The spacing between the $k$'s in $\frac{2\pi}{L}\Z^3$ is $\sim L^{-1}$ and 
any $k$ with $\dist(k, \partial(k_FP)) \lesssim k_F Q^{-1/4}$ has $|k|\sim k_F$.
Thus
\[
	\sum_{k\in P_F} \abs{\chi_{(k\in P_F)} - \chi_{(k\in P_F^\R)}} t(k)
	\lesssim L^3 k_F^3 Q^{-1/4} \sup_{|k|\sim k_F} t(k)
	\sim Q^{-1/4} N \sup_{|k|\sim k_F} t(k).
\]
The same argument applies to the second summand in \Cref{eqn.preserve.symmetry.triangle.ineq}. 
We conclude the desired.
\end{proof}

\noindent 
We now improve on \Cref{eqn.Lebesgue.constant.polyhedron} for our polyhedron. 
\begin{lemma}\label{lem.lebesgue.constant.fermi.polyhedron}
The Lebesgue constant of the Fermi polyhedron satisfies
\[
\int_{\Lambda} \frac{1}{L^3} \abs{\sum_{k\in P_F} e^{ikx}} \ud x
= \frac{1}{(2\pi)^3} \int_{[0,2\pi]^3} \abs{\sum_{q \in \left(\frac{Lk_F}{2\pi}P\right)\cap \Z^3} e^{iqu}} \ud u \leq C s (\log N)^3.
\]
\end{lemma}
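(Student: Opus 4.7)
The plan is to sharpen the general bound \eqref{eqn.Lebesgue.constant.polyhedron} of Kolomoitsev and Lomako for our specific polyhedron by tracking the $s$-dependence of every constant that enters, exploiting the arithmetic structure built into \Cref{defn.P_F.true}. The broad strategy is a decomposition of $P$ into $O(s)$ tetrahedra together with a uniform bound of $C(\log N)^3$ on the Lebesgue constant of each tetrahedron, whence summing yields $Cs(\log N)^3$.

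First, I would use the triangulation of $\partial P$ described in \Cref{rmk.comment.def.P_F}: since the points $\kappa_1,\ldots,\kappa_s$ are evenly distributed and every corner has $O(1)$ neighbours, the triangulation consists of $O(s)$ triangular faces. Coning each face to the centre $z$ yields a decomposition $P = \bigcup_{i=1}^M T_i$ into $M = O(s)$ tetrahedra with common vertex $z$. Setting $R := Lk_F/(2\pi)$, the triangle inequality gives
\[
\int_{[0,2\pi]^3} \Bigl| \sum_{q \in (RP) \cap \Z^3} e^{iqu} \Bigr| \ud u
\;\leq\; \sum_{i=1}^M \int_{[0,2\pi]^3} \Bigl| \sum_{q \in (RT_i) \cap \Z^3} e^{iqu} \Bigr| \ud u,
\]
so it suffices to bound each summand by $C(\log R)^3$ with $C$ independent of $i$ and $s$.

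Second, for a fixed tetrahedron $T_i$ I would write its indicator as a product of four half-space indicators, three corresponding to faces through $z$ and one opposite $z$, and then perform three successive Abel summations in directions transverse to three independent bounding planes. Each Abel summation converts the innermost sum into a geometric series whose denominator is $|1 - e^{i\theta}|$ for some phase $\theta$ depending on the plane's normal direction and on $u$; provided $\theta$ is bounded away from $2\pi\Z$ in a quantitative way, the Abel step pays only a single $\log R$ factor. After three iterations one obtains $C(\log R)^3$.

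The main obstacle, and the reason for the elaborate \Cref{defn.P_F.true}, is to secure this quantitative non-resonance of the phases uniformly over all $O(s)$ tetrahedra. Here I would use the structural observations of \Cref{rmk.comment.def.P_F}: every plane through three scaled corners (or through $z$ and two corners) has the form \eqref{eqn.plane.def.P} with integer coefficients $|\alpha_j| \leq C\sqrt{Q}$, and after multiplication by $\sigma$ the right-hand side becomes an irrational number (because $\sigma$ is irrational while rational-cornered polyhedra have rational volume). Combined with the rationality of $Lk_F/(2\pi)$, this produces a concrete Diophantine lower bound on the resonant phases that does \emph{not} degrade with $s$. This is exactly the device by which one avoids the rationality casework of \cite[Lemmas 3.6, 3.9]{Kolomoitsev.Lomako.2018} that produced the unquantified $C(s)$ term in \eqref{eqn.Lebesgue.constant.polyhedron}. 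The quantitative estimates needed here are postponed to \Cref{sec.derivative.lebesgue.constant}, where the closely related derivative Lebesgue constants are also treated; the present lemma emerges as the simplest special case of that analysis.
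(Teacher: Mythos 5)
Your first step (coning the triangulation of $\partial P$ over the centre $z$ into $M=O(s)$ tetrahedra) is the same as the paper's, but two parts of your argument do not hold up as written. First, the plain triangle inequality $\bigl|\sum_{q\in RP\cap\Z^3}e^{iqu}\bigr|\le\sum_i\bigl|\sum_{q\in RT_i\cap\Z^3}e^{iqu}\bigr|$ is unjustified: the closed tetrahedra overlap on shared faces, edges and the common apex $Rz$, so $\sum_{q\in RP\cap\Z^3}e^{iqu}$ is \emph{not} equal to $\sum_i\sum_{q\in RT_i\cap\Z^3}e^{iqu}$, and the overcount is not automatically dominated by the right-hand side. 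The paper corrects for this by inclusion--exclusion and then counts the lower-dimensional intersection terms: each triangle appears once, each line segment in at most $C$ intersections (bounded valence of the triangulation), and the apex -- which lies in \emph{all} tetrahedra and hence in combinatorially many intersections -- contributes nothing precisely because $Rz\notin\Z^3$. Your proposal omits this bookkeeping entirely, and in particular never uses $Rz\notin\Z^3$, which is one of the few genuinely needed features of the construction here.

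Second, you mislocate where the $s$-uniformity comes from. The per-tetrahedron bound $\mcL(T)\le C(d)\prod_i\log(n_i+1)$ is the already-known result \cite[Theorem 4.1]{Kolomoitsev.Lomako.2018} for \emph{arbitrary} tetrahedra; the paper simply invokes it, translated so that each tetrahedron sits in $[0,CR]^d$, and no arithmetic structure of \Cref{defn.P_F.true} (primes $Q_j$, irrationality of $\sigma$, rationality of $Lk_F/2\pi$) enters this lemma at all -- indeed the paper states that the simple \Cref{defn.P_F} suffices everywhere except \Cref{sec.derivative.lebesgue.constant}. Your plan to redo the tetrahedron bound by Abel summation and to secure it through a ``quantitative Diophantine lower bound'' furnished by irrationality is both unnecessary and unsound: irrationality alone yields no quantitative rate, and none is needed, since the $u$-integration controls the near-resonant phases (small denominators occupy a set of small measure). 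The elaborate arithmetic construction is used only later, for the \emph{derivative} Lebesgue constants of \Cref{lem.derivative.lebesgue.constant}, so the present lemma is not a ``special case'' of that appendix analysis. The improvement of the unquantified $C(s)$ in \eqref{eqn.Lebesgue.constant.polyhedron} to $Cs$ comes purely from the combinatorics of the even triangulation plus $Rz\notin\Z^3$, which is the part your write-up skips.
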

\noindent
The proof is (almost) the same as given in \cite[Theorem 4.1]{Kolomoitsev.Lomako.2018}. We need to be a bit more careful 
in the decomposition into tetrahedra. 
\begin{proof}
Define $R = \frac{Lk_F}{2\pi}$.
We decompose $RP$ into tetrahedra using the ``central'' point $z$ from the construction of $P$.
We triangulate the surface of $RP$ as in \Cref{rmk.comment.def.P_F}.
For each triangle in the triangulation add the point $Rz$ to form a tetrahedron. 
Note that $Rz \notin \Z^3$ since $|Rz| \leq C R Q^{-1/4} \ll 1$ and $z\ne 0$.
This gives $m=O(s)$ many (closed) tetrahedra $\{T_j\}$ such that $RP = \bigcup T_j$ and that $T_j\cap T_{j'}$ 
is a tetrahedron of lower dimension (i.e. the central point $Rz$, a line segment or a triangle).
Then, as in \cite[Theorem 4.1]{Kolomoitsev.Lomako.2018} by the inclusion--exclusion principle we have 
\[
\begin{aligned}
\mcL(RP) 
	& = \frac{1}{(2\pi)^3}\int \abs{\sum_j \sum_{q\in T_j \cap \Z^3} e^{iqu} - \sum_{j < j'} \sum_{q\in T_j \cap T_{j'}\cap \Z^3} e^{iqu} + \cdots} \ud u
	\\ & \leq 
		\sum_{\ell = 1}^m \sum_{j_1 < \ldots < j_\ell} \mcL(T_{j_1}\cap \ldots \cap T_{j_\ell}).
\end{aligned}
\]
In \cite[Theorem 4.1]{Kolomoitsev.Lomako.2018} it is shown that for a $d$-dimensional tetrahedron $T$ with $T \subset [0,n_1]\times \ldots \times [0,n_d]$
we have $\mcL(T)\leq C(d)\prod_{i = 1}^d\log (n_i + 1)$. 
All the 
tetrahedra in our construction are 
$d$-dimensional for $d\leq 3$ and
contained in boxes $[0, CR]^d$ (after translations by lattice vectors $\kappa \in \Z^3$). 
Hence for all tuples $T_{j_1},\ldots,T_{j_\ell}$ we have 
\[ 
	\mcL(T_{j_1}\cap \ldots \cap T_{j_\ell}) \leq C (\log R)^d \leq C (\log N)^3.
\]
We need to count how many summands we have. 
The $3$-dimensional tetrahedra each appear just once, and there are $m=O(s)$ many of them. 
The $2$-dimensional tetrahedra (triangles) appear just once, namely in the term $\mcL(T_j \cap T_{j'})$ where the triangle is the intersection $T_j \cap T_{j'}$.
Hence there are $O(s)$ many such terms. 
The $1$-dimensional tetrahedra (line segments) may appear more times, with $3,4,\ldots,C$ many $T_j$'s. 
Indeed an edge may be shared by more tetrahedra, but only a bounded number of them. (This follows from the points being well-distributed, 
so each corner of $P$ has a bounded number of neighbours.) 
Since there is also only $O(s)$ many $1$-dimensional line segments this gives also just a contribution $O(s)$. 
The central point appears many times, but all appearances contribute $0$, 
since $Rz \notin \Z^3$.
We conclude that $\mcL(RP) \leq C s(\log N)^3$ as desired.
\end{proof}

\noindent 
By replacing $B_F$ with $P_F$ we make an error in the kinetic energy. 
(The Fermi ball $B_F$ is the set of momenta of the Slater determinant with lowest kinetic energy.)
We now bound the error made with this approximation. That is, we consider 
\[
	\sum_{k \in P_F} |k|^2 - \sum_{k \in B_F} |k|^2. 
\]
Note that there might not be the same number of summands in both sums.
To compute this difference 
we interpret the sums as Riemann-sums and replace them with the corresponding integrals. 
It is a simple exercise to show that the error made in this replacement is $Ck_F^2N^{2/3}$. 
That is,
\[
	\sum_{k \in P_F} |k|^2 - \sum_{k \in B_F} |k|^2
	= \frac{L^3}{(2\pi)^3}\left(\int_{k_FP} |k|^2 \ud k - \int_{B(k_F)} |k|^2 \ud k\right) + O\left(k_F^2N^{2/3}\right).
\]
The integrals can be computed in spherical coordinates, 
\[
	\int_{k_FP} |k|^2 \ud k - \int_{B(k_F)} |k|^2 \ud k = \int_{\S^2} \left(\int_0^{k_FR(\omega)} r^4 \ud r - \int_0^{k_F} r^4 \ud r\right) \ud \omega
\]
For $k_FP$ the radial limit is $k_FR(\omega) = k_F(1 + \eps(\omega))$, 
where $\eps(\omega) = O(s^{-1})$ uniformly in $\omega$ by the argument in \Cref{rmk.comment.def.P_F}. 
Expanding the powers of $R$ we thus get 
\[
	\int_{k_FP} |k|^2 \ud k - \int_{B(k_F)} |k|^2 \ud k = k_F^5\int_{\S^2} \left(\eps(\omega) + O(s^{-2})\right) \ud \omega.
\]
By construction, $P$ has volume $4\pi/3$. That is, $k_FP$ and $B(k_F)$ have the same volume. This means that
\[
	0  = \int_{\S^2} \left(\int_0^{k_FR(\omega)} r^2 \ud r - \int_0^{k_F} r^2 \ud r\right) \ud \omega
		= k_F^3\int_{\S^2} \left(\eps(\omega) + O(s^{-2})\right) \ud \omega.
\]
We thus get that 
\[
	\sum_{k \in P_F} |k|^2 - \sum_{k \in B_F} |k|^2
	= O(k_F^2Ns^{-2}) + O\left(k_F^2N^{2/3}\right).
\]
We conclude the following.
\begin{lemma}\label{lem.KE.polyhedron}
The kinetic energy of the (Slater determinant with momenta in the) Fermi polyhedron satisfies 
\[
	\sum_{k\in P_F} |k|^2 = \sum_{k\in B_F} |k|^2 \left(1 + O(N^{-1/3}) + O(s^{-2})\right)
		= \frac{3}{5}(6\pi)^{2/3} \rho^{2/3} N\left(1 + O(N^{-1/3}) + O(s^{-2})\right).
\]
\end{lemma}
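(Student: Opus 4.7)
The plan is to reduce the sum–over–lattice estimate to a volume/radial computation and then exploit the volume-preserving choice of $\sigma$ in \Cref{defn.P_F.true} to improve the naive $O(s^{-1})$ bound to $O(s^{-2})$.

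First, I would compare the lattice sums to Riemann integrals. Both $\sum_{k\in P_F}|k|^2$ and $\sum_{k\in B_F}|k|^2$ are Riemann sums (with mesh $(2\pi/L)^3$) for $|k|^2$ over $k_FP$ and $B(k_F)$ respectively. Because the integrand $|k|^2$ is smooth and bounded by $Ck_F^2$ on the region of interest, the standard Riemann-sum error is controlled by the number of lattice points in a boundary shell of thickness $\sim L^{-1}$ times $k_F^2$. The surface $\partial(k_FP)$ (and $\partial B(k_F)$) has area $\sim k_F^2$, so this boundary shell contains $\sim k_F^2 L^2 \sim N^{2/3}$ lattice points, giving
\[
\sum_{k\in P_F}|k|^2 - \sum_{k\in B_F}|k|^2 = \frac{L^3}{(2\pi)^3}\left(\int_{k_FP}|k|^2\ud k - \int_{B(k_F)}|k|^2\ud k\right) + O(k_F^2 N^{2/3}).
\]

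Next, I would evaluate the continuous piece in spherical coordinates. Write $\partial(k_FP)$ as $r=k_FR(\omega)$, $\omega\in\S^2$. By the bullet in \Cref{rmk.comment.def.P_F} showing $\sigma=Q^{3/4}(1+O(s^{-1}))$, every boundary point of $P$ has radius $1+O(s^{-1})$, so $R(\omega)=1+\varepsilon(\omega)$ with $\|\varepsilon\|_\infty = O(s^{-1})$. Integrating $r^4$ from $0$ to $k_FR(\omega)$ and subtracting the integral from $0$ to $k_F$, then Taylor-expanding in $\varepsilon(\omega)$, gives
\[
\int_{k_FP}|k|^2\ud k - \int_{B(k_F)}|k|^2\ud k = k_F^5\int_{\S^2}\bigl(\varepsilon(\omega) + O(s^{-2})\bigr)\ud\omega.
\]

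The key cancellation comes from the volume constraint $\Vol(P)=4\pi/3$ built into \Cref{defn.P_F.true}. Applying the identical spherical-coordinate expansion to the constant integrand $1$ (i.e.\ to $\int r^2\,dr$) yields
\[
0 = \Vol(k_FP)-\Vol(B(k_F)) = k_F^3\int_{\S^2}\bigl(\varepsilon(\omega)+O(s^{-2})\bigr)\ud\omega,
\]
so $\int_{\S^2}\varepsilon(\omega)\ud\omega = O(s^{-2})$. Substituting back gives the difference of integrals as $O(k_F^5 s^{-2})$, hence
\[
\sum_{k\in P_F}|k|^2 - \sum_{k\in B_F}|k|^2 = O(k_F^2 N s^{-2}) + O(k_F^2 N^{2/3}).
\]
Finally, the standard computation $\sum_{k\in B_F}|k|^2 = \tfrac{3}{5}(6\pi^2)^{2/3}\rho^{2/3}N(1+O(N^{-1/3}))$ (itself a Riemann-sum approximation of $\int_{B(k_F)}|k|^2\ud k$) yields the claimed asymptotic.

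The main obstacle is producing the improved $O(s^{-2})$ error rather than the naive $O(s^{-1})$ one would get directly from $\|\varepsilon\|_\infty=O(s^{-1})$. This is precisely why $\sigma$ is defined by the volume normalization $\Vol(P)=4\pi/3$: it forces a first-order cancellation in the mean of $\varepsilon$, and the quadratic residual $O(s^{-2})$ is then dictated only by pointwise smallness of $\varepsilon$. Everything else is routine Riemann-sum and surface-area bookkeeping.
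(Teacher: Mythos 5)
Your proposal is correct and follows essentially the same route as the paper: a Riemann-sum comparison with boundary error $O(k_F^2N^{2/3})$, a spherical-coordinate expansion with $R(\omega)=1+\eps(\omega)$, $\eps=O(s^{-1})$, and the volume normalization $\Vol(P)=4\pi/3$ forcing $\int_{\S^2}\eps\,\ud\omega=O(s^{-2})$, followed by the standard evaluation of $\sum_{k\in B_F}|k|^2$. No gaps; this matches the paper's argument preceding and within the proof of the lemma.
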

\begin{proof}
The computation above gives the first equality. The second follows by noting that $\sum_{k\in B_F} |k|^2$ is a Riemann sum for 
\[
\frac{L^3}{(2\pi)^3} \int_{|k|\leq k_F} |k|^2 \ud k = \frac{4\pi}{5(2\pi)^3}k_F^5 L^3 = \frac{3}{5}(6\pi^2)^{2/3}\rho^{2/3}N(1 + O(N^{-1/3})).
\qedhere 
\]
\end{proof}

\noindent
Completely analogously one can show that 
\begin{equation}\label{eqn.sumk4.P_F}
	\sum_{k\in P_F} |k|^4 
		= \frac{18\pi^2}{7}(6\pi)^{1/3} \rho^{4/3} N\left(1 + O(N^{-1/3}) + O(s^{-2})\right).
\end{equation}
We need this formula for \Cref{prop.2.density} below. 
Additionally we need a formula for 
$\sum_{k\in P_F} |k^1|^4$, 
where $k^1$ refers to the first coordinate of $k=(k^1,k^2,k^3)$.
Here we have 
\begin{equation}\label{eqn.sumk4.P_F.part2}
	\sum_{k\in P_F} |k^1|^4 
	= \frac{18\pi^2}{35}(6\pi)^{1/3} \rho^{4/3} N\left(1 + O(N^{-1/3}) + O(s^{-1})\right).
\end{equation}
To see this we compare it to $\sum_{k\in B_F} |k^1|^4$.
The only difference from above is when doing the spherical integral.
We have 
\[
\begin{aligned}
	& 
	\sum_{k\in P_F} |k^1|^4  - \sum_{k\in B_F} |k^1|^4 
	\\ & \quad 
	 	= \frac{L^3}{(2\pi)^3} 
		\int_{0}^{2\pi} \ud \theta 
		\int_0^\pi \ud \phi 
			\cos(\phi)^4 
		\left(\int_0^{k_F(1 + \eps(\phi,\theta))} r^6 \ud r - \int_0^{k_F} r^6 \ud r\right)
		+ O(k_F^4 N^{2/3})
	\\ & \quad 
		= O(k_F^4 N s^{-1}) + O(k_F^4 N^{2/3}),
\end{aligned}
\]
since we can't use the volume constraint that $\int_{\S^2} \eps(\omega) \ud \omega = O(s^{-2})$
but only that $\eps(\omega) = O(s^{-1})$. 
Thus we only get an error of (relative) size $s^{-1}$.
The sum over $B_F$ may be readily computed by computing the corresponding integral.

\subsection{Reduced densities of the Slater determinant}\label{sec.slater.densities}
We now consider the $2$-particle reduced density of the (normalized) Slater determinant.
We have the following. 
\begin{lemma}\label{prop.2.density}
The $2$-particle reduced density of the (normalized) Slater determinant $\frac{1}{\sqrt{N!}}D_N$ satisfies 
\[
\begin{aligned}
\rho^{(2)}(x_1,x_2) 
	& = \frac{(6\pi^2)^{2/3}}{5} \rho^{8/3} |x_1-x_2|^2
	\biggl(
	1 - \frac{3(6\pi^2)^{2/3}}{35}\rho^{2/3}|x_1-x_2|^2 
	\\ & \qquad 
	+ O(N^{-1/3}) 
	+ O(s^{-2}) 
	+ O(N^{-1/3}\rho^{2/3}|x_1-x_2|^2)
	+ O(\rho^{4/3}|x_1-x_2|^4)
	\biggr).
\end{aligned}
\]
\end{lemma}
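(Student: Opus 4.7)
The plan is to use the standard Wick-type identity for the two-particle reduced density of a Slater determinant built from plane-wave orbitals, namely $\rho^{(2)}(x_1,x_2) = \rho(x_1)\rho(x_2) - |\gamma(x_1,x_2)|^2$, where $\gamma(x,y) = L^{-3}\sum_{k\in P_F}e^{ik(x-y)}$ is the one-particle reduced density matrix. Since $\rho(x) = \gamma(x,x) = \rho$ is constant, the whole statement reduces to a Taylor expansion of $|\gamma(z)|^2$ in $z := x_1-x_2$ up to order four. The exact reflection symmetry $P_F = -P_F$ makes $S(z) := \sum_{k\in P_F}e^{ikz}$ real (so that $|\gamma|^2 = S(z)^2/L^6$) and kills the odd Taylor coefficients, giving $S(z) = N - A + B + R(z)$ with $A := \tfrac{1}{2}\sum(k\cdot z)^2$, $B := \tfrac{1}{24}\sum(k\cdot z)^4$, and $|R(z)| \le \tfrac{1}{720}\sum|k|^6|z|^6 \lesssim N\rho^2 |z|^6$. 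From $\rho^2 - |\gamma|^2 = (N^2-S^2)/L^6$ one then reads off three contributions $2NA/L^6$, $-2NB/L^6$, $-A^2/L^6$, plus an $O(\rho^4|z|^6) = \rho^{8/3}|z|^2 \cdot O(\rho^{4/3}|z|^4)$ remainder that matches the stated error.

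For the quadratic sum, the exact sign-flip symmetries $k^\mu \mapsto -k^\mu$ of $P_F$ make all cross terms $\sum k^\mu k^\nu$, $\mu \ne \nu$, vanish. For the diagonal sums, \Cref{lem.preserve.symmetry} applied with $t(k)=(k^\mu)^2$ shows that the three $\sum(k^\mu)^2$ agree pairwise up to $O(N^{2/3}k_F^2)$; combined with \Cref{lem.KE.polyhedron} this yields $\sum(k^\mu)^2 = \tfrac{1}{5}(6\pi^2)^{2/3}\rho^{2/3}N(1+O(N^{-1/3})+O(s^{-2}))$ for each $\mu$, and hence $2NA/L^6 = \tfrac{(6\pi^2)^{2/3}}{5}\rho^{8/3}|z|^2 (1 + O(N^{-1/3}) + O(s^{-2}))$, the leading term with the announced errors. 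For the quartic sum, the same reflections reduce $\sum(k\cdot z)^4$ to $\sum_\mu (z^\mu)^4\sum(k^\mu)^4 + 3\sum_{\mu\ne\nu}(z^\mu z^\nu)^2\sum(k^\mu)^2(k^\nu)^2$. Equations \eqref{eqn.sumk4.P_F} and \eqref{eqn.sumk4.P_F.part2} together with \Cref{lem.preserve.symmetry} give $\sum(k^\mu)^4 = \tfrac{1}{5}\sum|k|^4(1 + O(N^{-1/3}) + O(s^{-1}))$, and the identity $\sum|k|^4 = \sum_\mu(k^\mu)^4 + 2\sum_{\mu<\nu}(k^\mu)^2(k^\nu)^2$ then forces $\sum(k^\mu)^2(k^\nu)^2 = \tfrac{1}{15}\sum|k|^4(1+\ldots)$. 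The algebraic identity $\sum_\mu(z^\mu)^4 + \sum_{\mu\ne\nu}(z^\mu z^\nu)^2 = |z|^4$ collapses the tensor structure to $\sum(k\cdot z)^4 = \tfrac{1}{5}|z|^4\sum|k|^4(1+\ldots)$.

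Assembling, the arithmetic identity $\tfrac{1}{140}+\tfrac{1}{100}=\tfrac{3}{175}$ gives $-2NB/L^6 - A^2/L^6 = -\tfrac{3}{175}(6\pi^2)^{4/3}\rho^{10/3}|z|^4 (1+\ldots) = \tfrac{(6\pi^2)^{2/3}}{5}\rho^{8/3}|z|^2 \cdot \bigl(-\tfrac{3(6\pi^2)^{2/3}}{35}\rho^{2/3}|z|^2\bigr)(1+\ldots)$, matching the correction term inside the brackets. The only bookkeeping subtlety --- and thus the main obstacle --- is that the permutation-symmetry correction in the fourth moments is only $O(s^{-1})$ rather than $O(s^{-2})$, producing an error $O(s^{-1}\rho^{2/3}|z|^2)$ relative to the leading term. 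This is absorbed into the stated errors by the elementary dichotomy $s^{-1}\rho^{2/3}|z|^2 \le s^{-2} + \rho^{4/3}|z|^4$, obtained by splitting on whether $\rho^{2/3}|z|^2$ lies below or above $s^{-1}$.
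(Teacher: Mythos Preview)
Your proof is correct and follows essentially the same route as the paper: Wick's identity, Taylor expansion of $\gamma_N^{(1)}$ using the reflection symmetries of $P_F$, and evaluation of the second and fourth moments via \Cref{lem.KE.polyhedron}, \Cref{lem.preserve.symmetry}, and \eqref{eqn.sumk4.P_F}--\eqref{eqn.sumk4.P_F.part2}. Your final observation that the residual $O(s^{-1}\rho^{2/3}|z|^2)$ error from the fourth-moment anisotropy is absorbed via $s^{-1}\rho^{2/3}|z|^2 \lesssim s^{-2}+\rho^{4/3}|z|^4$ makes explicit a step the paper leaves to the reader in its ``plugging this into \eqref{eqn.wick.rho2} we conclude the desired''.
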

\noindent
This follows from a Taylor expansion akin to the argument in \Cite[Lemma 11]{Agerskov.Reuvers.ea.2022}.
\begin{proof}
The Slater determinant is in particular a quasi-free state, hence we get by Wick's rule that
\begin{equation}\label{eqn.wick.rho2}
	\rho^{(2)}(x_1,x_2) = \rho^{(1)}(x_1)\rho^{(1)}(x_2) - \gamma^{(1)}_N(x_1;x_2) \gamma^{(1)}_N(x_2,x_1), 
\end{equation}
where $\gamma^{(1)}_N$ denotes the (kernel of the) reduced $1$-particle density matrix of the Slater determinant.
We have 
\[
	\gamma^{(1)}_N(x_1; x_2) = \sum_{k\in P_F} u_k(x_1) \overline{u_k(x_2)}
		= \frac{1}{L^3}\sum_{k\in P_F} e^{ik(x_1-x_2)},
	\qquad 
	\rho^{(1)}(x_1) = \rho.
\]
By translation invariance, $\gamma^{(1)}_N(x_1; x_2)$ is a function of $x_1-x_2$ only, and we shall  Taylor
expand $\gamma^{(1)}_N$ in $x_1 - x_2$. By construction $P_F$ is reflection symmetric in the axes, see \Cref{defn.P_F.true}.
This means that all odd orders vanish and that all off-diagonal second order terms vanish. 
Thus, by defining $x_{12} = (x_{12}^1, x_{12}^2, x_{12}^3) = x_1-x_2$ and expanding all the exponentials we get 
\[
\begin{aligned}
\gamma^{(1)}_N(x_1;x_2)
	& = \frac{1}{L^3}\sum_{k\in P_F} 
	\left(1 
		- \frac{1}{2}(k\cdot (x_1-x_2))^2
		+ \frac{1}{24}(k\cdot (x_1-x_2))^4
		+ O(|k|^6 |x_1-x_2|^6)
	\right)
	\\ 
	& = \rho - \frac{1}{2L^3}\sum_{k\in P_F} \left[|k^1|^2 |x_{12}^1|^2 + |k^2|^2|x_{12}^2|^2 + |k^3|^2|x_{12}^3|^2\right]
	\\ & \quad 
		+ \frac{1}{24L^3} 
		\left(
		\sum_{k\in P_F}  \left[|k^1|^4|x_{12}^1|^4+ |k^2|^4|x_{12}^2|^4 + |k^3|^4|x_{12}^3|^4\right]
		\right.
	\\
	& \qquad 
		\left. 
		+ 6\sum_{k\in P_F} \left[|k^1|^2 |k^2|^2|x_{12}^1|^2|x_{12}^2|^2
			+ |k^1|^2 |k^3|^2|x_{12}^1|^2|x_{12}^3|^2
			+ |k^2|^2 |k^3|^2|x_{12}^2|^2|x_{12}^3|^2\right]
		\right)
	\\ & \quad 
		+ O(\rho^{3}|x_1-x_2|^6).
\end{aligned}
\]
By \Cref{lem.preserve.symmetry} we may write 
\[
	\sum_{k\in P_F} |k^\mu|^2 
	= \frac{1}{3}\sum_{k\in P_F} |k|^2 + O\left(N^{2/3} k_F^2\right)
\]
and similar for the $\sum |k^\mu|^4$ and $\sum |k^\mu|^2|k^\nu|^2$-sums. 
Using this 
the second order term is given by 
\[
	-\frac{1}{6L^3} \sum_{k\in P_F} |k|^2 |x_{12}|^2 + O(\rho^{5/3} |x_{12}|^2 N^{-1/3}).
\]
Similarly by also rewriting everything in terms of $|x_{12}|^4$ and $\left[|x_{12}^1|^4+ |x_{12}^2|^4 + |x_{12}^3|^4\right]$ 
the fourth order term is given by 
\begin{multline*}
	\frac{1}{48L^3} \left[
	\left(\sum_{k\in P_F} |k|^4 - 3 \sum_{k\in P_F} |k^1|^4\right) |x_{12}|^4 
	+ \left(5 \sum_{k\in P_F} |k^1|^4- \sum_{k\in P_F} |k|^4\right) \left[|x_{12}^1|^4+ |x_{12}^2|^4 + |x_{12}^3|^4\right]
	\right]
	\\
	+ O(\rho^{7/3}|x_{12}|^4 N^{-1/3}).
\end{multline*}
Using \Cref{lem.KE.polyhedron,eqn.sumk4.P_F,eqn.sumk4.P_F.part2} we get that 
\[
\begin{aligned}
\gamma^{(1)}_N(x_1;x_2)
	& = \rho - \frac{(6\pi^2)^{2/3}}{10}\rho^{5/3}|x_1-x_2|^2 
		+ \frac{3\pi^2(6\pi^2)^{1/3}}{140} \rho^{7/3}|x_1-x_2|^4 
		+ O(\rho^{5/3}N^{-1/3} |x_1-x_2|^2)
	\\ & \qquad 
		+ O(\rho^{5/3} s^{-2} |x_1-x_2|^2)
		+ O(\rho^{7/3} N^{-1/3} |x_1-x_2|^4)
		+ O(\rho^{7/3} s^{-1} |x_1-x_2|^4)
	\\ & \qquad
		+ O(\rho^3 |x_1-x_2|^6).
\end{aligned}
\]
Plugging this into \Cref{eqn.wick.rho2} we conclude the desired.
\end{proof}

\noindent
Finally, we have the following bound on the $3$-particle reduced density
\begin{lemma}\label{lem.bound.rho3}
The $3$-particle reduced density of the (normalized) Slater determinant $\frac{1}{\sqrt{N!}} D_N$ satisfies
\[
  \rho^{(3)}(x_1,x_2,x_3) \leq C \rho^{3+4/3} |x_1-x_2|^2 |x_1-x_3|^2.
\]
\end{lemma}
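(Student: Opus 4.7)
The plan is to use that the Slater determinant $\frac{1}{\sqrt{N!}} D_N$ is a quasi-free state, so Wick's theorem gives
\[
\rho^{(3)}(x_1,x_2,x_3) = \det\bigl[\gamma^{(1)}_N(x_i;x_j)\bigr]_{1 \le i,j \le 3},
\]
where $\gamma^{(1)}_N(x_i;x_j) = \frac{1}{L^3}\sum_{k \in P_F} e^{ik(x_i-x_j)}$. Writing $\gamma_{ij} := \gamma^{(1)}_N(x_i;x_j)$, the diagonal entries are $\gamma_{ii} = \rho$, and each $\gamma_{ij}$ is real-valued because $P_F$ is symmetric under $k \mapsto -k$ by \Cref{defn.P_F.true}. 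Expanding the $3 \times 3$ determinant then yields
\[
\rho^{(3)} = \rho^3 - \rho\bigl(\gamma_{12}^2 + \gamma_{13}^2 + \gamma_{23}^2\bigr) + 2\gamma_{12}\gamma_{13}\gamma_{23}.
\]

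Next I introduce $r_{ij} := \rho - \gamma_{ij} = \frac{1}{L^3}\sum_{k \in P_F}\bigl(1 - \cos(k\cdot(x_i-x_j))\bigr)$, which is non-negative. The elementary inequality $1 - \cos y \le \frac{1}{2}y^2$ together with \Cref{lem.KE.polyhedron} yields
\[
0 \le r_{ij} \le \frac{|x_i-x_j|^2}{2L^3}\sum_{k \in P_F}|k|^2 \le C \rho^{5/3}|x_i-x_j|^2.
\]

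Substituting $\gamma_{ij} = \rho - r_{ij}$ into the determinant expansion, the terms of orders $\rho^3$ and $\rho^2 r_{ij}$ cancel identically. The remaining quadratic-in-$r$ part can be rewritten via the algebraic identity
\[
-(a^2+b^2+c^2) + 2(ab+bc+ca) = 4 a b - (c - a - b)^2,
\]
applied with $a = r_{12}$, $b = r_{13}$, $c = r_{23}$. This produces
\[
\rho^{(3)} = 4\rho\, r_{12} r_{13} \;-\; \rho\,(r_{23}-r_{12}-r_{13})^2 \;-\; 2\, r_{12} r_{13} r_{23}.
\]
Since every $r_{ij} \ge 0$, the second and third terms are non-positive, so $\rho^{(3)} \le 4\rho\, r_{12} r_{13} \le C \rho^{13/3}|x_1-x_2|^2 |x_1-x_3|^2$, which is the claim.

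The main subtlety lies in selecting the correct algebraic rearrangement: the naive fully symmetric bound would give $C\rho^{13/3}(|x_{12}|^2|x_{13}|^2 + |x_{12}|^2|x_{23}|^2 + |x_{13}|^2|x_{23}|^2)$, but the identity above exploits the complete square $-(c-a-b)^2$ to isolate a single product $r_{12}r_{13}$, matching the asymmetric form in the statement (any of the three symmetric variants works by re-labelling). Apart from this observation, everything reduces to routine expansion and the $1-\cos$ bound.
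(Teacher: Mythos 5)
Your proof is correct, but it follows a genuinely different route from the paper's. The paper's own argument is a Taylor-expansion one: it notes that $\rho^{(3)}$ vanishes (quadratically) whenever two of the three particles coincide, bounds its derivatives by powers of $k_F$ exactly as in the proof of \Cref{prop.2.density}, and invokes Taylor's theorem in the two variables $x_1-x_2$ and $x_1-x_3$. You instead exploit the Wick-determinant formula $\rho^{(3)}=\det[\gamma^{(1)}_N(x_i;x_j)]_{1\le i,j\le 3}$ explicitly: writing $\gamma_{ij}=\rho-r_{ij}$ with $r_{ij}=\frac{1}{L^3}\sum_{k\in P_F}\bigl(1-\cos(k\cdot(x_i-x_j))\bigr)\ge 0$ (real-valuedness being guaranteed by the $k\mapsto -k$ symmetry of $P_F$), you observe that the cubic and quadratic-in-$\rho$ terms cancel, and the identity $-(a^2+b^2+c^2)+2(ab+bc+ca)=4ab-(c-a-b)^2$ together with $r_{ij}\ge 0$ isolates the single product $4\rho\,r_{12}r_{13}$ as an upper bound; the estimate $r_{ij}\le C\rho^{5/3}|x_i-x_j|^2$ from $1-\cos y\le y^2/2$ and \Cref{lem.KE.polyhedron} then gives the claim. (The $|x_i-x_j|$ appearing here is the torus distance, which is harmless since $e^{ik\cdot(x_i-x_j)}$ is unchanged when one picks the representative of $x_i-x_j$ realizing that distance.) What your approach buys is a completely explicit, self-contained derivation with no need to track mixed fourth-order derivatives or invoke the two-variable Taylor estimate the paper leaves implicit; its only drawback is that the algebraic identity is special to the $3\times 3$ case, whereas the paper's Taylor-plus-vanishing argument extends with no extra ideas to higher reduced densities.
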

\begin{proof}
Note that $\rho^{(3)}$ vanishes whenever any $2$ of the $3$ particles are incident
and moreover that $\rho^{(3)}$ is symmetric under exchange of the particles.
We may bound derivatives of $\rho^{(3)}$ as we did for $\rho^{(2)}$.
By Taylor's theorem we conclude the desired.
\end{proof}

\section{Gaudin-Gillespie-Ripka-expansion}\label{sec.gaudin.expansion}
We now present the cluster expansion of Gaudin, Gillespie and Ripka \cite{Gaudin.Gillespie.ea.1971}. 
The argument given here is essentially the same as in \cite{Gaudin.Gillespie.ea.1971}, 
only we give sufficient conditions for the formulas \cite[Equations (3.19), (4.9) and (8.4)]{Gaudin.Gillespie.ea.1971},
given in \Cref{thm.gaudin.expansion}, to hold, i.e., for absolute convergence of the expansion.

Recall the definition of the trial state $\psi_N$ in \Cref{eqn.define.trial.state}.
We calculate the the normalization constant $C_N$ and the reduced densities 
$\rho^{(1)}_{\textnormal{Jas}}, \rho^{(2)}_{\textnormal{Jas}}$ and $\rho^{(3)}_{\textnormal{Jas}}$
defined in \Cref{eqn.define.rho(n)}

We remark that the computation given in the following is not just valid for the function $f$
and (square of a) Slater determinant $|D_N|^2$ we choose, but these can be replaced 
by a more general function and determinant of a more general matrix.
We comment on this further in \Cref{rmk.general.f.gamma}.

\subsubsection{Calculation of the normalization constant}\label{sec.calc.C_N}
First we give the calculation of $C_N$. 
Rewriting the $f$'s in terms of $g = f^2-1$ we have
\[
C_N = \idotsint\prod_{i < j} f_{ij}^2 |D_N|^2 \ud x_1 \ldots \ud x_N =  \idotsint \prod_{i < j} (1 + g_{ij}) |D_N|^2 \ud x_1 \ldots \ud x_N
\]
We factor out the $g_{ij}$ and group terms with the same number of values $x_i$. 
(For instance $g_{12}g_{23}$ and $g_{45}g_{46}g_{56}$ both have $3$ values $x_i$ appearing, 
the values $x_1,x_2,x_3$ and $x_4,x_5,x_6$, respectively).
To state the result we define 
$\mcG_p$ as the set of all graphs on $\{1,\ldots,p\}$ such that each vertex has degree at least $1$
(i.e. is incident to at least one edge)
and define 
\[
	W_p(x_1,\ldots,x_p) := \sum_{G\in \mcG_p} \prod_{e\in G} g_e.
\]
(Note that for $p=0,1$ we have $\mcG_p = \varnothing$ and so $W_p = 0$.)
By the symmetry of permuting the coordinates we have
\[
\begin{aligned}
C_N
& = \idotsint \left[1 + \frac{N(N-1)}{2}W_2(x_1,x_2) + \frac{N(N-1)(N-2)}{3!} W_3(x_1,x_2,x_3) + \ldots\right]
\\ & \qquad \times  |D_N|^2 \ud x_1 \ldots \ud x_N
\\
&  = N!\left[1 + \sum_{p=2}^N \frac{1}{p!}\idotsint W_p(x_1,\ldots,x_p) 
\rho^{(p)}(x_1,\ldots,x_p)
\ud x_1 \ldots \ud x_p\right],
\end{aligned}
\]
where again the reduced densities are normalised as
\[
  \rho^{(p)}(x_1,\ldots,x_p) = N(N-1)\cdots(N-p+1) \idotsint \frac{1}{N!} |D_N(x_1,\ldots,x_N)|^2 \ud x_{p+1} \ldots \ud x_N.
\]
A simple calculation using the Wick rule shows that 
\[ 
  \rho^{(p)}
		= \det\left[\gamma_N^{(1)}(x_i;x_j)\right]_{1\leq i,j \leq p} 
		= \det\left[\sum_{k\in P_F} \overline{u_k(x_i)} u_k(x_j)\right]_{1\leq i,j \leq p} 
		= \det[S_p^*S_p],
\]
where $S_p$ is the $P_F\times p$ ``Slater''-matrix with entries $u_k(x_i)$.
This has rank $\min\{N, p\} = p$ and so by taking this determinant as the definition of 
$\rho^{(p)}$
for $p > N$ 
we have 
$\rho^{(p)} = 0$
for $p > N$. Thus we may extend the summation to $\infty$.
We now expand out the determinant and the $W_p$. That is
\[
\begin{aligned}
\frac{C_N}{N!}
& = 1 + \sum_{p=2}^\infty \frac{1}{p!} \sum_{\substack{G \in \mcG_p \\ \pi \in \mcS_p}} (-1)^\pi 
	\idotsint \prod_{e \in G} g_{e} \prod_{j=1}^p \gamma^{(1)}_N(x_j, x_{\pi(j)}) \ud x_1 \ldots \ud x_p,
\end{aligned}
\]
where $\mcS_p$ denotes the symmetric group on $p$ elements.
We will consider $\pi$ and $G$ together as a diagram $(\pi, G)$. 
We give a slightly more general definition for what a diagram is, as we will need such 
for the calculation of the reduced densities.
\begin{defn}\label{def.diagram}
Define the set $\mcG_p^q$ as the set of all graphs on $q$ ``external'' vertices $\{1,\ldots,q\}$ and 
$p$ ``internal'' vertices $\{q+1,\ldots,q+p\}$ such that all internal vertices have degree at least $1$,
i.e. each internal vertex has at least one incident edge,
and such that there are no edges between external vertices.
The external edges are allowed to have degree zero, i.e. have no incident edges.
For $q=0$ we recover $\mcG_p^0 = \mcG_p$.

A \emph{diagram} $(\pi, G)$ on $q$ ``external'' and $p$ ``internal'' vertices 
is a pair of a permutation $\pi\in \mcS_{q+p}$ (viewed as a directed graph on $\{1,\ldots,q+p\}$)
and a graph $G\in\mcG_p^q$.
We denote the set of all diagrams on $q$ ``external'' vertices and $p$ ``internal'' vertices by $\mcD_{p}^q$.

We will sometimes refer to edges in $G$ as $g$-edges, directed edges in $\pi$ as $\gamma^{(1)}_N$-edges and the 
graph $G$ as a $g$-graph.  
The value of a diagram $(\pi, G) \in \mcD_{p}^q$ is the function
\[
	\Gamma_{\pi, G}^q(x_1,\ldots,x_q) 
	:= 
		(-1)^\pi \idotsint \prod_{e \in G} g_e \prod_{j=1}^{q+p} \gamma^{(1)}_N(x_j, x_{\pi(j)}) \ud x_{q+1} \ldots \ud x_{q+p}.
\]
For $q=0$ we write $\Gamma_{\pi,G} = \Gamma_{\pi,G}^0$ and $\mcD_{p} = \mcD_p^0$.

A diagram $(\pi, G)$ is said to be \emph{linked} if the graph $\tilde G$ with edges the union of edges in $G$ and directed edges in $\pi$ is connected.
The set of all linked diagrams on $q$ ``external'' and $p$ ``internal'' vertices is denoted $\mcL_{p}^q$.
For $q=0$ we write $\mcL_p = \mcL_p^0$.
\end{defn}

\noindent
By the translation invariance we have that $\Gamma^1_{\pi,G}$ is a constant for any diagram $(\pi,G)$.

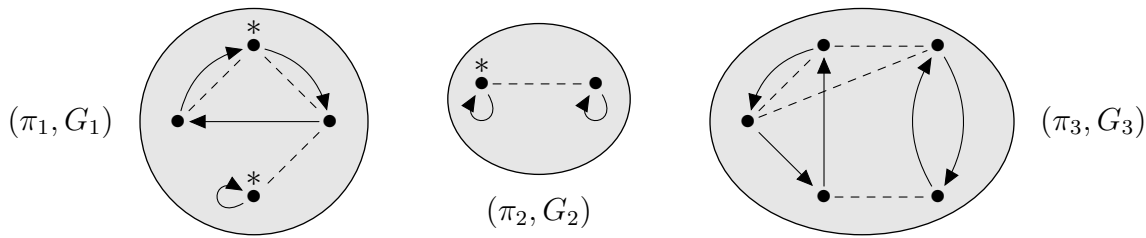
\begin{figure}[htb]
\centering
\begin{tikzpicture}[line cap=round,line join=round,>=triangle 45,x=1.0cm,y=1.0cm]
\draw [rotate around={0.:(2.,2.)},line width=0.5pt,fill=black,fill opacity=0.1] (2.,2.) ellipse (1.5cm and 1.5cm);
\draw [rotate around={0.:(9.25,2.)},line width=0.5pt,fill=black,fill opacity=0.1] (10,2.) ellipse (2.0cm and 1.5cm);
\draw [rotate around={0.:(5.5,3.)},line width=0.5pt,fill=black,fill opacity=0.1] (5.75,2.3) ellipse (1.2 cm and 1.0cm);
\node (1) at (1,2) {};
\node (2) at (2,3) {};
\node (3) at (3,2) {}; 
\node (4) at (2,1) {}; 
\node (5) at (5,2.5) {}; 
\node (6) at (6.5,2.5) {}; 
\node (7) at (8.5,2) {}; 
\node (8) at (9.5,1) {};
\node (9) at (9.5,3) {}; 
\node (10) at (11,1) {}; 
\node(11) at (11,3) {}; 
\foreach \x in {1,...,11} \node at (\x) {$\bullet$};
\foreach \x in {2,4,5} \node[anchor=south] at (\x) {$*$};
\draw[dashed] (1) -- (2) -- (3) -- (4);
\draw[dashed] (5) -- (6);
\draw[dashed] (7) -- (9) -- (11);
\draw[dashed] (8) -- (10);
\draw[dashed] (7) -- (11);
\draw[->] (1) to[bend left] (2);
\draw[->] (2) to[bend left] (3);
\draw[->] (3) to (1);
\draw[->] (4) to[out=-150,in=150,loop] ();
\draw[->] (5) to[out=-60,in=-120,loop] ();
\draw[->] (6) to[out=-60,in=-120,loop] ();
\draw[->] (7) to (8);
\draw[->] (8) to (9);
\draw[->] (9) to[bend right] (7);
\draw[->] (10) to[bend left] (11);
\draw[->] (11) to[bend left] (10);
\node[anchor = east] at (0.3,2) {$(\pi_1, G_1)$};
\node at (5.75, 0.8) {$(\pi_2, G_2)$};
\node[anchor = west] at (12.2,2) {$(\pi_3, G_3)$};
\end{tikzpicture}
\caption{A diagram $(\pi, G) \in \mcD_8^3$ decomposed into linked components. 
The dashed lines denote $g$-edges and the arrows $(i,j)$ denote that $\pi(i) = j$.
Vertices labelled with $*$ denote external vertices.}
\end{figure}

In terms of diagrams we thus have 
\[
	\frac{C_N}{N!} = 1 + \sum_{p=2}^\infty \frac{1}{p!} \sum_{(\pi, G)\in \mcD_p} \Gamma_{\pi, G}.
\]
If $(\pi, G)$ is not linked we may decompose it into its linked components.
Here the integration factorizes. We split the sum according to the number of linked components.
Each linked component has at least $2$ vertices, since each vertex must be connected 
to another vertex with an edge in the corresponding graph. We get 
\begin{equation}\label{eqn.normalization.linked.components}
\begin{aligned}
\frac{C_N}{N!}
& = 1 + \sum_{p=2}^\infty
	\underbrace{\vphantom{\sum_{(\pi_1,G_1) \textnormal{ linked}}} 
		\sum_{k=1}^\infty }_{\# \textnormal{ lnk. cps.}}
		\frac{1}{k!}
	\underbrace{\vphantom{\sum_{(\pi_1,G_1) \textnormal{ linked}}} 
		\sum_{p_1\geq2} \cdots \sum_{p_k \geq 2} }_{\textnormal{sizes linked cps.}}
		\chi_{(\sum p_\ell = p)}
	\underbrace{\sum_{(\pi_1,G_1)\in \mcL_{p_1}}\cdots \sum_{(\pi_k,G_k)\in \mcL_{p_k}}}_{\textnormal{linked components}}
	\frac{\Gamma_{\pi_1,G_1}}{p_1!} \cdots \frac{\Gamma_{\pi_k,G_k}}{p_k!}
\end{aligned}
\end{equation}
Here $\chi$ is the indicator function. 
The factor $1/(k!)$ comes from counting the possible labellings of the $k$ linked components.
The factors $1/(p_1!),\ldots,1/(p_k!)$ come from counting how to distribute the $p$ vertices $\{1,\ldots,p\}$ 
between the linked components of prescribed sizes $p_1,\ldots,p_k$.
This gives the factor $\binom{p}{p_1,\ldots,p_k} = \frac{p!}{p_1! \ldots p_k!}$, which together with the factor $1/p!$ already present gives the claimed formula.

We want to pull the $p$-summation inside the $p_1,\ldots,p_k$-summation. This is allowed once we check that 
$\sum_{p} \frac{1}{p!}\sum_{(\pi, G)\in\mcL_p} \Gamma_{\pi,G}$ is absolutely convergent. 
More precisely we need that the $p$-sum is absolutely convergent, i.e. 
$\sum_p \frac{1}{p!} \abs{\sum_{(\pi, G)\in\mcL_p} \Gamma_{\pi,G}} < \infty$.
This is the content of \Cref{lem.linked.sum.abs.conv} below.
We conclude that if the assumptions of \Cref{lem.linked.sum.abs.conv} are satisfied then
\begin{equation}\label{eqn.compute.final.C_N}
\begin{aligned}
\frac{C_N}{N!}	& = 1 + \sum_{k=1}^\infty \frac{1}{k!}
	\sum_{p_1\geq2} \cdots \sum_{p_k \geq 2}
	\sum_{(\pi_1,G_1)\in\mcL_{p_1}}\cdots \sum_{(\pi_k,G_k)\in\mcL_{p_k}}
	\frac{\Gamma_{\pi_1,G_1}}{p_1!} \cdots \frac{\Gamma_{\pi_k,G_k}}{p_k!}
\\
& = 1 + \sum_{k=1}^\infty \frac{1}{k!}\left(\sum_{p=2}^\infty \frac{1}{p!}\sum_{(\pi, G)\in\mcL_{p}} \Gamma_{\pi, G}\right)^k
= \exp\left(\sum_{p=2}^\infty \frac{1}{p!}\sum_{(\pi, G)\in\mcL_{p}} \Gamma_{\pi, G}\right).
\end{aligned}
\end{equation}

\subsubsection{Calculation of the \texorpdfstring{$1$}{1}-particle reduced density}\label{sec.calc.rho1}
We consider now the $1$-particle reduced density of the Jastrow trial state.
We have by the translation invariance that $\rho^{(1)}_{\textnormal{Jas}} = \rho^{(1)} = \rho$.
We nonetheless compute it here, as we need the formula in terms of (linked) diagrams.
We have similarly as before
\[
\begin{aligned}
\rho^{(1)}_{\textnormal{Jas}}(x_1)
& = \frac{N}{C_N} \idotsint \prod_{2\leq i \leq N} f_{1i}^2 \prod_{2\leq i< j \leq N } f_{ij}^2 |D_N|^2 \ud x_2 \ldots \ud x_N
\\ & = \frac{N!}{C_N}\left[\rho^{(1)}(x_1) + \sum_{p=1}^\infty \frac{1}{p!}\idotsint X_p^1 \rho^{(p+1)} \ud x_2 \ldots \ud x_{p+1}\right],
\end{aligned}
\]
where 
\[
	X_p^1 = \sum_{G\in\mcG_p^1} \prod_{e\in G} g_e,
\]
with $\mcG_p^1$ as in \Cref{def.diagram}.
Again this is what one gets by just expanding all the products in the first line and grouping them in terms of how many $x_i$'s appear.
The sum is again extended to $\infty$, since $\rho^{(p)} = 0$ for $p > N$.

We again expand out the determinant and the $X_p^1$'s. 
For each summand $\pi\in \mcS_{p+1}$ and $G\in\mcG_p^1$ we again think of them together as a diagram $(\pi, G)\in \mcD_p^{1}$. 
The formula for $\rho^{(1)}_{\textnormal{Jas}}$ in terms of diagrams is 
\[
	\rho^{(1)}_{\textnormal{Jas}}
 		= \frac{N!}{C_N}\sum_{p=0}^\infty \frac{1}{p!} \sum_{(\pi, G)\in \mcD_p^1} \Gamma_{\pi, G}^1
 		= \frac{N!}{C_N} \left[ \rho^{(1)} + \sum_{p=1}^\infty \frac{1}{p!} \sum_{(\pi, G)\in\mcD_p^1} \Gamma_{\pi, G}^1\right].
\]
As for the normalization we write out the diagrams in terms of their linked components.
There is a distinguished linked component, namely the one containing the vertex $\{1\}$. We will write its size as $p_*$.
It is convenient to take ``size'' to mean number of internal vertices, i.e. $p_* = 0$ if $\{1\}$ is not connected to any 
other vertex by either an edge in $G$ or an edge in $\pi$. 
Similarly ``number of linked components'' means disregarding the distinguished one.

Analogously to the computation in \Cref{eqn.normalization.linked.components} we thus get for any $p\geq 0$
\begin{equation*}
\begin{aligned}
\frac{1}{p!}\sum_{(\pi, G)\in\mcD_p^1}  \Gamma_{\pi, G}^1
& =  
	\sum_{(\pi_*, G_*)\in\mcL_{p}^1}
	\frac{\Gamma_{\pi_*, G_*}^1}{p!}
\\ & \quad 
	+ 
	\left[
	\sum_{k=1}^\infty \frac{1}{k!}
	\sum_{p_* \geq 0}
	\sum_{p_1\geq2} \cdots \sum_{p_k \geq 2} \chi_{(\sum_{\ell \in \{*, 1,\ldots,k\}} p_\ell = p)}
	\sum_{(\pi_*, G_*)\in\mcL_{p_*}^1}
	\frac{\Gamma_{\pi_*, G_*}^1}{p_*!}
	\right.
\\ & \qquad 
	\left.
	\times 
	\sum_{(\pi_1,G_1)\in\mcL_{p_1}}\cdots \sum_{(\pi_k,G_k)\in\mcL_{p_k}}
	\frac{\Gamma_{\pi_1,G_1}}{p_1!} \cdots \frac{\Gamma_{\pi_k,G_k}}{p_k!}
	\right],
\end{aligned}
\end{equation*}
where the superscript $1$ refers to the slightly modified structure as described in \Cref{def.diagram}, 
where there may be no $g$-edges connecting to $\{1\}$, and there is no integration over $x_1$.
Note that $(\pi_1, G_1)\in\mcL_{p_1},\ldots,(\pi_k, G_k)\in\mcL_{p_k}$ 
only deal with internal vertices. 

Again here we take the sum over $p$'s. We are allowed to permute the $p$-sum inside of the $p_*$- and $p_1,\ldots,p_k$-sums 
if the sums over linked diagrams are absolutely summable. That is, if 
\[
	\sum_{p\geq0} \frac{1}{p!} \abs{\sum_{(\pi, G)\in\mcL_{p}^1} \Gamma_{\pi, G}^1} < \infty,
	\qquad
	\sum_{p \geq 2} \frac{1}{p!} \abs{\sum_{(\pi, G)\in\mcL_{p}} \Gamma_{\pi, G}} < \infty,
\]
then we have, as for the normalization in \Cref{eqn.compute.final.C_N}, that 
\[
\begin{aligned}
\rho^{(1)}_{\textnormal{Jas}}
& = \frac{N!}{C_N}\sum_{p=0}^\infty \frac{1}{p!} \sum_{(\pi, G)\in\mcD_{p}^1} \Gamma_{\pi, G}^1
\\ & = \frac{N!}{C_N} 
	\left[\sum_{p_* \geq 0}\sum_{(\pi_*, G_*)\in\mcL_{p_*}} \frac{\Gamma_{\pi_*, G_*}^1}{p_*!}\right]
	\times 
	\left[\sum_{k=0}^\infty \frac{1}{k!} \left(\sum_{p\geq 2} \sum_{(\pi, G)\in\mcL_{p}} \frac{\Gamma_{\pi, G}}{p!}\right)^k\right]
\\
& = \sum_{p \geq 0}\sum_{(\pi, G)\in\mcL_{p}^1} \frac{\Gamma_{\pi, G}^1}{p!}
= \rho^{(1)} + \sum_{p \geq 1}\frac{1}{p!}\sum_{(\pi, G)\in\mcL_{p}^1} \Gamma_{\pi, G}^1,
\end{aligned}
\]
where we used \Cref{eqn.compute.final.C_N} 
and that the $p=0$ term just gives the $1$-particle density of the Slater determinant.
Thus, by translation invariance, we have 
\[ 
\rho = \rho^{(1)} 
	= \rho^{(1)}_{\textnormal{Jas}} 
	= \sum_{p \geq 0}\frac{1}{p!}\sum_{(\pi, G)\in\mcL_{p}^1} \Gamma_{\pi, G}^1.
\]

\subsubsection{Calculation of the \texorpdfstring{$2$}{2}-particle reduced density}\label{sec.calc.rho2}
Let us now compute the $2$-particle reduced density.
As before by expanding all the $f^2=1+g$ factors apart from the factor $f_{12}$ we get
\[
\rho^{(2)}_{\textnormal{Jas}}
= \frac{N!}{C_N} f_{12}^2 \sum_{p=0}^\infty \int X_p^{2} \rho^{(p+2)} \ud x_3\ldots\ud x_{p+2},
\qquad 
X_p^{2} = \sum_{G\in \mcG_p^{2}}\prod_{e\in G} g_e,
\]
where $\mcG_p^{2}$ is as in \Cref{def.diagram}. 

We again decompose the diagrams into linked components.
However, we need to distinguish between the cases where $\{1\}$ and $\{2\}$ are both in the same component or in two different components.
The computation is analogous to the computation above. We get
\[
\begin{aligned}
\rho^{(2)}_{\textnormal{Jas}}(x_1,x_2)
& = f_{12}^2 
	\vast(
	\underbrace{
	\sum_{p_1, p_2 \geq 0} \sum_{\substack{(\pi_1, G_1)\in\mcL_{p_1}^1 \\ (\pi_2, G_2)\in\mcL_{p_2}^1}} 
	\frac{\Gamma_{\pi_1, G_1}^1(x_1) \Gamma_{\pi_2,G_2}^1(x_2)}{p_1! p_2!}
	}_{\{1\} \textnormal{ and } \{2\} \textnormal{ in different linked components}}
	\, + \,  
	\underbrace{
	\vphantom{
	\sum_{p_1, p_2 \geq 0} \sum_{\substack{(\pi_1, G_1)\in\mcL_{p_1}^1 \\ (\pi_2, G_2)\in\mcL_{p_2}^1}} 
	\frac{\Gamma_{\pi_1, G_1}^1(x_1) \Gamma_{\pi_2,G_2}^1(x_2)}{p_1! p_2!}
	}
	\sum_{p_{12}\geq 0} \sum_{(\pi_{12}, G_{12})\in\mcL_{p_{12}}^2} 
		\frac{\Gamma_{\pi_{12}, G_{12}}^{2}(x_1,x_2)}{p_{12}!}}_{\{1\} \textnormal{ and } \{2\} \textnormal{ in same linked component}}
	\vast)
\\
& = f_{12}^2
	\left(
		\rho^{(1)}_{\textnormal{Jas}}(x_1) \rho^{(1)}_{\textnormal{Jas}}(x_2) 
		+ \sum_{p_{12}\geq 0} \frac{1}{p_{12}!}\sum_{(\pi_{12}, G_{12})\in\mcL_{p_{12}}^2} \Gamma_{\pi_{12}, G_{12}}^{2}(x_1,x_2)
	\right)
\end{aligned}
\]
after pulling in the sum over $p\geq 0$.
The $p_{12} = 0$ term together with the term 
$\rho^{(1)}_{\textnormal{Jas}}(x_1)\rho^{(1)}_{\textnormal{Jas}}(x_2) = \rho^{(1)}(x_1)\rho^{(1)}(x_2)=\rho^2$ gives $\rho^{(2)}$ by Wick's rule. 
The condition of absolute convergence is
\[	
	\sum_{p \geq 2} \frac{1}{p!} \abs{\sum_{(\pi, G)\in\mcL_{p}} \Gamma_{\pi, G}} < \infty,
	\quad
	\sum_{p\geq0} \frac{1}{p!} \abs{\sum_{(\pi, G)\in\mcL_{p}^1} \Gamma_{\pi, G}^1} < \infty,
	\quad
	\sum_{p\geq 0} \frac{1}{p!} \abs{\sum_{(\pi, G)\in\mcL_{p}^2} \Gamma_{\pi, G}^{2}} < \infty.
\]

\subsubsection{Calculation of the \texorpdfstring{$3$}{3}-particle reduced density}\label{sec.calc.rho3}
The calculation of the $3$-particle reduced density follows along the same arguments as for the $2$-particle reduced density.
We introduce the relevant diagrams and decompose these according to their linked components.
As for the $2$-particle reduced density we distinguish between the cases according to whether the external vertices $\{1,2,3\}$
are in the same or different linked components. 
They are either in $1, 2$ or $3$ different components. 
Thus, schematically
\[
	\rho^{(3)}_{\textnormal{Jas}}
	= f_{12}^2 f_{13}^2 f_{23}^2 
		\left[
			\sum_{\textnormal{all in different} } \Gamma^1 \Gamma^1 \Gamma^1
			+
			\left(\sum_{\textnormal{$2$ in one} } \Gamma^1(x_1) \Gamma^2(x_2,x_3)
									+ 
									\textnormal{permutations}\right)
			+
			\sum_{\textnormal{all in same}} \Gamma^3
		\right].
\]
Any case where one external vertex is in its own linked component, the contribution for 
such a linked component is $\rho^{(1)}_{\textnormal{Jas}} = \rho^{(1)}= \rho$ (assuming absolute convergence).
Thus, 
\[
\begin{aligned}	
	\rho^{(3)}_{\textnormal{Jas}}(x_1,x_2,x_3)
	& = f_{12}^2 f_{13}^2 f_{23}^2 
		\Biggl[
			\rho^3
			+
			\rho
			\sum_{p\geq 0} \frac{1}{p!}\sum_{(\pi, G)\in\mcL_{p}^2} 
			\left(\Gamma_{\pi, G}^{2}(x_1,x_2)
									+\Gamma_{\pi,G}^2(x_1,x_3)
									+\Gamma_{\pi,G}^2(x_2,x_3)\right)
	\\ & \qquad 
			+
			\sum_{p\geq 0} \frac{1}{p!}\sum_{(\pi, G)\in\mcL_{p}^3} 
			\Gamma_{\pi, G}^{3}(x_1,x_2,x_3)
		\Biggr].
\end{aligned}	
\]
All the $p=0$-terms together give $\rho^{(3)}$ by Wick's rule.
The condition for absolute convergence is 
that 
for any $q\leq 3$ we have
$\sum_{p\geq 0} \frac{1}{p!} \abs{\sum_{(\pi, G)\in\mcL_{p}^q} \Gamma_{\pi, G}^{q}} < \infty$.

\subsubsection{Summarising the results}
For the absolute convergence we have
\begin{lemma}\label{lem.linked.sum.abs.conv}
There exists a constant $c > 0$ such that if $s a^3\rho \log (b/a)(\log N)^3 < c$, then 
\[
	\sum_{p\geq 0} \frac{1}{p!} \abs{\sum_{(\pi, G)\in\mcL_{p}^{q}} \Gamma_{\pi, G}^{q}} < \infty
\]
for any $0\leq q\leq 3$. 
\end{lemma}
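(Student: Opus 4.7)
The plan is to establish absolute convergence via the classical tree-graph bound for cluster expansions (in the form of Poghosyan-Ueltschi / Ueltschi), treating linked diagrams as connected multigraphs with two edge types: $g$-edges from $G$ and directed $\gamma^{(1)}_N$-edges from $\pi$. After applying the triangle inequality
\[
|\Gamma^q_{\pi, G}| \leq \idotsint \prod_{e\in G} |g_e| \, \prod_{j=1}^{q+p} |\gamma^{(1)}_N(x_j - x_{\pi(j)})| \ud x_{q+1} \cdots \ud x_{q+p},
\]
each $(\pi, G) \in \mcL^q_p$ is viewed as a connected (multi)graph on $\{1, \ldots, p+q\}$ with positive edge weights. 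The tree-graph inequality then dominates the sum over connected multigraphs by a sum over spanning trees times an exponential combinatorial factor, provided the stability (repulsivity) condition is satisfied. In our setting this condition follows from the uniform bounds $\|g\|_\infty \lesssim 1$ (since $0 \leq f \leq (1 - a^3/b^3)^{-1}$) and $\|\gamma^{(1)}_N\|_\infty = \gamma^{(1)}_N(0) = \rho$, which give a bounded repulsive weight per vertex.

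For a fixed spanning tree $T$, I would integrate over internal vertices from the leaves inward. Each tree edge contributes an $L^1$-estimate: a $g$-edge contributes
\[
\int_\Lambda |g(x)| \ud x \lesssim a^3 \log(b/a),
\]
using the pointwise bound $|g(x)| \lesssim a^3/|x|^3$ on $a \leq |x| \leq b$ from \Cref{prop.bound.f0.hc} and the support of $g$; a $\gamma$-edge contributes
\[
\int_\Lambda |\gamma^{(1)}_N(x)| \ud x \lesssim s (\log N)^3,
\]
which is precisely the Lebesgue constant estimate of \Cref{lem.lebesgue.constant.fermi.polyhedron}. Non-tree edges are bounded by their suprema ($\lesssim 1$ and $\rho$ respectively), and fixed points of $\pi$ (self-loops) contribute factors of $\gamma^{(1)}_N(0) = \rho$.

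To finish, I sum over spanning tree shapes and over assignments of the two edge types. Cayley's formula gives at most $(p+q)^{p+q-2}$ labelled trees on $p+q$ vertices; combined with the $1/p!$ prefactor and Stirling, this absorbs into a factor $C^p$. The crucial point is that the effective weight introduced per additional internal vertex is at most
\[
\delta := C\, s \, a^3 \rho \log(b/a)(\log N)^3,
\]
obtained by pairing each $L^1$ tree-edge estimate with a companion density factor (from self-loops, non-tree edges, or the overall $\rho$ normalization implicit in the Slater determinant). Under the assumption $\delta < c$ for a sufficiently small constant $c > 0$, the $p$-sum is dominated by a convergent geometric series, uniformly in $N$.

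The main obstacle is the combinatorial bookkeeping in this mixed two-edge-type setting. While $g$-tree-edges already carry the desired small factor $a^3 \log(b/a)$, each $\gamma$-tree-edge contributes the potentially large factor $s(\log N)^3$, which must be compensated by a density factor $\rho$ to produce the balanced per-vertex weight $\delta$. This balancing exploits the cycle structure of $\pi$ (in-degree and out-degree equal to one at every vertex), which forces $\gamma$-edges to occur in closed loops rather than freely in spanning trees: a $\gamma$-edge in a spanning tree necessarily arises from "breaking" a cycle, which costs a density factor of $\rho$ through the resulting self-loop or through the Slater determinant normalization. Once this compensation is made precise, the stability condition and the final geometric bound follow, giving the stated convergence criterion for all $0 \leq q \leq 3$.
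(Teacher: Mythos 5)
There is a genuine gap, and it is at the heart of the problem: you bound $\bigl|\sum_{(\pi,G)}\Gamma^q_{\pi,G}\bigr|$ by moving the absolute value inside the sum over the permutation $\pi$ and then try to control the resulting positive-weight multigraph by the tree-graph inequality. Once the absolute values are inside, the fermionic cancellations encoded in the sign $(-1)^\pi$ are gone, and no smallness condition can restore convergence. Concretely, for a fixed connected $g$-graph on $p$ vertices the sum $\sum_{\pi\in\mcS_p}\prod_j\bigl|\gamma^{(1)}_N(x_j-x_{\pi(j)})\bigr|$ is a permanent, bounded only by $p!\,\rho^p$; combined with the tree count $\#\mcT_p\sim C^p p!$ the prefactor $1/p!$ absorbs just one factorial, so the $p$-th term of your majorant grows like $p!\,(Ca^3\rho\log(b/a))^{p-1}$ and the series diverges no matter how small $s\,a^3\rho\log(b/a)(\log N)^3$ is. Your closing remark that each $\gamma$-tree-edge can be ``compensated by a density factor $\rho$'' via the cycle structure of $\pi$ does not address this: the obstruction is the factorial \emph{number} of permutations, not the size of the individual factors. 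Moreover, the Poghosyan--Ueltschi tree-graph bound is a statement about sums over \emph{all} connected graphs with stable edge weights; the $\pi$-edges are constrained to form a permutation and carry signs, so the inequality does not even formally apply to your mixed two-edge-type multigraph.

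The paper's proof keeps the signed $\pi$-sum intact precisely to exploit these cancellations. The tree-graph bound is applied only to the $g$-edges, cluster by cluster (where $-1\le g\le 0$ gives stability for free), while the full signed sum over $\pi$ with the linkedness constraint is identified as a truncated correlation function and rewritten via the Brydges--Battle--Federbush interpolation formula \eqref{eqn.truncated.correlation}: it becomes a sum over \emph{anchored trees} connecting the clusters times $\int\ud\mu_\tau\,\det\mcN(r)$, and the determinant is bounded by $\rho^{\sum n_\ell-(k-1)}$ through the Gram--Hadamard inequality (\Cref{lem.bound.det.N}) -- i.e.\ with no factorial. Only the $k-1$ anchored-tree edges are integrated in $L^1$, each contributing the Lebesgue-constant factor $Cs(\log N)^3$ of \Cref{lem.lebesgue.constant.fermi.polyhedron}, and each cluster of size $n_\ell\ge 2$ supplies at least one factor $Ca^3\rho\log(b/a)$, which is how the per-cluster smallness $s\,a^3\rho\log(b/a)(\log N)^3$ arises. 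If you want to repair your argument you must replace the termwise triangle inequality over $\pi$ by some determinant-type bound of this kind; as written, the key estimate fails.
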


\begin{remark}\label{rmk.general.f.gamma}
As mentioned in the beginning of the section, the calculation just given is still valid 
if we replace $f$ by some general function $h\geq 0$ 
and replace $\abs{D_N}^2$ by some more general determinant 
$\det [\gamma(x_i-x_j)]_{1\leq i,j \leq N}$,
where $\gamma(x-y)$ is the kernel of some rank $N$ projection
(for instance the one-particle density matrix of a Slater determinant of $N$ particles).
The criterion for absolute convergence reads 
\[
  \sup_{x_1,\ldots,x_n}\prod_{\substack{1\leq i < j \leq n}} h(x_i-x_j) \leq C^{n} \textnormal{ for all } n\in \N,
  \qquad 
  \frac{1}{L^3}\sum_{k\in \frac{2\pi}{L}\Z^3} \abs{\hat \gamma(k)}
  \int_\Lambda \abs{h^2 - 1} \ud x \int_\Lambda \abs{\gamma} \ud y < c
\]
for some constants $c, C>0$, where the first condition is the ``stability condition'' of the tree-graph bound \cites[Proposition 6.1]{Poghosyan.Ueltschi.2009}{Ueltschi.2018}
and $\hat \gamma(k) := \int_\Lambda \gamma(x) e^{-ikx} \ud x$.
\end{remark}

\noindent
We give the proof of \Cref{lem.linked.sum.abs.conv} in \Cref{sec.abs.conv}. 
Thus, we have the following.
\begin{thm}\label{thm.gaudin.expansion}
There exists a constant $c > 0$ such that if $s a^3\rho \log (b/a)(\log N)^3 < c$, then
\begingroup
\allowdisplaybreaks
\begin{equation}\label{eqn.thm.gaudin}
\begin{aligned}
\frac{C_N}{N!} & = \exp \left(\sum_{p=2}^\infty \frac{1}{p!}\sum_{(\pi, G)\in\mcL_p} \Gamma_{\pi, G}\right),
\\
\rho^{(1)}_{\textnormal{Jas}}
& = \rho^{(1)} + \sum_{p = 1}^\infty \frac{1}{p!}\sum_{(\pi, G)\in\mcL_p^1} \Gamma_{\pi, G}^1,
\\
\rho^{(2)}_{\textnormal{Jas}}
& = f_{12}^2 \left[
	\rho^{(2)} 
	+ \sum_{p= 1}^\infty \frac{1}{p!} \sum_{(\pi, G)\in\mcL_p^2} 
		\Gamma_{\pi,G}^{2}
	\right].
\\
	\rho^{(3)}_{\textnormal{Jas}}
	& = f_{12}^2 f_{13}^2 f_{23}^2 
		\Biggl[
			\rho^{(3)}
			+
			\rho
			\sum_{p\geq 1} \frac{1}{p!}\sum_{(\pi, G)\in\mcL_{p}^2} 
			\left(\Gamma_{\pi, G}^{2}(x_1,x_2)
									+\Gamma_{\pi,G}^2(x_1,x_3)
									+\Gamma_{\pi,G}^2(x_2,x_3)\right)
	\\ & \qquad 
			+
			\sum_{p\geq 1} \frac{1}{p!}\sum_{(\pi, G)\in\mcL_{p}^3} 
			\Gamma_{\pi, G}^{3}
		\Biggr].
\end{aligned}
\end{equation}
\endgroup
\end{thm}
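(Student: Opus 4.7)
The plan is that all four identities essentially follow by formalising the heuristic computations already carried out in \Cref{sec.calc.C_N,sec.calc.rho1,sec.calc.rho2,sec.calc.rho3}. The only genuine content beyond bookkeeping is using the absolute convergence (\Cref{lem.linked.sum.abs.conv}) to justify the rearrangement of the summations into products of sums over linked diagrams, and ultimately into an exponential in the case of $C_N/N!$.

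First I would carry out a clean version of the expansion. Writing $f_{ij}^2 = 1 + g_{ij}$ and expanding the product, one groups summands by the set $S$ of indices touched by a $g$-edge. Using permutation symmetry and relabelling this yields, for each fixed external set of size $q\in\{0,1,2,3\}$, the representation
\[
\frac{C_N}{N!} \ \text{(resp. } \rho_{\textnormal{Jas}}^{(q)}/\text{prefactors)} = \frac{N!}{C_N}\sum_{p \geq 0} \frac{1}{p!} \idotsint X_p^q\, \rho^{(p+q)}\ud x_{q+1}\cdots\ud x_{p+q},
\]
with $X_p^q = \sum_{G\in\mcG_p^q}\prod_{e\in G} g_e$ (and with the convention that the $q=0$ sum excludes the constant term, while the $q>0$ sums include it). Then Wick's rule expands $\rho^{(p+q)}$ as a signed sum over $\pi \in \mcS_{p+q}$ of products of $\gamma^{(1)}_N$-factors, producing exactly the diagram values $\Gamma_{\pi,G}^q$ of \Cref{def.diagram}. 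The fact that $\rho^{(p+q)} = 0$ for $p+q > N$ allows extending the $p$-sum to $\infty$.

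Next, I would decompose each diagram into linked components. For $C_N/N!$ every component is internal and of size $\geq 2$, while for $\rho_{\textnormal{Jas}}^{(q)}$ one has $q$ distinguished components determined by the partition of $\{1,\dots,q\}$ among them. The combinatorial identity
\[
\sum_{p_1 \geq 2}\cdots \sum_{p_k \geq 2} \chi_{(\sum p_\ell = p)} \binom{p}{p_1,\dots,p_k} \prod_{\ell} \frac{\Gamma_\ell}{1} \ \leftrightarrow\ \frac{1}{k!}\Bigl(\sum_{p\geq 2} \tfrac{1}{p!}\sum_{(\pi,G)\in\mcL_p}\Gamma_{\pi,G}\Bigr)^k
\]
is the key algebraic step, coming from the multinomial counting of labellings of the components. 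Summing over $k$ gives the exponential for $C_N/N!$, and the analogous argument with $q$ distinguished components produces the product of the exponential (which cancels $C_N/N!$) with the sums over the distinguished components; the remaining terms with all external vertices in disjoint singleton components combine with the $p=0$ contributions to reproduce $\rho^{(q)}$ via Wick's rule.

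The main obstacle, and the only nontrivial analytic input, is the legitimacy of pulling the $p$-sum inside the $p_1,\dots,p_k$-sums and of factorising the integrals over disjoint linked components. Both steps are applications of Fubini/Tonelli together with dominated convergence on the counting measure of diagrams, and both require exactly the hypotheses collected in \Cref{lem.linked.sum.abs.conv}: for each $q\in\{0,1,2,3\}$ the series $\sum_{p}\frac{1}{p!}\bigl|\sum_{(\pi,G)\in\mcL_p^q}\Gamma_{\pi,G}^q\bigr|$ converges. Under the stated smallness condition $sa^3\rho\log(b/a)(\log N)^3 < c$, this is granted, so all manipulations above are rigorous and the stated formulas follow. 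The final step is simply to read off the displayed identities in \Cref{eqn.thm.gaudin}, factoring the Jastrow prefactors $f_{12}^2$ and $f_{12}^2f_{13}^2f_{23}^2$ back out, and to note that the $p=0$ terms in the linked-diagram sums for $\rho^{(q)}_{\textnormal{Jas}}$ reconstruct the free Slater reduced density $\rho^{(q)}$.
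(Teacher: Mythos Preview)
Your proposal is correct and follows essentially the same route as the paper: the formal expansions and linked-component decompositions are precisely those of \Cref{sec.calc.C_N,sec.calc.rho1,sec.calc.rho2,sec.calc.rho3}, and the only analytic input is \Cref{lem.linked.sum.abs.conv}, which you invoke exactly where the paper does to justify pulling the $p$-sum inside and obtaining the exponential/product structure.
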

\noindent
The first three formulas are the same as those of \cite[Equations (3.19), (4.9) and (8.4)]{Gaudin.Gillespie.ea.1971}. Our main  contribution is to give a 
 criterion for convergence, and hence for validity of the formulas.

 \begin{remark}\label{rmk.why.Fermi.polyhedron}
The factor $s(\log N)^3$ results from the bound in \Cref{lem.lebesgue.constant.fermi.polyhedron}. 
If we had not introduced the Fermi polyhedron, and instead used the Fermi ball, we would instead have a factor $N^{1/3}$
as mentioned in \Cref{sec.polyhedron}. That is, the condition for absolute convergence would be 
$N^{1/3}a^3\rho\log (b/a) < c$ for some constant $c>0$.

In either case, the $N$-dependence prevents us from taking a thermodynamic limit directly, and we instead use a box method 
of gluing together multiple smaller boxes, where we may put some finite number of particles in each box, see \Cref{sec.box.method}.
For the case of using a Slater determinant with momenta in the Fermi ball, 
there is no way to choose the number of particles in each smaller box so that 
both the absolute convergence holds ($N^{1/3}a^3\rho\log (b/a) < c$), 
and the finite-size error made in the kinetic energy ($\sim N^{2/3}\rho^{2/3}$, see \Cref{lem.KE.polyhedron}) is smaller than the claimed 
energy contribution from the interaction ($\sim Na^3\rho^{5/3}$, see \Cref{thm.main}). 
For this reason we need the polyhedron of \Cref{sec.polyhedron}.
\end{remark}
\begin{remark}
The formulas for $\rho^{(2)}_{\textnormal{Jas}}$ and $\rho^{(3)}_{\textnormal{Jas}}$ only hold for periodic boundary conditions,
since in this case $\rho^{(1)}_{\textnormal{Jas}} = \rho^{(1)} = \rho$.
For different boundary conditions, one has to take into account that this equality is not valid. 
In general one has for $\rho^{(2)}_{\textnormal{Jas}}$ that 
\[
\begin{aligned}
\rho^{(2)}_{\textnormal{Jas}}
& = f_{12}^2 
	\left[
	\rho^{(2)} 
	+ \sum_{p= 1}^\infty \frac{1}{p!} \sum_{(\pi, G)\in\mcL_p^2} 
		\Gamma_{\pi, G}^{2}
	+ \left(\rho^{(1)}_{\textnormal{Jas}}(x_1) \rho^{(1)}_{\textnormal{Jas}}(x_2) - \rho^{(1)}(x_1)\rho^{(1)}(x_2)\right)
	\right]
\\
& = f_{12}^{2}
	\left[
	\rho^{(2)}
	+ \sum_{p= 1}^\infty \frac{1}{p!} \sum_{(\pi, G)\in\mcL_p^2} 
		\Gamma_{\pi, G}^{2}
	+ \sum_{\substack{p_1,p_2\geq 0 \\ p_1+p_2 \geq 1}} \frac{1}{p_1!p_2!} 
		\sum_{\substack{(\pi_1,G_1)\in\mcL_{p_1}^1 \\ (\pi_2,G_2)\in\mcL_{p_2}^1}}
		\Gamma_{\pi_1,G_1}^1(x_1) \Gamma_{\pi_2,G_2}^{1}(x_2)
	\right].
\end{aligned}
\]
One of the reasons we work with periodic boundary conditions is that by doing so, we don't have the complication of dealing with the additional term.
\end{remark}
\begin{remark}
By following the same procedure as in the previous sections, 
one can equally well get formulas for the higher order reduced particle densities.
Similarly one can extend the absolute convergence, \Cref{lem.linked.sum.abs.conv}, 
to any $q$, only one may have to change the constant $c>0$ to depend on $q$.
\end{remark}


\subsection{Absolute convergence}\label{sec.abs.conv}
We now prove \Cref{lem.linked.sum.abs.conv}, 
i.e. that the appropriate sums are absolutely convergent.

\begin{proof}[Proof of \Cref{lem.linked.sum.abs.conv}]
We consider the four sums 
$\sum_{p\geq 0} \frac{1}{p!} \abs{\sum_{(\pi, G)\in\mcL_{p}^{q}} \Gamma_{\pi, G}^{q}}$, $q=0,1,2,3$
one by one.

\subsubsection{Absolute convergence of the \texorpdfstring{$\Gamma$}{Gamma}-sum}
Consider first $\frac{1}{p!}\sum_{(\pi, G)\in\mcL_{p}} \Gamma_{\pi, G}$. 
Split the sum according to the number of connected components of $G$, labelled as $(G_1,\ldots,G_k)$ of sizes $n_1, \ldots, n_k$.
We call these \emph{clusters}. (Note that ``connected'' only refers to the graph $G$, and is independent of the permutation $\pi$.)
Name the vertices in $G_1$ as $\{1,\ldots,n_1\}$, in $G_2$ as $\{n_1 + 1,\ldots,n_1 + n_2\}$ and so on.
Then we have (for $p\geq 2$)
\[
\begin{aligned}
\frac{1}{p!}\sum_{(\pi, G)\in\mcL_p} \Gamma_{\pi, G}
	& = \sum_{k=1}^\infty \frac{1}{k!} 
		\sum_{n_1,\ldots,n_k \geq 2} \frac{1}{n_1! \cdots n_k!} 
		\chi_{\left({\sum n_\ell = p}\right)} 
		\sum_{\substack{G_1,\ldots,G_k \\ G_\ell \in \mcC_{n_\ell}}} 
		\sum_{\pi \in \mcS_p}
		(-1)^\pi 
		\chi_{\left((\pi, \cup G_\ell)\in\mcL_p\right)}
	\\
	& \qquad \times
		\idotsint \prod_{\ell=1}^k \prod_{e\in G_\ell} g_e \prod_{j=1}^p \gamma^{(1)}_N(x_j; x_{\pi(j)}) \ud x_1 \ldots\ud x_p,
\end{aligned}
\]
where $\mcC_n$ denotes the set of connected graphs on $n$ (labelled) vertices.
The factorial factors are similar to those of \Cref{eqn.normalization.linked.components}.
Indeed, the factor $1/(k!)$ comes from counting the possible labelling of the clusters, 
and the factors $1/(n_1!),\ldots,1/(n_k!)$ come from counting the number of ways to distribute 
the $p=\sum n_\ell$ vertices into the clusters and using the factor $1/(p!)$ already present.
\begin{figure}[htb]
\centering
\begin{tikzpicture}[line cap=round,line join=round,>=triangle 45,x=1.0cm,y=1.0cm]
\draw [rotate around={0.:(2.,2.)},line width=0.5pt,fill=black,fill opacity=0.1] (2.,2.) ellipse (1.7cm and 1.3cm);
\draw [rotate around={0.:(9.25,2.)},line width=0.5pt,fill=black,fill opacity=0.1] (8,2.) ellipse (2.0cm and 1.5cm);
\draw [rotate around={0.:(5.5,3.)},line width=0.5pt,fill=black,fill opacity=0.1] (4.75,3.5) ellipse (1.2 cm and 0.5cm);
\node (1) at (1,2) {};
\node (2) at (2,3) {};
\node (3) at (3,2) {};
\node (4) at (2,1) {};
\node (5) at (4,3.5) {};
\node (6) at (5.5,3.5) {};
\node (7) at (6.5,2) {};
\node (8) at (7.5,1) {};
\node (9) at (7.5,3) {};
\node (10) at (9,1) {};
\node (11) at (9,3) {};
\foreach \x in {1,...,11} \node at (\x) {$\bullet$};
\draw[dashed] (1) -- (2) -- (3) -- (4);
\draw[dashed] (5) -- (6);
\draw[dashed] (7) -- (9) -- (11);
\draw[dashed] (8) -- (10);
\draw[dashed] (7) -- (11);
\draw[dashed] (7) -- (8) -- (9);
\draw[->] (1) to[bend left] (2);
\draw[->] (2) to (5);
\draw[->] (3) to (1);
\draw[->] (4) to (8);
\draw[->] (5) to[bend left] (6);
\draw[->] (6) to[bend left] (11);
\draw[->] (7) to (3);
\draw[->] (8) to[bend left] (4);
\draw[->] (9) to[bend right] (7);
\draw[->] (10) to[out=120,in=60,loop] ();
\draw[->] (11) to[bend right] (9);
\node[anchor = east] at (0.2,2) {$G_1$};
\node at (4.75, 2.5) {$G_2$};
\node[anchor = west] at (10.3,2) {$G_3$};
\end{tikzpicture}
\caption{A linked diagram $(\pi, G)\in\mcL_{11}$ decomposed into clusters $G_1, G_2, G_3$. 
Dashed lines denote $g$-edges, and arrows $(i,j)$ denote that $\pi(i) = j$.}
\end{figure}
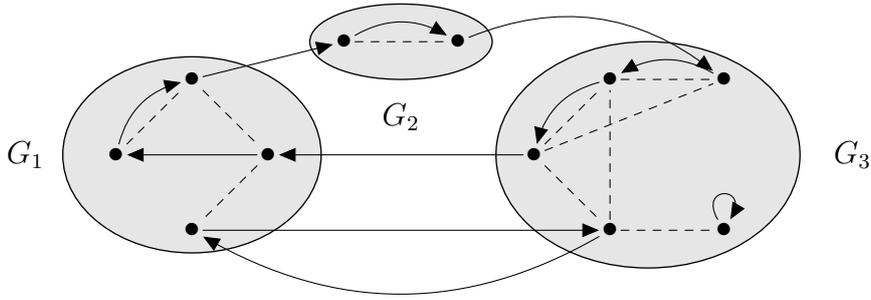

For the analysis we will need the following.
\begin{defn}
Let $A_1,\ldots,A_k$ denote disjoint non-empty sets.
The \emph{truncated} correlation function is 
\begin{equation}
\label{eqn.define.truncated.correlation}
\rho_{\rm t}^{(A_1,\ldots,A_k)}
:= \sum_{\pi \in \mcS_{\cup_\ell A_\ell}} (-1)^\pi \chi_{((\pi, \cup G_\ell) \textnormal{ linked})}
  \prod_{j\in \cup_\ell A_\ell} \gamma_{N}^{(1)}(x_j; x_{\pi(j)}).
\end{equation}
for some choice of connected graphs $G_\ell \in \mcC_{A_\ell}$.
The definition does not depend on the choice of graphs $G_\ell$.

If the underlying sets $A_1,\ldots,A_k$ are clear we will simply denote the truncated correlation by their sizes,
\[
  \rho_{\rm t}^{(|A_1|,\ldots,|A_k|)}
  =
  \rho_{\rm t}^{(A_1,\ldots,A_k)}.
\]
\end{defn}

\noindent
The truncated correlation functions 
are also sometimes referred to as \emph{connected} correlation functions \cite[Appendix D]{Giuliani.Mastropietro.ea.2021}.

\begin{remark}\label{rmk.form.truncated.correlation}
We write the characteristic function in \Cref{eqn.define.truncated.correlation} 
as $\chi_{\left((\pi, \cup G_\ell) \textnormal{ linked}\right)}$ 
for ease of generalizability to the cases in \Cref{sec.abs.conv.Gamma1,sec.abs.conv.Gamma2}
where we will need the notion of truncated correlations also for some of the vertices being external. 
For the truncated correlations it doesn't matter which (if any) vertices are external, only which vertices
are in which clusters.
\end{remark}

\noindent
Since $0\leq f\leq 1$ we have $-1\leq g\leq 0$. Thus, by the tree-graph bound \cites[Proposition 6.1]{Poghosyan.Ueltschi.2009}{Ueltschi.2018} we have
\[
	\abs{\sum_{G\in \mcC_{n}} \prod_{e\in G} g_e} \leq  \sum_{T \in \mcT_n} \prod_{e\in T} |g_e|,
\]
where $\mcT_n$ is the set of all trees on $n$ (labelled) vertices. Thus we get
\begin{equation}\label{eqn.bdd.sum.abs.linked}
\begin{aligned}
& \sum_{p=2}^\infty \frac{1}{p!}\abs{\sum_{(\pi, G)\in\mcL_p} \Gamma_{\pi, G}}
	\\ & \quad \leq 
		\sum_{k=1}^\infty \frac{1}{k!} \sum_{n_1,\ldots,n_k \geq 2} \frac{1}{n_1! \cdots n_k!} 
		\idotsint 
			\sum_{\substack{T_1,\ldots,T_k \\ T_\ell \in \mcT_{n_\ell}}} 
			\prod_{\ell=1}^k \prod_{e\in T_\ell} |g_e |
			\abs{\rho_{\rm t}^{(n_1,\ldots,n_k)}}
		\ud x_1 \ldots\ud x_{\sum n_\ell}.
\end{aligned}
\end{equation}
Here the vertices in $T_\ell$ are the same as in $G_\ell$, i.e.
$T_1$ has vertices $\{1,\ldots,n_1\}$, $T_2$ has vertices $\{n_1+1,\ldots,n_1+n_2\}$ and so on.

In \cite[Equation (D.53)]{Giuliani.Mastropietro.ea.2021} the following formula, known as the Brydges-Battle-Federbush (BBF) formula, is shown 
for the truncated correlation functions
\begin{equation}\label{eqn.truncated.correlation}
	\rho_{\rm t}^{(A_1,\ldots,A_k)}
	= \sum_{\tau\in \mcA^{(A_1,\ldots,A_k)}} \prod_{(i,j) \in \tau} \gamma^{(1)}_N(x_{i}; x_j) \int \ud\mu_\tau(r) \det \mathcal{N}(r),
\end{equation}
where $\mcA^{(A_1,\ldots,A_k)}$ is the set of all anchored trees on $k$ clusters with vertices $A_1,\ldots,A_k$.
(If the sets $A_1,\ldots,A_k$ are clear we will write $\mcA^{(|A_1|,\ldots,|A_k|)} = \mcA^{(A_1,\ldots,A_k)}$ as for the truncated correlations.)
An \emph{anchored tree} is a directed graph on all the $\cup_\ell A_\ell$ vertices, 
such that each vertex has at most one incoming and at most one outgoing edge 
(note that these are all $\gamma^{(1)}_N$-edges, and that the $g$-edges don't matter for this construction) 
and such that upon identifying all vertices in each cluster, the resulting graph is a (directed) tree.
The measure $\mu_\tau$ is a probability measure on 
$\{(r_{\ell\ell'})_{1\leq\ell\leq\ell'\leq k} : 0\leq r_{\ell\ell'}\leq 1\} = [0,1]^{k(k-1)/2}$
and depends only on $\tau$ but not on the factors $\gamma_N^{(1)}(x_i;x_j)$.
Finally, $\mathcal{N}$ is an $I\times J$ (square) matrix  
with entries $\mathcal{N}_{ij} = r_{c(i)c(j)}\gamma^{(1)}_N(x_i;x_j)$,
where $c(i)$ is the (label of the) cluster containing the vertex $\{i\}$
and $r_{\ell\ell'} := r_{\ell'\ell}$ if $\ell > \ell'$.
Here 
\[
	\textstyle
	I = \left\{i\in \bigcup_{\ell=1}^k A_\ell : \nexists j : (i,j)\in \tau\right\},
	\qquad 
	J = \left\{j\in \bigcup_{\ell = 1}^k A_\ell:  \nexists i : (i,j)\in \tau\right\},
\]
are the set of $i$'s (respectively $j$'s) not appearing as $i$'s (respectively $j$'s) 
in the anchored tree $\tau$.
\begin{figure}[htb]
\centering
\begin{tikzpicture}[line cap=round,line join=round,>=triangle 45,x=1.0cm,y=1.0cm]
\draw [line width=0.5pt,fill=black,fill opacity=0.1] (1.5,1.5) ellipse (0.9cm and 0.9cm);
\node (1) at (1,1) {};
\node (2) at (1,2) {};
\node (3) at (2,1) {};
\node (4) at (2,2) {};
\draw[dashed] (1) -- (2) -- (3) -- (4);
\node[anchor = east] at (0.7,1.5) {$T_1$};
\draw [line width=0.5pt,fill=black,fill opacity=0.1] (2,3.5) ellipse (0.2 cm and 0.9 cm);
\node (5) at (2,3) {};
\node (6) at (2,4) {};
\draw[dashed] (5) -- (6);
\node[anchor=east] at (1.8,3.5) {$T_2$};
\draw [line width=0.5pt,fill=black,fill opacity=0.1] (4,1.3) ellipse (1.3 cm and 1 cm);
\node (7) at (3.5,2) {};
\node (8) at (3,1) {};
\node (9) at (4.5,2) {};
\node (10) at (4,1) {};
\node (11) at (5,1) {};
\draw[dashed] (7) -- (9);
\draw[dashed] (8) -- (10);
\draw[dashed] (7) -- (11);
\draw[dashed] (7) -- (8);
\node[anchor = west] at (5,2) {$T_3$};
\draw [line width=0.5pt,fill=black,fill opacity=0.1] (4,3.6) ellipse (0.9cm and 0.8cm);
\node (12) at (3.5,4) {};
\node (13) at (4.5,4) {};
\node (14) at (4,3) {};
\draw[dashed] (12) -- (14) -- (13);
\node[anchor = west] at (4.8,4) {$T_4$};
\draw [line width=0.5pt,fill=black,fill opacity=0.1] (7,1.4) ellipse (0.9cm and 0.8cm);
\node (15) at (6.5,1) {};
\node (16) at (7.5,1) {};
\node (17) at (7,2) {};
\draw[dashed] (15) -- (17) -- (16);
\node[anchor = west] at (7.9,1.5) {$T_5$};
\draw [line width=0.5pt,fill=black,fill opacity=0.1] (7,3.5) ellipse (1.5cm and 1cm);
\node (18) at (6,3) {};
\node (19) at (6,4) {};
\node (20) at (7,3) {};
\node (21) at (7,4) {};
\node (22) at (8,3) {};
\node (23) at (8,4) {};
\draw[dashed] (18) -- (19);
\draw[dashed] (18) -- (20) -- (21);
\draw[dashed] (20) -- (22) -- (23);
\node[anchor=west] at (8.4,3) {$T_6$};
\draw[->] (2) to (5);
\draw[->] (5) to (7);
\draw[->] (6) to (12);
\draw[->] (11) to (15);
\draw[->] (18) to (17);
\foreach \i in {1,...,23} \draw[fill] (\i) circle [radius=1.5pt];
\end{tikzpicture}
\caption{An anchored tree $\tau$ (arrows) and trees $T_1,\ldots,T_6$ (dashed lines).}
\end{figure}
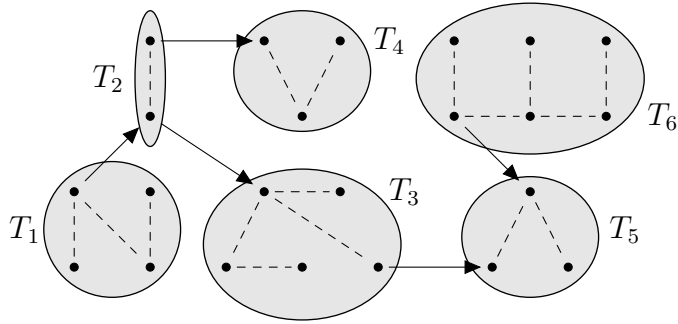

From \cite[Equation (D.57)]{Giuliani.Mastropietro.ea.2021} it follows that 
$\abs{\det \mathcal{N}} \leq \rho^{\sum n_\ell - (k-1)}$.
To see this, one has to adapt the argument in \cite[Lemma D.2]{Giuliani.Mastropietro.ea.2021} slightly. 
We sketch the argument here.
\begin{lemma}[{\cite[Lemmas D.2 and D.6]{Giuliani.Mastropietro.ea.2021}}]\label{lem.bound.det.N}
The matrix $\mcN(r)$ satisfies $\abs{\det \mathcal{N}(r)} \leq \rho^{\sum n_\ell - (k-1)}$
for all $r\in [0,1]^{k(k-1)/2}$.
\end{lemma}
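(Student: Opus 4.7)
The plan is to apply a Gram--Hadamard bound to $\mathcal{N}(r)$, in the same spirit as \cite[Lemma D.2]{Giuliani.Mastropietro.ea.2021}; the only place where our setting differs from the standard one is the precise form of $\gamma^{(1)}_N$, which turns out to play essentially no role.

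First I would observe that, since $\gamma^{(1)}_N$ is the kernel of the orthogonal projection onto $\mathrm{span}\{u_k : k\in P_F\}$, it admits the trivial Gram representation
\[
\gamma^{(1)}_N(x;y) = \sum_{k\in P_F} u_k(x)\overline{u_k(y)} = \ip{\phi_y}{\phi_x}_{\ell^2(P_F)},
\qquad \phi_x := \bigl(\overline{u_k(x)}\bigr)_{k\in P_F}.
\]
Crucially, $\norm{\phi_x}^2 = \gamma^{(1)}_N(x;x) = N/L^3 = \rho$ independently of $x$; this is the one input that depends on the shape of $P_F$, and it is precisely the uniform diagonal bound needed below. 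Next, I would invoke that by construction of the BBF interpolation \cite[Appendix D]{Giuliani.Mastropietro.ea.2021} the symmetric matrix $R := (r_{\ell\ell'})_{1\le\ell,\ell'\le k}$ (with the convention $r_{\ell\ell}=1$) is positive semidefinite for every $r$ in the support of $\mu_\tau$. Hence $R$ admits a Cholesky-type decomposition $r_{\ell\ell'} = \ip{v_\ell}{v_{\ell'}}$ with unit vectors $v_1,\ldots,v_k$ in some auxiliary Hilbert space.

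Combining the two representations yields, for $i\in I$ and $j\in J$,
\[
\mathcal{N}_{ij} = r_{c(i)c(j)}\,\gamma^{(1)}_N(x_i;x_j) = \ip{v_{c(j)}\otimes \phi_{x_j}}{v_{c(i)}\otimes \phi_{x_i}}.
\]
Writing $w_i := v_{c(i)}\otimes \phi_{x_i}$ and $w'_j := v_{c(j)}\otimes \phi_{x_j}$, we have $\mathcal{N} = W^*W'$ where $W,W'$ are rectangular matrices with columns $(w_i)_{i\in I}$ and $(w'_j)_{j\in J}$. Applying the Cauchy--Binet inequality followed by Hadamard's inequality for positive semidefinite Gram matrices then gives
\[
\abs{\det \mathcal{N}}^2 \le \det(W^*W)\det(W'^*W') \le \prod_{i\in I}\norm{w_i}^2 \prod_{j\in J}\norm{w'_j}^2 \le \rho^{|I|+|J|},
\]
since $\norm{w_i} = \norm{v_{c(i)}}\cdot\norm{\phi_{x_i}} \le \sqrt{\rho}$ and likewise for $w'_j$.

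To conclude, I would count sizes: because the anchored tree $\tau$ has exactly $k-1$ edges (one less than the number of clusters it spans after contraction), exactly $k-1$ vertices carry an outgoing $\tau$-edge and exactly $k-1$ carry an incoming one, so $|I| = |J| = \sum_\ell n_\ell - (k-1)$. This gives the claimed $\abs{\det \mathcal{N}(r)} \le \rho^{\sum_\ell n_\ell - (k-1)}$. There is essentially no obstacle in the argument; the only nontrivial input is the positive semidefiniteness of $R$, which is a standard feature of the BBF construction and is why the $r$-dependence is harmless. The uniform bound $\norm{\phi_x}^2 = \rho$ is the reason the same proof goes through verbatim with $P_F$ a polyhedron in place of the Fermi ball.
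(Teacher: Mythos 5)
Your proof is correct and takes essentially the same route as the paper: a Gram representation of $\gamma^{(1)}_N$ with $\norm{\phi_x}^2=\rho$, the Gram--Hadamard bound, and the $r$-dependence absorbed by tensoring with unit vectors coming from the positive semidefiniteness of $(r_{\ell\ell'})$, which is precisely the mechanism of \cite[Lemma D.6]{Giuliani.Mastropietro.ea.2021} that the paper cites rather than spells out. The only difference is presentational: you make the Cauchy--Binet/Hadamard step and the psd structure of the BBF interpolation explicit, while the paper delegates both to the cited reference.
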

\begin{proof}
First we bound $\rho^{(p)} = \det[\gamma_{N}^{(1)}(x_i; x_j)]_{1\leq i,j\leq p}$ following the strategy of \cite[Lemma D.2]{Giuliani.Mastropietro.ea.2021}.
This is done by writing (as in \cite[Equations (D.8), (D.9)]{Giuliani.Mastropietro.ea.2021})
\[	
	\gamma^{(1)}_N(x_i ; x_j) = \ip{\alpha_i}{\beta_j}_{\ell^2((2\pi/L)\Z^3)},
\] 
where for $k\in \frac{2\pi}{L}\Z^3$
\[
	\alpha_i(k) 
	= L^{-3/2} e^{-ikx_i} \chi_{(k\in P_F)}
	\qquad
	\beta_j(k) 
	= L^{-3/2} e^{-ikx_j} \chi_{(k\in P_F)} 
  = \alpha_j(k).
\]
By the the Gram-Hadamard inequality \cite[Lemma D.1]{Giuliani.Mastropietro.ea.2021} we have 
\[
	\abs{\rho^{(p)}} 
	= \abs{\det[\gamma_{N}^{(1)}(x_i; x_j)]_{1\leq i,j\leq p}} 
	\leq \prod_{i=1}^p \norm{\alpha_i}_{\ell^2((2\pi/L)\Z^3)}\norm{\beta_i}_{\ell^2((2\pi/L)\Z^3)} 
	= \rho^p.
\]
By modifying this argument exactly as described in the proof of \cite[Lemma D.6]{Giuliani.Mastropietro.ea.2021} 
and noting that $r_{\ell\ell'}\leq 1$ one concludes the desired.
\end{proof}

\begin{remark}
We denote the functions as $\alpha_j$ and $\beta_j$ (even though they denote the same function) 
for ease of modifying the argument later in order to prove \Cref{eqn.bound.det.N.tilde}.
\end{remark}

\noindent
In particular one  concludes the bound 
\begin{equation}\label{eqn.bound.truncated.correlation}
	\abs{\rho_{\rm t}^{(n_1,\ldots,n_k)}}
	\leq 
	\rho^{\sum n_\ell - (k - 1)}
	\sum_{\tau\in \mcA^{(n_1,\ldots,n_k)}} \prod_{(i,j) \in \tau} \abs{\gamma^{(1)}_N(x_{i}; x_j)}.
\end{equation}

\noindent
Plugging this into \Cref{eqn.bdd.sum.abs.linked} above we get 
\begin{equation*}	
\begin{aligned}
\sum_{p=2}^\infty \frac{1}{p!}\abs{\sum_{(\pi, G)\in\mcL_p} \Gamma_{\pi, G}}
		& 
		\leq 
		\sum_{k=1}^\infty \frac{1}{k!} \sum_{n_1,\ldots,n_k \geq 2} \frac{1}{n_1! \cdots n_k!} 
		\rho^{\sum n_\ell - (k-1)}
			\sum_{\substack{T_1,\ldots,T_k \\ T_\ell \in \mcT_{n_\ell}}} 
			\sum_{\tau \in \mcA^{(n_1,\ldots,n_k)}} 
	\\ & \qquad 
		\times 
		\idotsint 
		\ud x_1 \ldots\ud x_{\sum n_\ell}
			\prod_{\ell=1}^k \prod_{e\in T_\ell} |g_e |
			\prod_{(i,j) \in \tau} \abs{\gamma^{(1)}_N(x_{i}; x_j)}.
\end{aligned}
\end{equation*}	
To compute these integrals we note that by \Cref{prop.bound.f0.hc}
\begin{multline*}
	\int |g(x)| \ud x = \int \left(1 - f(x)^2\right) \ud x 
		\\
    \leq \frac{4\pi}{(1 - a^3/b^3)^2} \int_{a}^b \left(\left(1 - \frac{a^3}{b^3}\right)^2 - \left(1 - \frac{a^3}{r^3}\right)^2\right) r^2 \ud r 
		\leq C a^3 \log \frac{b}{a}.
\end{multline*}
That is, each factor of $g_e$ gives a contribution $Ca^3\log (b/a)$ after integration.
The $\gamma^{(1)}_N$-factors we can bound by \Cref{lem.lebesgue.constant.fermi.polyhedron} as 
\[
	\int \abs{\gamma^{(1)}_N(x;y)} \ud y \leq C s (\log N)^3.
\]
This takes care of all but one integration, which gives the volume factor $L^3$. 
We shall compute the integrations in the following order: 
\begin{enumerate}[(1.)]
\item Pick any leaf $\{j_0\}$ of the anchored tree $\tau$ lying in some cluster $\ell$,
meaning that there is exactly one edge of $\tau$ incident in $\ell$.

\item Consider $\{j_0\}$ as the root of $T_\ell$ and pick any leaf $\{j\}$ of $T_\ell$ and integrate over $x_j$. 
Since $\{j\}$ is a leaf of $T_\ell$ and $\{j_0\}$ is a leaf of $\tau$ we have that the only 
place $x_j$ appears in the integrand is in some factor $g_{ij}$ for $\{i\}$ the unique vertex connected to $\{j\}$
by a $g$-edge. 
Hence the $x_j$-integral contributes $\int |g|$ by the translation invariance. 

Remove $\{j\}$ and its incident edge from $T_\ell$.

Repeat for all vertices in the cluster until only $\{j_0\}$ remain. (At this point the entirety of $T_\ell$ has been removed.)

\item Integrate over $x_{j_0}$. Since $\{j_0\}$ is a leaf of $\tau$ the only 
place $x_{j_0}$ appears (in the remaining integrand) is in the $\gamma^{(1)}_N$-factor from $\tau$. 
Thus, the $x_{j_0}$-integral gives a contribution $\int |\gamma^{(1)}_N|$ by the translation invariance. 

Remove $\{j_0\}$ and its incident edge from $\tau$.

\item Repeat steps (1.)-(3.) until all integrals have been computed. The final integral gives the volume factor $L^3$.
\end{enumerate}
Steps (1.)-(3.) compute all integrations in one cluster. Repeating this process we integrate over the clusters one by one 
and thus compute all the integrals.
Note that each integration is always over a coordinate associated to a leaf of the relevant graphs. 
This is a key point, since then by translation invariance each integration contributes exactly $\int |g|$ 
or $\int |\gamma_N^{(1)}|$ whichever is appropriate.
In total we thus have the bound 
\[
\begin{aligned}
	\idotsint 
		\ud x_1 \ldots\ud x_{\sum n_\ell}
			\prod_{\ell=1}^k \prod_{e\in T_\ell} |g_e |
			\prod_{(i,j) \in \tau} \abs{\gamma^{(1)}_N(x_{i}; x_j)}
	\leq 
	\left(Ca^3\log(b/a)\right)^{\sum n_\ell - k} \left(Cs(\log N)^3\right)^{k-1} L^3.
\end{aligned}
\]
This bound is for each summand $\tau, T_1,\ldots,T_k$. 
By Cayley's formula $\# \mcT_{n} = n^{n-2} \leq C^n n!$, 
and by \cite[Appendix D.5]{Giuliani.Mastropietro.ea.2021}
$\#\mcA^{(n_1,\ldots,n_k)} \leq k! 4^{\sum n_\ell}$.
Thus, we get 
\[
\begin{aligned}
	\sum_{p=2}^\infty \frac{1}{p!}\abs{\sum_{(\pi, G)\in\mcL_p} \Gamma_{\pi, G}}
	& \leq CN \sum_{k=1}^\infty 
		\left[Cs(\log N)^3\right]^{k-1} 
		\left[\sum_{n= 2}^\infty (Ca^3\rho\log(b/a))^{n-1}\right]^{k}
	\\ & 
	\leq 
		C N a^3 \rho\log(b/a)\sum_{k=1}^\infty \left[C  sa^3 \rho\log(b/a)(\log N)^3\right]^{k-1} 
	\\ & 
	\leq CN a^3\rho\log(b/a) < \infty,
\end{aligned}
\]
for $sa^3\rho\log(b/a)(\log N)^3 $ sufficiently small.
This shows that $\sum_p \frac{1}{p!} \sum_{(\pi, G)\in\mcL_p} \Gamma_{\pi, G}$ is absolutely convergent
under this condition.

\subsubsection{Absolute convergence of the \texorpdfstring{$\Gamma^1$}{Gamma1}-sum}\label{sec.abs.conv.Gamma1}
Consider now $\frac{1}{p!} \sum_{(\pi, G)\in\mcL_p^1} \Gamma_{\pi, G}^1$. 
The argument is almost identical to the argument above. 
We again split the sum according to the connected components of $G$.
Call these $G_*, G_1,\ldots,G_k$, where $G_*$ is the distinguished connected component (cluster) containing the distinguished vertex $\{1\}$. 
Exactly as for $\frac{1}{p!} \sum_{(\pi, G)\in\mcL_p} \Gamma_{\pi, G}$ we have that 
(for $p = 0$ one has to interpret the empty product of integrals as $1$,
so $\sum_{(\pi,G)\in\mcL_0^1}\Gamma^1_{\pi,G} = \rho$)
\[
\begin{aligned}
& 
\frac{1}{p!} \sum_{(\pi, G)\in\mcL_p^1} \Gamma_{\pi, G}^1(x_1)
	\\ & \quad
		= \sum_{k=0}^\infty \frac{1}{k!} 
		\sum_{n_*\geq 0} 
		\sum_{n_1,\ldots,n_k \geq 2} \frac{1}{n_*!n_1! \cdots n_k!} 
		\chi_{\left({ \sum_{\ell\{*,1,\ldots,k\}} n_\ell = p}\right)}
		\sum_{\substack{G_1 \in \mcC_{n_1}, \ldots, G_k \in \mcC_{n_k}
    \\ G_* \in \mcC_{n_* + 1}}} 
		\sum_{\pi \in \mcS_{p+1}}
		(-1)^\pi 
	\\
	& \qquad \times
		\chi_{\left((\pi, \cup_{\ell\in\{*,1,\ldots,k\}} G_\ell)\in\mcL_p^1\right)}
		\idotsint \prod_{\ell\in\{*,1,\ldots,k\}} \prod_{e\in G_\ell} g_e \prod_{j=1}^{p+1} \gamma^{(1)}_N(x_j; x_{\pi(j)}) \ud x_2 \ldots\ud x_{p+1}.
\end{aligned}	
\]
Here for the $k=0$ term one should think of the $n_1,\ldots,n_k$-sums as being an empty product before it is an empty sum, i.e. it should give a factor $1$. 
That is, the $k=0$ term reads 
\[
		\sum_{G_* \in \mcC_{p + 1}} 
		\sum_{\pi \in \mcS_{p+1}}
		(-1)^\pi 
		\idotsint \prod_{e\in G_*} g_e \prod_{j=1}^{p+1} \gamma^{(1)}_N(x_j; x_{\pi(j)}) \ud x_2 \ldots\ud x_{p+1},
\]
since $(\pi, G_*)$ is trivially linked, since $G_*$ is connected.
From here on, we won't write out the $k=0$ term separately to make the formulas more concise.
As before we use the tree-graph inequality and the truncated correlation function (see \Cref{rmk.form.truncated.correlation})
to get 
\[
\begin{aligned}
\sum_{p\geq 0} \frac{1}{p!} \abs{\sum_{(\pi, G)\in\mcL_p^1} \Gamma_{\pi, G}^1}
	& \leq \sum_{k=0}^\infty \frac{1}{k!} 
		\sum_{n_*\geq 0} 
		\sum_{n_1,\ldots,n_k \geq 2} \frac{1}{n_*!n_1! \cdots n_k!} 
		\sum_{\substack{T_1\in\mcT_{n_1},\ldots,T_k\in\mcT_{n_k}
    \\ T_* \in \mcT_{n_*+1}}}
	\\
	& \qquad \times
		\idotsint \prod_{\ell\in\{*,1,\ldots,k\}} \prod_{e\in T_\ell} |g_e| 
		\abs{\rho_{\rm t}^{(n_*+1,n_1,\ldots,n_k)}} \ud x_2 \ldots\ud x_{\sum_{\ell\in\{*,1,\ldots,k\}} n_\ell +1}.
\end{aligned}
\]
To bound this we use the same bound, \Cref{eqn.bound.truncated.correlation}, on the truncated correlations as before. 
It reads 
\[
	\abs{\rho_{\rm t}^{(n_*+1,n_1,\ldots,n_k)}}
	\leq 
	\rho^{\sum_{\ell \in \{*,1,\ldots,k\}} n_\ell + 1 - (k+1 - 1)}
	\sum_{\tau\in \mcA^{(n_*+1, n_1,\ldots,n_k)}} \prod_{(i,j) \in \tau} \abs{\gamma^{(1)}_N(x_{i}; x_j)}.
\]
Computing the integrals is  as before, 
with a few differences. 
During each repeat (apart from the last one) of step (1.) we pick not just any leaf $j_0$ but a leaf $j_0$ not in the cluster containing $\{1\}$.
(Since any tree has at least $2$ leaves, this is always possible.)
For each of these repeats, the argument is the same as before.
For the last repeat of step (1.) where only the cluster containing $\{1\}$ remains 
we follow step (2.) with the slight change, that the root is chosen to be $\{1\}$.
(There are no $\gamma^{(1)}_N$-factors left, so we are free to choose any vertex as the root.)
There is then no step (3.) since we do not integrate over $x_1$.

This has the following effect. 
First, the last variable $x_1$ is not integrated over, so there is no volume factor $L^3$.
And second, there are $k$ integrals $\int |\gamma_N^{(1)}|$ instead of $k-1$ 
(since there are $k+1$ many clusters including the distinguished one). 
For the bounds of the sum of all terms we use that 
$\# \mcA^{(n_*+1,n_1,\ldots,n_k)}\leq (k+1)! 4^{\sum_{\ell \in\{*,1,\ldots,k\}} n_\ell + 1}$.
Thus, uniformly in $x_1$
\[
\begin{aligned}
& \sum_{p\geq 0} \frac{1}{p!} \abs{\sum_{(\pi, G)\in\mcL_p^1} \Gamma_{\pi, G}^1}
	\\ & \quad 
	\leq C \rho 
		\left(\sum_{n_*=0}^\infty \left[C a^3\rho\log(b/a)\right]^{n_*}\right)
		\left(
			\sum_{k=0}^\infty 
			(k+1)
			\left[Cs(\log N)^3\right]^{k} 
			\left[\sum_{n= 2}^\infty (Ca^3\rho\log(b/a))^{n-1}\right]^{k}
		\right)
	\\
	& \quad \leq C\rho < \infty,
\end{aligned}	
\]
for $sa^3\rho\log(b/a)(\log N)^3$ sufficiently small.
This shows that $\sum_p \frac{1}{p!} \sum_{(\pi, G)\in\mcL_p^1} \Gamma_{\pi, G}^1$ is absolutely convergent
under this condition.


\subsubsection{Absolute convergence of the \texorpdfstring{$\Gamma^{2}$}{Gamma2}-sum}\label{sec.abs.conv.Gamma2}
For the third sum the argument is mostly analogous. There are a few changes needed for the argument.
First, one has to distinguish between the two cases of whether or not 
the two distinguished vertices $\{1,2\}$ are in the same connected component (cluster) of the graph or not. 
One computes
\begin{equation}\label{eqn.compute.Gamma12}
\frac{1}{p!} \sum_{(\pi, G)\in\mcL_p^2} \Gamma_{\pi, G}^{2}
= \Sigma_{\textnormal{different}} + \Sigma_{\textnormal{same}},
\end{equation}
where 
\begin{equation}
\label{eqn.compute.Gamma2.details}
\begin{aligned}
\Sigma_{\textnormal{different}} 
& = \sum_{k=0}^\infty \frac{1}{k!} 
		\sum_{n_*, n_{**}\geq 0} 
		\sum_{n_1,\ldots,n_k \geq 2} \frac{1}{\prod_\ell n_\ell!} 
		\chi_{\left({ \sum n_\ell} = p\right) }
		\sum_{\substack{
    G_1\in\mcC_{n_1},\ldots,G_k\in \mcC_{n_k}
    \\ G_* \in \mcC_{n_*+\{1\}} \\ G_{**}\in \mcC_{n_{**}+\{2\}}}}
	\\ & \quad \times 
	\idotsint \prod_{\ell}\prod_{e\in G_\ell} g_e \rho^{(n_*+\{1\}, n_{**}+\{2\}, n_1,\ldots,n_k)}_{\rm t} \ud x_3 \ldots\ud x_{p+2},
\\
\Sigma_{\textnormal{same}}
	& = \sum_{k=0}^\infty \frac{1}{k!} 
		\sum_{n_*\geq 1} 
		\sum_{n_1,\ldots,n_k \geq 2} \frac{1}{\prod_\ell n_\ell!} 
		\chi_{\left({ \sum n_\ell} = p\right) }
		\sum_{\substack{
    G_1\in \mcC_{n_1},\ldots,G_k\in \mcC_{n_k}
    \\ G_* \in \mcC_{n_* +\{1,2\}} \\ (1,2)\notin G_*}} 
	\\ & \quad \times 
	\idotsint \prod_{\ell}\prod_{e\in G_\ell} g_e \rho^{(n_*+\{1,2\},n_1,\ldots,n_k)}_{\rm t} \ud x_3\ldots\ud x_{p+2}.
\end{aligned}
\end{equation}
Here $\sum_\ell$ and $\prod_\ell$ are over $\ell \in \{*, **, 1,\ldots,k\}$ or $\ell \in \{*, 1,\ldots,k\}$, 
whichever is appropriate.
With a slight abuse of notation we write $n_* + \{1\}$ for the set of vertices in the cluster containing the external vertex $\{1\}$
(similarly for $n_{**} + \{2\}, n_* + \{1,2\}$).
This set has exactly $n_*$ internal vertices.
For $p=0$ one has to interpret the empty product of integrals as a factor $1$.

The first part is the contribution where $\{1\}$ and $\{2\}$ are in distinct clusters (labelled $*$ and $**$),
the second part is the contribution from where they are in the same (labelled $*$). 
Note that in the second contribution we have $n_* \geq 1$. Indeed, $\{1\}$ and $\{2\}$ are connected, but not 
by an edge. Hence they must be connected by a path of length $\geq 2$, which necessarily goes through at least 
one vertex $\{j\}, j\ne 1,2$.

We treat the two cases separately.
In the case where the two distinguished vertices are in different clusters 
we may readily apply both the tree-graph bound and the bound on the truncated correlation \Cref{eqn.bound.truncated.correlation}.
The latter reads
\[
	\abs{\rho_{\rm t}^{(n_*+\{1\},n_{**}+\{2\},n_1,\ldots,n_k)}} 
	\leq 
	\rho^{(\sum_{\ell \in \{*,**,1,\ldots,k\}} n_\ell + 2) - (k+2 - 1)}
	\sum_{\tau \in \mcA^{(n_*+\{1\}, n_{**}+\{2\}, n_1, \ldots, n_k)}} 
	\prod_{(i,j)\in \tau} \abs{\gamma^{(1)}_N(x_i; x_j)}.
\]
The integration procedure is slightly modified compared to that of \Cref{sec.abs.conv.Gamma1}. 
In the anchored tree there is a path between (the cluster containing) $\{1\}$
and (the cluster containing) $\{2\}$. 
For the edge incident to (the cluster containing) $\{1\}$ on this path, we bound $|\gamma^{(1)}_N|\leq \rho$.
This cuts the anchored tree into two anchored trees $\tau_1, \tau_2$ such that (with a slight abuse of notation) 
$1\in \tau_1$ and $2\in \tau_2$. 
We may  follow the integration procedure exactly as for the $\Gamma^1$-sum for each of the anchored trees $\tau_1$ and $\tau_2$.
Recall the bound 
\[
	\#\mcA^{(n_*+\{1\},n_{**}+\{2\},n_1,\ldots,n_k)} \leq (k+2)! 4^{\sum_{\ell \in \{*,**,1,\ldots,k\}} n_\ell + 2} \leq C (k^2+1) k! 4^{\sum_{\ell \in \{*,**,1,\ldots,k\}} n_\ell}.
\]
We thus get for the contribution of all terms where the two distinguished vertices are in different clusters 
(assuming that $sa^3\rho \log(b/a) (\log N)^3$ is sufficiently small)
\begin{equation}\label{eqn.sum.different.Gamma2}
\begin{aligned}
& |\Sigma_{\textnormal{different}}| 
	\\ & \quad  
		\leq C \rho^2 
		\left(\sum_{n_*=0}^\infty \left[C a^3\rho\log(b/a)\right]^{n_*}\right)^2
		\left(
			\sum_{k=0}^\infty 
			(k^2+1)
			\left[Cs(\log N)^3\right]^{k} 
			\left[\sum_{n= 2}^\infty (Ca^3\rho\log(b/a))^{n-1}\right]^{k}
		\right)
	\\ & \quad 
	\leq C\rho^2 < \infty.
\end{aligned}
\end{equation}

\noindent
Now we consider the case where $\{1\}$ and $\{2\}$ are in the same distinguished cluster.
Here we may readily apply the bound in  \Cref{eqn.bound.truncated.correlation} on the truncated correlation
but we need to be a bit careful in applying the tree-graph bound.
Indeed, then the sum over graphs in the cluster containing the two vertices is not $\sum_{G_* \in \mcC_{n_*+2}}$,
but instead $\sum_{G_* \in \mcC_{n_*+2}, (1,2)\notin G_{*}}$, since in the construction, no $g$-edges are allowed between $\{1\}$ and $\{2\}$. 
To still apply the tree-graph bound, we define 
\[
	\tilde g_e := \begin{cases}
	g_e & e\ne (1,2) \\ 0 & e = (1,2).
	\end{cases}
\]
Then $-1\leq \tilde g_e\leq 0$ so we can apply the tree-graph bound with these edge-weights to get 
\[
	\abs{\sum_{G_* \in \mcC_{n_*+2}, (1,2)\notin G_{*}} \prod_{e\in G_*} g_e}
	= \abs{\sum_{G_* \in \mcC_{n_*+2}} \prod_{e\in G_*} \tilde g_e} 
	\leq \sum_{T_*\in\mcT_{n_*+2}}\prod_{e\in T_*} |\tilde g_e|
	= \sum_{T_*\in\mcT_{n_*+2}, (1,2)\notin T_*}\prod_{e\in T_*} |g_e|.
\]
We again have to modify the integrations slightly. 
The integrations over all clusters apart from the distinguished one may be computed as for the $\Gamma$- and $\Gamma^1$-sums.
For the distinguished cluster with $\{1\}$ and $\{2\}$ there is some path of $g$-edges connecting them. 
Pick the unique edge on this path incident with $\{1\}$ and bound $|g|\leq 1$ for this factor. 
This splits the tree $T_*$ into two trees $T_*^1$ and $T_*^2$ with $1\in T_*^1$ and $2\in T_*^2$. 
We may  compute the integrations over all the variables with index in the distinguished 
cluster exactly as for the $\Gamma^1$-sum for each tree $T_*^1$ and $T_*^2$ separately. 
One  gets for the contribution 
(assuming that $sa^3\rho \log(b/a) (\log N)^3$ is sufficiently small)
\begin{equation}\label{eqn.sum.same.Gamma2}
\begin{aligned}
& |\Sigma_{\textnormal{same}}| 
	\\ & \quad  
		\leq C \rho^2 
		\left(\sum_{n_*=1}^\infty \left[C a^3\rho\log(b/a)\right]^{n_*}\right)
		\left(
			\sum_{k=0}^\infty 
			(k+1)
			\left[Cs(\log N)^3\right]^{k} 
			\left[\sum_{n= 2}^\infty (Ca^3\rho\log(b/a))^{n-1}\right]^{k}
		\right)
	\\ & \quad 
	\leq Ca^3\rho^3\log (b/a)  < \infty.
\end{aligned}
\end{equation}



\noindent
We conclude that 
\[
	\sum_{p\geq 0} \frac{1}{p!} \abs{\sum_{(\pi, G)\in\mcL_p^2} \Gamma_{\pi, G}^{2}} \leq C\rho^2 < \infty,
\]
uniformly in $x_1,x_2$
for sufficiently small $sa^3\rho\log(b/a)(\log N)^3$.


\subsubsection{Absolute convergence of the \texorpdfstring{$\Gamma^3$}{Gamma3}-sum}
The argument for the last sum is completely analogous to the argument for the $\Gamma^2$-sum.
We have to distinguish between different cases of the clusters containing the external vertices $\{1,2,3\}$.
Either there is one cluster containing all of them, one cluster containing two of them and one cluster containing the last vertex,
or they are all in distinct clusters. 
One then deals with the different cases exactly as we did for the $\Gamma^2$-sum. We skip the details.
This concludes the proof of \Cref{lem.linked.sum.abs.conv}.
\end{proof}

\section{Energy of the trial state}\label{sec.energy.in.box}
In this section we bound the energy of the trial state $\psi_N$ defined in \Cref{eqn.define.trial.state}. 
Recall \Cref{eqn.compute.energy.first}.  
By \Cref{thm.gaudin.expansion} we have (for $s a^3\rho \log (b/a)(\log N)^3$ sufficiently small)
\[
\begin{aligned}
	\rho^{(2)}_{\textnormal{Jas}}(x_1, x_2) 
	& 
		= f(x_1-x_2)^2 
		\left(\rho^{(2)}(x_1, x_2) 
		+ \sum_{p= 1}^\infty \frac{1}{p!} \sum_{(\pi, G)\in\mcL_p^2} 
		\Gamma_{\pi, G}^2
		\right).
\end{aligned}
\]
We can expand $\rho^{(2)}$ in $x_1 - x_2$ using \Cref{prop.2.density}. 
The second term is an error term we have to control.
Additionally, also the three-body term is an error we have to control.
We claim that
\begin{lemma}\label{lem.derivative.sum.linked.diagrams}
There exist constants $c,C > 0$ such that if $s a^3\rho \log (b/a) (\log N)^3 < c$ and
$N=\# P_F > C$, then 
\[
\begin{aligned}
	& \abs{\sum_{p= 1}^\infty \frac{1}{p!} \sum_{(\pi, G)\in\mcL_p^2} \Gamma_{\pi, G}^{2}}
	\\ & \quad 
		\leq 
	Ca^6 \rho^4 (\log (b/a))^2 
		\Bigl[ 
			s^3 a^6 \rho^2 (\log b/a)^2 (\log N)^9 
			+ 1
		\Bigr]
	\\ & \qquad 
	+ Ca^3 \rho^{3+2/3} |x_1-x_2|^2 
		\Bigl[
			s^5 a^{12}\rho^4 (\log(b/a))^5  (\log N)^{16}
      + b^2 \rho^{2/3}
			+ \log(b/a)
		\Bigr].
\end{aligned}
\]
\end{lemma}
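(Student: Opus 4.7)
The plan is to refine the absolute-convergence argument of \Cref{sec.abs.conv} so that we extract both the order of smallness in $a^3\rho$ and the quadratic vanishing of the leading part of the sum as $x_1 \to x_2$. I would begin from the decomposition \Cref{eqn.compute.Gamma12} into $\Sigma_{\text{different}}$ (external vertices $\{1\},\{2\}$ in different $g$-clusters) and $\Sigma_{\text{same}}$ (in the same $g$-cluster, forcibly joined by a $g$-path of length $\geq 2$), and in each piece separate out a handful of smallest diagrams from a large-order tail.

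For the large-order tail, I would use the tree-graph bound and the BBF formula \Cref{eqn.truncated.correlation} exactly as in \Cref{sec.abs.conv}, combined with \Cref{lem.lebesgue.constant.fermi.polyhedron}. Because the smallness parameter $sa^3\rho\log(b/a)(\log N)^3$ is assumed small, each additional cluster or additional vertex costs one factor of this parameter, so truncating the geometric series after a fixed threshold leaves a tail of size $s^3a^6\rho^2(\log b/a)^2(\log N)^9$ and (for the quadratic part, which needs one extra derivative-Lebesgue-constant factor from \Cref{lem.derivative.lebesgue.constant}) $s^5a^{12}\rho^4(\log b/a)^5(\log N)^{16}$, matching the stated exponents.

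For the small diagrams, the two parts of the bound arise through different mechanisms. In $\Sigma_{\text{same}}$ the smallest diagrams have $n_* = 1$, i.e.\ three vertices with $g$-edges $(1,j),(j,2)$; the two $g$-integrations contribute $\int |g(x_1-x_j)g(x_j-x_2)|\,dx_j \leq (Ca^3\log(b/a))^2$, and the $\gamma_N^{(1)}$-kernels are bounded pointwise by $\rho$, yielding the constant piece $Ca^6\rho^4(\log b/a)^2$. In $\Sigma_{\text{different}}$ with $p \geq 1$, the external vertices are joined through a $\gamma_N^{(1)}$-path in the anchored tree. The axis-reflection symmetry of the Fermi polyhedron built into \Cref{defn.P_F.true}, as used in the proof of \Cref{prop.2.density}, gives $\gamma_N^{(1)}(x;y) = \rho + O(\rho^{5/3}|x-y|^2)$ with no linear term. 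Substituting this Taylor expansion into each small diagram and separating leading from quadratic part generates the $|x_1-x_2|^2$ factor with a $\rho^{2/3}$ prefactor; the subleading $b^2\rho^{2/3}$ and $\log(b/a)$ arise from $\int|g(y)||y|^2\,dy \leq Cb^2 a^3$ (when a $|x-y|^2$ Taylor factor lands on an intermediate $g$-edge) and the standard $\int|g|\leq Ca^3\log(b/a)$.

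The main obstacle is managing the $|x_1-x_2|^2$ extraction coherently across all diagrams: for every edge of the anchored tree one must choose whether to bound the $\gamma_N^{(1)}$-kernel by its $L^1$-norm (no $|x-y|^2$ gained) or to Taylor-expand it (gaining $|x-y|^2$ at the cost of either a derivative-Lebesgue-constant factor via \Cref{lem.derivative.lebesgue.constant} or an extra $\rho^{2/3}$). Striking this balance and verifying that the Taylor expansion need be carried out only along a single $\gamma_N^{(1)}$-edge on the unique $\{1\}$--$\{2\}$ path in the anchored tree is the delicate bookkeeping that yields exactly the exponents of $s,a,\rho,\log(b/a),\log N$ stated in the lemma without introducing extra logarithmic losses.
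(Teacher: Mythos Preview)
Your decomposition into $\Sigma_{\textnormal{different}}/\Sigma_{\textnormal{same}}$, the use of the tree-graph and BBF bounds for the tail, and the appeal to \Cref{lem.derivative.lebesgue.constant} for derivative factors are all in line with the paper. The gap is in how you extract the $|x_1-x_2|^2$ factor.

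You propose to Taylor expand the kernel $\gamma_N^{(1)}(x;y)=\rho+O(\rho^{5/3}|x-y|^2)$ along an edge of the anchored tree on the $\{1\}$--$\{2\}$ path. But the edge incident to $\{1\}$ is $\gamma_N^{(1)}(x_1;x_j)$ with $j\neq 2$ in general, so this produces a factor $|x_1-x_j|^2$, not $|x_1-x_2|^2$; and the ``leading part'' (replacing that kernel by $\rho$) leaves a diagram that still depends on $x_1$ through the \emph{other} $\gamma_N^{(1)}$-edge at vertex $1$, so it is neither zero nor manifestly of the right order. You never explain why this leading part is controlled.

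The paper's mechanism is different and more global. It Taylor expands the entire function $\xi_{\textnormal{small},0}+\xi_0$ in $x_1$ about $x_1=x_2$. The first-order term vanishes by the $x_1\leftrightarrow x_2$ symmetry. The zeroth-order term is handled by the identity
\[
\xi_{\textnormal{small},0}+\xi_{\textnormal{small},\geq 1}+\xi_0+\xi_{\geq 1}
=\frac{\rho^{(2)}_{\textnormal{Jas}}}{f_{12}^2}-\rho^{(2)},
\]
which vanishes at $x_1=x_2$ because $D_N=0$ and $\rho^{(2)}=0$ there; hence $|\xi_{\textnormal{small},0}(x_2,x_2)+\xi_0(x_2,x_2)|\le |\xi_{\textnormal{small},\geq 1}(x_2,x_2)|+|\xi_{\geq 1}(x_2,x_2)|$, and both of these are already bounded by the constant piece $Ca^6\rho^4(\log(b/a))^2[\ldots]$. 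Only then does one bound the second derivatives $\partial_{x_1}^\mu\partial_{x_1}^\nu\xi_0$ via the BBF formula (derivatives landing either on anchored-tree edges or inside $\det\mathcal N$) together with \Cref{lem.derivative.lebesgue.constant}. Without this diagonal-vanishing trick your argument does not close.

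A smaller point: the $b^2\rho^{2/3}$ contribution does not come from a Taylor factor landing on an intermediate $g$-edge in $\Sigma_{\textnormal{different}}$ (there are no $g$-edges at vertex $1$ when $\nu^*=0$). It comes from the smallest type-$(C)$ diagram in $\Sigma_{\textnormal{same}}$: one writes $\sum_\pi\Gamma_{\pi,G_0}^2=\int g_{13}g_{23}\,\rho^{(3)}\,dx_3$, bounds $\rho^{(3)}\le C\rho^{13/3}|x_1-x_2|^2|x_1-x_3|^2$ by \Cref{lem.bound.rho3}, and uses $\int |g(x)|\,|x|^2\,dx\le Ca^3b^2$. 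The $Ca^6\rho^4(\log(b/a))^2$ constant piece from $\Sigma_{\textnormal{same}}$ comes from the \emph{larger} type-$(C)$ diagrams, not the smallest one.
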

\begin{lemma}\label{lem.bound.3.body.term}
There exists a constant $c > 0$ such that if $sa^3\rho \log (b/a)(\log N)^3  < c$, then 
\[
	\rho^{(3)}_{\textnormal{Jas}} 
	= f_{12}^2f_{13}^2f_{23}^2\left[
		\rho^{(3)} 
		+ O\left(a^3\rho^{4} \log (b/a)
				\left[
				s^3  a^6 \rho^2 (\log (b/a))^2(\log N)^9 
				+ 1 \right]\right)
		\right]
\]
where the error is uniform in $x_1,x_2,x_3$.
\end{lemma}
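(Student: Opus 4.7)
The plan is to start from the expansion of $\rho^{(3)}_{\textnormal{Jas}}$ given in \Cref{thm.gaudin.expansion}, which reads
\[
\rho^{(3)}_{\textnormal{Jas}} = f_{12}^2 f_{13}^2 f_{23}^2 \Bigl[\rho^{(3)} + E_2 + E_3\Bigr],
\]
where $E_2 = \rho \sum_{p\geq 1} \frac{1}{p!} \sum_{(\pi,G)\in\mcL_p^2}\bigl(\Gamma_{\pi,G}^2(x_1,x_2) + \Gamma_{\pi,G}^2(x_1,x_3) + \Gamma_{\pi,G}^2(x_2,x_3)\bigr)$ and $E_3 = \sum_{p\geq 1} \frac{1}{p!} \sum_{(\pi,G)\in\mcL_p^3} \Gamma_{\pi,G}^3$. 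It thus suffices to bound $|E_2|$ and $|E_3|$ uniformly in $x_1,x_2,x_3$ by the stated error.

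For $|E_2|$, I would reuse the analysis of $\Sigma_{\textnormal{same}}$ carried out in the proof of \Cref{lem.linked.sum.abs.conv} (\Cref{sec.abs.conv.Gamma2}), since the sum over $\mcL_p^2$ precisely captures the case where the two external vertices lie in the same $g$-cluster. The distinguished cluster then has $n_*\geq 1$ internal vertices; the tree-graph bound, the BBF representation \eqref{eqn.truncated.correlation} of the truncated correlation, and the leaf-peeling integration procedure (in which one edge on the $g$-path from $\{1\}$ to $\{2\}$ is bounded by $|g|\leq 1$ and $n_*$ internal leaves each integrate to $\int|g|\leq Ca^3\log(b/a)$) yield the uniform estimate in \Cref{eqn.sum.same.Gamma2}, namely $|\Sigma_{\textnormal{same}}|\leq Ca^3\rho^3\log(b/a)$. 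Multiplying by the prefactor $\rho$ and summing over the three choices of pairs gives $|E_2|\leq Ca^3\rho^4\log(b/a)$, which is absorbed by the ``$+1$'' contribution of the claimed bound.

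For $|E_3|$, I would carry out the analogous analysis but now with all three external vertices in the same linked component. Decomposing again into the distinguished cluster (with $n_*\geq 1$ internal vertices, forced since $G$ admits no edges between external vertices) and $k\geq 0$ additional clusters, the tree-graph inequality produces a spanning tree $T_*$ on $n_*+3$ vertices with $n_*+2$ edges. Within $T_*$ the Steiner subtree connecting $\{1\},\{2\},\{3\}$ contains at least three $g$-edges; I would bound two of those (each incident to a distinct external vertex) by $|g|\leq 1$. This splits $T_*$ into three subtrees, each anchored at one external, and then the standard leaf-peeling integrates the remaining $n_*$ $g$-edges at cost $[Ca^3\log(b/a)]^{n_*}$. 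Combined with the Gram-Hadamard bound from \Cref{lem.bound.det.N} (which yields a $\rho^3$ factor from the three un-integrated externals) and the anchored-tree contribution of the $k$ auxiliary clusters (each producing a factor $Csa^3\rho\log(b/a)(\log N)^3$ via \Cref{lem.lebesgue.constant.fermi.polyhedron}, together with the combinatorial bounds $\#\mcT_n\leq C^n n!$ and $\#\mcA^{(\cdots)}\leq (k+1)!\,4^{\sum n_\ell}$), summing the resulting geometric series in $k$ under the hypothesis $sa^3\rho\log(b/a)(\log N)^3<c$ produces a bound of the form $Ca^3\rho^4\log(b/a)\bigl[s^3a^6\rho^2(\log(b/a))^2(\log N)^9+1\bigr]$, matching the claimed estimate.

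The main technical obstacle I foresee is the correct handling of the Steiner subtree in the distinguished cluster of $E_3$: one must verify that two $g$-edges can always be chosen, each incident to a distinct external vertex and lying on the Steiner subtree, whose removal disconnects $T_*$ into three pieces each containing exactly one external vertex, so that the remaining leaf-peeling respects translation invariance and produces the clean factor $[Ca^3\log(b/a)]^{n_*}$. This is a mild structural lemma about trees with three marked vertices, and once established the remainder of the argument is essentially a repetition of the bookkeeping in \Cref{sec.abs.conv}.
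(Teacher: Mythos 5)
There is a genuine gap, and it sits at the heart of why this lemma is stated the way it is. Your bound on $E_2$ rests on the claim that the sum over $\mcL_p^2$ ``precisely captures the case where the two external vertices lie in the same $g$-cluster'', so that only $\Sigma_{\textnormal{same}}$ enters. This is false: a diagram $(\pi,G)\in\mcL_p^2$ only needs the \emph{combined} graph of $g$-edges and $\gamma^{(1)}_N$-edges to be connected, so $\mcL_p^2$ also contains all the $\Sigma_{\textnormal{different}}$ configurations in which $\{1\}$ and $\{2\}$ sit in separate (possibly singleton) $g$-clusters and are joined only through $\pi$ via internal clusters. The same issue recurs in your treatment of $E_3$, where you assume the three external vertices share one distinguished $g$-cluster with $n_*\geq 1$; in $\mcL_p^3$ they may instead each be isolated in the $g$-graph and linked through $\pi$ across one or two internal two-vertex clusters. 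For exactly these diagrams (the ones with $k=1,2$ internal clusters and $\nu=\nu^*=0$) the machinery you invoke — tree-graph bound, BBF formula, leaf-peeling with $\int|\gamma^{(1)}_N|\leq Cs(\log N)^3$ — produces a contribution of order $s(\log N)^3\,a^3\rho^4\log(b/a)$ (for $k=1$), and this does \emph{not} fit inside the claimed error $a^3\rho^4\log(b/a)\bigl[s^3a^6\rho^2(\log(b/a))^2(\log N)^9+1\bigr]$: under the hypothesis $sa^3\rho\log(b/a)(\log N)^3<c$ the bracket is $O(s(\log N)^3\cdot c^2)+1$, which can be far smaller than a bare $s(\log N)^3$. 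This is precisely the obstruction discussed in \Cref{rmk.why.not.Taylor}; the lemma is an improvement over what the absolute-convergence bounds alone can give.

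The paper's proof therefore does two things you do not: it first rewrites the $\rho$-prefactor of $E_2$ as a $\Gamma^1$-expansion and merges everything into the enlarged family $\tilde\mcL_p^3$ of diagrams with at most two linked components, and it then splits into ``large'' diagrams ($k\geq 3$ or $\nu+\nu^*\geq 1$), bounded crudely via \Cref{eqn.bound.linked.sum.k.ng.cluster.3.body} to give the two terms of the stated error, and ``small'' diagrams ($k=1,2$, $\nu=\nu^*=0$), which are \emph{computed exactly} in \Cref{lem.bound.3.body.small.diagrams} (\Cref{sec.small.diagrams.3.particle}). In those exact computations the integrations over internal clusters produce factors $\hat g(k_i-k_j)$ and momentum-conservation constraints, so one uses $|\hat g|\leq\int|g|\leq Ca^3\log(b/a)$ and $\int\gamma^{(1)}_N=1$ rather than $\int|\gamma^{(1)}_N|\leq Cs(\log N)^3$, which is what eliminates the otherwise fatal $s(\log N)^3$ factors. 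Your Steiner-subtree observation for the genuinely same-cluster part of $E_3$ is fine as far as it goes (those diagrams have $\nu^*\geq 1$ and land in the ``$+1$'' term anyway), but without the exact evaluation of the small mixed-cluster diagrams the proposed argument cannot reach the bound stated in the lemma.
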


\noindent
We give the proof of these lemmas in \Cref{sec.proof.subleading.2.body.diagrams,sec.proof.subleading.3.body.diagrams} below.
For the three-body term, we additionally have the bound
$\rho^{(3)}\leq C\rho^{3+4/3} |x_1-x_2|^2|x_2-x_3|^2$
by \Cref{lem.bound.rho3}.
Combining now \Cref{lem.KE.polyhedron,lem.derivative.sum.linked.diagrams,lem.bound.3.body.term,prop.2.density},
\Cref{thm.gaudin.expansion} and \Cref{eqn.compute.energy.first} 
we thus get (for $s a^3\rho \log (b/a)(\log N)^3$ sufficiently small and $N$ sufficiently large)
\begin{equation}\label{eqn.energy.w.all.errors}
\begin{aligned}
\longip{\psi_N}{H_N}{\psi_N}
	& = \frac{3}{5}(6\pi)^{2/3} \rho^{2/3} N\left(1 + O(N^{-1/3}) + O(s^{-2})\right) + L^3\int \ud x
		\left(
			|\nabla f(x)|^2 + \frac{1}{2}v(x) f(x)^2
		\right) 
	\\ & \qquad \times 
		\Bigg[
		 \frac{(6\pi^2)^{2/3}}{5} \rho^{8/3} |x|^2
	\biggl(
	1 - \frac{3(6\pi^2)^{2/3}}{35}\rho^{2/3}|x|^2 
	\\ & \qquad \qquad 
	+ O(N^{-1/3}) 
	+ O(s^{-2}) 
	+ O(N^{-1/3}\rho^{2/3}|x|^2)
	+ O(\rho^{4/3}|x|^4)
	\biggr)
	\\ & \qquad \quad 
			+ O\left( 
			a^6 \rho^4 (\log (b/a))^2 
		\Bigl[ 
			s^3 a^6 \rho^2 (\log b/a)^2 (\log N)^9 
			+ 1
		\Bigr]\right)
	\\ & \qquad \quad 
	+ O \left(a^3 \rho^{3+2/3} |x|^2 
				\Bigl[
					s^5 a^{12}\rho^4 (\log(b/a))^5  (\log N)^{16}
          + b^2\rho^{2/3}
					+ \log(b/a)
				\Bigr]\right)
		\Bigg]
	\\ & \quad 
		+ \iiint \ud x_1 \ud x_2 \ud x_3 f_{12}\nabla f_{12} f_{23}\nabla f_{23} f_{13}^2 
		\Bigl[
    O(\rho^{3+4/3} |x_1-x_2|^2|x_2-x_3|^2)
	\\ & \qquad
		+ O\left(a^3\rho^{4} \log (b/a)
				\left[
				s^3  a^6 \rho^2 (\log (b/a))^2(\log N)^9 
				+ 1 \right]
			\right)
		\Bigr].
\end{aligned}
\end{equation}
We will choose $N$ (really $L$, see \Cref{rmk.choose.L.not.N}) 
some large negative power of $a^3\rho$, so errors with $N^{-1/3}$ are subleading. 
We may  compute 
\begin{equation}\label{eqn.integral.v.scat.len}
\begin{aligned}
	& \int \ud x \left(|\nabla f(x)|^2 + \frac{1}{2}v(x) f(x)^2 \right) |x|^2 
	\\ & \quad 
		= \frac{1}{(1 - a^3/b^3)^2} \int_{|x|\leq b} \ud x \left(|\nabla f_0(x)|^2 + \frac{1}{2}v(x) f_0(x)^2 \right) |x|^2 
	\\ & \quad 
		\leq 12\pi a^3 \left(1 + O(a^3/b^3)\right),
\end{aligned}
\end{equation}
by \Cref{defn.scattering.length} since $f = \frac{1}{1- a^3/b^3} f_0$ for $|x|\leq b$ and $b>R_0$, the range of $v$.
For the higher moments we recall that $|\nabla f_0|\leq |\nabla f_{\textnormal{hc}}| = \frac{3a^3}{|x|^4}$ for $|x|\geq a$ by \Cref{prop.bound.f0.hc}. 
Then we have (for $n=4,6$)
\begin{equation}\label{eqn.integral.v.moments}
\begin{aligned}
	& \int \ud x \left(|\nabla f(x)|^2 + \frac{1}{2}v(x) f(x)^2 \right) |x|^n
	\\ & \quad 
		= \frac{1}{(1 - a^3/b^3)^2} \int_{|x|\leq b} \ud x \left(|\nabla f_0(x)|^2 + \frac{1}{2}v(x) f_0(x)^2 \right) |x|^n 
	\\ & \quad 
		\leq
		\frac{1}{2} R_0^{n-2} \int v |f_0|^2 |x|^2 \ud x 
		+ \int_{|x|\geq a} \left( \frac{3a^3}{|x|^4}\right)^2 |x|^n \ud x
		+ a^{n-2}\int_{|x|\leq a} |\nabla f_0|^2 |x|^{2} \ud x 
	\\ & \quad 
		\lesssim C R_0^{n-2}a^3.
\end{aligned}
\end{equation}
For $n=4$ we have more precisely
\[
\begin{aligned}
	\int \ud x \left(|\nabla f(x)|^2 + \frac{1}{2}v(x) f(x)^2 \right) |x|^4
	& \leq \int \ud x \left(|\nabla f_0(x)|^2 + \frac{1}{2}v(x) f_0(x)^2 \right) |x|^4
  \left(1 + O(a^3b^{-3})\right)
	\\
	& = 36\pi a^3a_0^2 + O(a^6 a_0^2  b^{-3}).
\end{aligned}
\]
For the lower moment, we have by \Cref{eqn.f.scatt.radial}
\begin{equation}\label{eqn.integral.v.zero.moment}
\begin{aligned}
\int \ud x \left(|\nabla f(x)|^2 + \frac{1}{2}v(x) f(x)^2 \right)
	& = 4\pi \int_{0}^b \left(|\partial_r f|^2 r^2 + r^2 f \partial_r^2 f + 4r f \partial_r f\right) \ud r
	\\ & 
		= \frac{12\pi a^3/b^2}{1 - a^3/b^3} + 8\pi \int_{0}^b r f \partial_r f \ud r
\end{aligned}
\end{equation}
where $\partial_r$ denotes the radial derivative, and we integrated by parts using that $f(r) = \frac{1 - a^3/r^3}{1 - a^3/b^3}$ outside the support of $v$.
By \Cref{prop.bound.f0.hc} we have 
\begin{equation}\label{eqn.integral.fdf}
	2\int_{0}^b r f \partial_r f \ud r = b - \int_0^b f^2 \ud r 
		\leq b - \frac{1}{(1-a^3/b^3)^2} \int_a^b \left(1 - \frac{a^3}{r^3}\right)^2 \ud r 
		\leq Ca.
\end{equation}
Hence 
\[
	\int \ud x \left(|\nabla f(x)|^2 + \frac{1}{2}v(x) f(x)^2 \right) \leq C a.
\]
This concludes the bounds on all the terms in \Cref{eqn.energy.w.all.errors} arising from the $2$-body term.
To bound those arising from the $3$-body term we bound 
$f_{13}\leq 1$. 
By the translation invariance one integration gives a volume factor $L^3$. 
The remaining two integrals then both give the same contribution.
That is,
\[
\begin{aligned}
	& \iiint \ud x_1 \ud x_2 \ud x_3 f_{12}\nabla f_{12} f_{23}\nabla f_{23} f_{13}^2 
		\Bigl[
    O(\rho^{3+4/3} |x_1-x_2|^2|x_2-x_3|^2)
	\\ & \qquad 
		+ O\left(a^3\rho^{4} \log (b/a)
				\left[
				s^3  a^6 \rho^2 (\log (b/a))^2(\log N)^9 
				+ 1 \right]
			\right)
		\Bigr]
	\\ & \quad 
		\leq C N\rho^{2+4/3} \left(\int |x|^2f\partial_r f  \ud x \right)^2
	\\ & \qquad
		+ CNa^3\rho^3 \log(b/a) \left[
				s^3  a^6 \rho^2 (\log (b/a))^2(\log N)^9 
				+ 1 \right]
				\left(\int f\partial_r f \ud x \right)^2.
\end{aligned}
\]
Using integration by parts and \Cref{prop.bound.f0.hc}, we have that 
\begin{multline}\label{eqn.integral.fdf.moments}
	\frac{1}{4\pi}\int |x|^n f\partial_r f \ud x
	= \int_0^b r^{n+2} f \partial_r f \ud r
	= \frac{b^{n+2}}{2} - \frac{n+2}{2}\int_0^b r^{n+1} f^2 \ud r
	\\ 
	\leq \frac{b^{n+2}}{2} - \frac{n+2}{2}\int_a^b r^{n+1} \left(\frac{1 - a^3/r^3}{1 - a^3/b^3}\right)^2 \ud r
	\leq 
	\begin{cases}
	C a^2 & n=0, \\ C a^3 b & n=2.
	\end{cases}
\end{multline}
Plugging all this into \Cref{eqn.energy.w.all.errors} we thus get for the energy density 
\begin{equation}
\label{eqn.energy.density.w.all.errors}
\begin{aligned}
\frac{\longip{\psi_N}{H_N}{\psi_N}}{L^3}
	& = \frac{3}{5}(6\pi)^{2/3} \rho^{5/3} + \frac{12\pi}{5}(6\pi^2)^{2/3} a^3\rho^{8/3} 
		- \frac{108\pi (6\pi^2)^{4/3}}{175}a^3a_0^2 \rho^{10/3} 
	\\ & \quad 
		+ O\left(s^{-2}\rho^{5/3}\right) 
		+ O\left(N^{-1/3}\rho^{5/3}\right) 
	\\ & \quad
		+ O\left(a^6b^{-3}\rho^{8/3}\right)
		+ O\left(a^6a_0^2b^{-3} \rho^{10/3}\right)
		+ O\left(R_0^4 a^3 \rho^4\right)
	\\ & \quad
		+ O\left( 
			a^7 \rho^4 (\log (b/a))^2 
		\Bigl[ 
			s^3 a^6 \rho^2 (\log b/a)^2 (\log N)^9 
			+ 1
		\Bigr]\right)
	\\ & \quad  
	+ O \left(a^6 \rho^{3+2/3} 
				\Bigl[
					s^5 a^{12}\rho^4 (\log(b/a))^5  (\log N)^{16}
          + b^2\rho^{2/3}
					+ \log(b/a)
				\Bigr]\right)
	\\ & \quad 
  + O\left(a^6b^2\rho^{3+4/3}\right)
	+ O\left(a^7\rho^4 \log(b/a)\left[s^3 a^6\rho^2(\log (b/a))^3(\log N)^9 + 1\right]\right).
\end{aligned}
\end{equation}
We  choose $L\sim a(a^3\rho)^{-10}$ 
still ensuring that $\frac{Lk_F}{2\pi}$ is rational. (More precisely one chooses $L\sim a (k_F a)^{-30}$, since $\rho$ is defined in terms of $L$.)
Then $N \sim (a^3\rho)^{-29}$. Choose moreover
\[
	b = a(a^3\rho)^{-\beta}, \qquad s \sim (a^3\rho)^{-\alpha}|\log (a^3\rho )|^{-\delta}.
\]
Note that we need 
\[	
  \frac{2}{9} < \beta < \frac{1}{2}, \qquad 
	\frac{5}{6} < \alpha < \frac{13}{15}
\]
for the error terms to be smaller than the desired accuracy of order $a^3a_0^2 \rho^{10/3}$.
We  get 
\begin{equation*}
\begin{aligned}
\frac{\longip{\psi_N}{H_N}{\psi_N}}{L^3}
	& = \frac{3}{5}(6\pi)^{2/3} \rho^{5/3} + \frac{12\pi}{5}(6\pi^2)^{2/3} a^3\rho^{8/3} 
		- \frac{108\pi (6\pi^2)^{4/3}}{175}a^3a_0^2 \rho^{10/3} 
	\\ & \quad
		+ O(\rho^{5/3} (a^3\rho)^{\gamma_1}|\log(a^3\rho)|^{\gamma_2}).
\end{aligned}
\end{equation*}
where 
\[
	\gamma_1 = \min 
	\left\{
		2\alpha, 1 + 3\beta, \frac{13}{3} - 3\alpha, 6 - 5\alpha, \frac{8}{3} - 2\beta
	\right\},
\]
and $\gamma_2$ is given by the power of the logarithmic factors of the largest error term.
Optimising in  $\alpha,\beta, \delta$ we see that for the choice
\[
	\beta = \frac{1}{3}, \qquad \alpha = \frac{6}{7}, \qquad \delta = 3
\]
we have $\gamma_1 = \frac{12}{7}$ and $\gamma_2 = 6$, i.e.
\begin{equation}
\label{eqn.energy.density.upper.bound.single.state}
\begin{aligned}
\frac{\longip{\psi_N}{H_N}{\psi_N}}{L^3}
	& = \frac{3}{5}(6\pi)^{2/3} \rho^{5/3} 
	\\ & \qquad 
		+ \frac{12\pi}{5}(6\pi^2)^{2/3} a^3\rho^{8/3}
		\biggl[
		1
		- \frac{9}{35}(6\pi^2)^{2/3}a_0^2\rho^{2/3}
		+ O((a^3\rho)^{2/3 + 1/21} |\log(a^3\rho)|^6)
		\biggr] 
\end{aligned}
\end{equation}
for $a^3\rho$ small enough.
Note that for this choice of $s,N$ we have $s\sim N^{6/203} (\log N)^{3}$. 
Thus any $Q$ with  $N^{4/3} \ll Q \leq CN^C$ satisfies the condition $Q^{-1/4}\leq C s^{-1}$
of \Cref{defn.P_F.true}.

\subsection{Thermodynamic limit via a box method}\label{sec.box.method}
In this section we construct a trial state in the thermodynamic limit using a box method 
of gluing trial states for finite $n$ together. 
Such a method has been used for many studies of dilute Bose and Fermi gases, see for instance 
\cite{Yau.Yin.2009,Lieb.Seiringer.ea.2005,Fournais.Girardot.ea.2022,Basti.Cenatiempo.ea.2021}.
First we show that we may choose periodic boundary conditions in the small boxes instead of using Dirichlet boundary conditions.
The setting and argument is due to Robinson \cite[Lemmas 2.1.12 and 2.1.13]{Robinson.1971}. 
We present a slightly modified version  in \cite[Section C]{Mayer.Seiringer.2020}.

\begin{lemma}[{\cite{Robinson.1971,Mayer.Seiringer.2020}}]\label{lem.change.b.c.}
Let $0 < d < L/2$ be a cut-off, let $H^{D}_{N,L+2d} = \sum_{j=1}^N -\Delta_{j, L+2d}^D + \sum_{i < j} v(x_i - x_j)$ denote the 
$N$-particle Hamiltonian with Dirichlet boundary conditions on a box of sides $L+2d$, and let 
$H^{\textnormal{per}}_{N,L} = \sum_{j=1}^N -\Delta_{j, L}^{\textnormal{per}} + \sum_{i < j} v_{\textnormal{per}}(x_i - x_j)$
denote the $N$-particle Hamiltonian with periodic boundary conditions on a box of sides $L$, with the interaction 
$v_{\textnormal{per}}(x) = \sum_{n\in \Z^3} v(x + nL)$, the periodized interaction.

Then, there exists an isometry $V: L^2_a(\Lambda_L^N) \to L^2_a(\Lambda_{L+2d}^N)$ such that 
for all $\psi$ in the form-domain of $H^{\textnormal{per}}_{N,L}$ we have $V\psi$ in the form-domain of $H^{D}_{N,L+2d}$ and 
\[
	\longip{V\psi}{H^D_{N,L+2d}}{V\psi} \leq \longip{\psi}{H^{\textnormal{per}}_{N,L}}{\psi} + \frac{6N}{d^2}\norm{\psi}^{2}
\]
\end{lemma}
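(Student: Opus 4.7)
The plan is to construct the isometry $V$ as multiplication by a product of cutoff functions acting on the periodic extension of $\psi$. First I would build a smooth one-dimensional cutoff $\chi_1:\R\to[0,1]$ supported in $[-(L+2d)/2, (L+2d)/2]$ satisfying the partition-of-unity property
\[
  \sum_{n\in\Z} \chi_1(x+nL)^2 = 1 \qquad \text{for all } x\in\R,
\]
together with the gradient bound $\sup_x \sum_n |\chi_1'(x+nL)|^2 \leq 2/d^2$; such a $\chi_1$ exists because the only overlaps between consecutive translates occur in slabs of width $2d$ near $x=\pm L/2$, on which $\chi_1$ and $\chi_1(\cdot\pm L)$ can be interpolated by a quarter-period sine (with an appropriate rescaling of the constant in the gradient bound). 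Setting $\chi(x)=\chi_1(x^1)\chi_1(x^2)\chi_1(x^3)$ on $\R^3$ and letting $\tilde\psi$ be the $L$-periodic extension of $\psi$ to $\R^{3N}$, I would define
\[
  (V\psi)(x_1,\ldots,x_N) \;=\; \prod_{k=1}^{N}\chi(x_k)\,\tilde\psi(x_1,\ldots,x_N).
\]
Antisymmetry is preserved since we multiply by a symmetric function of the individual coordinates, and $V\psi$ vanishes on $\partial\Lambda_{L+2d}^N$ because $\chi$ vanishes at $\pm(L+2d)/2$, so $V\psi$ lies in the Dirichlet form domain. The isometry property follows by unfolding the integral over $\Lambda_{L+2d}^N$ into integrals over translates of $\Lambda_L^N$ and invoking the partition-of-unity identity coordinatewise, together with the $L$-periodicity of $|\tilde\psi|^2$.

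For the energy estimate I would split into kinetic and potential pieces. For the kinetic part, a direct computation yields
\[
  |\nabla_j(V\psi)|^2 = \prod_{k\ne j}\chi(x_k)^2 \Bigl[\chi(x_j)^2|\nabla_j\tilde\psi|^2 + |\nabla\chi(x_j)|^2|\tilde\psi|^2 + \chi(x_j)\nabla\chi(x_j)\cdot\nabla_j|\tilde\psi|^2\Bigr].
\]
The cross term vanishes after integrating over $x_j$: unfolding and using periodicity, its $x_j$-integral reduces to $\int_{\Lambda_L}\tfrac12\nabla\bigl(\sum_n\chi(\cdot+nL)^2\bigr)\cdot\nabla_j|\psi|^2=0$. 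The first summand contributes exactly $\int_{\Lambda_L^N}|\nabla_j\psi|^2$ by partition of unity, and the second yields the localization error
\[
  \int_{\Lambda_L^N}\Bigl(\sum_{n\in\Z^3}|\nabla\chi(\,\cdot\,+nL)|^2\Bigr)|\psi|^2,
\]
which factorises (using $\sum_n\chi_1(y+nL)^2=1$ in the non-differentiated coordinates) to $\sum_{i=1}^3\sum_{n}|\chi_1'(y^i+n L)|^2 \leq 6/d^2$. Summing over $j=1,\ldots,N$ gives the announced $(6N/d^2)\|\psi\|^2$ correction.

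For the potential part, using $v\geq 0$ and $0\leq\chi\leq 1$, I unfold each particle coordinate $x_k=y_k+n_k L$ with $y_k\in\Lambda_L$ and apply partition of unity to the spectator variables $k\notin\{i,j\}$. For the pair $(i,j)$, bounding $\chi(y_i+(n_j+m)L)^2\leq 1$ after relabelling $m=n_i-n_j$ leaves a free sum $\sum_{m}v(y_i-y_j+mL)=v_{\mathrm{per}}(y_i-y_j)$, producing
\[
  \int_{\Lambda_{L+2d}^N}|V\psi|^2\,v(x_i-x_j)\,dx \;\leq\; \int_{\Lambda_L^N}|\psi|^2\,v_{\mathrm{per}}(x_i-x_j)\,dy.
\]
Summing the kinetic and potential contributions gives the claimed bound. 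The main technical obstacle is the potential term: the cutoff product $\chi(y_i+(n_j+m)L)^2\chi(y_j+n_j L)^2$ does not factorise over the two summation indices, and absorbing one factor using $\chi\leq 1$ (which requires $v\geq 0$) is what allows the periodised interaction to reappear cleanly. Everything else is a careful bookkeeping of partition-of-unity identities.
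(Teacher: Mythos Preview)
Your proof is correct and is precisely the construction the paper has in mind: the paper's own proof consists of a single sentence deferring to \cite[Lemma~4]{Mayer.Seiringer.2020} and noting that the explicitly constructed $V$ respects anti-symmetry. Your partition-of-unity localization via $\chi(x)=\prod_i\chi_1(x^i)$ acting on the periodic extension, together with the unfolding argument for the isometry property, the vanishing of the cross term via $\nabla\bigl(\sum_n\chi(\cdot+nL)^2\bigr)=0$, and the treatment of the potential term using $v\geq 0$ and $\chi^2\leq 1$, is exactly the Robinson/Mayer--Seiringer argument, and your observation that the symmetric multiplier preserves anti-symmetry is the one point the paper singles out.
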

\begin{proof}
This is a trivial modification of \cite[Lemma 4]{Mayer.Seiringer.2020}, noting that the explicitly constructed $V$
respects the anti-symmetry.
\end{proof}

\noindent
We now glue together trial states.
For any (sufficiently small) density $\rho$ we have above found that we may construct a (normalized) trial state $\psi_n$ 
on the torus $\Lambda_\ell = [-\ell/2, \ell/2]^3$ 
satisfying \Cref{eqn.energy.density.upper.bound.single.state}
with $\ell \sim a(a^3\rho)^{-10}$, i.e. $n \sim (a^3\rho)^{-29}$.
We now use the isometry $V$ from \Cref{lem.change.b.c.} to find a trial state $V\psi_n$ with Dirichlet boundary conditions on $\Lambda_{\ell + 2d}$.
Our trial state $\Psi_N$ for $N=M^3n$ is then obtained by gluing together $M^3$ copies of $V\psi_n$ arranged in boxes, with a distance $b$ between them, 
so that there is no interaction between the boxes. 
We choose the same $b$ as before.
More precisely, for configurations where the first $n$ particles are in box $1$ and so on,  
\[
	\Psi_N(x_1,\ldots,x_N) = \prod_{j = 1}^{M^3} V\psi_n(x_{n(j-1) + 1} - \tau_j, \ldots, x_{nj} - \tau_j),
\]
where $\tau_j \in \R^{3}$ denotes the centre of box number $j$.
The state $\Psi_N$ is then the antisymmetrization of this. 
Its energy is  
\[
	\longip{\Psi_N}{H_{N,M(\ell +  2d + b)}^D}{\Psi_N} 
	=  M^3 \longip{V\psi_n}{H_{n,\ell+2d}^{D}}{V\psi_N} \leq M^3 \left(\longip{\psi_n}{H_{n,\ell}^{\textnormal{per}}}{\psi_n} + \frac{6n}{d^2}\right).
\]
The particle density of the state $\Psi_N$ is $\tilde\rho = \frac{n}{(\ell + 2d + b)^3} = \rho(1 + O(d/\ell) + O(b/\ell))$.
The energy density is 
\begin{equation}\label{eqn.energy.density.infinite.box.trial.state}
\begin{aligned}
	e(\tilde\rho) 
	& \leq 
		\liminf_{M\to \infty}\frac{\longip{\Psi_N}{H_{N,M(\ell + 2d+b)}^{D}}{\Psi_N}}{M^3(\ell + 2d + b)^3}
		= 
		\frac{\longip{V\psi_n}{H_{n,\ell+2d}^{D}}{V\psi_N}}{(\ell + 2d + b)^3}
	\\ & \leq 
		\frac{\longip{\psi_n}{H_{n,\ell}^{\textnormal{per}}}{\psi_n}}{\ell^3} \left[1 + O(d/\ell) + O(b/\ell)\right] + O(\rho d^{-2}).
\end{aligned}
\end{equation}
Choosing  $d = a(a^3\rho)^{-5}$ and using \Cref{eqn.energy.density.upper.bound.single.state} 
we conclude that for $a^3\rho$ sufficiently small 
\[
\begin{aligned}
	e(\tilde\rho) 
	& \leq 
		\frac{3}{5}(6\pi)^{2/3} \rho^{5/3} 
	\\ & \qquad 
		+ \frac{12\pi}{5}(6\pi^2)^{2/3} a^3\rho^{8/3}
		\biggl[
		1
		- \frac{9}{35}(6\pi^2)^{2/3}a_0^2\rho^{2/3}
		+ O\left((a^3\rho)^{2/3 + 1/21} |\log(a^3\rho)|^6\right)
		\biggr] 
	\\ & 
		= \frac{3}{5}(6\pi)^{2/3} \tilde\rho^{5/3} 
	\\ & \qquad 
		+ \frac{12\pi}{5}(6\pi^2)^{2/3} a^3\tilde\rho^{8/3}
		\biggl[
		1
		- \frac{9}{35}(6\pi^2)^{2/3}a_0^2\tilde\rho^{2/3}
		+ O\left((a^3\tilde\rho)^{2/3 + 1/21} |\log(a^3\tilde\rho)|^6\right)
		\biggr] 
\end{aligned}
\]
since $\tilde \rho = \rho(1 + O( (a^3\rho)^5 ))$, so $\rho = \tilde \rho(1 + O( (a^3\tilde\rho)^5))$.
This concludes the proof of \Cref{thm.main}. 

It remains to give the proofs of \Cref{lem.derivative.sum.linked.diagrams,lem.bound.3.body.term}.

\subsection{Subleading \texorpdfstring{$2$}{2}-particle diagrams 
(proof of \texorpdfstring{\Cref{lem.derivative.sum.linked.diagrams}}{Lemma \ref*{lem.derivative.sum.linked.diagrams}})}
\label{sec.proof.subleading.2.body.diagrams}
In this section we give the proof of \Cref{lem.derivative.sum.linked.diagrams}.
Before doing this, we first discuss why we don't just use the bounds of these terms from the proof of \Cref{lem.linked.sum.abs.conv}.
\begin{remark}[{Why not use bounds of \Cref{lem.linked.sum.abs.conv}?}]\label{rmk.why.not.Taylor}
Inspecting the proof of \Cref{lem.linked.sum.abs.conv} 
(more precisely \Cref{eqn.sum.same.Gamma2,eqn.sum.different.Gamma2} of \Cref{sec.abs.conv.Gamma2}) we can extract 
the following bound 
\[
	\abs{\sum_{p= 1}^\infty \frac{1}{p!} \sum_{(\pi, G)\in\mcL_p^2} \Gamma_{\pi, G}^{2}}
	\leq C s a^3 \rho^3  \log (b/a)(\log N)^3.
\]
This is immediate by considering the bounds \Cref{eqn.sum.different.Gamma2,eqn.sum.same.Gamma2} and noting that since we have 
$p\geq 1$ in the sum $\sum_{p= 1}^\infty \frac{1}{p!} \sum_{(\pi, G)\in\mcL_p^2} \Gamma_{\pi, G}^{2}$, 
the summands either have $k\geq 1$ or $n_*\geq 1$ or $n_{**}\geq 1$.
Using this bound  
we would thus get  for the error in the ground state energy density 
 the bound $\sim s a^4 \rho^3$ (ignoring the $\log$-factors). However, as we saw in \Cref{sec.energy.in.box}, using \Cref{lem.KE.polyhedron}, 
the $s$-dependent error of the kinetic energy density is $\sim s^{-2}\rho^{5/3}$.
There is no way to choose $s$, such that both of these errors are smaller that $a^3\rho^{8/3}$, 
which is the precision we need in order to prove the leading correction to the kinetic energy in \Cref{thm.main}.

Similarly, by following the proof of \Cref{lem.linked.sum.abs.conv} one could get the bound 
\[
	\rho^{(3)}_{\textnormal{Jas}} \leq C\rho^3 f_{12}^2 f_{13}^2 f_{23}^2.
\]
This bound is not problematic in terms of getting all the error terms in the energy density smaller than $a^3\rho^{8/3}$.
However, we improve on this bound in \Cref{lem.bound.3.body.term} in order to get a better error bound in \Cref{thm.main}.
\end{remark}
\begin{proof}[Proof of \Cref{lem.derivative.sum.linked.diagrams}]
Note first that by  translation invariance  
\[
	\sum_{p= 1}^\infty \frac{1}{p!} \sum_{(\pi, G)\in\mcL_p^2} \Gamma_{\pi, G}^{2}
\] 
is a function of $x_1- x_2$ only.
Recall  \Cref{eqn.compute.Gamma12,eqn.compute.Gamma2.details}. 
We split the diagrams in $\mcL_p^2$ into three groups.
To define these three groups we first define for any diagram $(\pi, G)\in \mcL_p^2$ 
the number $k = k(\pi,G)$ as the number of clusters entirely containing internal vertices.
(This $k$ is exactly the same $k$ as in the proof of \Cref{lem.linked.sum.abs.conv}.)
Then we define
\[
	\textstyle
	\nu = \nu(\pi,G) = \sum_{\ell=1}^k n_\ell - 2k,
	\qquad 
	\nu^* = \nu^*(\pi,G) = n_* + n_{**},
\]
with the understanding that $n_{**}=0$  if $\{1\}$ and $\{2\}$ are in the same cluster.  
We think of $\nu + \nu^*$ as the ``number of added vertices''.
Indeed, given a $k$ the smallest number of vertices 
in a diagram $(\pi,G)\in \mcL_p^2$ with $k$ clusters is $2k+2$ and in this case we have $p=2k$.
For such a diagram, there are $p=2k$ internal vertices and $2$ external vertices.
The graph $G$ of such a diagram looks like 
\[
\begin{tikzpicture}[line cap=round,line join=round,>=triangle 45,x=1.0cm,y=1.0cm]
		\node (1) at (-1,-1.5) {};
		\node (2) at (2,-1.5) {};
		\node (3) at (0,-1) {};
		\node (4) at (1,-1) {};
		\node (5) at (0,-2) {};
		\node (6) at (1,-2) {};
		\draw[dashed] (3) -- (4);
		\draw[dashed] (5) -- (6);
		\foreach \i in {1,...,6} \draw[fill] (\i) circle [radius=1.5pt];
		\foreach \i in {1,...,3} \draw[fill] (0.5,-1 - \i/4) circle [radius=0.5pt];
		\draw[decoration={brace,raise=5pt,amplitude=5pt},decorate] (1,-0.8) -- node[right=6pt] {$\, k$} (1,-2.2);
		\foreach \i in {1,2} \node[anchor=north] at (\i) {$\i$};
    \foreach \i in {1,2} \node[anchor=south] at (\i) {$*$};
		\node at (-2,-1.5) {$G=$};
		\end{tikzpicture}
		.
\]
Then $\nu + \nu^*$ is the number of (internal) vertices  a diagram has more than this lowest number.

By following the bound in \Cref{eqn.sum.same.Gamma2,eqn.sum.different.Gamma2} we see that 
for $p = \nu_0 +2k_0$
\begin{equation}\label{eqn.bound.linked.sum.k.ng.cluster}
	\abs{\frac{1}{p!}\sum_{\substack{(\pi,G)\in \mcL_p^2 \\ \nu(\pi,G) + \nu^*(\pi,G) = \nu_{0} \\ k(\pi,G) = k_0}} \Gamma_{\pi,G}^2}
	\leq C\rho^2 (C s (\log N )^3)^{k_0} (C a^3 \rho \log (b/a))^{k_0 + \nu_{0}}.
\end{equation}
We split diagrams  into different groups depending on whether or not they are ``large'' 
and whether or not we will do a Taylor expansion of their values.
We first give some motivation for what ``large'' means.

\begin{remark}
Here ``large'' and ``small'' should be interpreted in the sense of how many vertices appear 
in the diagram. \Cref{eqn.bound.linked.sum.k.ng.cluster} describe how diagrams with more vertices (larger values of $k, \nu, \nu^*$) 
have a smaller value. 
More precisely, ``large'' should be thought of in terms of the bound in \Cref{eqn.bound.linked.sum.k.ng.cluster} in the following sense.

Recall that the ($s$-dependent) error in the kinetic energy density is $s^{-2}\rho^{5/3}$. 
For this error to be smaller than 
the desired accuracy of order $a^3a_0^2\rho^{10/3}$ we need $s \gg (a^3\rho)^{-5/6}$.
If we think of, say $s\sim (a^3\rho)^{-6/7}$, then (ignoring $\log$-factors)
\Cref{eqn.bound.linked.sum.k.ng.cluster} reads 
$ \lesssim \rho^2 (a^3\rho)^{k_0/7 + \nu_0}$.
The large diagrams (with $\nu^*\geq 1$) are those for which this bound 
gives a contribution to the energy density $\ll a^5\rho^{10/3}$, i.e. with $\nu + \nu^* + k/7 \geq 4/3$ by counting powers of $\rho$.
For diagrams with $\nu^*=0$ we obtain a differentiated version of the bound in \Cref{eqn.bound.linked.sum.k.ng.cluster}, 
where one effectively gains a power $a^2\rho^{2/3}$, see the details of the proof. 
The large diagrams (with $\nu^*=0$) are those for which the differentiated version 
gives a contribution to the energy density $\ll a^5\rho^{10/3}$, i.e. with $\nu + k/7 \geq 2/3$.
\end{remark}

\noindent
We split the diagrams into three (exhaustive) groups:
\begin{enumerate}
\item Small diagrams with 
\begin{enumerate}[$(A)$]
\item $\{1\}$ and $\{2\}$ in different clusters and $1\leq k \leq 4, \nu = 0, \nu^* = 0$,
\item $\{1\}$ and $\{2\}$ in different clusters and $0\leq k \leq 2, \nu = 0, \nu^* = 1$,
\item $\{1\}$ and $\{2\}$ in the same cluster and $0\leq k\leq 2, \nu = 0, \nu^* = 1$.
\end{enumerate} 
\item Large diagrams with $\nu^* = 0$ (in particular $\{1\}$ and $\{2\}$ are in different clusters) and
\begin{enumerate}[$(A)$]
\item $k\geq 5, \nu = 0$ or
\item $k\geq 1, \nu \geq 1$. 
\end{enumerate} 
\item Large diagrams with $\nu^* \geq 1$ and
\begin{enumerate}[$(A)$]
\item $k\geq 3, \nu = 0$ or
\item $\nu + \nu^* \geq 2$. 
\end{enumerate} 
\end{enumerate}
Note that we have $p\geq 1$, so the diagrams with $k=0, \nu = 0, \nu^* = 0$ are not present.
Moreover, if $k=0$ then clearly also $\nu = 0$.
For drawings of the small diagrams see \Cref{fig.small.diagrams} in \Cref{sec.small.diagrams.2.particle}.
We then write 
\begin{equation}\label{eqn.2.body.linked.diagrams.decompose}
	\sum_{p=1}^\infty \frac{1}{p!} \sum_{(\pi,G)\in \mcL_p^2} \Gamma_{\pi,G}^2
	= 
	\xi_{\textnormal{small}, 0} + \xi_{\textnormal{small}, \geq 1} +  \xi_{0} + \xi_{\geq 1},
\end{equation}
where $\xi_{\textnormal{small},0}$ is the contribution of all small diagrams of types $(A)$ and $(B)$,
$\xi_{\textnormal{small},\geq 1}$ is the contribution of all small diagrams of type $(C)$, 
$\xi_0$ is the contribution of all large diagrams with $\nu^* = 0$, and 
$\xi_{\geq 1}$ is the contribution of all large diagrams with $\nu^* \geq 1$.

The notation is motivated by that of the large diagrams, which were split into two groups depending
on whether $\nu^* = 0$ or $\nu^* \geq 1$.
We will treat the small diagrams of types $(A)$ and $(B)$ somewhat similar to the large diagrams 
in $\xi_0$ (hence the notation $\xi_{\textnormal{small},0}$)
and the small diagrams of type $(C)$ somewhat similar to the large diagrams in $\xi_{\geq 1}$
(hence the notation $\xi_{\textnormal{small},\geq 1}$).
Indeed, we will do a Taylor expansion of $\xi_{\textnormal{small},0}$ and $\xi_0$ 
but not of $\xi_{\textnormal{small},\geq 1}$ or $\xi_{\geq 1}$.

Using the bound in \Cref{eqn.bound.linked.sum.k.ng.cluster} and the absolute convergence (\Cref{lem.linked.sum.abs.conv}) we get 
\begin{equation}\label{eqn.bound.xi.geq1}
\begin{aligned}
  |\xi_{\geq1}(x_1,x_2)| 
  	& 
  		\leq \underbrace{C \rho^2 (s(\log N)^3)^3(a^3\rho\log(b/a))^4}_{\textnormal{type $(A)$ diagrams}}
  		+ \underbrace{C\rho^2 (a^3\rho\log(b/a))^2}_{\textnormal{type $(B)$ diagrams}}
  	\\ & 
  		\leq 
        C a^6\rho^{4} (\log (b/a))^2 
        \left[
        s^3 (\log N)^9 a^6 \rho^2 (\log (b/a))^2
        + 1 \right]
\end{aligned}
\end{equation}
uniformly in $x_1,x_2$.
For $\xi_{\textnormal{small},\geq 1}$ we have 
\begin{lemma}\label{lem.small.diagrams.type.C}
For the small diagrams of type $(C)$ we have the bound
\[
  \abs{\xi_{\textnormal{small},\geq 1}} \leq C a^3 b^2\rho^{3+4/3}  |x_1-x_2|^2 + C a^6\rho^4 (\log(b/a))^2.
\]
\end{lemma}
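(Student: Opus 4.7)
The set $\xi_{\textnormal{small},\ge 1}$ is a finite sum over the type-$(C)$ diagrams with $k\in\{0,1,2\}$: the distinguished cluster carries $\{1,2\}$ together with a single extra internal vertex $\{3\}$ whose only admissible $g$-skeleton is the path $(1,3)(2,3)$ (the edge $(1,2)$ being forbidden), and there are $k$ further $2$-vertex clusters. For each fixed $G$, summing the admissible (linked) permutations yields the truncated correlation $\rho_{\rm t}^{(3,2,\ldots,2)}(x_1,\ldots)$, which vanishes whenever $x_1=x_2$: every summand in its inclusion--exclusion expansion contains a factor $\rho^{(m)}(x_1,x_2,\ldots)$, i.e.\ a Slater determinant of $\gamma$-entries with two equal rows at $x_1=x_2$.

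For $k=0$, $G$ alone connects $\{1,2,3\}$ so every $\pi\in \mcS_{\{1,2,3\}}$ is admissible and the sum collapses to $\rho^{(3)}(x_1,x_2,x_3)$. Applying \Cref{lem.bound.rho3} gives $\rho^{(3)}\le C\rho^{13/3}|x_1-x_2|^2|x_1-x_3|^2$, and the support of $g_{13}$ in $\{|x_1-x_3|\le b\}$ yields
\[
\Bigl|\int g_{13}g_{23}\rho^{(3)}\,\ud x_3\Bigr|\le Cb^2\rho^{13/3}|x_1-x_2|^2\int|g|\le Ca^3b^2\rho^{13/3}|x_1-x_2|^2\log(b/a),
\]
and absorbing the logarithm into $C$ produces the first term of the claim.

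For $k=1,2$ I would use the following structural bound. A permutation $\pi$ contributing to $\rho_{\rm t}^{(3,2,\ldots,2)}$ must fully connect the clusters, which forces the product $\prod_i \gamma(x_i;x_{\pi(i)})$ to contain at least two ``cross'' factors $\gamma(x_i;x_j)$ with $i,j$ in different clusters. Bounding non-cross factors by $\rho$ and applying Cauchy--Schwarz to the cross factors, the $k=1$ case gives $|\rho_{\rm t}^{(3,2)}|\le C\rho^{3}\sum|\gamma(x_i;x_j)|^2$ summed over cross pairs. One then carries out the cross-factor integration via the Parseval identity $\int |\gamma(x;y)|^2\,\ud y = \rho$, and the three $g$-integrations via $\int|g|\le Ca^3\log(b/a)$. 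For each of the finitely many $k=1$ diagrams this yields a bound $\le Ca^6\rho^4(\log(b/a))^2$. The $k=2$ diagrams are handled analogously, using the trace identity $\gamma^3(x;x)=\gamma(x;x)=\rho$ for the cycle-type cross integrations; they give strictly smaller bounds $\le Ca^9\rho^5(\log(b/a))^3$ which are absorbed into the claimed second term.

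The main obstacle is this last step: the na\"ive tree-graph bound on $\rho_{\rm t}^{(3,2)}$ followed by \Cref{lem.lebesgue.constant.fermi.polyhedron} would inject the Lebesgue-constant factor $s(\log N)^3$ through $\int|\gamma|$, which is incompatible with the $s$-independent bound claimed. Retaining the cross factors as $|\gamma|^2$ and invoking Parseval is the key device keeping the estimate uniform in $s$.
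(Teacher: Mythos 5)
Your treatment of the smallest ($k=0$) diagrams follows the paper's route, but the final step is wrong: you bound $|x_1-x_3|^2\le b^2$ on the support of $g_{13}$ and then use $\int|g|\le Ca^3\log(b/a)$, and propose to ``absorb the logarithm into $C$''. That is not legitimate, since $b/a=(a^3\rho)^{-\beta}\to\infty$ in the dilute limit, so what you obtain is $Ca^3b^2\rho^{3+4/3}\log(b/a)|x_1-x_2|^2$, strictly weaker than the first term claimed in the lemma. The fix is to keep the weight inside the integral: bound $|g_{23}|\le 1$ and use $\int|g(x)|\,|x|^2\ud x\le Ca^3b^2$ (the $r^4$ weight removes the logarithmic divergence, cf.\ \eqref{eqn.bound.int.g.|x|2}); this is exactly what the paper does and it yields the log-free first term.

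For the larger diagrams your route genuinely differs from the paper's, and for $k=1$ it works: keeping one cross factor as $\abs{\gamma^{(1)}_N}^2$ via Cauchy--Schwarz, bounding the remaining $\gamma$-factors by $\rho$, and using $\int\abs{\gamma^{(1)}_N(x;y)}^2\ud y=\rho$ together with two factors $\int|g|\le Ca^3\log(b/a)$ does give $Ca^6\rho^4(\log(b/a))^2$ uniformly in $s$ -- a legitimate alternative to the paper's method, which instead evaluates each small diagram explicitly in momentum space, so that every internal-cluster integration produces a factor $\hat g(k_i-k_j)$ and a momentum-conservation constraint $L^3\chi_{(k_i-k_j=k_{i'}-k_{j'})}$, i.e.\ it exploits $\int\gamma^{(1)}_N=1$ instead of $\int\abs{\gamma^{(1)}_N}\le Cs(\log N)^3$. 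However, your $k=2$ step does not go through as written. There the diagram has two internal clusters, and a contributing permutation may have only three cross edges (a quotient $3$-cycle through the three clusters); pointwise AM--GM on the cross factors then leaves you with terms containing a single kept $\abs{\gamma}^2$, which can ``cover'' at most one internal cluster. For the uncovered cluster, after $\int|g_{67}|$ over one of its coordinates the integrand no longer depends on the other coordinate, and that integration produces a volume factor $L^3$ -- an $N$-dependent bound, not the claimed $Ca^9\rho^5(\log(b/a))^3$. The trace identity you invoke ($\int\!\!\int\gamma\gamma\gamma$ collapsing by the projection property) requires an unbroken chain of un-absolute-valued $\gamma$-factors through the within-cluster edges, which you have already bounded by $\rho$, so as stated it is not available. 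To close $k=2$ you must either retain the momentum-space delta functions as in the paper, or reorganize the estimate so that each internal cluster keeps a square-integrable $\gamma$-factor; as it stands, this step and the $k=0$ logarithm are genuine gaps.
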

\noindent
The proof of this lemma is a (not very insightful) computation. We give it 
in \Cref{sec.small.diagrams.2.particle}.
Slightly more insightful however, is why we split off the small diagrams from the large diagrams.
\begin{remark}[Why one gets better bounds by computing small diagrams]
We could treat all the small diagrams exactly as we treat $\xi_0$ and $\xi_{\geq 1}$.
We do however gain better error bounds by treating them more directly, 
i.e. computing more precisely what the values of these small diagrams are. 
In exact calculations we can  make use the fact that $\int \gamma^{(1)}_N = 1$, instead of  
bounding the absolute value as $\int |\gamma^{(1)}_N| \leq Cs (\log N)^3$.
\end{remark}

\noindent
We  Taylor expand $\xi_{\textnormal{small},0}$ and $\xi_0$ to second order around the diagonal. 
We first claim that 
\begin{equation}\label{eqn.claimed.xi.x1=x2}
	\xi_{\textnormal{small},0}(x_2, x_2) + \xi_0(x_2,x_2) + \xi_{\textnormal{small},\geq 1}(x_2, x_2) + \xi_{\geq 1}(x_2,x_2) = 0
\end{equation}
Indeed by \Cref{thm.gaudin.expansion} we have 
\[
\begin{aligned}
  \xi_{\textnormal{small},0} + \xi_0 + \xi_{\textnormal{small},\geq1} + \xi_{\geq 1}
  & = 
    \frac{\rho^{(2)}_{\textnormal{Jas}}}{f_{12}^{2}} - \rho^{(2)}
    =
    \frac{N(N-1)}{C_N} \idotsint \prod_{\substack{i < j \\ (i,j) \ne (1,2)}} f_{ij}^2 D_N \ud x_3 \ldots \ud x_{N} - \rho^{(2)}.
\end{aligned}
\]
(Formally to do the division by $f$ in the first equality in case $f=0$ somewhere one uses 
\Cref{thm.gaudin.expansion} with all instances of $f$ replaced by $\tilde f^{(n)}$
for some sequence $\tilde f^{(n)} > 0$ with $\tilde f^{(n)}\searrow f$.
Then one readily applies the Lebesgue dominated convergence theorem to exchange the limit $\tilde f^{(n)}\to f$ with the relevant sums and integrals.)
Taking $x_1=x_2$ in this we have $D_N = 0$ and $\rho^{(2)} = 0$. 
This shows \Cref{eqn.claimed.xi.x1=x2}.
We may thus bound the zeroth order term of $\xi_{\textnormal{small},0}$ and $\xi_0$ by 
\begin{equation}\label{eqn.bound.zeroth.order.xi0}
\begin{aligned}
	& |\xi_{\textnormal{small},0}(x_2,x_2) + \xi_{0}(x_2,x_2)| 
	\\ & \quad 
	\leq 
	|\xi_{\textnormal{small},\geq 1}(x_2, x_2)| + |\xi_{\geq 1}(x_2,x_2)|
	\\ & \quad 
	\leq
	C a^6 \rho^4 (\log (b/a))^2
		\left[
		s^3 (\log N)^9 a^6 \rho^2 (\log (b/a))^2
		+
		1
		\right].
\end{aligned}
\end{equation}
Since both $\xi_{\textnormal{small},0}$ and $\xi_0$ are symmetric in $x_1$ and $x_2$ all first order terms vanish.
We are left with bounding the second derivatives.
For $\xi_{\textnormal{small},0}$ we have 
\begin{lemma}\label{lem.Taylor.small.diagrams}
For any $\mu,\nu=1,2,3$ we have 
\[
	\abs{\partial^\mu_{x_1}\partial^\nu_{x_1} \xi_{\textnormal{small},0}}
	\leq C a^3 \rho^{3+2/3}\log(b/a)
\]	
uniformly in $x_1,x_2$. Here $\partial^\mu_{x_1}$ denotes the derivative in the $x_1^\mu$-direction.
\end{lemma}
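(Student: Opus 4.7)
The plan is to observe that $\xi_{\textnormal{small},0}$ is a finite sum over the topological types of small diagrams (of types $(A)$ and $(B)$) and to bound the second derivative of each contribution pointwise.

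First, I would fix a diagram $(\pi, G) \in \mcL_p^2$ of type $(A)$ or $(B)$ and rewrite its value $\Gamma^2_{\pi, G}(x_1, x_2)$ as an explicit finite-dimensional integral of $\gamma^{(1)}_N$ and $g$ factors depending, by translation invariance, only on $w := x_1 - x_2$. By using the idempotency
\[
\int \gamma^{(1)}_N(x; z)\, \gamma^{(1)}_N(z; y)\, \ud z = \gamma^{(1)}_N(x; y),
\]
which holds because $\gamma^{(1)}_N$ is the kernel of the orthogonal projection onto $\textnormal{span}\{u_k\}_{k \in P_F}$, together with changes of variables that isolate the $g$-edges, each diagram reduces to an expression of the schematic form
\[
\Gamma^2_{\pi, G}(w) = \pm \prod_\alpha \gamma^{(1)}_N(A_\alpha w) \cdot \int \prod_{\ell=1}^{m_g} g(y_\ell) \prod_\beta \gamma^{(1)}_N(A_\beta w + B_\beta \vec y)\, \ud\vec y,
\]
where the coefficients $A_\alpha, A_\beta, B_\beta$ are determined by the combinatorics of $(\pi, G)$ and $m_g$ is the number of $g$-edges.

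Next I would apply the pointwise bounds
\[
|\gamma^{(1)}_N(x)| \leq \rho,\qquad |\partial^\mu \gamma^{(1)}_N(x)| \leq C\rho k_F,\qquad |\partial^\mu\partial^\nu \gamma^{(1)}_N(x)| \leq C\rho k_F^2,
\]
all immediate from $\gamma^{(1)}_N(x) = L^{-3}\sum_{k\in P_F} e^{ikx}$, $|P_F| = N$, and $|k| \leq C k_F$ on $P_F$; together with the estimate $\int |g| \leq C a^3 \log(b/a)$. After taking the derivatives $\partial^\mu_{x_1}\partial^\nu_{x_1}$ and invoking the Leibniz rule, each of the finitely many resulting terms is bounded by
\[
C\, \rho^{m_\gamma}\, k_F^2\, (a^3 \log(b/a))^{m_g},
\]
where $m_\gamma$ counts the $\gamma$-factors left after the idempotency reduction. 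For the minimal diagram (type $(A)$, $k=1$) one has $m_\gamma = 3$ and $m_g = 1$, so using $k_F^2 \sim \rho^{2/3}$ this gives precisely $Ca^3\rho^{3+2/3}\log(b/a)$; for larger $k$ or $\nu^* = 1$, every additional internal vertex costs an extra factor of either $a^3\rho\log(b/a)$ or $a^3\log(b/a)$, both of which are $\leq 1$ under the dilute hypothesis $sa^3\rho\log(b/a)(\log N)^3 < c$, so the bound only improves.

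The main obstacle is the case-by-case bookkeeping: one must enumerate the topologically distinct small diagrams (with their signs and symmetry factors), carry out the idempotency reductions via appropriate changes of variables, and verify the power counting in each case. These explicit but routine calculations are carried out in Section~\ref{sec.small.diagrams}.
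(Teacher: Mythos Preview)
Your overall strategy and final power counting are correct and match the paper's. But the specific reduction you describe---idempotency plus change of variables yielding an integral over only the $m_g$ $g$-edge variables---fails for the type $(A)$ diagrams. There every internal vertex is incident to a $g$-edge, so the identity $\int\gamma^{(1)}_N(x;z)\gamma^{(1)}_N(z;y)\,dz=\gamma^{(1)}_N(x;y)$ (which needs the integrated vertex to carry no $g$-factor) cannot be applied. Concretely, for the smallest type $(A)$ diagram with $\pi=(13)(24)$, after passing to $y=x_3-x_4$ and $u=x_1-x_3$ one gets
\[
\Gamma^2_{\pi,G}(w)=\iint g(y)\,\gamma^{(1)}_N(u)^2\,\gamma^{(1)}_N(w-u-y)^2\,du\,dy,
\]
with all four $\gamma$-factors depending on the extra variable $u$; no further idempotency reduction is possible. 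You can still reach the right bound by first estimating two of the four factors pointwise (by $\rho$, or $\rho k_F$, $\rho k_F^2$ after differentiation) and then using $\int\gamma^{(1)}_N(u)^2\,du=\rho$ on the remaining pair---but that is a different argument from the schematic form you wrote, and it has to be checked diagram by diagram.

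The paper avoids this by working in Fourier: writing $\gamma^{(1)}_N(x)=L^{-3}\sum_{k\in P_F}e^{ikx}$, each internal-vertex integration yields either a factor $\hat g(k_i-k_j)$ (bounded by $\int|g|\leq Ca^3\log(b/a)$) or a momentum constraint $L^3\chi_{(\cdots)}$ that removes one $k$-sum; the $x_1$-derivatives become factors of $k^\mu$ on the surviving external phases. For a type $(A)$ diagram with $k$ clusters this leaves $k+2$ free momenta and $k$ factors $\hat g$, giving $\rho^{k+2}(a^3\log(b/a))^k k_F^2$, which is your claimed bound. This Fourier route also handles type $(B_1)$ cleanly: after the substitution $x_1-x_3\mapsto$ integration variable the $g$-factor no longer depends on $x_1$, so derivatives never land on $g$---a point your Leibniz-rule description would have to address separately.
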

\noindent
The proof of this lemma is a (not very insightful) computation. We give it in \Cref{sec.small.diagrams.2.particle}.

Next we consider $\partial^\mu_{x_1}\partial^\nu_{x_1}\xi_0$.
We write $\xi_0$ in terms of truncated densities as in \Cref{eqn.compute.Gamma2.details}, i.e.
\[
\begin{aligned}
\xi_0 
& = 
    \sum_{k=0}^\infty \frac{1}{k!} 
    \sum_{\substack{n_1,\ldots,n_k \geq 2}} 
    \chi_{(k\geq 5, n_\ell = 2 \textnormal{ or } k\geq 1, \sum n_\ell = 2k+1)}
    \frac{1}{\prod_\ell n_\ell!} 
    \sum_{\substack{G_1,\ldots,G_k \\ G_\ell \in \mcC_{n_\ell}}}
  \\ & \quad \times 
  \idotsint \prod_{\ell}\prod_{e\in G_\ell} g_e \rho^{(\{1\}, \{2\}, n_1,\ldots,n_k)}_{\rm t} \ud x_3 \ldots\ud x_{p+2}.
\end{aligned}
\]
Since we consider terms with $n_* = n_{**} = 0$, there are no $g$-factors that depend on $x_1$ and thus all derivatives are of $\rho_{\rm t}^{(\{1\}, \{2\}, n_1,\ldots,n_k)}$.
We thus need to calculate $\partial^\mu_{x_1}\partial^\nu_{x_1}\rho_{\rm t}^{(\{1\}, \{2\}, n_1,\ldots,n_k)}$. 
For this we use the definition in \Cref{eqn.define.truncated.correlation} rather than the formula in \Cref{eqn.truncated.correlation}.
In \Cref{eqn.define.truncated.correlation} the variable $x_1$ appears exactly twice:
Once in an outgoing $\gamma^{(1)}_N$-edge from $\{1\}$ and once in an incoming $\gamma^{(1)}_N$-edge to $\{1\}$.
Taking the derivatives then amounts to replacing either one of these edges 
by its second derivative or both of them by their first derivatives. 
Thus, using that $\gamma^{(1)}_N(x_i;x_j) = \gamma^{(1)}_N(x_j;x_i)$ since $\gamma^{(1)}_N$ is real, 
and that for $(\pi,\cup G_\ell)$ to be linked necessarily $\pi(1)\ne 1$, 
we have (for $p = 2 + \sum_\ell n_\ell$)
\begin{equation}\label{eqn.partial.rhot.initial}
\begin{aligned}
	& \partial^\mu_{x_1}\partial^\nu_{x_1} \rho^{(\{1\},\{2\},n_1,\ldots,n_k)}_{\rm t}
 	\\ & \quad 
 	= 
 	\partial^\mu_{x_1}\partial^\nu_{x_1}
 	\sum_{\pi \in \mcS_p}
			(-1)^\pi \chi_{\left((\pi, \cup G_\ell)\in\mcL_p\right)}
			\prod_{j=1}^p \gamma^{(1)}_N(x_j; x_{\pi(j)})
	\\
	& \quad 
	= 
	 	\sum_{\pi \in \mcS_p}
			(-1)^\pi \chi_{\left((\pi, \cup G_\ell)\in\mcL_p\right)}
			\prod_{j\ne 1, j\ne\pi^{-1}(1)}
			\gamma^{(1)}_N(x_j; x_{\pi(j)})
			\left[
				2 \partial^\mu_{x_1}\partial^\nu_{x_1}\gamma^{(1)}_N(x_1;x_{\pi(1)}) \gamma^{(1)}_N(x_{\pi^{-1}(1)};x_1)
			\right.
	\\
	& \qquad 
			\left.
				+2 \partial^\mu_{x_1}\gamma^{(1)}_N(x_1;x_{\pi(1)})\partial^\nu_{x_1}\gamma^{(1)}_N(x_{\pi^{-1}(1)};x_1)  
			\right].
\end{aligned}
\end{equation}
With this formula we may then redo the computation of \cite[Equation (D.53)]{Giuliani.Mastropietro.ea.2021}
only now some of the $\gamma^{(1)}_N$-factors (precisely $1$ or $2$ of them) carry derivatives.
The $\gamma^{(1)}_N$-factors with derivatives may end up in the anchored tree, 
or they may end up in the matrix $\mcN(r)$.
If they end up in $\mcN(r)$ it is explained around \cite[Equation (D.9)]{Giuliani.Mastropietro.ea.2021}
how to modify \Cref{lem.bound.det.N}. 
One simply includes factors $ik^\mu$ in the definition of (some of) the functions $\alpha_i$ (and not of $\beta_j$) in the proof of \Cref{lem.bound.det.N}.
Since we may bound $|k|\leq C\rho^{1/3}$ for $k\in P_F$ we get
\begin{equation}\label{eqn.bound.det.N.tilde}
	\abs{\det \tilde\mcN(r)} \leq \begin{cases}
	\rho^{\sum n_\ell + 2 - (k +2 -1)} & \textnormal{if no derivatives end up in $\mcN$},
	\\
	C\rho^{\sum n_\ell +2 - (k + 2-1) + 1/3} & \textnormal{if one derivative ends up in $\mcN$},
	\\
	C\rho^{\sum n_\ell +2 - (k + 2-1) + 2/3} & \textnormal{if two derivatives end up in $\mcN$},
	\end{cases}
\end{equation}
where $\tilde\mcN(r)$ is the appropriate modification of $\mcN(r)$.
To get the formula for $\rho_{\rm t}$ we need also to consider two cases for how the anchored tree looks.
There could be both an incoming and an outgoing edge to/from the vertex $\{1\}$.
And if there is just one edge to/from $\{1\}$ it could be either an incoming or an outgoing edge.
Since $\gamma^{(1)}_N$ is real, incoming and outgoing edges gives the same factor $\gamma^{(1)}_N(x_1;x_j)$.
A simple calculation (essentially just undoing the product rule) then shows that
\begin{equation}\label{eqn.partial.rhot}
\begin{aligned}
	\partial^\mu_{x_1}\partial^\nu_{x_1} \rho^{(\{1\},\{2\},n_1,\ldots,n_k)}_{\rm t}
	& 
	= 
		\sum_{\partial \in \{1, \partial^\mu_{x_1}, \partial^\nu_{x_1}, \partial^\mu_{x_1}\partial^\nu_{x_1}\}}
		\left[
		\sum_{\substack{\tau \in \mcA^{(\{1\},\{2\},n_1,\ldots,n_k)} \\ \textnormal{two edges to/from $\{1\}$}}} 
		\partial \left[\gamma^{(1)}_N(x_{j_2}, x_1)\gamma^{(1)}_N(x_1; x_{j_1}) \right]
		\right.
	\\ & 
	\qquad 
		\left.
		+
		\sum_{\substack{\tau \in \mcA^{(\{1\},\{2\},n_1,\ldots,n_k)} \\ \textnormal{one edge to/from $\{1\}$}}} 
		\partial\gamma^{(1)}_N(x_1; x_{j_1})
		\right]
		\prod_{\substack{(i,j) \in \tau \\ i,j\ne 1}} \gamma^{(1)}_N(x_{i}; x_j) \int \ud\mu_\tau(r) \det \tilde\mcN_\partial(r),
\end{aligned}
\end{equation}
where $j_1$ and $j_2$ denote the vertices connected to $\{1\}$ by the relevant edges in $\tau$
and $\tilde\mcN_\partial$ is the appropriately modified version of $\mcN$, 
where the derivatives not in $\partial$ end up in $\mcN$, i.e. \Cref{eqn.bound.det.N.tilde} reads 
\[
	\abs{\det \tilde\mcN_\partial(r)} \leq C \rho^{\sum n_\ell + 2 - (k+2-1) + (2 - \#\partial)/3},
\]
where $\#\partial$ denotes the number of derivatives in $\partial$, 
i.e. $\# 1 = 0, \#\partial^\mu_{x_1} = 1$ and $\#\partial^\mu_{x_1}\partial^\nu_{x_1} = 2$.

We denote the contribution of the two terms in \Cref{eqn.partial.rhot} to $\partial^\mu_{x_1}\partial_{x_1}^\nu\xi_0$  by 
$(\partial^\mu_{x_1}\partial_{x_1}^\nu\xi_0)^{\rightarrow \bullet \rightarrow}$
and 
$(\partial^\mu_{x_1}\partial_{x_1}^\nu\xi_0)^{\bullet \rightarrow}$ respectively.

We first deal with the second term of \Cref{eqn.partial.rhot} 
where there is just one edge to/from $\{1\}$ in the anchored tree. 
We may bound the contribution of this term almost exactly as 
in the proof of \Cref{lem.linked.sum.abs.conv}. 
We give a sketch here.
Using  \Cref{eqn.bound.det.N.tilde} we get the bound 
\begin{equation}\label{eqn.bound.partial.rhot}
\begin{aligned}
  & \leq 
  C
  \sum_{\partial \in \{1, \partial^\mu_{x_1}, \partial^\nu_{x_1}, \partial^\mu_{x_1}\partial^\nu_{x_1}\}}
  \rho^{(2- \#\partial)/3}
  \sum_{\tau \in \mcA^{(\{1\},\{2\},n_1,\ldots,n_k)} } 
  \abs{\partial\gamma^{(1)}_N(x_1; x_{j_1})}
  \\ & \qquad \times 
  \prod_{\substack{(i,j) \in \tau \\ i,j\ne 1}} \abs{\gamma^{(1)}_N(x_{i}; x_j)} \rho^{(\sum_{\ell} n_\ell + 2) - (k + 2 - 1)},
\end{aligned}
\end{equation}
where again $\#\partial$ denotes the number of derivatives in $\partial$.

To bound the integrations we again follow the strategy of the proof of the $\Gamma^{2}$-sum of \Cref{lem.linked.sum.abs.conv}, 
\Cref{sec.abs.conv.Gamma2}.
The only difference is that the $\gamma_N^{(1)}$-edge on the path in the anchored tree between $\{1\}$ and $\{2\}$
incident to $\{1\}$ is the edge with derivatives, $\partial\gamma^{(1)}_N(x_1; x_{j_1})$. 
This we bound by 
$|\partial\gamma^{(1)}_N| \leq C \rho^{1 + \#\partial/3}$.
The integrations can then be performed exactly as in \Cref{sec.abs.conv.Gamma2}.
We conclude the bound 
\[
\begin{aligned}
& 
  \idotsint 
      \prod_{\ell=1}^k \prod_{e\in T_\ell} |g_e |
      \abs{\partial\gamma^{(1)}_N(x_1; x_{j_1})}
      \prod_{\substack{(i,j) \in \tau \\ i,j\ne 1}} \abs{\gamma^{(1)}_N(x_{i}; x_j)}
  \ud x_3 \ldots\ud x_{\sum n_\ell+2}
\\ & \quad 
  \leq 
  \rho^{1+\#\partial/3}\left(Ca^3\log(b/a)\right)^{\sum n_\ell - k} \left(Cs(\log N)^3\right)^{k}.
\end{aligned}
\]
Again, as in the proof of \Cref{lem.linked.sum.abs.conv} we have by Cayley's formula that $\# \mcT_n = n^{n-2} \leq C^n n!$ and 
by \cite[Appendix D.5]{Giuliani.Mastropietro.ea.2021} that 
$\#\mcA^{(\{1\},\{2\},n_1,\ldots,n_k)} \leq (k+2)! 4^{\sum n_\ell+2} \leq C(k^2 + 1) k! 4^{\sum n_\ell}$.
Following the same arguments as for \Cref{eqn.bound.linked.sum.k.ng.cluster} and recalling that the diagrams in $\xi_0$
have either $k\geq 5$ or $\nu\geq 1$ 
we get the contribution to $\partial^\mu_{x_1}\partial_{x_1}^\nu\xi_0$ of 
\begin{equation}\label{eqn.bound.xi0.one.edge}
\begin{aligned}
  \abs{(\partial^\mu_{x_1} \partial^\nu_{x_1} \xi_0)^{\bullet \rightarrow}
  }
  &
  	\leq C\rho^{2+2/3} 
  	\Bigl[\,
  		\underbrace{(s (\log N )^3)^{5} (a^3 \rho \log (b/a))^{5}}_{\textnormal{type $(A)$ diagrams}}
  		+
  		\underbrace{s (\log N )^3(a^3 \rho \log (b/a))^2}_{\textnormal{type $(B)$ diagrams}}
  	\,\Bigr]
  	\\
  	&
  	\leq 
  	Ca^6\rho^{4+2/3} (\log(b/a))^2 s(\log N)^3 \left[s^4 (\log N)^{12} a^{9} \rho^3 (\log (b/a))^3 + 1\right]
\end{aligned}
\end{equation}
uniformly in $x_1,x_2$.

Next consider the first term of \Cref{eqn.partial.rhot}. 
The argument is almost the same, only we have to distinguish between which $\gamma^{(1)}_N$-factor(s) the derivatives in $\partial$ hits. 
We consider the case $\partial = \partial^\mu_{x_1}\partial^{\nu}_{x_1}$. The other cases are similar.
Suppose that the $\gamma^{(1)}_N$-edge on the path (in the anchored tree) from $\{1\}$ to $\{2\}$ 
is $\gamma^{(1)}_N(x_1;x_{j_1})$ and the $\gamma^{(1)}_N$-factor not on the path is $\gamma^{(1)}_N(x_{j_2};x_1)$.
We distinguish between three cases:
\begin{enumerate}
\item 
If both derivatives are on $\gamma^{(1)}_N(x_1;x_{j_1})$ we may bound this exactly as above.

\item
If one derivative is on $\gamma^{(1)}_N(x_1;x_{j_1})$ (say $\partial^\nu_{x_1}$) and one derivative (say $\partial^\mu_{x_1}$)
is on $\gamma^{(1)}_N(x_{j_2};x_1)$ 
we bound $|\partial^\nu_{x_1} \gamma^{(1)}_N(x_1;x_{j_1})|\leq C\rho^{4/3}$. 
Then the argument is similar, only now one of the $\gamma^{(1)}_N$-integrations is with 
$\partial^\mu \gamma^{(1)}_N$ instead. 
Thus, in the computation leading to \Cref{eqn.bound.xi0.one.edge} we should replace one factor $Cs\rho^{1/3}(\log N)^3$ with 
$\int |\partial^\mu \gamma^{(1)}_N| \ud x$.

\item 
If both derivatives are on $\gamma^{(1)}_N(x_{j_2};x_1)$
then analogously we bound $|\gamma^{(1)}_N(x_1;x_{j_1})|\leq C\rho$
and in the computation leading to \Cref{eqn.bound.xi0.one.edge} we should replace one factor $Cs\rho^{2/3}(\log N)^3$ with 
$\int |\partial^\mu\partial^{\nu} \gamma^{(1)}_N| \ud x$.
\end{enumerate}




\noindent
In total we have the contribution to $\partial^\mu_{x_1}\partial_{x_1}^\nu\xi_0$ of 
\[
\begin{aligned}
  &
  \leq
  	C\rho^2
  	\left(\rho^{2/3} s(\log N)^3
  	+ \rho^{1/3} 
    \int_{\Lambda} \abs{\partial^{\mu} \gamma^{(1)}_N} \ud x
    +  
    \rho^{1/3}
    \int_{\Lambda} \abs{\partial^{\nu} \gamma^{(1)}_N} \ud x
    +
    \int_{\Lambda} \abs{\partial^{\mu}\partial^\nu \gamma^{(1)}_N} \ud x
  	\right)
\\ & \quad 
	\times 
  	\left[
  		(s (\log N )^3)^{4} (a^3 \rho \log (b/a))^{5}
  		+
  		(a^3 \rho \log (b/a))^2  	
  	\right]
\end{aligned}
\]
uniformly in $x_1,x_2$. 
One may do a similar computation for the other cases of $\partial$ and conclude that 
\begin{equation}\label{eqn.bound.xi0.two.edges}
\begin{aligned}
  	& \abs{(\partial^\mu_{x_1} \partial^\nu_{x_1} \xi_0)^{\rightarrow \bullet \rightarrow}}
  	\\ & \quad 
      \leq
    C\rho^2
    \left(\rho^{2/3} s(\log N)^3
    + \rho^{1/3} 
    \int_{\Lambda} \abs{\partial^{\mu} \gamma^{(1)}_N} \ud x
    +  
    \rho^{1/3}
    \int_{\Lambda} \abs{\partial^{\nu} \gamma^{(1)}_N} \ud x
    +
    \int_{\Lambda} \abs{\partial^{\mu}\partial^\nu \gamma^{(1)}_N} \ud x
    \right)
\\ & \qquad 
  \times 
    \left[
      (s (\log N )^3)^{4} (a^3 \rho \log (b/a))^{5}
      +
      (a^3 \rho \log (b/a))^2   
    \right]
\end{aligned}
\end{equation}
uniformly in $x_1,x_2$.
Thus, we need to bound the integrals
\begin{align*}
	\int_{\Lambda} \abs{\partial^{\mu} \gamma^{(1)}_N} \ud x
	& 
	= 
	\int_{\Lambda} \frac{1}{L^3} \abs{\sum_{k\in P_F} k^\mu e^{ikx}} \ud x 
	= \frac{1}{(2\pi)^2 L} \int_{[0,2\pi]^3} \abs{\sum_{q\in \left(\frac{Lk_F}{2\pi} P\right)\cap \Z^3} q^\mu e^{iqu}} \ud u,
\\
\intertext{and}
	\int_{\Lambda} \abs{\partial^{\mu}\partial^\nu \gamma^{(1)}_N} \ud x
	& 
	= 
	\int_{\Lambda} \frac{1}{L^3} \abs{\sum_{k\in P_F} k^\mu k^\nu e^{ikx}} \ud x 
	= \frac{1}{2\pi L^2} \int_{[0,2\pi]^3} \abs{\sum_{q\in \left(\frac{Lk_F}{2\pi} P\right)\cap \Z^3} q^\mu q^\nu e^{iqu}} \ud u.
\end{align*}
Here we have
\begin{lemma}\label{lem.derivative.lebesgue.constant}
The polyhedron $P$ from \Cref{defn.P_F.true} satisfies for any $\mu,\nu=1,2,3$ that 
\[
\begin{aligned}
	\int_{[0,2\pi]^3} \abs{\sum_{q\in \left(\frac{Lk_F}{2\pi} P\right)\cap \Z^3} q^\mu e^{iqu}} \ud u
	& \leq C s N^{1/3} (\log N)^3, 
	\\ 
	\int_{[0,2\pi]^3} \abs{\sum_{q\in \left(\frac{Lk_F}{2\pi} P\right)\cap \Z^3} q^\mu q^\nu e^{iqu}} \ud u
	& \leq C s N^{2/3} (\log N)^4
\end{aligned}
\]
for sufficiently large $N$.
\end{lemma}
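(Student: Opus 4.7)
The plan is to adapt the strategy of \Cref{lem.lebesgue.constant.fermi.polyhedron}, which itself follows \cite[Theorem 4.1]{Kolomoitsev.Lomako.2018}, by incorporating an Abel summation argument to absorb the extra weights $q^\mu$ and $q^\mu q^\nu$. Writing $R := Lk_F/(2\pi)$ (rational by the choice of $L$ in \Cref{defn.P_F.true}, and comparable to $N^{1/3}$), I would decompose $RP$ into $m=O(s)$ closed tetrahedra using the centre $Rz$ and the triangulation of $\partial P$ from \Cref{rmk.comment.def.P_F}, and apply inclusion--exclusion to reduce to bounding
\[
\int_{[0,2\pi]^3} \left|\sum_{q\in T_{j_1}\cap\cdots\cap T_{j_\ell}\cap\Z^3} q^\mu e^{iqu}\right| du
\]
and analogously with $q^\mu q^\nu$, for every chain of $\ell$ tetrahedra in the decomposition. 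The counting of tuples proceeds exactly as in \Cref{lem.lebesgue.constant.fermi.polyhedron}: the top-dimensional pieces, the triangular faces, and the edges each contribute $O(s)$ terms, and the central contributions vanish since $Rz\notin\Z^3$.

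For each $d$-dimensional tetrahedron $T\subset[0,CR]^d$ with $d\leq 3$, I would bound the weighted Lebesgue-type integral by Abel summation. Writing $q^\mu=\sum_{1\leq m\leq q^\mu}1$ and interchanging the order of summation gives
\[
\sum_{q\in T\cap\Z^3} q^\mu e^{iqu} = \sum_{m=1}^{CR}\sum_{q\in T_m\cap\Z^3} e^{iqu},
\qquad T_m := T\cap\{q^\mu\geq m\},
\]
so that the integral is bounded by the sum over $m$ of the unweighted Lebesgue constants of the polytopes $T_m$. Provided each $T_m$ can be covered by $O(1)$ tetrahedra living in a box of sidelength $\leq CR$, the unweighted bound $\leq C(\log R)^3$ of \cite[Theorem 4.1]{Kolomoitsev.Lomako.2018} applies to each summand, and summing over the $O(R)$ values of $m$ yields the desired $CR(\log R)^3 = CN^{1/3}(\log N)^3$ per tetrahedron. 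Iterating this once more for the second statement, i.e. writing $q^\mu q^\nu = \sum_{m\leq q^\mu}\sum_{n\leq q^\nu}1$, gives $O(R^2)$ unweighted Lebesgue-type integrals; combining these through a second round of partial summation costs one additional logarithmic factor from the cross-term cancellations, giving $CR^2(\log R)^4$ per tetrahedron.

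The main obstacle is that the level-set polytopes $T_m$ must be controlled uniformly in $m$ and across the $O(s)$ tetrahedra of the decomposition: their defining hyperplanes must not align pathologically with $\Z^3$, since otherwise the unweighted Lebesgue-constant bound would pick up additional factors from lattice points concentrating on single planes. This is exactly where the elaborate construction in \Cref{defn.P_F.true} is used, as foreshadowed in \Cref{rmk.motivation.P_F}: since the corners of $P$ have the form $\sigma(p^1_j/Q_1,p^2_j/Q_2,p^3_j/Q_3)$ with $Q_1,Q_2,Q_3$ distinct primes comparable to $Q$ and $\sigma$ irrational, every face of every $T_m$ has an equation of the form \eqref{eqn.plane.def.P} with $|\alpha_j|\leq C\sqrt Q$ and an irrational right-hand side, which gives tight quantitative control on the number of lattice points lying on (or close to) such planes. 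Technical lemmas modelled on \cite[Lemmas 3.6, 3.9]{Kolomoitsev.Lomako.2018}, but adapted to avoid the rational/irrational casework using this explicit construction, will then produce the required uniform $(\log R)^3$ bounds on each $T_m$. Summing the $O(s)$ contributions from top-dimensional, triangular, and edge pieces of the decomposition then yields the stated estimates $CsN^{1/3}(\log N)^3$ and $CsN^{2/3}(\log N)^4$.
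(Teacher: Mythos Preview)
Your Abel-summation approach is correct and genuinely different from the paper's proof. The paper does not use partial summation; instead it runs the dimensional-reduction machinery of Kolomoitsev--Lomako directly on the weighted sums. For a tetrahedron $T$ it evaluates the innermost coordinate sum explicitly, producing the splitting $D_3^j = G_3^j + F_3^j$ of Appendix~\ref{sec.derivative.lebesgue.constant}, and then controls the error term $F_3^j$ by analysing the fractional parts $\{\Lambda_3(k^1,k^2)\}$ via rational approximation. This is where the arithmetic content of \Cref{defn.P_F.true} (irrational $\sigma$, distinct primes $Q_1,Q_2,Q_3$) is actually used: it forces the set $\{(k^1,k^2):\tilde\Lambda_3\in\Z\}$ to lie on a single lattice line (\Cref{lem.extra.term.integers}). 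For the mixed weight $t_3=k^1k^2$ the paper needs an additional two-dimensional computation (\Cref{lem.compute.kl.sum.integral}), which is where the extra logarithm in $(\log N)^4$ enters.

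Your route is shorter. Since the Kolomoitsev--Lomako simplex bound $\mcL(T)\leq C(d)\prod_i\log(n_i+1)$ is uniform over all simplices contained in a fixed box, each slice $T_m=T\cap\{q^\mu\geq m\}$ is a convex polytope with $O(1)$ vertices in $[0,CR]^3$ and hence has $\mcL(T_m)\leq C(\log R)^3$ automatically; summing over the $O(R)$ values of $m$ costs exactly a factor $R$. Your last paragraph is therefore misdirected: the uniform bound for the slices follows immediately from the uniform simplex bound, and the special arithmetic construction of \Cref{defn.P_F.true} plays no role in your argument --- it is needed in the paper's proof only because their dimensional reduction exposes fractional parts of the face equations. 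In fact your method yields $(\log N)^3$ in the second estimate as well, and the ``additional logarithmic factor from cross-term cancellations'' you invoke is a phantom: writing $q^\mu q^\nu=\sum_{m\leq q^\mu}\sum_{n\leq q^\nu}1$ and applying the triangle inequality gives $O(R^2)$ slices each bounded by $C(\log R)^3$, full stop.
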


\noindent
The proof of \Cref{lem.derivative.lebesgue.constant} is a long and technical computation, which we give in \Cref{sec.derivative.lebesgue.constant}.
Applying the lemma we conclude that 
\[
	\int_{\Lambda} \abs{\partial^{\mu} \gamma^{(1)}_N} \ud x \leq C s \rho^{1/3} (\log N)^3,
	\qquad 
	\int_{\Lambda} \abs{\partial^{\mu}\partial^\nu \gamma^{(1)}_N} \ud x \leq C s \rho^{2/3} (\log N)^4.
\]
By combining this with \Cref{eqn.bound.xi0.two.edges,eqn.bound.xi0.one.edge} we get 
\begin{equation}\label{eqn.bound.2nd.order.xi0}
\begin{aligned}
  \abs{\partial^\mu_{x_1} \partial^\nu_{x_1} \xi_0}
  	&
  	\leq 
  	Ca^6\rho^{4+2/3} (\log(b/a))^2 s(\log N)^4 \left[s^4 (\log N)^{12} a^{9} \rho^3 (\log (b/a))^3 + 1\right]
\end{aligned}
\end{equation}
uniformly in $x_1,x_2$. 
Combining  \Cref{lem.small.diagrams.type.C,lem.Taylor.small.diagrams,eqn.bound.xi.geq1,eqn.bound.zeroth.order.xi0,eqn.bound.2nd.order.xi0}
and using that for any real number $t>0$ and integer $n\geq1$ we may bound $t \lesssim t^n+1$
this shows the desired.
\end{proof}

\begin{remark}[Treating more diagrams as small]\label{rmk.more.small.diagrams}
One can improve the error bound in \Cref{thm.main} slightly by treating more diagrams as small, i.e. calculating their values more precisely. 
This is similar to what is done in \cite{Basti.Cenatiempo.ea.2022a} for the dilute Bose gas.
(In \cite{Basti.Cenatiempo.ea.2022a} the Bose gas is treated with a method very similar to a cluster expansion. 
Their expansion is performed to some arbitrarily high order [denoted by $M$ in \cite{Basti.Cenatiempo.ea.2022a}], 
which if chosen sufficiently large yields the bounds of \cite{Basti.Cenatiempo.ea.2022a}.)
We sketch the overall idea.

If we choose $s\sim (a^3\rho)^{-1+\eps/2}$ then the error from the $s$-dependent term in the kinetic energy 
to the energy density is $\rho^{5/3}(a^3\rho)^{2-\eps}$.
Then choose as ``large'' the diagrams 
for which the bound in \Cref{eqn.bound.linked.sum.k.ng.cluster} gives contributions to the energy density much smaller than $\rho^{5/3}(a^3\rho)^{2-\eps}$.
This happens for $k_0 > K$ for some large $K \sim \eps^{-1}$.
We can then evaluate all small diagrams as in \Cref{sec.small.diagrams} and conclude that their contributions 
are as given in \Cref{sec.small.diagrams} only with some $K$-dependent constants, since 
there is some $K$-dependent number of small diagrams.
In total we would then get 
an error of size $O_\eps( \rho^{5/3}(a^3\rho)^{2-\eps} )$ in \Cref{thm.main}.


\end{remark}

\subsection{Subleading \texorpdfstring{$3$}{3}-particle diagrams 
(proof of \texorpdfstring{\Cref{lem.bound.3.body.term}}{Lemma \ref*{lem.bound.3.body.term}})}
\label{sec.proof.subleading.3.body.diagrams}
\begin{proof}[Proof of \Cref{lem.bound.3.body.term}]
Recall the formula for $\rho^{(3)}_{\textnormal{Jas}}$ of \Cref{thm.gaudin.expansion}.
In this formula there are terms like $\rho\sum_{p\geq 1} \frac{1}{p!} \sum_{(\pi,G)\in\mcL_p^2} \Gamma_{\pi,G}^2(x_2,x_3)$.
We have $\rho = \rho^{(1)}_{\textnormal{Jas}}(x_1) = \sum_{p'\geq 1} \frac{1}{p'!} \sum_{(\pi',G')\in\mcL_{p'}^1} \Gamma_{\pi',G'}^1(x_1)$
by translation invariance.
Joining the two diagrams $(\pi,G)\in \mcL_p^2$ and $(\pi',G')\in\mcL_{p'}^1$ we get a new (no longer linked) diagram
$(\pi^{\prime\prime},G^{\prime\prime})\in\mcD_{p+p'}^3$
with two linked components, one of which contains the vertices $\{2\}$ and $\{3\}$ and one of which contains the vertex $\{1\}$.
Doing this for all three terms of this type, we are led to define the set 
\[
	\tilde\mcL_p^3 := \mcL_p^3 \cup \bigcup_{q+q' = p}(\mcL_q^2 \oplus \mcL_{q'}^1),
\]
where $\oplus$ refers to the operation of joining two diagrams as above.
The set 
$\tilde\mcL_p^3$ is then the set of diagrams on $3$ external and $p$ internal vertices 
such that there is at most two linked components, and that each linked component contains at least one external vertex.
With this, the formula for $\rho^{(3)}_{\textnormal{Jas}}$ of \Cref{thm.gaudin.expansion} reads
(assuming that $s a^3\rho \log (b/a) (\log N)^3$ is sufficiently small)
\[
\begin{aligned}
	\rho^{(3)}_{\textnormal{Jas}}
	& = f_{12}^2 f_{13}^2 f_{23}^2 
		\Biggl[
			\rho^{(3)}
			+
			\sum_{p\geq 1} \frac{1}{p!}\sum_{(\pi, G)\in\tilde \mcL_{p}^3} 
			\Gamma_{\pi, G}^{3}
		\Biggr].
\end{aligned}
\]
We  split the diagrams in $\tilde\mcL_p^3$ 
into two groups, large and small similarly to the proof of \Cref{lem.derivative.sum.linked.diagrams}.
To do this, we similarly define for a diagram $(\pi, G)\in \tilde\mcL_p^3$ 
the number $k = k(\pi,G)$ as the number of clusters entirely containing internal vertices.
(This $k$ is exactly the same $k$ as in the proof of \Cref{lem.linked.sum.abs.conv}.)
Then we define
\[
	\textstyle
	\nu = \nu(\pi,G) = \sum_{\ell=1}^k n_\ell - 2k,
	\qquad 
	\nu^* = \nu^*(\pi,G) = n_* + n_{**} + n_{***},
\]
where we understand $n_{**}=0$ and/or $n_{***}=0$ if $\{1,2,3\}$ are not all in different clusters.
(One defines $n_{***}$ as the number of internal vertices in the cluster containing $\{3\}$ 
if all $\{1,2,3\}$ are in different clusters, exactly as for the $n_{**}$ and $n_*$ 
of \Cref{sec.abs.conv.Gamma2,sec.abs.conv.Gamma1}.)
We may still think of $\nu + \nu^*$ as the ``number of added vertices''.
As for \Cref{eqn.bound.linked.sum.k.ng.cluster} we have (for $p = 2k_0 + \nu_0$)
\begin{equation}\label{eqn.bound.linked.sum.k.ng.cluster.3.body}
	\abs{\frac{1}{p!}\sum_{\substack{(\pi,G)\in \tilde\mcL_p^3 \\ \nu(\pi,G) + \nu^*(\pi,G) = \nu_0 \\ k(\pi,G) = k_0}} \Gamma_{\pi,G}^3}
	\leq C\rho^3 (C s (\log N )^3)^{k_0} (C a^3 \rho \log (b/a))^{k_0 + \nu_0}.
\end{equation}
The main difference compared to \Cref{eqn.bound.linked.sum.k.ng.cluster} is that we here allow diagrams that are not linked. 
This doesn't matter, since when we compute the integrals (as in \Cref{sec.abs.conv.Gamma2}) 
we anyway have to cut the diagram up into $3$ parts
(either by bounding $g$-edges or $\gamma^{(1)}_N$-edges) as described in \Cref{sec.abs.conv.Gamma2}.
We split the diagrams into two (exhaustive) groups:
\begin{enumerate}
\item Small diagrams with $1\leq k\leq 2, \nu = 0, \nu^* = 0$, and 
\item Large diagrams as the rest, i.e. with 
\begin{enumerate}[$(A)$]
\item $k\geq 3$, or 
\item $\nu + \nu^* \geq 1$.
\end{enumerate}
\end{enumerate}
As in \Cref{sec.proof.subleading.2.body.diagrams}, the splitting is motivated by counting powers in \Cref{eqn.bound.linked.sum.k.ng.cluster.3.body}. 
Note that for $p\geq 1$  the diagrams with $k=0, \nu = 0, \nu^* = 0$ are not present.
We  write 
\[
	\sum_{p\geq 1} \frac{1}{p!}\sum_{(\pi, G)\in\tilde \mcL_{p}^3} 
		\Gamma_{\pi, G}^{3}
		= \xi_{\textnormal{small}}^3 + \xi_{\textnormal{large}}^3,
\]
where $\xi_{\textnormal{small}}^3$ and $\xi_{\textnormal{large}}^3$ 
are the contributions of small and large diagrams respectively.
Exactly as in \Cref{eqn.bound.xi.geq1} we may bound, using \Cref{eqn.bound.linked.sum.k.ng.cluster.3.body}
\[
\begin{aligned}
  |\xi_{\textnormal{large}}^3| 
  	& 
  		\leq \underbrace{C \rho^3 (s(\log N)^3)^3(a^3\rho\log(b/a))^3}_{\textnormal{type $(A)$ diagrams}}
  		+ \underbrace{C\rho^3a^3\rho\log(b/a)}_{\textnormal{type $(B)$ diagrams}}
  	\\ & 
  		\leq 
  		C a^3\rho^{4} \log (b/a)
  			\left[
  			s^3  a^6 \rho^2 (\log (b/a))^2(\log N)^9 
  			+ 1 \right].
\end{aligned}
\]
For the small diagrams we have 
\begin{lemma}\label{lem.bound.3.body.small.diagrams}
We have 
\[
	|\xi_{\textnormal{small}}^3|
	\leq C a^3 \rho^4 \log(b/a)
\]
uniformly in $x_1,x_2,x_3$.
\end{lemma}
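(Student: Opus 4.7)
The plan is to enumerate the small diagrams by topological type and bound each one directly, sidestepping the crude Lebesgue-constant bound $\int |\gamma^{(1)}_N| \lesssim s(\log N)^3$ in favor of the exact identity $\int \gamma^{(1)}_N(x;y) \ud y = 1$ and the $L^2$-identity $\int |\gamma^{(1)}_N(x;y)|^2 \ud y = \gamma^{(1)}_N(x;x) = \rho$ (the latter expressing that $\gamma^{(1)}_N$ is the kernel of an orthogonal projection). This is precisely why the small diagrams yield a better bound than the general $\xi^3_{\textnormal{large}}$ estimate, which carries the unavoidable $s(\log N)^3$ factor from \Cref{lem.lebesgue.constant.fermi.polyhedron}.

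Since $\nu^* = 0$, the three external vertices $\{1,2,3\}$ are isolated in the $g$-graph $G$, so every $g$-edge lies inside one of the $k \in \{1,2\}$ internal clusters, each of which consists of a single $g$-edge between two internal vertices. The constraint that each linked component of $(\pi,G)$ contains an external vertex forces every internal cluster to be attached to some external vertex through at least one $\gamma^{(1)}_N$-edge of $\pi$. Because $k\leq 2$, there are only finitely many (i.e., $N$- and $s$-independent) topologically distinct small diagrams up to relabeling, so it suffices to bound $|\Gamma^3_{\pi,G}(x_1,x_2,x_3)|$ uniformly in $x_1,x_2,x_3$ by $Ca^3\rho^4\log(b/a)$ for each.

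For a fixed small diagram, I write $\Gamma^3_{\pi,G}$ as an integral over the internal variables of the product of the at most two factors $g_e$ and the $(2k+3)$ factors $\gamma^{(1)}_N(x_i;x_{\pi(i)})$. The strategy is to perform the internal integrations in such an order that each one either (i) is an integration of $|g|$ in a difference variable, yielding $\int|g|\lesssim a^3\log(b/a)$, or (ii) is collapsed by Cauchy-Schwarz using $\int |\gamma^{(1)}_N(x;y)| |\gamma^{(1)}_N(y;z)| \ud y \leq \rho$. Every $\gamma^{(1)}_N$-factor that remains after these reductions involves only external (unintegrated) variables and is bounded pointwise by $\rho$. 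For the representative $k=1$ diagram with $\pi$-cycle $(1,4,5,2,3)$, after bounding $|\gamma^{(1)}_N(x_4;x_5)|\leq\rho$ and changing variables $y=x_4-x_5$, the Cauchy-Schwarz identity gives $\int|\gamma^{(1)}_N(x_1;x_5+y)||\gamma^{(1)}_N(x_5;x_2)|\ud x_5\leq\rho$; the remaining $y$-integration contributes $\int|g|\lesssim a^3\log(b/a)$, and the external $\gamma^{(1)}_N$-factors $\gamma^{(1)}_N(x_2;x_3)\gamma^{(1)}_N(x_3;x_1)$ contribute $\rho^2$, giving the claimed bound $Ca^3\rho^4\log(b/a)$.

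For $k=2$ an analogous analysis produces per-diagram bounds of the form $Ca^6\rho^5(\log(b/a))^2$, which are dominated by the $k=1$ contribution in the dilute limit $a^3\rho\log(b/a)\ll1$. Summing over the finite set of topologies then yields the lemma. The main obstacle is the case analysis for the $k=2$ configurations, whose $\pi$-cycle structures can thread the two internal clusters together with the external vertices in several inequivalent ways; however, in each case the same principle applies: the graph-theoretic structure always admits an order of integration in which every internal-variable integration either produces $\int|g|$ or is absorbed by an $L^2$-type identity for $\gamma^{(1)}_N$, never requiring the Lebesgue constant bound.
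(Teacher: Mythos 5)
Your $k=1$ analysis is sound and reproduces, in position space, the same power counting the paper obtains in momentum space; but your $k=2$ case hides a genuine gap. It is not true that for every small $k=2$ diagram ``the graph-theoretic structure always admits an order of integration in which every internal-variable integration either produces $\int|g|$ or is absorbed by an $L^2$-type identity''. Take the diagram with $g$-edges $(4,5)$, $(6,7)$ and permutation $\pi=(1\,4\,6)(5\,7)(2\,3)$: it has $k=2$, $\nu=\nu^*=0$, two linked components each containing an external vertex, so it belongs to $\tilde\mcL_4^3$ and must be covered by the lemma. After taking absolute values its internal part is
\[
\iiiint \abs{g_{45}}\,\abs{g_{67}}\,\abs{\gamma^{(1)}_N(x_1;x_4)}\,\abs{\gamma^{(1)}_N(x_4;x_6)}\,\abs{\gamma^{(1)}_N(x_6;x_1)}\,\gamma^{(1)}_N(x_5;x_7)^2 \ud x_4\ud x_5\ud x_6\ud x_7 .
\]
Each of the four internal integrations must be ``paid'' either by a $g$-factor (giving $\int|g|$) or by a Cauchy--Schwarz pairing of two $\gamma^{(1)}_N$-factors sharing that variable (giving $\rho$); a single leftover $\gamma^{(1)}_N$ under an integration costs $\int\abs{\gamma^{(1)}_N}\sim s(\log N)^3$, and an empty integration costs $L^3$. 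Here there are only two $g$'s and five internal $\gamma^{(1)}_N$-edges, so exactly two Cauchy--Schwarz pairings are needed; the only admissible pairings are $\{(1,4),(4,6)\}$ at $x_4$ or $\{(4,6),(6,1)\}$ at $x_6$ (they share the edge $(4,6)$, so only one of them can be used) together with $\{(5,7),(7,5)\}$ at $x_5$ or $x_7$. Checking the two resulting assignments against the requirement that, at the moment a variable is integrated, no other factor containing it is still alive, one finds in both cases a cyclic scheduling constraint (e.g. $x_4\prec x_6\prec x_7\prec x_5\prec x_4$), so every ordering ends with a bare $\int\abs{\gamma^{(1)}_N}$ or a volume factor. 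Your change-of-variables trick does not rescue this: it works in your $k=1$ example because the $g$-edge joins the two integrated vertices of the $\gamma$-triangle, whereas here the triangle $1\to4\to6\to1$ has its integrated vertices in \emph{different} internal clusters.

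The underlying reason is that the smallness of such diagrams is a cancellation effect which absolute values destroy: exactly, $\iint\gamma^{(1)}_N(x_1;x_4)\gamma^{(1)}_N(x_4;x_6)\gamma^{(1)}_N(x_6;x_1)\ud x_4\ud x_6=\rho$ by the projection property, but with absolute values this double integral is genuinely of order $\rho\, s(\log N)^3$. This is precisely why the paper evaluates the small diagrams in momentum space: writing $\gamma^{(1)}_N(x;y)=L^{-3}\sum_{k\in P_F}e^{ik(x-y)}$, the internal integrations produce factors $\hat g(k_i-k_j)$, bounded by $\int|g|\leq Ca^3\log(b/a)$, together with Kronecker constraints $L^3\chi_{(k_i-k_j=k_{i'}-k_{j'})}$ which encode exactly the momentum conservation (phase cancellation) you discard; counting free momenta then gives $O(a^3\rho^4\log(b/a))$ for $k=1$ and $O(a^6\rho^5(\log(b/a))^2)$ for $k=2$, as you anticipated. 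To close the gap you should either carry out this Fourier evaluation for the $k=2$ diagrams (the paper's route, cf.\ \Cref{sec.small.diagrams.3.particle}), or keep the phases in position space and use $\int\gamma^{(1)}_N(x;z)\gamma^{(1)}_N(z;y)\ud z=\gamma^{(1)}_N(x;y)$ \emph{before} taking absolute values, which amounts to the same computation; the purely absolute-value Cauchy--Schwarz scheme you describe cannot yield the stated bound for all small diagrams.
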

\noindent
As with \Cref{lem.Taylor.small.diagrams}, the proof is simply a computation, which we give in \Cref{sec.small.diagrams.3.particle}.
We conclude the desired.
\end{proof}

\section{One and two dimensions}\label{sec.lower.dimensions}
In this section we sketch the necessary changes one needs to make for the argument to apply in dimensions $d=1$ and $d=2$.
We will abuse notation slightly and denote by the same symbols as in \Cref{sec.gaudin.expansion,sec.preliminary.computations,sec.energy.in.box}
the relevant $1$- and $2$-dimensional analogues.

\subsection{Two dimensions}\label{sec.two.dimensions}
Similarly to the $3$-dimensional setting, the $p$-wave scattering function $f_0$ in $2$ dimensions 
is radial and 
solves the equation 
\begin{equation}\label{eqn.scat.d=2}
-\partial_r^2 f_0 - \frac{3}{r}\partial_r f_0 + \frac{1}{2}vf_0 = 0,
\end{equation}
see \Cref{sec.scattering.function} and recall \Cref{defn.scattering.length.d=2}.
Thus, it is the same as the $s$-wave scattering function in $4$ dimensions.
In particular it satisfies the bound
\begin{lemma}[{\cite[Lemma A.1]{Lieb.Yngvason.2001}, \Cref{prop.bound.f0.hc}}]
\label{prop.bound.f0.hc.d=2}
The scattering function 
satisfies $\left[1 - \frac{a^2}{|x|^2}\right]_+ \leq f_0(x) \leq 1$ for all $x$ and 
$|\nabla f_0(x)|\leq \frac{2a^2}{|x|^3}$ for $|x| > a$.
\end{lemma}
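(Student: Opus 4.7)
The plan is to mimic the three-dimensional argument of \Cref{prop.bound.f0.hc} verbatim, with only the radial weight and powers adjusted to the two-dimensional setting. The key observation is that multiplying the radial scattering equation \eqref{eqn.scat.d=2} by $r^3$ gives
\[
\partial_r\bigl(r^3 \partial_r f_0\bigr) = \tfrac{1}{2} v\, r^3 f_0 \geq 0,
\]
so $r \mapsto r^3 \partial_r f_0(r)$ is non-decreasing on $(0,\infty)$. This is the two-dimensional analogue of the identity $\partial_r(r^4 \partial_r f_0) \geq 0$ used in the three-dimensional proof, reflecting the fact (already noted in the excerpt) that \eqref{eqn.scat.d=2} coincides with the $s$-wave equation in four dimensions.

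Next I would identify the asymptotic behaviour. Outside $\supp v$, the homogeneous equation $-\partial_r^2 f_0 - \tfrac{3}{r}\partial_r f_0 = 0$ has general solution $A + B/r^2$, and $f_0(r)\to 1$ together with the variational identification of $a$ in \Cref{defn.scattering.length.d=2} forces $f_0(r) = 1 - a^2/r^2$ for $r$ larger than the range of $v$. In particular $r^3 \partial_r f_0(r) = 2a^2$ in that region, and combining this with the monotonicity just established gives $r^3 \partial_r f_0(r) \leq 2a^2$ for every $r>0$. This yields the gradient bound $|\nabla f_0(x)| \leq 2a^2/|x|^3$ claimed for $|x|>a$ (in fact for all $r>0$).

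Integrating $\partial_r f_0(r) \leq 2a^2/r^3$ from $r$ up to $\infty$ and using $f_0(\infty)=1$ produces the inequality $f_0(r) \geq 1 - a^2/r^2$. Since $f_0\geq 0$ (replacing $f_0$ by $|f_0|$ does not increase the scattering functional, so we may assume $f_0\geq 0$ from the outset), this gives $f_0(r) \geq [1-a^2/r^2]_+$. For the upper bound $f_0\leq 1$, note that truncating $f_0$ by $\min\{f_0,1\}$ strictly decreases $\int |x|^2(|\nabla f_0|^2 + \tfrac{1}{2}v f_0^2)\,dx$ unless $f_0\leq 1$ already, so the minimizer satisfies $f_0\leq 1$. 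Exactly as in the three-dimensional case, it is immediate that $f_0\geq 0 = [1-a^2/r^2]_+$ for $r\leq a$.

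No serious obstacle is expected; the only delicate point is justifying the matching condition $f_0 = 1-a^2/r^2$ outside $\supp v$, which follows from the explicit form of radial solutions of the homogeneous equation and the normalization at infinity. The whole argument is a direct transcription of the three-dimensional proof with $r^4 \mapsto r^3$, $3a^3 \mapsto 2a^2$, and $a^3/|x|^3 \mapsto a^2/|x|^2$.
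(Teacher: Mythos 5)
Your proposal is correct and follows essentially the same route as the paper: the paper states the two-dimensional bound by appeal to \cite[Lemma A.1]{Lieb.Yngvason.2001} and its three-dimensional Lemma \ref{prop.bound.f0.hc}, whose proof rests on exactly the monotonicity $\partial_r(r^{3}\partial_r f_0)=\tfrac12 v r^{3}f_0\ge 0$ and the exterior solution $f_0=1-a^2/r^2$ that you use. Nothing essentially different is done here, so no further comparison is needed.
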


\noindent
As for the $3$-dimensional setting we consider the trial state 
\[
	\psi_N(x_1,\ldots,x_N) = \frac{1}{\sqrt{C_N}} \prod_{i<j} f(x_i-x_j) D_N(x_1,\ldots,x_N),
\]
where $f$ is a rescaled scattering function 
\[
	f(x) = \begin{cases}
	\frac{1}{1 - a^2/b^2} f_0(|x|) & |x|\leq b, \\ 1 & |x|\geq b
	\end{cases}
\]
and
\[
	D_N(x_1,\ldots,x_N) = \det[u_k(x_i)]_{\substack{1\leq i \leq N \\ k\in P_F}},
	\qquad 
	u_k(x) = \frac{1}{L} e^{ikx},
	\qquad 
	N = \# P_F.
\]
Here $P_F$ denotes the ``Fermi polygon'', 
the $2$-dimensional analogue of the ``Fermi polyhedron'' $P_F$.
It is defined as follows. (Compare to \Cref{defn.P_F.true}.)
\begin{defn}\label{defn.P_F.d=2}
The polygon $P$ is defined as follows. 
\begin{itemize}
\item
First pick $Q,$ satisfying 
\[
	Q^{-1/2} \leq Cs^{-2},
	\qquad 
	N^{3/2} \ll Q \leq CN^C
\]
in the limit $N\to \infty$.
(The exponents arise as $\frac{-1}{2} = \frac{-1}{2(d-1)}$, $-2 = \frac{-2}{d-1}$ and $\frac{3}{2} = \frac{d+1}{d}$ for $d=2$.)

\item 
Pick two distinct primes $Q_1,Q_2 \sim Q$.

\item 
Place $s$ \emph{evenly distributed} points $\kappa_1^\R,\ldots,\kappa_s^\R$ on the circle of radius $Q^{-1/2}$
such that the points are invariant under the symmetries $(k^1,k^2)\mapsto(\pm k^a,\pm k^b)$
for $\{a,b\} = \{1,2\}$.
(Here the exponent arises as $\frac{-1}{2} = \frac{1}{2(d-1)} - 1$ for $d=2$.)

\emph{Evenly distributed} means that the distance between any pair of points is $d\gtrsim s^{-1}Q^{-1/2}$ 
and that for any $k$ on the sphere of radius $Q^{-1/2}$ the distance from $k$ to the nearest point is $\lesssim s^{-1}Q^{-1/2}$.
(On the circle we can naturally order the points. Then the condition for being \emph{evenly distributed} reads that 
consecutive points are separated by a distance $\sim s^{-1}Q^{-1/2}$.)

\item 
Find now points $\kappa_1,\ldots,\kappa_s$
of the form 
\[
	\kappa_j = \left(\frac{p_j^1}{Q_1}, \frac{p_j^2}{Q_2}\right),
	\qquad p_j^\mu \in \Z, \quad \mu=1,2, \quad j=1,\ldots,s
\]
such that the points are invariant under the symmetries $(k^1,k^2)\mapsto(\pm k^1,\pm k^2)$
and such that for any $j=1,\ldots,s$ we have $\abs{\kappa_j - \kappa_j^{\R}}\lesssim Q^{-1}$.


\item 
Define $\tilde P$ as the convex hull of the points $\kappa_1,\ldots,\kappa_s$ and $P = \sigma \tilde P$ where $\sigma$ is such that $\Vol(P) = \pi$.

\item 
Define the centre $z = \sigma (1/Q_1, 1/Q_2)$.
\end{itemize}
\noindent
The ``Fermi polygon'' is the rescaled version defined as $P_F = k_F P \cap \frac{2\pi}{L}\Z^2$, 
where $L$ is chosen large (depending on $k_F$) such that $\frac{k_FL}{2\pi}$ is rational and large.
\end{defn}


\noindent
Similarly as in \Cref{rmk.comment.def.P_F} we have that $\sigma$ is irrational and $\sigma = Q^{1/2}(1 + O(s^{-2}))$.
In particular, any point on the boundary $\partial P$ has radial coordinate $1+O(s^{-2})$.
(The power of $s$ here comes from the circle being locally quadratic and the distance between close points being $\sim s^{-1}$.
Compare to \Cref{rmk.comment.def.P_F}.)
Moreover, $P_F$ is almost symmetric under the map $(k^1,k^2)\mapsto (k^2,k^1)$ similarly to \Cref{lem.preserve.symmetry}.
\begin{lemma}\label{lem.preserve.symmetry.d=2}
Let $F_{12}$ be the map $(k^1,k^2)\mapsto (k^2,k^1)$.
For any function $t\geq 0$ we have 
\[
	\sum_{k\in \frac{2\pi}{L}\Z^2} \abs{\chi_{(k\in P_F)} - \chi_{(k\in F_{12}(P_F))}} t(k)
	\lesssim Q^{-1/2} N \sup_{|k| \sim k_F} t(k)
	\lesssim N^{1/4} \sup_{|k| \sim k_F} t(k),
\]
where $Q$ is as in \Cref{defn.P_F.d=2}.
\end{lemma}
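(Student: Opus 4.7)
The plan is to follow the proof of \Cref{lem.preserve.symmetry} essentially verbatim, with the $3$-dimensional geometric constants replaced by their $2$-dimensional analogues. The key idea is to introduce the intermediate polygon $P_F^\R := k_F P^\R \cap \frac{2\pi}{L}\Z^2$ where $P^\R := \sigma\conv\{\kappa_1^\R,\ldots,\kappa_s^\R\}$, and note that since $\{\kappa_j^\R\}_{j=1,\ldots,s}$ is by construction invariant under the full symmetry group generated by $(k^1,k^2)\mapsto(\pm k^a,\pm k^b)$ for $\{a,b\}=\{1,2\}$, we have $F_{12}(P_F^\R) = P_F^\R$. Hence by the triangle inequality
\[
\sum_{k\in\frac{2\pi}{L}\Z^2} \abs{\chi_{(k\in P_F)} - \chi_{(k\in F_{12}(P_F))}} t(k)
\leq \sum_{k\in\frac{2\pi}{L}\Z^2} \abs{\chi_{(k\in P_F)} - \chi_{(k\in P_F^\R)}} t(k)
+ \sum_{k\in\frac{2\pi}{L}\Z^2} \abs{\chi_{(k\in F_{12}(P_F))} - \chi_{(k\in F_{12}(P_F^\R))}} t(k),
\]
and it suffices to estimate the first sum on the right (the second being identical by a change of variables).

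Next I would quantify how close $P$ is to $P^\R$. By construction $\abs{\kappa_j - \kappa_j^\R} \lesssim Q^{-1}$, and after scaling by $\sigma$ (where $\sigma = Q^{1/2}(1 + O(s^{-2}))$ by the $2$-dimensional analogue of the computation in \Cref{rmk.comment.def.P_F}), the corners $\sigma\kappa_j$ of $P$ differ from $\sigma\kappa_j^\R$ by $\lesssim Q^{-1/2}$. Consequently, $\chi_{(k\in P_F)} - \chi_{(k\in P_F^\R)}$ is supported in the annular shell
\[
\left\{k\in\frac{2\pi}{L}\Z^2 : \dist(k,\partial(k_FP)) \lesssim k_F Q^{-1/2}\right\}.
\]
Every such $k$ satisfies $|k|\sim k_F$.

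Finally I would bound the number of lattice points in this shell. The boundary $\partial(k_FP)$ is a closed curve of length $\sim k_F$, so the shell has area $\sim k_F^2 Q^{-1/2}$. Since the lattice spacing in $\frac{2\pi}{L}\Z^2$ is $\sim L^{-1}$, the number of lattice points in the shell is $\lesssim L^2 k_F^2 Q^{-1/2} \sim Q^{-1/2} N$ (using $N = \#P_F \sim (k_F L)^2$). This gives
\[
\sum_{k\in\frac{2\pi}{L}\Z^2} \abs{\chi_{(k\in P_F)} - \chi_{(k\in P_F^\R)}} t(k) \lesssim Q^{-1/2} N \sup_{|k|\sim k_F} t(k),
\]
and the claimed bound $\lesssim N^{1/4}\sup_{|k|\sim k_F} t(k)$ then follows from $Q \geq N^{3/2}$, which gives $Q^{-1/2} \leq N^{-3/4}$, so that $Q^{-1/2} N \leq N^{1/4}$.

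No step is a real obstacle here since the argument is a direct dimensional adaptation, but the one bookkeeping point that needs care is confirming the exponents $\sigma \sim Q^{1/2}$ and the width $\sim Q^{-1/2}$; these follow from the fact that the circle of radius $Q^{-1/2}$ is locally quadratic on the lengthscale $s^{-1}Q^{-1/2}$ between neighbouring $\kappa_j^\R$, so that $Q^{1/2}\tilde P \subset B_1(0) \subset (1 + Cs^{-2})Q^{1/2}\tilde P$, and then the volume normalisation $\Vol(P) = \pi$ pins down $\sigma = Q^{1/2}(1+O(s^{-2}))$ as claimed.
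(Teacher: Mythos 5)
Your proof is correct and follows exactly the route the paper intends: it is the verbatim dimensional adaptation of the proof of \Cref{lem.preserve.symmetry} (triangle inequality via the symmetric polygon $P_F^\R$, corner displacement $\lesssim \sigma Q^{-1}\sim Q^{-1/2}$ giving a shell of width $\sim k_F Q^{-1/2}$ around $\partial(k_FP)$, lattice-point count $\lesssim L^2k_F^2Q^{-1/2}\sim Q^{-1/2}N$, and $Q\gg N^{3/2}$ to get $N^{1/4}$), which is precisely what the paper leaves implicit by stating the lemma "similarly to \Cref{lem.preserve.symmetry}". Your bookkeeping of $\sigma = Q^{1/2}(1+O(s^{-2}))$ also agrees with the paper's remark in \Cref{sec.two.dimensions}.
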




\noindent
The analogue of \Cref{lem.lebesgue.constant.fermi.polyhedron} is then 
\begin{lemma}\label{lem.lebesgue.constant.fermi.d=2}
The Lebesgue constant of the Fermi polygon satisfies
\[
\int_{\Lambda} \frac{1}{L^2} \abs{\sum_{k\in P_F} e^{ikx}} \ud x
= \frac{1}{(2\pi)^2} \int_{[0,2\pi]^2} \abs{\sum_{q \in \left(\frac{Lk_F}{2\pi}P\right)\cap \Z^2} e^{iqu}} \ud u \leq C s (\log N)^2.
\]
\end{lemma}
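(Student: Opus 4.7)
The plan is to follow exactly the strategy of the proof of \Cref{lem.lebesgue.constant.fermi.polyhedron}, with the simplifications afforded by being one dimension lower. Set $R = \frac{Lk_F}{2\pi}$ and decompose $RP$ into triangles by joining each of the $s$ edges of $\partial P$ to the scaled central point $Rz$. This produces $m = s$ (closed) triangles $T_1,\ldots,T_m$ whose union is $RP$ and whose pairwise intersections are faces of lower dimension (a line segment or the vertex $Rz$). Since $|Rz| \leq CRQ^{-1/2} \ll 1$ for $N$ large and $z\ne 0$, we have $Rz \notin \Z^2$, which will be crucial at the end.

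By the inclusion--exclusion principle, as in \cite[Theorem~4.1]{Kolomoitsev.Lomako.2018},
\[
\mcL(RP) \leq \sum_{\ell=1}^m \sum_{j_1<\cdots<j_\ell} \mcL(T_{j_1}\cap\cdots\cap T_{j_\ell}).
\]
Each intersection is a $d$-dimensional tetrahedron with $d\leq 2$, contained (after a lattice translation) in a box $[0,CR]^d$. By \cite[Theorem 4.1]{Kolomoitsev.Lomako.2018} we then have $\mcL(T_{j_1}\cap\cdots\cap T_{j_\ell}) \leq C(\log R)^d \leq C(\log N)^2$.

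The remaining step is to count the summands, split by the dimension of the intersection. The $2$-dimensional tetrahedra (the $T_j$ themselves) appear exactly once each, and there are $m=O(s)$ of them, contributing $Cs(\log N)^2$. The $1$-dimensional pieces (line segments $T_j\cap T_{j'}$) each arise from an edge of the triangulation; in $2$ dimensions each such edge is shared by at most a bounded number of triangles (in fact at most two, since two triangles in our construction meet in at most one edge), and there are $O(s)$ of them, contributing $Cs\log N$. Finally, the $0$-dimensional intersections all reduce to the central point $Rz$, and every such term is $\mcL(\{Rz\})=0$ because $Rz\notin\Z^2$. Summing the three contributions yields $\mcL(RP)\leq Cs(\log N)^2$, which is exactly the claim.

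There is no real obstacle here; the proof is genuinely dimensional, and the 2D case is simpler than 3D because the triangulation of $\partial P$ is trivial (each face of $\partial P$ is already a $1$-simplex) and each edge is shared by exactly two triangles of the decomposition. The only things to check carefully are that $Rz\notin\Z^2$ (using $|Rz|\ll 1$ and the properties of $z$ from \Cref{defn.P_F.d=2}) and that the neighbour-count for corners of $P$ is bounded, both of which follow immediately from the construction.
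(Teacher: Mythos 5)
Your proof is correct and follows essentially the same route as the paper: the paper proves the three-dimensional case (\Cref{lem.lebesgue.constant.fermi.polyhedron}) by fan-decomposing $RP$ around the central point $Rz\notin\Z^3$, applying inclusion--exclusion together with the Kolomoitsev--Lomako bound for tetrahedra, and counting the pieces, and it states the two-dimensional lemma as the direct analogue obtained by exactly the dimensional adaptation you carry out. Your checks that $Rz\notin\Z^2$ (via $|Rz|\lesssim RQ^{-1/2}\ll 1$) and that the counting of triangles, shared spokes, and the central point works out are the right details to verify, so nothing is missing.
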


\noindent
We can again compute the kinetic energy of the Slater determinant analogously to \Cref{lem.KE.polyhedron} and its $2$-particle reduced density analogously to \Cref{prop.2.density}.
\begin{lemma}\label{lem.KE.polygon.d=2}
The kinetic energy of the (Slater determinant with momenta in the) Fermi polygon satisfies 
\[
	\sum_{k\in P_F} |k|^2 = \sum_{k\in B_F} |k|^2 \left(1 + O(N^{-1/2}) + O(s^{-4})\right)
    = 2\pi \rho N\left(1 + O(N^{-1/2}) + O(s^{-4})\right).
\]
\end{lemma}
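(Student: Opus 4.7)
The plan is to adapt the proof of \Cref{lem.KE.polyhedron} to two dimensions essentially verbatim, replacing ``sphere'' by ``circle'' throughout. First, I would write both sums as Riemann sums for their continuum counterparts. Since $|\nabla|k|^2|\lesssim k_F$ and the boundary of $k_FP$ (respectively $B(k_F)$) has length $\sim k_F$, the number of lattice points of $\frac{2\pi}{L}\Z^2$ in a boundary shell of width $L^{-1}$ is $\sim k_F L$, each contributing an error of at most $O(k_F)$ after integrating against $L^{-2}$. This produces the Riemann-sum estimate
\[
	\sum_{k \in P_F}|k|^2 - \sum_{k\in B_F}|k|^2 = \frac{L^2}{(2\pi)^2}\left(\int_{k_F P}|k|^2\,dk - \int_{B(k_F)}|k|^2\,dk\right) + O(k_F^2 N^{1/2}),
\]
using $N\sim k_F^2 L^2$.

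Next, I would compute the difference of integrals in polar coordinates. Writing the boundary of $k_F P$ as $k_F R(\omega) = k_F(1+\varepsilon(\omega))$, the $2$-dimensional analogue of the radial estimate in \Cref{rmk.comment.def.P_F} gives $|\varepsilon(\omega)|\lesssim s^{-2}$ uniformly in $\omega$; the improvement from $s^{-1}$ (in 3D) to $s^{-2}$ (in 2D) stems from the spacing between neighbouring corners on the circle being $\sim s^{-1}$ rather than $\sim s^{-1/2}$, together with the local quadratic geometry of the circle. Polar integration gives
\[
	\int_{k_F P}|k|^2\,dk - \int_{B(k_F)}|k|^2\,dk = \frac{k_F^4}{4}\int_0^{2\pi}\left((1+\varepsilon(\omega))^4 - 1\right)d\omega = k_F^4\int_0^{2\pi}\varepsilon(\omega)\,d\omega + O(k_F^4 s^{-4}).
\]
The volume constraint $\Vol(P)=\pi$ reads $\tfrac{1}{2}\int_0^{2\pi}(1+\varepsilon)^2\,d\omega = \pi$, which upon expansion forces $\int_0^{2\pi}\varepsilon\,d\omega = O(s^{-4})$. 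Hence the difference of integrals is $O(k_F^4 s^{-4})$, and multiplying by $L^2/(2\pi)^2$ yields an absolute error $O(k_F^2 N s^{-4})$, i.e.\ a relative error of $O(s^{-4})$.

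Finally, evaluating $\sum_{k\in B_F}|k|^2$ by the same Riemann-sum replacement gives
\[
	\sum_{k\in B_F}|k|^2 = \frac{L^2}{(2\pi)^2}\int_{|k|\leq k_F}|k|^2\,dk + O(k_F^2 N^{1/2}) = \frac{k_F^4 L^2}{8\pi} + O(k_F^2 N^{1/2}),
\]
and the 2D relation $k_F^2 = 4\pi\rho(1+O(N^{-1/2}))$ (from $N = \#(B(k_F)\cap\frac{2\pi}{L}\Z^2)$) upgrades this to $\sum_{k\in B_F}|k|^2 = 2\pi\rho N(1+O(N^{-1/2}))$. Combining the three estimates yields the lemma. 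The only step that requires independent justification (and is the main ``obstacle'', if any) is the bound $|\varepsilon(\omega)|\lesssim s^{-2}$, which is genuinely dimension-dependent but follows by the same elementary geometric argument used in \Cref{rmk.comment.def.P_F}, together with the choice $Q^{-1/2}\leq Cs^{-2}$ built into \Cref{defn.P_F.d=2}.
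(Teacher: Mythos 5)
Your proposal is correct and follows essentially the same route the paper intends: it is the verbatim two-dimensional adaptation of the proof of \Cref{lem.KE.polyhedron} (Riemann-sum replacement with boundary error $O(k_F^2N^{1/2})$, polar integration with $\eps(\omega)=O(s^{-2})$ from the locally quadratic circle and the spacing $\sim s^{-1}$ of corners as recorded after \Cref{defn.P_F.d=2}, and the area constraint forcing $\int\eps\,d\omega=O(s^{-4})$), which is exactly what the paper means by ``analogously to \Cref{lem.KE.polyhedron}''. The final evaluation $\sum_{k\in B_F}|k|^2=\frac{k_F^4L^2}{8\pi}+O(k_F^2N^{1/2})=2\pi\rho N(1+O(N^{-1/2}))$ is also as in the paper.
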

\begin{lemma}\label{prop.2.density.d=2}
The $2$-particle reduced density of the (normalized) Slater determinant satisfies 
\[
\rho^{(2)}(x_1,x_2) 
  = \pi \rho^{3} |x_1-x_2|^2\left(1 + O(N^{-1/2}) + O(s^{-4}) + O( \rho|x_1 - x_2|^2)\right).
\]
\end{lemma}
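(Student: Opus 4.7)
The plan is to mirror the $3$-dimensional argument of \Cref{prop.2.density}, which was itself an application of Wick's rule together with a Taylor expansion of the one-particle reduced density matrix, using now the $2$-dimensional substitutes \Cref{lem.preserve.symmetry.d=2,lem.KE.polygon.d=2}. Concretely, Wick's rule for the quasi-free Slater determinant gives
\[
	\rho^{(2)}(x_1,x_2) = \rho^2 - \abs{\gamma^{(1)}_N(x_1;x_2)}^2,
	\qquad
	\gamma^{(1)}_N(x_1;x_2) = \frac{1}{L^2}\sum_{k\in P_F} e^{ik(x_1-x_2)},
\]
so it suffices to expand $\gamma^{(1)}_N(x_1;x_2)$ in the variable $x_{12}=x_1-x_2$.

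First I will Taylor expand the exponentials to fourth order. By construction (see \Cref{defn.P_F.d=2}) the Fermi polygon is invariant under the maps $(k^1,k^2)\mapsto(\pm k^1,\pm k^2)$, so all odd-order contributions vanish and the off-diagonal second-order term $\sum_{k\in P_F} k^1 k^2$ is zero. What remains at second order is $\frac{1}{2L^2}\sum_{k\in P_F}\bigl[|k^1|^2|x_{12}^1|^2+|k^2|^2|x_{12}^2|^2\bigr]$. Here I would invoke \Cref{lem.preserve.symmetry.d=2} (applied to $t(k)=|k^1|^2$) to replace $\sum_{k\in P_F}|k^\mu|^2$ by $\frac{1}{2}\sum_{k\in P_F}|k|^2$ at the price of an error $O(N^{1/4} k_F^2) = O(N^{-3/4}\rho N)$, which is absorbed into $O(N^{-1/2})$. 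The leading second-order contribution is then $\frac{1}{4L^2}|x_{12}|^2\sum_{k\in P_F}|k|^2$, and \Cref{lem.KE.polygon.d=2} identifies this with $\frac{\pi\rho^2}{2}|x_{12}|^2\bigl(1+O(N^{-1/2})+O(s^{-4})\bigr)$. Fourth-order contributions are bounded trivially by $\sup_{k\in P_F}|k|^4\cdot \rho \cdot |x_{12}|^4=O(\rho^3|x_{12}|^4)$.

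Assembling the expansion one obtains
\[
	\gamma^{(1)}_N(x_1;x_2)
	= \rho - \frac{\pi\rho^2}{2}|x_{12}|^2\bigl(1+O(N^{-1/2})+O(s^{-4})\bigr)
		+ O\bigl(\rho^3|x_{12}|^4\bigr).
\]
Squaring, and using that $\gamma^{(1)}_N$ is real (by the reflection symmetry of $P_F$), yields
\[
	\abs{\gamma^{(1)}_N(x_1;x_2)}^2
	= \rho^2 - \pi\rho^3|x_{12}|^2\bigl(1+O(N^{-1/2})+O(s^{-4})\bigr)
		+ O\bigl(\rho^4|x_{12}|^4\bigr),
\]
and the Wick identity then gives exactly the claimed formula.

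I do not expect a serious obstacle: the argument is a direct 2D transcription of the proof of \Cref{prop.2.density}, and both of its main ingredients (the ``almost-symmetry'' \Cref{lem.preserve.symmetry.d=2} and the kinetic-energy asymptotics \Cref{lem.KE.polygon.d=2}) are already in place. The only point requiring mild care is to check that the symmetry error $O(N^{1/4} k_F^2)$ from \Cref{lem.preserve.symmetry.d=2} is dominated by the claimed $O(N^{-1/2})$ and $O(s^{-4})$ corrections; this holds because $Q\gg N^{3/2}$ in \Cref{defn.P_F.d=2} forces the absolute symmetry error to be much smaller than the Riemann-sum error in \Cref{lem.KE.polygon.d=2}, so it disappears into the stated $O(N^{-1/2})$ term.
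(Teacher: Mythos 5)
Your argument is correct and is essentially the paper's own: the paper proves this lemma simply by transcribing the proof of \Cref{prop.2.density} to two dimensions (Wick's rule, reflection symmetry of $P_F$ killing odd and off-diagonal terms, \Cref{lem.preserve.symmetry.d=2} to symmetrize $\sum_{k\in P_F}|k^\mu|^2$, and \Cref{lem.KE.polygon.d=2} for the leading coefficient), exactly as you do, and your bookkeeping of the coefficients ($\tfrac{1}{L^2}\sum_{k\in P_F}|k|^2 = 2\pi\rho^2$, hence the $\pi\rho^3|x_{12}|^2$ term) and of the error terms is accurate. Since the claimed accuracy only requires a crude $O(\rho^3|x_{12}|^4)$ bound at fourth order, your simplification of not tracking that term precisely (unlike the 3D case) is exactly what is needed.
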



\noindent
The computations in \Cref{sec.gaudin.expansion} make no reference to the dimension and are thus also valid in dimension $d=2$.
For the absolute convergence in \Cref{sec.abs.conv} and \Cref{lem.linked.sum.abs.conv} 
one should simply replace occurrences of $g$ and $\gamma^{(1)}_N$
with their $2$-dimensional analogues. 
Here we have the bounds (using \Cref{prop.bound.f0.hc.d=2,lem.lebesgue.constant.fermi.d=2})
\begin{equation}\label{eqn.2d.bound.int.g}
\begin{aligned}
	\int_\Lambda |g| 
		& \lesssim a^2 + \frac{1}{(1 - a^2/b^2)^2}\int_a^b \left[\left(1 - \frac{a^2}{b^2}\right)^2 - \left(1 - \frac{a^2}{x^2}\right)^2\right] x \ud x 
	\lesssim a^2 \log (b/a),
	\\
	\int_\Lambda |\gamma^{(1)}_N| 
	& \lesssim s(\log N)^2.
\end{aligned}
\end{equation}
Thus the absolute convergence holds as long as $s a^2\rho \log(b/a) (\log N)^2$ is sufficiently small.
That is, the analogue of \Cref{thm.gaudin.expansion} reads
\begin{thm}\label{thm.gaudin.expansion.d=2}
There exists a constant $c > 0$ such that if $s a^2\rho \log (b/a)(\log N)^2 < c$, then
the formulas in \Cref{eqn.thm.gaudin} hold 
(with $\rho_{\textnormal{Jas}}^{(n)}$ and $\Gamma_{\pi,G}^n$ interpreted as appropriate in the two-dimensional setting).
\end{thm}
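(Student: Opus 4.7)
The plan is to copy the 3-dimensional argument of Section \ref{sec.abs.conv} essentially verbatim, since the combinatorial structure of Section \ref{sec.gaudin.expansion} is dimension-independent; all that really changes between $d=3$ and $d=2$ is the pair of scalar bounds on $\int_\Lambda |g|$ and $\int_\Lambda |\gamma^{(1)}_N|$ used to convert each diagrammatic sum into a convergent geometric series. The author has already flagged these two inputs in \eqref{eqn.2d.bound.int.g} and Lemma \ref{lem.lebesgue.constant.fermi.d=2}, so the proof is really just a rerunning of Lemma \ref{lem.linked.sum.abs.conv} with the new constants substituted.

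Concretely, I would first establish the 2D analogue of Lemma \ref{lem.linked.sum.abs.conv}: for each $q\in\{0,1,2,3\}$ the sum $\sum_{p\geq 0}\frac{1}{p!}|\sum_{(\pi,G)\in\mcL_p^q}\Gamma_{\pi,G}^q|$ converges whenever $sa^2\rho\log(b/a)(\log N)^2$ is less than some absolute constant $c$. The argument splits each linked diagram into $g$-clusters $G_1,\ldots,G_k$ (plus distinguished clusters containing the external vertices), applies the tree-graph bound to the sum over connected graphs in each cluster, and writes the truncated correlation $\rho_{\mathrm{t}}^{(n_1,\ldots,n_k)}$ via the Brydges-Battle-Federbush formula \eqref{eqn.truncated.correlation}. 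Neither the tree-graph bound (which only needs $-1\le g\le 0$) nor the BBF formula is sensitive to dimension, and the Gram-Hadamard bound $|\det\mcN(r)|\le\rho^{\sum n_\ell-(k-1)}$ of Lemma \ref{lem.bound.det.N} transfers directly because its proof only uses that $\gamma^{(1)}_N$ is the integral kernel of a rank-$N$ orthogonal projection, independently of $d$. The leaf-by-leaf integration procedure of Section \ref{sec.abs.conv} is then carried out unchanged; each $g$-factor eaten yields $\int_\Lambda|g|\lesssim a^2\log(b/a)$ by \eqref{eqn.2d.bound.int.g} (itself a consequence of the 2D pointwise bound on $f_0$ in Lemma \ref{prop.bound.f0.hc.d=2}), and each $\gamma^{(1)}_N$-factor eaten yields $\int_\Lambda|\gamma^{(1)}_N|\leq Cs(\log N)^2$ by Lemma \ref{lem.lebesgue.constant.fermi.d=2}. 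Combined with the Cayley and anchored-tree counts $\#\mcT_n\leq C^n n!$ and $\#\mcA^{(n_1,\ldots,n_k)}\leq k! 4^{\sum n_\ell}$, this produces for each $q$ a nested geometric series in the small parameter $sa^2\rho\log(b/a)(\log N)^2$, giving convergence under the stated hypothesis.

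With absolute convergence in hand, the derivation of \eqref{eqn.thm.gaudin} is purely algebraic: exactly as in Sections \ref{sec.calc.C_N}--\ref{sec.calc.rho3}, one writes $f^2=1+g$, expands, groups diagrams by their linked components, and exchanges the order of the $p$-summation with the summation over sizes of linked components. Each such exchange is a use of Fubini/Tonelli justified precisely by the absolute convergence of the four sums above, and the resulting reorganization collapses the normalization $C_N/N!$ into an exponential and, after canceling this exponential against the corresponding factor in the definition of the reduced densities, leaves exactly the formulas for $\rho^{(1)}_{\mathrm{Jas}}, \rho^{(2)}_{\mathrm{Jas}}, \rho^{(3)}_{\mathrm{Jas}}$ displayed in \eqref{eqn.thm.gaudin}. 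No step in this reorganization sees the dimension.

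There is no serious obstacle; the only place where one must be a little careful is ensuring that the 2D bounds genuinely replace the 3D bounds in a uniform way — in particular that the factor $s(\log N)^2$ from Lemma \ref{lem.lebesgue.constant.fermi.d=2} (rather than $s(\log N)^3$) is used consistently, and that the 2D integration of $|g|$ over an annulus with $|\nabla f_0|\lesssim a^2|x|^{-3}$ still yields the logarithmic factor $\log(b/a)$ rather than a polynomial one — which it does, since the 2D radial measure contributes an extra factor $r$ exactly compensating the difference. Beyond this verification, the theorem follows immediately from the proof of Theorem \ref{thm.gaudin.expansion} after the substitutions above.
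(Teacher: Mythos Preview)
Your proposal is correct and takes essentially the same approach as the paper: the paper explicitly states that the computations of Section~\ref{sec.gaudin.expansion} make no reference to dimension and that one need only replace the bounds on $\int|g|$ and $\int|\gamma^{(1)}_N|$ by their 2-dimensional analogues \eqref{eqn.2d.bound.int.g}, after which the absolute convergence (and hence the formulas) follow. Your write-up spells out in more detail exactly which ingredients are dimension-independent (tree-graph bound, BBF formula, Gram--Hadamard, the leaf-by-leaf integration), but the underlying argument is identical.
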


\noindent
The analogues of \Cref{lem.derivative.sum.linked.diagrams,lem.bound.3.body.term} read
\begin{lemma}\label{lem.derivative.sum.linked.diagrams.d=2}
There exist constants $c,C > 0$ such that if $sa^2\rho \log (b/a)(\log N)^2 < c$ and $N = \#P_F > C$, then 
\begingroup
\allowdisplaybreaks
\begin{align*}
	\abs{\sum_{p= 1}^\infty \frac{1}{p!} \sum_{(\pi, G)\in\mcL_p^2} \Gamma_{\pi, G}^{2}}
	&
		\leq 
	Ca^4 \rho^4 (\log (b/a))^2 
		\Bigl[ 
			s^3 a^4 \rho^2 (\log b/a)^2 (\log N)^6 
			+ 1
		\Bigr]
	\\ & \quad 
	+ Ca^2 \rho^{4} |x_1-x_2|^2 
		\Bigl[
			s^5 a^{8}\rho^4 (\log(b/a))^5  (\log N)^{11}
      +b^2\rho
			+ \log(b/a)
		\Bigr],
\intertext{and}
\rho^{(3)}_{\textnormal{Jas}} 
	& \leq C f_{12}^2f_{13}^2f_{23}^2
		\Bigl[
    \rho^{5}|x_1-x_2|^2|x_2-x_3|^2
	\\ & \qquad 
		+ a^2\rho^{4} \log (b/a)
				\left[
				s^3  a^4 \rho^2 (\log (b/a))^2(\log N)^6 
				+ 1 
				\right]
		\Bigr].
\end{align*}
\endgroup
\end{lemma}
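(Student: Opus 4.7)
The plan is to replay the proofs of \Cref{lem.derivative.sum.linked.diagrams,lem.bound.3.body.term} with every three-dimensional input replaced by its two-dimensional counterpart. The Gaudin--Gillespie--Ripka expansion (\Cref{thm.gaudin.expansion.d=2}), the decomposition of diagrams in $\mcL_p^2$ and $\tilde\mcL_p^3$ into clusters, the Brydges--Battle--Federbush tree formula \Cref{eqn.truncated.correlation} for truncated correlations, and the partition of diagrams into ``small'' and ``large'' groups all carry over verbatim. What changes are: (i) the scattering integral $\int|g|\lesssim a^2\log(b/a)$ from \Cref{eqn.2d.bound.int.g}, (ii) the Lebesgue-constant integral $\int|\gamma^{(1)}_N|\lesssim s(\log N)^2$ from \Cref{lem.lebesgue.constant.fermi.d=2}, (iii) the momentum bound $|k|\leq C\rho^{1/2}$ for $k\in P_F$ (so that the exponent in the Gram--Hadamard bound becomes $\#\partial/2$ rather than $\#\partial/3$), and (iv) the $\rho^{(n)}$-bounds from \Cref{prop.2.density.d=2}.

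First I would establish the fundamental cluster estimate by tracking the 2D inputs through the computation in \Cref{sec.abs.conv.Gamma2}: for $p=2k_0+\nu_0$,
\[
\abs{\frac{1}{p!}\sum_{\substack{(\pi,G)\in\mcL_p^2\\ \nu+\nu^*=\nu_0,\,k=k_0}}\Gamma_{\pi,G}^2}\leq C\rho^2(Cs(\log N)^2)^{k_0}(Ca^2\rho\log(b/a))^{k_0+\nu_0},
\]
and similarly with prefactor $\rho^3$ for $\tilde\mcL_p^3$. Under the smallness assumption $sa^2\rho\log(b/a)(\log N)^2<c$, summing this bound geometrically over all large diagrams (with the same thresholds on $k,\nu,\nu^*$ as in \Cref{sec.proof.subleading.2.body.diagrams,sec.proof.subleading.3.body.diagrams}) immediately yields the $\xi_{\geq 1}$-type and $\xi_{\textnormal{large}}^3$-type contributions appearing in the two claimed inequalities.

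For the small diagrams in $\xi_0$ and $\xi_{\textnormal{small},0}$ I would Taylor expand around $x_1=x_2$ as in the 3D case. The zeroth order vanishes by the same cancellation: \Cref{thm.gaudin.expansion.d=2} gives $\xi_{\textnormal{small},0}+\xi_0+\xi_{\textnormal{small},\geq 1}+\xi_{\geq 1}=\rho^{(2)}_{\textnormal{Jas}}/f_{12}^2-\rho^{(2)}$, which is zero at $x_1=x_2$ because $D_N$ vanishes there; symmetry in $(x_1,x_2)$ kills the first order. For the second derivatives I would differentiate the BBF formula exactly as in \Cref{eqn.partial.rhot,eqn.bound.det.N.tilde}, use the modified Gram--Hadamard bound $|\det\tilde\mcN_\partial(r)|\leq C\rho^{\sum n_\ell+2-(k+1)+\#\partial/2}$, and control the $L^1$ norms of $\partial^\mu\gamma^{(1)}_N$ and $\partial^\mu\partial^\nu\gamma^{(1)}_N$ via the $d=2$ analogue of \Cref{lem.derivative.lebesgue.constant}, namely $\int|\partial^\mu\gamma^{(1)}_N|\lesssim s\rho^{1/2}(\log N)^2$ and $\int|\partial^\mu\partial^\nu\gamma^{(1)}_N|\lesssim s\rho(\log N)^3$. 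The remaining small diagrams of type $(C)$ and the small $3$-body diagrams would be evaluated by direct computation following \Cref{sec.small.diagrams.2.particle,sec.small.diagrams.3.particle}, using the 2D moment integrals $\int r^n f\partial_r f\,\ud x$ and $\int r^n(|\partial f|^2+vf^2/2)\,\ud x$ in place of their 3D versions; the leading $\rho^{(3)}$ piece of $\rho^{(3)}_{\textnormal{Jas}}$ is handled via the 2D Wick determinant bound $\rho^{(3)}\leq C\rho^5|x_1-x_2|^2|x_1-x_3|^2$ analogous to \Cref{lem.bound.rho3}.

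The principal obstacle is proving the two-dimensional derivative Lebesgue constant bound. Everything else is a systematic substitution governed by the dictionary $a^3\mapsto a^2$, $\rho^{1/3}\mapsto\rho^{1/2}$, $\rho^{2/3}\mapsto\rho$, $(\log N)^3\mapsto(\log N)^2$. The 2D Lebesgue estimate itself should follow from adapting \Cref{sec.derivative.lebesgue.constant} to the Fermi polygon of \Cref{defn.P_F.d=2}: the boundary $\partial P$ is now a polygonal curve so the triangulation collapses to a decomposition into triangles (from the central $z$ to consecutive edges), and the casework over $2$-dimensional faces simplifies, but the prime-denominator construction with $Q_1,Q_2$ still furnishes enough irrationality for the relevant hyperplane-counting arguments to yield the stated $(\log N)^2$ and $(\log N)^3$ powers. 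Combining all pieces reproduces the two bounds of the lemma exactly.
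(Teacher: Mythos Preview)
Your proposal is correct and follows essentially the same route as the paper, which simply says to replay the proofs of \Cref{lem.derivative.sum.linked.diagrams,lem.bound.3.body.term} with the two-dimensional inputs \Cref{eqn.2d.bound.int.g} and the derivative Lebesgue constant bound \Cref{lem.derivative.lebesgue.constant.d=2}. One small correction: the crucial moment integral for the type-$(C)$ small diagrams is $\int |x|^2(1-f^2)\,\ud x \leq C a^2 b^2$ (the two-dimensional version of \Cref{eqn.bound.int.g.|x|2}), not the $f\partial_r f$ or $|\partial f|^2+\tfrac12 v f^2$ integrals you mention; and in your modified Gram--Hadamard bound the extra exponent should be $(2-\#\partial)/2$ rather than $\#\partial/2$, consistent with \Cref{eqn.bound.det.N.tilde}.
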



\noindent
The proof is again similar to the $3$-dimensional case 
replacing the bounds on $\int |g|$ and $\int |\gamma^{(1)}_N|$ as in \Cref{eqn.2d.bound.int.g} above. 
Apart from this, there are two main changes.
The first is in the proof of the analogue of \Cref{lem.small.diagrams.type.C}, namely \Cref{eqn.bound.int.g.|x|2},
where one bounds $\int |x|^2 |1-f^2|$. 
In two dimensions this bound is, using \Cref{prop.bound.f0.hc.d=2},
\[
	\int_{\R^2}\left(1 - f(x)^2\right)|x|^2 \ud x
	\leq Ca^4 + \frac{C}{(1-a^2/b^2)^2} \int_a^b \left[\left(1 - \frac{a^2}{b^2}\right)^2 - \left(1 - \frac{a^2}{r^2}\right)^2 \right] r^3 \ud r
	\leq C a^2 b^2.
\]
The other main difference is for the analogue of \Cref{lem.derivative.lebesgue.constant}.
Here the $2$-dimensional analogue reads 
\begin{lemma}\label{lem.derivative.lebesgue.constant.d=2}
The polygon $P$ from \Cref{defn.P_F.d=2} satisfies for any $\mu,\nu=1,2$ that 
\[
\begin{aligned}
	\int_{[0,2\pi]^2} \abs{\sum_{q\in \left(\frac{Lk_F}{2\pi} P\right)\cap \Z^2} q^\mu e^{iqu}} \ud u
	& \leq C s N^{1/2} (\log N)^2, 
	\\ 
	\int_{[0,2\pi]^2} \abs{\sum_{q\in \left(\frac{Lk_F}{2\pi} P\right)\cap \Z^2} q^\mu q^\nu e^{iqu}} \ud u
	& \leq C s N (\log N)^3
\end{aligned}
\]
for sufficiently large $N$.
\end{lemma}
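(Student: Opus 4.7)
The plan is to deduce \Cref{lem.derivative.lebesgue.constant.d=2} from the (ordinary) Lebesgue constant bound for polygons, \Cref{lem.lebesgue.constant.fermi.d=2}, suitably extended to slices of $RP$ by integer half-planes, via an Abel summation in the variable $q^\mu$ (and then also in $q^\nu$ for the second bound).

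Setting $R = \frac{Lk_F}{2\pi}$, I would first use the elementary identity
\[
q^\mu = \sum_{t=1}^\infty \bigl(\chi_{q^\mu \geq t} - \chi_{q^\mu \leq -t}\bigr), \qquad q^\mu \in \Z.
\]
Every $q\in RP\cap \Z^2$ satisfies $|q^\mu|\leq CR \sim CN^{1/2}$, so only $O(N^{1/2})$ terms contribute. Substituting and exchanging the sums,
\[
\sum_{q\in RP\cap \Z^2} q^\mu e^{iqu} = \sum_{0<|t|\leq CR} \sgn(t)\sum_{q\in (RP\cap H_{\mu,t})\cap \Z^2} e^{iqu},
\]
where $H_{\mu,t} := \{q\in \R^2 : \sgn(t)\,q^\mu \geq |t|\}$ is a half-plane. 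Taking absolute values, integrating over $u\in [0,2\pi]^2$, and using the triangle inequality reduces the first claim to the uniform bound $\mcL(RP\cap H_{\mu,t})\leq Cs(\log N)^2$ for each sliced polygon, summed over $O(N^{1/2})$ values of $t$. For the second claim, I would iterate the same trick separately on $q^\mu$ and $q^\nu$, which produces the doubly-sliced polygons $RP\cap H_{\mu,t_1}\cap H_{\nu,t_2}$; summing over $O(N)$ pairs $(t_1,t_2)$ gives the stated bound provided each slice has Lebesgue constant at most $Cs(\log N)^3$.

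The main obstacle is exactly to prove these per-slice Lebesgue constant bounds with a constant independent of the slicing parameters $t_1,t_2$. The proof of \Cref{lem.lebesgue.constant.fermi.d=2} (mirroring \Cref{lem.lebesgue.constant.fermi.polyhedron}) exploits the specific structure of $P$ from \Cref{defn.P_F.d=2}: the corners $\sigma(p^1_j/Q_1, p^2_j/Q_2)$ yield edges whose defining equations are of the form analogous to \eqref{eqn.plane.def.P}, with irrational offsets (since $\sigma$ is irrational), so that one controls how many lattice points can lie on (or near) each edge when bounding the per-triangle Lebesgue constants in the triangulation of $P$ from the center $z$. A single cut by $H_{\mu,t}$ introduces only $O(1)$ new corners lying on the line $q^\mu=t$ — either vertices of $RP$ landing on the correct side, or intersections of edges of $RP$ with this line — and the triangulation of the slice from $z$ (or a shifted replacement if $z$ is cut off) requires at most $s+O(1)$ triangles. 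One must verify that the rationality/irrationality casework of \cite[Lemmas 3.6, 3.9]{Kolomoitsev.Lomako.2018}, already adapted in \Cref{lem.lemma.3.6,lem.lemma.3.9}, still goes through for the new edges; this uses that $t\in\Z$ while $\sigma\gamma\notin\Q$ for the old edges, preventing any accidental coincidence of cutting lines with edges. A second cut by $H_{\nu,t_2}$ is handled analogously with $O(1)$ further adjustments.

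Once these bounds are in place, the two estimates follow by summing the per-slice bound $Cs(\log N)^2$ (respectively $Cs(\log N)^3$, where the extra $\log N$ reflects the price of the second iterated Abel summation in the triangulation argument) over $O(N^{1/2})$ single-slice parameters (respectively $O(N)$ doubly-slice pairs), yielding $CsN^{1/2}(\log N)^2$ and $CsN(\log N)^3$ as claimed.
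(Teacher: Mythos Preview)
Your Abel-summation approach is correct and genuinely different from the paper's. The paper does not slice; it carries the weight $t_j(q)$ through the Kolomoitsev--Lomako dimensional-reduction machinery of \Cref{sec.derivative.lebesgue.constant}, skipping only \Cref{sec.induction.d=3}: decompose $RP$ into $O(s)$ triangles, apply the $d=2\to d=1$ reduction of \Cref{sec.induction.d=2} (this is where the arithmetic construction of \Cref{defn.P_F.d=2}, namely the irrational $n_2$ and the bounds on the $\alpha_j$, is genuinely used, via \Cref{lem.lemma.3.6,lem.lemma.3.9}), then bound the resulting one-dimensional integrals by \Cref{lem.eval.1d.integrals}, with the separate two-dimensional argument of \Cref{lem.compute.kl.sum.integral} for $q^1q^2$. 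Your route instead strips off the weight first and reduces everything to the \emph{unweighted} Lebesgue constant of convex sliced polygons with $s+O(1)$ vertices. For that step you do not need \Cref{lem.lemma.3.6,lem.lemma.3.9} or any rationality casework: the bound $\mcL(P')\leq Cs'(\log R)^2$ for any convex polygon $P'$ with $s'$ vertices inside $[0,CR]^2$ follows directly from the per-tetrahedron estimate quoted from \cite[Theorem~4.1]{Kolomoitsev.Lomako.2018} after triangulating from any interior point not in $\Z^2$ --- in two dimensions the inclusion--exclusion count is automatic ($s'$ triangles, $s'$ radial edges, and a non-lattice centre contributing zero). In particular your doubly-sliced polygons obey the same $Cs(\log N)^2$ bound, so your method actually yields $CsN(\log N)^2$ for the second estimate, one logarithm better than claimed. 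Your argument is thus more elementary and essentially independent of the special construction in \Cref{defn.P_F.d=2}; the paper's approach uses that construction in an essential way but avoids the outer sum over $O(N^{1/2})$ or $O(N)$ slicing parameters.
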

\noindent
The proof is similar to that of \Cref{lem.derivative.lebesgue.constant} given in \Cref{sec.derivative.lebesgue.constant}
only one skips \Cref{sec.induction.d=3} and notes that $R = \frac{Lk_F}{2\pi}\sim N^{1/2}$.

Putting together the formulas in
\Cref{thm.gaudin.expansion.d=2}
with the bounds in
\Cref{prop.2.density.d=2,lem.KE.polygon.d=2,lem.derivative.sum.linked.diagrams.d=2} 
we  easily find the analogue of \Cref{eqn.energy.w.all.errors}.
We then need to bound a few terms. 
Following the type of arguments of \Cref{sec.energy.in.box}, 
namely \Cref{eqn.integral.v.scat.len,eqn.integral.v.zero.moment,eqn.integral.v.moments,eqn.integral.fdf,eqn.integral.fdf.moments} 
and using \Cref{prop.bound.f0.hc.d=2}
we get the bounds 
\[
\begin{aligned}
\int \left(|\nabla f(x)|^2 + \frac{1}{2}v(x) f(x)^2\right) |x|^n \ud x 
	& \leq \begin{cases}
	C, & n=0, 
	\\
	4\pi a^2 + O(a^4b^{-2}), & n=2, 
	\\
	Ca^4 \log(b/a) + CR_0^2a^2, & n=4,
	\end{cases}
\\
\int |x|^n f \partial_r f \ud x 
	& \leq \begin{cases}
	Ca, & n=0, 
	\\
	Ca^2b, & n= 2.
	\end{cases}
\end{aligned}
\]
Plugging this into the analogue of \Cref{eqn.energy.w.all.errors} we get the analogue of \Cref{eqn.energy.density.w.all.errors},
\begin{equation}
\label{eqn.energy.density.w.all.errors.d=2}
\begin{aligned}
\frac{\longip{\psi_N}{H_N}{\psi_N}}{L^2}
  & = 2\pi\rho^2 + 4\pi^2a^2\rho^{3} 
	\\ & \quad 
		+ O\left(s^{-4}\rho^{2}\right) 
		+ O\left(N^{-1/2}\rho^{2}\right) 
	\\ & \quad
		+ O\left(a^4b^{-2}\rho^{3}\right)
		+ O\left(a^4\rho^4 \log(b/a)\right)
		+ O\left(R_0^2a^2 \rho^{4}\right)
	\\ & \quad
		+ O\left( 
			a^4 \rho^4 (\log (b/a))^2 
		\Bigl[ 
			s^3 a^4 \rho^2 (\log b/a)^2 (\log N)^6 
			+ 1
		\Bigr]\right)
	\\ & \quad  
	+ O \left(a^4 \rho^{4} 
				\Bigl[
					s^5 a^{8}\rho^4 (\log(b/a))^5  (\log N)^{11}
          + b^2\rho
					+ \log(b/a)
				\Bigr]\right)
	\\ & \quad 
  + O\left(a^4b^2\rho^5\right)
	+ O\left(a^4\rho^4 \log(b/a)\left[s^3 a^4\rho^2(\log (b/a))^3(\log N)^6 + 1\right]\right).
\end{aligned}
\end{equation}
As above, we can choose $L\sim a(a^2\rho)^{-10}$
still ensuring that $\frac{Lk_F}{2\pi}$ is rational. 
(More precisely one chooses $L\sim a (k_F a)^{-20}$, since $\rho$ is defined in terms of $L$.)
Then $N \sim (a^2\rho)^{-19}$. 
Choose moreover
\[
	b = a(a^2\rho)^{-\beta}, \qquad s \sim (a^2\rho)^{-\alpha}|\log (a^2\rho )|^{-\gamma}.
\]
Optimising in  $\alpha,\beta, \gamma$ we see that for the choice
\[
	\beta = \frac{1}{2}, \qquad \alpha = \frac{4}{7}, \qquad \gamma = \frac{10}{7}
\]
we have
\begin{equation}
\label{eqn.energy.density.upper.bound.single.state.d=2}
\begin{aligned}
\frac{\longip{\psi_N}{H_N}{\psi_N}}{L^3}
  & = 2\pi\rho^2 + 4\pi^2a^2\rho^{3} 
		\biggl[
		1
		+ O\left(a^2\rho |\log(a^2\rho)|^2\right)
		\biggr]
\end{aligned}
\end{equation}
for $a^2\rho$ small enough.
Note that for this choice of $s,N$ we have $s\sim N^{4/133} (\log N)^{10/7}$. 
Thus any $Q$ with $N^{3/2} \ll Q \leq CN^C$ 
satisfies the condition $Q^{-1/2}\leq C s^{-2}$ of \Cref{defn.P_F.d=2}.

The extension to the thermodynamic limit of \Cref{sec.box.method} is readily generalized. We thus conclude the proof of \Cref{thm.main.d=2}.

\subsection{One dimension}\label{sec.one.dimension}
Similarly to the $2$- and $3$-dimensional settings, the $p$-wave scattering function $f_0$ in $1$ dimension 
is even 
and solves the equation (here $\partial^2$ denotes the second derivative)
\begin{equation}\label{eqn.scat.d=1}
-\partial^2 f_0 - \frac{2}{r}\partial f_0 + \frac{1}{2}vf_0 = 0,
\end{equation}
see \Cref{sec.scattering.function} and recall \Cref{defn.scattering.length.d=1}.
Thus, it is the same as the $s$-wave scattering function in $3$ dimensions.
In particular it satisfies the bound
\begin{lemma}[{\cite[Lemma A.1]{Lieb.Yngvason.2001}, \Cref{prop.bound.f0.hc}}]
\label{prop.bound.f0.hc.d=1}
The scattering function 
satisfies $\left[1 - \frac{a}{|x|}\right]_+ \leq f_0(x) \leq 1$ for all $x$ and 
$|\partial f_0(x)|\leq \frac{a}{|x|^2}$ for $|x| > a$.
\end{lemma}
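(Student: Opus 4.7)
The plan is to mirror the proof of the three-dimensional bound \Cref{prop.bound.f0.hc} that was already given in the paper, exploiting the fact that the one-dimensional $p$-wave scattering equation~\eqref{eqn.scat.d=1} coincides with the three-dimensional $s$-wave radial scattering equation (after identifying $r=|x|$). The correct integrating factor is therefore $r^2$: multiplying \eqref{eqn.scat.d=1} by $r^2$ and rearranging yields
\[
\partial\!\left(r^2\,\partial f_0\right) \;=\; \tfrac{1}{2}\,v\,r^2\,f_0 \;\geq\; 0,
\]
since $v\geq 0$ and $f_0\geq 0$ (the nonnegativity of $f_0$ follows from the minimization problem in \Cref{defn.scattering.length.d=1} by replacing $f_0$ with $|f_0|$, which does not increase the functional).

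First I would establish the upper bound $f_0\leq 1$. This is the standard argument: if $f_0$ exceeded $1$ on some set, replacing $f_0$ by $\min(f_0,1)$ would preserve the boundary condition at infinity, pointwise decrease $|\partial f_0|^2$ almost everywhere, and (since $v\geq 0$) not increase $\tfrac{1}{2}v f_0^2$. This contradicts the minimality of $f_0$. Hence $f_0\leq 1$ everywhere.

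Next I would compare to the hard-core solution $f_{\mathrm{hc}}(x)=\left[1-\tfrac{a}{|x|}\right]_+$, which solves $-\partial^2 f_{\mathrm{hc}} - \tfrac{2}{r}\partial f_{\mathrm{hc}}=0$ for $r>a$ and vanishes for $r\leq a$. A direct computation gives $r^2\,\partial f_{\mathrm{hc}}=a$ for $r>a$ (and $0$ for $r<a$), so $\partial(r^2\,\partial f_{\mathrm{hc}})=0$ on $(0,\infty)\setminus\{a\}$. Since $r^2\,\partial f_0$ is non-decreasing and must satisfy $r^2\,\partial f_0(r)\to a$ as $r\to\infty$ (forced by the asymptotic $f_0(r)=1-a/r+o(1/r)$ outside the support of $v$, which follows by integrating the free equation in $\{r>R_0\}$), one concludes $r^2\,\partial f_0(r)\leq a$ for all $r>0$. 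In particular, for $r>a$ this yields the gradient estimate $|\partial f_0(x)|\leq a/|x|^2$.

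Finally, to obtain the lower bound $f_0\geq [1-a/|x|]_+$, I would integrate the derivative bound from $r$ to $\infty$ using $f_0\to 1$: for $r>a$,
\[
f_0(r) \;=\; 1 \;-\;\int_r^\infty \partial f_0(s)\,\mathrm{d}s \;\geq\; 1 \;-\;\int_r^\infty \tfrac{a}{s^2}\,\mathrm{d}s \;=\; 1 - \tfrac{a}{r}.
\]
For $r\leq a$ the inequality $f_0(r)\geq 0 = [1-a/r]_+$ is trivial from nonnegativity of $f_0$. No step here is a real obstacle; the only mildly delicate point is justifying $r^2\partial f_0\to a$ at infinity (equivalently, identifying the constant in the asymptotic expansion of $f_0$), which is settled by explicit integration of the free equation $-\partial^2 f_0-\tfrac{2}{r}\partial f_0=0$ beyond the support of $v$ and matching with $f_0(x)\to 1$.
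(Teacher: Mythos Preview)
Your proposal is correct and follows essentially the same route as the paper. The paper does not give a separate proof for the one-dimensional statement; it simply notes that \eqref{eqn.scat.d=1} is the three-dimensional $s$-wave equation and refers back to \Cref{prop.bound.f0.hc} and \cite[Lemma~A.1]{Lieb.Yngvason.2001}. Your writeup is exactly that adaptation carried out explicitly (with integrating factor $r^2$ in place of $r^4$), and you additionally spell out the variational argument for $f_0\le 1$, which the paper leaves to the cited reference.
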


\noindent
Before giving the proof of \Cref{thm.main.d=1} we first compare our definition of the scattering length to that of \cite{Agerskov.Reuvers.ea.2022}.
In \cite{Agerskov.Reuvers.ea.2022} the following definition is given.
\begin{defn}[{\cite[Section 1.3]{Agerskov.Reuvers.ea.2022}}]\label{def.odd.wave.scat.len}
The \emph{odd-wave scattering length} $a_{\textnormal{odd}}$ is given by 
\[
	\frac{4}{R-a_{\textnormal{odd}}} 
		= \inf  \left\{\int_{-R}^R \left(2 |\partial h|^2 + v |h|^2\right) \ud x : h(R) = - h(-R) = 1\right\}
\]
for any $R > R_0$, the range of $v$. 
\end{defn}
\noindent
The value of $a_{\textnormal{odd}}$ is independent of $R > R_0$ so $a_{\textnormal{odd}}$ is well-defined.
We claim that 
\begin{prop}\label{prop.odd.wave.=.p.wave}
The \emph{$p$-wave scattering length} $a$ defined in \Cref{defn.scattering.length.d=1} and the 
\emph{odd-wave scattering length} $a_{\textnormal{odd}}$ defined in \Cref{def.odd.wave.scat.len} agree, i.e. $a = a_{\textnormal{odd}}$.
\end{prop}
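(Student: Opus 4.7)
\textbf{Proof plan for \Cref{prop.odd.wave.=.p.wave}.} The strategy is to observe that the pointwise transformation $f \mapsto h$ given by $h(x) = x f(x)$ converts the one-dimensional $p$-wave scattering problem into the odd-wave problem of \Cref{def.odd.wave.scat.len}. First, I would let $f_0$ be the (even) minimizer of \Cref{defn.scattering.length.d=1}; as in \Cref{sec.scattering.function}, it satisfies the radial Euler--Lagrange equation
\[
-\partial_r^2 f_0 - \frac{2}{r} \partial_r f_0 + \frac{1}{2} v f_0 = 0, \qquad r > 0,
\]
so $f_0(x) = 1 - a/|x|$ outside the support of $v$.

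Next, define $h(x) := x\, f_0(x)$, extended oddly to all of $\R$ using the evenness of $f_0$. Multiplying the Euler--Lagrange equation above by $x$ and using $h' = f_0 + x f_0'$, $h'' = 2 f_0' + x f_0''$ yields
\[
-h''(x) + \tfrac{1}{2} v(x)\, h(x) = 0 \qquad \text{on } \R,
\]
which is precisely the Euler--Lagrange equation of the odd-wave functional. Moreover, outside $\supp v$ one has $h(x) = x - a\,\sgn(x)$, so $h(\pm R) = \pm(R - a)$ and $h'(\pm R) = 1$ for every $R > R_0$. I would then set $\tilde h := h/(R-a)$, which satisfies $\tilde h(R) = -\tilde h(-R) = 1$, and compute the odd-wave functional at $\tilde h$ via integration by parts: using $v \tilde h = 2 \tilde h''$,
\[
\int_{-R}^R \bigl(2|\tilde h'|^2 + v\,\tilde h^2\bigr)\ud x
= 2\bigl[\tilde h\,\tilde h'\bigr]_{-R}^R
= 2\cdot \frac{2}{R-a}
= \frac{4}{R-a}.
\]
This shows the infimum in \Cref{def.odd.wave.scat.len} is at most $4/(R-a)$.

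To conclude equality, and therefore $a_{\textnormal{odd}} = a$, it remains to verify that $\tilde h$ actually achieves the infimum. This I would do by the standard convexity argument: for any admissible $g$ with $g(\pm R) = \pm 1$, write $g = \tilde h + \phi$ with $\phi(\pm R) = 0$; integrating by parts and using the Euler--Lagrange equation for $\tilde h$ makes the cross term vanish, leaving
\[
\int_{-R}^R \bigl(2|g'|^2 + v g^2\bigr)\ud x - \int_{-R}^R \bigl(2|\tilde h'|^2 + v\,\tilde h^2\bigr)\ud x = \int_{-R}^R \bigl(2|\phi'|^2 + v \phi^2\bigr)\ud x \geq 0
\]
since $v \geq 0$. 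Thus the infimum equals $4/(R-a)$, giving $a_{\textnormal{odd}} = a$.

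The only mild obstacle is justifying these manipulations in the hard-core case, where $v\ud x$ must be interpreted as a measure and both $f_0$ and $h$ vanish on the hard-core region. This is routine: one restricts $\phi$ to vanish on the hard core (as required by the admissibility of $g$), and all integrations by parts take place on the complement, where $h$ is smooth and the asymptotics $h(x) = x - a\,\sgn(x)$ still hold on the relevant interval.
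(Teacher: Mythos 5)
Your proof is correct, but it takes a genuinely different route from the paper's. The paper never touches the minimizer: it works directly at the level of the two infima, first using convexity to reduce the odd-wave problem of \Cref{def.odd.wave.scat.len} to odd trial functions, then substituting $h(x)=x f(x)/R$ and integrating by parts to rewrite the odd-wave functional as $\frac{4}{R}+\frac{2}{R^2}\int_{-R}^{R}\bigl(|\partial f|^2+\tfrac12 v|f|^2\bigr)|x|^2\,\ud x$ over even $f$ with $f(R)=1$, and finally sends $R\to\infty$ to recover \Cref{defn.scattering.length.d=1}. You instead start from the minimizer $f_0$, pass through its Euler--Lagrange equation \eqref{eqn.scat.d=1} and the explicit exterior form $f_0=1-a/|x|$ to produce the trial function $\tilde h = x f_0/(R-a)$, evaluate the odd-wave functional at it, and prove optimality by the standard cross-term/convexity computation. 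What your approach buys: it gives the explicit odd-wave minimizer and the value $4/(R-a)$ for every $R>R_0$, so no $R\to\infty$ limit (and no comparison of finite-$R$ and infinite-volume infima) is needed. What it costs: you need existence of the minimizer, the validity of the scattering equation in a suitable (weak, or measure-valued in the hard-core case) sense, the identification of the constant in $f_0=1-a/|x|$ with the $a$ of \Cref{defn.scattering.length.d=1}, and the boundary-term bookkeeping at a hard core that you sketch at the end — all standard, but the paper's purely variational substitution sidesteps them entirely. Both arguments are sound; just make sure, if you write yours up, to state in which sense the Euler--Lagrange equation is used when $v\,\ud x$ is only a measure, since that is the one place where your manipulations need the extra justification you allude to.
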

\begin{proof}
Note first that $h\mapsto \mcE(h)= \int_{-R}^R \left(2 |\partial h|^2 + v |h|^2\right) \ud x$ is convex, so 
by replacing $h$ by $(h(x)-h(-x))/2$ we can only lower its value. Thus, we have
\[
	\frac{4}{R-a_{\textnormal{odd}}} 
		= \inf  \left\{\int_{-R}^R \left(2|\partial h|^2 + v |h|^2\right) \ud x : h(x) = - h(-x), \, h(R) = 1\right\}.
\]
Any $h$ we write as $h(x) = \frac{xf(x)}{R}$. Using this and integration by parts we get 
\[
\begin{aligned}
	& \frac{4}{R-a_{\textnormal{odd}}} 
	\\ & \quad 
   = \frac{1}{R^2}
		\inf  \left\{\int_{-R}^R \left(2 |f|^2 + 4 x f\partial f + 2 |x|^2 |\partial f|^2 +  v |f|^2|x|^2\right) \ud x : 
		f(x) = f(-x), \, f(R) = 1\right\}
	\\
	& \quad = 
	\frac{4}{R} + \frac{2}{R^2} \inf \left\{\int_{-R}^R \left(|\partial f|^2 +  \frac{1}{2} v |f|^2\right)|x|^2 \ud x : f(x) = f(-x), \, f(R) = 1\right\}.
\end{aligned}
\]
That is, 
\[
	2R\left(\frac{1}{1-a_{\textnormal{odd}}/R}  - 1\right)
	 = 
	\inf \left\{\int_{-R}^R \left(|\partial f|^2 +  \frac{1}{2} v |f|^2\right)|x|^2 \ud x : f(x) = f(-x), \, f(R) = 1\right\}.	
\]
Taking $R\to \infty$ in this we recover the definition of $a$.
We conclude that $a = a_\textnormal{odd}$.
\end{proof}

\noindent
Concerning the assumption on $v$ that $\int \left(\frac{1}{2}vf_0^2 + |\partial f_0|^2\right) \ud x < \infty$ we have the following two propositions.
\begin{prop}\label{prop.h.c.implies.int.vf2}
Suppose that $v\geq 0$ is even and compactly supported and that for some interval $[x_1,x_2]$, $0\leq x_1 < x_2$ we have 
$v(x) = \infty$ for $x_1\leq x \leq x_2$.
Then $\int \left(\frac{1}{2}vf_0^2 + |\partial f_0|^2\right) \ud x < \infty$, 
where $f_0$ denotes the $p$-wave scattering function.
\end{prop}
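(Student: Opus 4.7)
The plan is to split the integral into a region near the origin, where the hard core forces $f_0 \equiv 0$, and the exterior, where a weighted comparison against the integral $\int_\R x^2(|\partial f_0|^2 + \frac{1}{2} v f_0^2)\,\mathrm dx = 2a$ of Definition \ref{defn.scattering.length.d=1} gives the bound directly.

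First I would show that $f_0 \equiv 0$ on $[-x_2,x_2]$. On $[x_1,x_2] \cup [-x_2,-x_1]$ this is immediate from the hard-core condition $v = +\infty$ combined with evenness of $v$. When $x_1 = 0$ this already covers the whole interval. When $x_1 > 0$ one additionally needs $f_0 \equiv 0$ on $[-x_1,x_1]$, which I would prove by a localization argument: the restriction of $f_0$ to $[-x_1,x_1]$ must minimize $J[f] := \int_{-x_1}^{x_1} x^2(|\partial f|^2 + \frac{1}{2} v f^2)\,\mathrm dx$ subject to the boundary condition $f(\pm x_1) = 0$ (inherited from $f_0(\pm x_1) = 0$), for otherwise one could decrease the global functional by replacing $f_0|_{[-x_1,x_1]}$ with a strictly better competitor while maintaining continuity at $\pm x_1$. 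Since $J \geq 0$ and $J[0] = 0$, the minimum value is $0$; this forces $\int_{-x_1}^{x_1} x^2 |\partial f_0|^2 \ud x = 0$, and combining with $f_0(\pm x_1) = 0$ yields $f_0 \equiv 0$ on $[-x_1,x_1]$.

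With $f_0 \equiv 0$ on $[-x_2,x_2]$, the unweighted integrand vanishes there (using the convention $v f_0^2 = 0$ where $f_0 = 0$), so
\[
\int_\R \left(\tfrac{1}{2} v f_0^2 + |\partial f_0|^2\right) \ud x = \int_{|x| \geq x_2} \left(\tfrac{1}{2} v f_0^2 + |\partial f_0|^2\right) \ud x \leq \frac{1}{x_2^2} \int_\R x^2 \left(\tfrac{1}{2} v f_0^2 + |\partial f_0|^2\right) \ud x = \frac{2a}{x_2^2} < \infty,
\]
which is the desired finiteness.

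The only mildly nontrivial step is the localization argument in the subcase $x_1 > 0$, but it is a routine application of nonnegativity and uniqueness for the convex variational problem; no analysis of the scattering equation itself is required.
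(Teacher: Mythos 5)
Your proof is correct and follows essentially the same route as the paper: observe that $f_0$ vanishes on $[-x_2,x_2]$ and then bound the unweighted integral by $x_2^{-2}$ times the weighted integral defining $2a$. The only difference is that the paper simply asserts $f_0\equiv 0$ for $|x|\le x_2$, whereas you supply the (correct, standard) localization argument justifying the vanishing on $[-x_1,x_1]$ when $x_1>0$.
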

\begin{proof}
Let $[x_1,x_2]$ be an interval where $v(x) = \infty$ for $x_1\leq x \leq x_2$
and note that $f_0(x)=0$ for all $|x|\leq x_2$.
Then we have 
\[
\int \left(\frac{1}{2}v f_0^2 + |\partial f_0|^2\right) \ud x 
\leq \frac{1}{x_2^2} \int_{|x|\geq x_2} \left(\frac{1}{2}v f_0^2 + |\partial f_0|^2\right) |x|^2 \ud x
 = 2 a x_2^{-2} < \infty.
 \qedhere
\]
\end{proof}
\begin{prop}\label{prop.smooth.implies.int.vf2}
Suppose that $v\geq 0$ is even, compactly supported and smooth. 
Then $\int \left(\frac{1}{2}vf_0^2 + |\partial f_0|^2\right) \ud x < \infty$, 
where $f_0$ denotes the $p$-wave scattering function.
\end{prop}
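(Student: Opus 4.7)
The plan is to treat the potential term and the gradient term separately, and for the gradient term further split the integral into a tail piece where the explicit decay bound of \Cref{prop.bound.f0.hc.d=1} applies, and a near-origin piece where we use regularity of $f_0$ coming from smoothness of $v$.

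First, the potential term. Since $v$ is smooth and compactly supported it is bounded, and by \Cref{prop.bound.f0.hc.d=1} we have $0 \leq f_0 \leq 1$. Hence
\[
\int_{\R} \frac{1}{2} v(x) f_0(x)^2 \ud x \leq \frac{1}{2} \|v\|_\infty |\supp v| < \infty.
\]
For the gradient term I would split as $\int_{|x| > a} + \int_{|x| \leq a}$. On the tail, \Cref{prop.bound.f0.hc.d=1} gives $|\partial f_0(x)| \leq a/x^2$, so
\[
\int_{|x| > a} |\partial f_0(x)|^2 \ud x \leq 2 \int_a^\infty \frac{a^2}{r^4} \ud r = \frac{2}{3a} < \infty.
\]

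The main work is the near-origin part, where I need boundedness of $\partial f_0$ on $[-a,a]$. The key step is the substitution $u(r) = r f_0(r)$: a direct computation using the radial scattering equation \Cref{eqn.scat.d=1} shows that $u$ satisfies the regular second order ODE
\[
u'' = \tfrac{1}{2} v u
\]
on $(0,\infty)$. Since $v\in C^\infty(\R)$ is bounded and $|u(r)| \leq r$ is locally bounded, standard ODE regularity theory yields $u \in C^\infty([0,\infty))$, with $u(0)=0$. Because $f_0$ is even, $u$ is odd, so its Taylor series at the origin contains only odd powers: $u(r) = c_1 r + c_3 r^3 + c_5 r^5 + \dots$. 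Dividing by $r$ shows that $f_0(r) = u(r)/r = c_1 + c_3 r^2 + c_5 r^4 + \dots$ extends to a smooth even function on $\R$. In particular $\partial f_0$ is continuous, hence bounded on the compact interval $[-a,a]$, which gives $\int_{|x|\leq a} |\partial f_0|^2 \ud x < \infty$.

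Combining the three bounds concludes the proof. The only step requiring any care is the regularity transfer from $u$ to $f_0$ at the origin: a priori $\partial f_0$ could have a singularity from the factor $1/r$, but the parity argument (based on $f_0$ even, equivalently $u$ odd) kills the potentially dangerous $r^0$ and $r^2$ Taylor coefficients of $u$ at zero, leaving a smooth, hence locally bounded, $\partial f_0$. This parity cancellation is the one non-routine point of the argument.
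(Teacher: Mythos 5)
Your proposal is correct in substance but takes a genuinely different route from the paper's. The paper's proof asserts that $f_0$ is smooth when $v$ is smooth, then uses the scattering equation \eqref{eqn.scat.d=1} and an integration by parts to convert the whole quantity into the single identity $\int \left(\tfrac12 v f_0^2 + |\partial f_0|^2\right)\ud x = 2\int_0^\infty \left(f_0(x)^2 - f_0(0)^2\right)x^{-2}\,\ud x$, whose convergence near $0$ follows from $f_0(x) = f_0(0) + O(|x|^2)$ (evenness plus smoothness) and at infinity from boundedness of $f_0$ and the $x^{-2}$ decay. You instead estimate the two terms directly: the potential term via $\|v\|_\infty$ and $0\le f_0\le 1$, the gradient tail via the pointwise bound of \Cref{prop.bound.f0.hc.d=1}, and the gradient near the origin via regularity of $u = r f_0$, which solves $u'' = \tfrac12 v u$. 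Your route is more self-contained in that it actually supplies the regularity input which the paper simply asserts, and it gives explicit quantitative bounds; the paper's identity is shorter but leans on the unproved assertion that $f_0$ is smooth at the origin.

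The one step that needs repair is the parity argument. From $u \in C^\infty([0,\infty))$ and the oddness of the extension $u(x) = x f_0(x)$ you cannot directly conclude that the even-order one-sided Taylor coefficients of $u$ vanish at $0$: that conclusion is equivalent to the two-sided smoothness you are trying to establish. (For instance $u(x) = x + x^2$ on $[0,\infty)$ is one-sided smooth with $u(0)=0$, its odd extension is not smooth at $0$, and $u(x)/x = 1 + |x|$ has a corner.) Two ways to close this: (i) for the proposition you only need $\partial f_0$ bounded near $0$, and this already follows from $u \in C^2([0,\infty))$ with $u(0)=0$, since $f_0'(r) = \left(r u'(r) - u(r)\right)/r^2 = \tfrac12 u''(0) + o(1)$ as $r \downarrow 0$ (and in fact $u''(0) = \tfrac12 v(0) u(0) = 0$); (ii) if you want full smoothness of $f_0$, differentiate the ODE: $u^{(k+2)}(0) = \tfrac12 \sum_j \binom{k}{j} v^{(j)}(0)\, u^{(k-j)}(0)$, and since $v$ is even and smooth all its odd derivatives vanish at $0$, an induction starting from $u(0)=0$ shows that all even-order derivatives of $u$ vanish at $0$, so the odd extension is smooth and $f_0 = u/x$ extends to a smooth even function. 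With either patch your argument is complete.
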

\begin{proof}
For smooth $v$ also the scattering function $f_0$ is smooth. 
Recall the scattering equation \eqref{eqn.scat.d=1}.
Then a simple calculation using integration by parts shows that
\[
\int \left(\frac{1}{2}vf_0^2 + |\partial f_0|^2\right) \ud x
= 2\int_0^\infty \left( f_0 \partial^2 f_0 + \frac{2f_0\partial f_0}{x} + (\partial f_0)^2 \right) \ud x
= 2\int_0^\infty \frac{f_0(x)^2 - f_0(0)^2}{x^2} \ud x.
\]
The function $f_0$ is smooth and even. Thus for small $x$ we have $f(x) = f(0) + O(|x|^2)$, hence the integral converges around $0$. 
By the decay of $\frac{1}{x^2}$ the integral converges at $\infty$.
We conclude the desired.
\end{proof}

\noindent
We now give the proof of \Cref{thm.main.d=1}. We consider the trial state given in \Cref{eqn.define.trial.state}
where $f$ is a rescaled scattering function 
\[
	f(x) = \begin{cases}
	\frac{1}{1 - a/b} f_0(|x|) & |x|\leq b, \\ 1 & |x|\geq b
	\end{cases}
\]
and
\[
	D_N(x_1,\ldots,x_N) = \det[u_k(x_i)]_{\substack{1\leq i \leq N \\ k\in B_F}},
	\qquad 
	u_k(x) = \frac{1}{L^{1/2}} e^{ikx},
	\qquad 
	N = \# B_F.
\]
In $1$ dimension, there is no difference between a ball and a polyhedron, 
so we may use the Fermi ball $B_F = \{k\in \frac{2\pi}{L}\Z : |k|\leq k_F\}$ for the momenta in the Slater determinant.
In this case  we have (see \cite[Lemma 3.2]{Kolomoitsev.Lomako.2018} or \Cref{lem.eval.1d.integrals})
\begin{lemma}\label{lem.lebesgue.constant.fermi.d=1}
The Lebesgue constant of the Fermi ball satisfies
\[
\int_{-L/2}^{L/2} \frac{1}{L} \abs{\sum_{k\in B_F} e^{ikx}} \ud x
= \frac{1}{2\pi} \int_{0}^{2\pi} \abs{\sum_{q \in \left(B\left(\frac{Lk_F}{2\pi}\right)\right)\cap \Z^2} e^{iqu}} \ud u \leq C \log N.
\]
\end{lemma}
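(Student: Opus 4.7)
The plan is to recognize that in one dimension the sum $\sum_{k\in B_F} e^{ikx}$ reduces, after the change of variables $u = 2\pi x/L$, to a classical Dirichlet kernel, and then invoke the well-known logarithmic bound on its $L^1$ norm.

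First I would write $B_F = \{(2\pi/L) n : n \in \Z, |n| \le M\}$ where $M = \lfloor Lk_F/(2\pi)\rfloor$, so that $N = \#B_F = 2M+1$. Substituting $u = 2\pi x/L$ in the first integral yields the equality
\[
	\int_{-L/2}^{L/2} \frac{1}{L}\Bigl|\sum_{k\in B_F} e^{ikx}\Bigr|\ud x
	= \frac{1}{2\pi}\int_0^{2\pi}\Bigl|\sum_{n=-M}^{M} e^{inu}\Bigr|\ud u
	= \frac{1}{2\pi}\int_0^{2\pi} |D_M(u)| \ud u,
\]
where $D_M(u) = \sin((M+\tfrac12)u)/\sin(u/2)$ is the standard Dirichlet kernel of degree $M$. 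This matches the right-hand side of the lemma (noting that in the one-dimensional setting $B(R)\cap \Z = \{-\lfloor R\rfloor,\ldots,\lfloor R\rfloor\}$).

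Next I would bound $\frac{1}{2\pi}\int_0^{2\pi}|D_M(u)|\ud u \leq C\log M$. This is the classical Lebesgue-constant estimate for Fourier partial sums; the standard argument splits the integration into the region $|u|\lesssim 1/M$, on which $|D_M(u)|\lesssim M$ and the integral contributes $O(1)$, and the region $1/M \lesssim |u|\leq \pi$, where $|\sin(u/2)|\gtrsim |u|$ gives $|D_M(u)|\lesssim 1/|u|$, whose integral contributes $O(\log M)$. Since $M \sim N$, this yields the claimed $C\log N$ bound.

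The only step that requires any care is the final substitution of $M$ by $N$; since $N = 2M+1$ we have $\log M \leq \log N$ trivially, so no subtlety arises. There is no real obstacle here — the whole point is that in one dimension the Fermi ball is an interval, so the delicate polyhedral approximation used in three dimensions is unnecessary, and the classical $L^1$-norm of the Dirichlet kernel suffices.
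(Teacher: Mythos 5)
Your proof is correct and follows essentially the same route as the paper, which simply reduces the expression to the classical one-dimensional Lebesgue constant and invokes the standard $O(\log M)$ bound for the Dirichlet kernel (the paper cites \cite[Lemma 3.2]{Kolomoitsev.Lomako.2018} and its own elementary Lemma \ref{lem.eval.1d.integrals}, whose part (1) is exactly the small-$u$/large-$u$ splitting you describe). Nothing further is needed.
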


\noindent
As for the $2$-dimensional setting one easily generalizes the computation of the kinetic energy in \Cref{lem.KE.polyhedron} 
and the calculation of the $2$-particle reduced density for a Slater determinant in  \Cref{prop.2.density}.
That is, 
\begin{lemma}\label{lem.KE.polygon.d=1}
The kinetic energy of the (Slater determinant with momenta in the) Fermi ball satisfies 
\[
	\sum_{k\in B_F} |k|^2 =  \frac{\pi^2}{3} \rho^2 N\left(1 + O(N^{-1})\right).
\]
\end{lemma}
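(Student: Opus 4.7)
The plan is a direct explicit computation, made possible by the fact that in one dimension the ``Fermi ball'' is simply a centred symmetric interval of consecutive lattice points. First I would write $B_F = \{2\pi n/L : n\in\Z,\, |n|\le M\}$ with $M = \lfloor Lk_F/(2\pi)\rfloor$, so that $N = \#B_F = 2M+1$ and, by the definition $\rho = N/L$, we have $k_F = \pi\rho\bigl(1 + O(N^{-1})\bigr)$ as well as $M = (N-1)/2$.

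Next, using the elementary identity $\sum_{n=-M}^{M} n^2 = \tfrac{1}{3}M(M+1)(2M+1)$, I would compute
\[
\sum_{k\in B_F} |k|^2
= \frac{4\pi^2}{L^2}\cdot \frac{M(M+1)(2M+1)}{3}
= \frac{\pi^2 N(N^2-1)}{3L^2}
= \frac{\pi^2}{3}\rho^2 N\bigl(1 + O(N^{-2})\bigr),
\]
which is even slightly stronger than the claimed $O(N^{-1})$ remainder. This closes the argument.

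There is essentially no obstacle; the only minor bookkeeping point is to confirm that the conventions for $B_F$ and $N$ are consistent (in particular that $N$ includes the momentum $k=0$, so that $N$ is odd and $N = 2M+1$; a parity shift by one does not affect the conclusion since it only alters $N$ by $O(1)$, which is absorbed in the $O(N^{-1})$ error). Alternatively, one can simply view the sum as a Riemann sum for $\tfrac{L}{2\pi}\int_{-k_F}^{k_F}|k|^2\,dk = \tfrac{Lk_F^3}{3\pi}$ with error $O(k_F^2) = O(\rho^2)$, exactly as in the proof of \Cref{lem.KE.polyhedron}, and then use $k_F = \pi\rho(1+O(N^{-1}))$ to obtain the stated result.
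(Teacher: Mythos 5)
Your proposal is correct. The only difference from the paper is one of route: the paper does not spell out a proof at all but simply states that one generalizes the Riemann-sum comparison of \Cref{lem.KE.polyhedron} (replace the sum by $\frac{L}{2\pi}\int_{-k_F}^{k_F}|k|^2\,\mathrm{d}k$ and bound the replacement error), which is exactly your second, alternative argument. Your primary argument is instead an exact evaluation: since $B_F=\{2\pi n/L:|n|\le M\}$ is a symmetric interval of lattice points containing $0$ (so $N=2M+1$ is automatically odd, by the paper's definition $B_F=\{k\in\tfrac{2\pi}{L}\Z:|k|\le k_F\}$ and $N=\#B_F$), the identity $\sum_{n=-M}^{M}n^2=\tfrac13 M(M+1)(2M+1)$ together with $\rho=N/L$ gives $\sum_{k\in B_F}|k|^2=\frac{\pi^2}{3}\rho^2N\bigl(1-N^{-2}\bigr)$, i.e.\ a sharper $O(N^{-2})$ remainder and no approximation step at all. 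Both arguments are elementary and either suffices for the lemma as stated (the improved error is not exploited elsewhere, since other error terms in \Cref{eqn.energy.density.w.all.errors.d=1} dominate); your closed-form computation is marginally cleaner in $d=1$ precisely because the ``Fermi ball'' degenerates to an interval, which is why no polyhedral construction is needed there.
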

\begin{lemma}\label{prop.2.density.d=1}
The $2$-particle reduced density of the (normalized) Slater determinant satisfies 
\[
\rho^{(2)}(x_1,x_2) = \frac{\pi^2}{3} \rho^{4} |x_1-x_2|^2\left(1 + O(N^{-1}) + O( \rho^2|x_1 - x_2|^2)\right).
\]
\end{lemma}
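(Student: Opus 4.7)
The strategy mirrors the proof of \Cref{prop.2.density} in three dimensions, but is simpler because in one dimension there is no Fermi-polyhedron approximation to worry about and only the single reflection symmetry $k\mapsto -k$ is needed.

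First I would apply the Wick rule to the quasi-free state $\frac{1}{\sqrt{N!}}D_N$ to obtain
\[
\rho^{(2)}(x_1,x_2)=\rho^{(1)}(x_1)\rho^{(1)}(x_2)-\abs{\gamma_N^{(1)}(x_1;x_2)}^2,\qquad \rho^{(1)}=\rho,
\]
where $\gamma_N^{(1)}(x_1;x_2)=\frac{1}{L}\sum_{k\in B_F}e^{ik(x_1-x_2)}$ by translation invariance depends only on $x_{12}:=x_1-x_2$.

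Next I would Taylor expand the exponential in $x_{12}$. Since $B_F$ is symmetric under $k\mapsto -k$, all odd powers of $x_{12}$ vanish, so
\[
\gamma_N^{(1)}(x_1;x_2)=\frac{1}{L}\sum_{k\in B_F}\left(1-\tfrac{1}{2}k^2 x_{12}^2+\tfrac{1}{24}k^4x_{12}^4+O(|k|^6x_{12}^6)\right).
\]
The zeroth-order sum gives $\rho$ by the definition of $N$. The second-order sum is $-\tfrac{1}{2L}\sum_{k\in B_F}k^2\,x_{12}^2$, evaluated via \Cref{lem.KE.polygon.d=1} as $-\tfrac{\pi^2}{6}\rho^3 x_{12}^2(1+O(N^{-1}))$. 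The fourth-order sum is computed in the same way by comparing to the Riemann integral $\frac{L}{2\pi}\int_{-k_F}^{k_F}k^4\,dk=\frac{L}{5\pi}k_F^5$ and using $k_F=\pi\rho(1+O(N^{-1}))$; the error between the Riemann sum and the integral is of order $k_F^4$, i.e.\ of relative size $N^{-1}$. Thus
\[
\gamma_N^{(1)}(x_1;x_2)=\rho-\tfrac{\pi^2}{6}\rho^3 x_{12}^2+\tfrac{\pi^4}{120}\rho^5 x_{12}^4+O(\rho^7 x_{12}^6)+O(N^{-1}\rho^3 x_{12}^2)+O(N^{-1}\rho^5 x_{12}^4).
\]

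Squaring (the function is real by the $k\mapsto -k$ symmetry) and subtracting from $\rho^2$, the constant term cancels, the $x_{12}^2$ coefficient becomes $2\rho\cdot\tfrac{\pi^2}{6}\rho^3=\tfrac{\pi^2}{3}\rho^4$, and all remaining contributions are bounded by $C\rho^6 x_{12}^4$ or by the indicated $N^{-1}$ corrections. Factoring out the leading term yields the claim. The only mild obstacle is keeping careful track of the two independent small parameters $N^{-1}$ and $\rho|x_{12}|$ so that the relative error comes out as stated; otherwise the argument is direct.
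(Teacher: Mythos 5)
Your proposal is correct and follows essentially the same route the paper (implicitly) takes for this lemma: Wick's rule for the quasi-free Slater determinant, a Taylor expansion of $\gamma^{(1)}_N$ in $x_1-x_2$ using the $k\mapsto -k$ symmetry of $B_F$ to kill odd orders, and Riemann-sum evaluation of $\sum_{k\in B_F}k^2$ (via \Cref{lem.KE.polygon.d=1}) and $\sum_{k\in B_F}k^4$, exactly as in the three-dimensional \Cref{prop.2.density} which the paper says to generalize. The bookkeeping of the coefficients ($k_F=\pi\rho(1+O(N^{-1}))$, leading coefficient $\tfrac{\pi^2}{3}\rho^4$) and of the two error parameters $N^{-1}$ and $\rho|x_{12}|$ is accurate, so no gap remains.
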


\noindent
For the Gaudin-Gillespie-Ripka-expansion 
we 
replace occurrences of $g$ and $\gamma^{(1)}_N$
with their $1$-dimensional analogues as for the $2$-dimensional setting. 
Here we have the bounds (using \Cref{prop.bound.f0.hc.d=1,lem.lebesgue.constant.fermi.d=1})
\begin{equation}\label{eqn.1d.bound.int.g}
	\int_\Lambda |g| 
		\lesssim a \log (b/a),
		\qquad 
	\int_\Lambda |\gamma^{(1)}_N| 
	\lesssim \log N.
\end{equation}
Then, the $1$-dimensional analogue of \Cref{thm.gaudin.expansion} reads
\begin{thm}\label{thm.gaudin.expansion.d=1}
There exists a constant $c > 0$ such that if $a\rho \log (b/a)\log N < c$, then
the formulas in \Cref{eqn.thm.gaudin} hold 
(with $\rho_{\textnormal{Jas}}^{(n)}$ and $\Gamma_{\pi,G}^{n}$ interpreted as appropriate for the $1$-dimensional setting.)
\end{thm}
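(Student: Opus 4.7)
The plan is to mirror the three-dimensional proof essentially verbatim, observing that none of the combinatorial manipulations in \Cref{sec.calc.C_N,sec.calc.rho1,sec.calc.rho2,sec.calc.rho3} that produce the formulas \Cref{eqn.thm.gaudin} depend on the spatial dimension. The only dimension-specific input is the absolute convergence of the four linked-diagram sums $\sum_p \frac{1}{p!}|\sum_{(\pi,G) \in \mcL_p^q} \Gamma_{\pi,G}^q|$ for $0 \le q \le 3$, so the theorem reduces to establishing the one-dimensional analogue of \Cref{lem.linked.sum.abs.conv}, namely that these sums converge absolutely as soon as $a\rho\log(b/a)\log N$ is sufficiently small.

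To prove that convergence I would run the proof of \Cref{lem.linked.sum.abs.conv} unchanged: split each linked diagram into its $g$-connected clusters, bound the alternating sum over connected $G$-graphs on each cluster by the tree-graph inequality \cites[Proposition 6.1]{Poghosyan.Ueltschi.2009}{Ueltschi.2018} (still valid since $-1 \le g \le 0$ in any dimension), invoke the Brydges–Battle–Federbush representation \eqref{eqn.truncated.correlation} of the truncated correlation together with the bound $|\det \mcN(r)|\le \rho^{\sum n_\ell - (k-1)}$ from \Cref{lem.bound.det.N} (whose proof used only the Gram–Hadamard inequality, with no dimensional restriction), and then integrate out the vertices one cluster at a time through leaves of the relevant anchored tree, exactly as in \Cref{sec.abs.conv.Gamma2}. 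Each $g$-integration contributes the factor $\int_\Lambda |g| \lesssim a\log(b/a)$ from \eqref{eqn.1d.bound.int.g}, and each $\gamma^{(1)}_N$-integration contributes $\int_\Lambda |\gamma^{(1)}_N| \lesssim \log N$ by \Cref{lem.lebesgue.constant.fermi.d=1}; the final unintegrated coordinate (if any) gives a volume factor $L$, otherwise $L^0=1$. The Cayley count $\#\mcT_n \le C^n n!$ and the anchored-tree count $\#\mcA^{(n_1,\dots,n_k)}\le k!\,4^{\sum n_\ell}$ are combinatorial and dimension-free, so the four sums are dominated by geometric series in $Ca\rho\log(b/a)\log N$, yielding convergence as soon as this quantity is smaller than some explicit constant $c>0$.

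Having secured absolute convergence, the rearrangements in \Cref{sec.calc.C_N,sec.calc.rho1,sec.calc.rho2,sec.calc.rho3}—pulling the $p$-sum inside the sum over sizes of linked components, identifying $\sum_p \frac{1}{p!}\sum_{(\pi,G)\in\mcD_p}\Gamma_{\pi,G}$ with the exponential of its linked part, collapsing the non-distinguished components against $C_N/N!$, and using translation invariance $\rho^{(1)}_{\textnormal{Jas}}=\rho^{(1)}=\rho$—go through line by line, producing the four formulas of \Cref{eqn.thm.gaudin} with $\Gamma$ interpreted as integration against the one-dimensional $\gamma^{(1)}_N(x;y)=L^{-1}\sum_{k\in B_F} e^{ik(x-y)}$.

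The only genuinely dimension-sensitive input is the Lebesgue constant, and the main difference from the three-dimensional case is precisely that \Cref{lem.lebesgue.constant.fermi.d=1} gives the much cheaper bound $\int_\Lambda |\gamma^{(1)}_N|\lesssim \log N$, without the polyhedral factor $s(\log N)^3$; this is why the smallness condition reads $a\rho\log(b/a)\log N < c$ rather than $sa^3\rho\log(b/a)(\log N)^3<c$. No obstacle is expected in the adaptation itself—the argument is mechanical—so the substantive content of the theorem really is the convergence condition, and the main thing to verify with care is simply that every occurrence of $a^3$ in the three-dimensional bounds on $\int|g|$ is correctly replaced by $a$ and that every occurrence of $s(\log N)^3$ is replaced by $\log N$.
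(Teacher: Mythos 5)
Your proposal is correct and coincides with the paper's own (very terse) argument: the GGR combinatorics and the convergence proof of \Cref{lem.linked.sum.abs.conv} are dimension-independent, and the only inputs that change are $\int_\Lambda|g|\lesssim a\log(b/a)$ and $\int_\Lambda|\gamma^{(1)}_N|\lesssim\log N$ from \Cref{prop.bound.f0.hc.d=1,lem.lebesgue.constant.fermi.d=1}, which is exactly how the paper obtains the condition $a\rho\log(b/a)\log N<c$ (with the Fermi ball replacing the polyhedron since they coincide in one dimension). Nothing further is needed.
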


\noindent
For the analogues of \Cref{lem.derivative.sum.linked.diagrams,lem.bound.3.body.term} we have to a bit more careful. 
In order to get errors smaller than the desired accuracy of the leading interaction term (of order $a\rho^4$ for the energy density)
we need to also do a Taylor expansion of (some of) the $3$-particle diagrams. 
(Pointwise we only have the bound 
$\abs{\Gamma^3_{\pi,G}} \leq C a \rho^4 \log(b/a) \log N$ (see \Cref{sec.proof.subleading.3.body.diagrams,sec.small.diagrams.3.particle})
for any subleading diagram $(\pi,G)$, i.e. for $(\pi,G)\in \tilde\mcL_p^3$ with $p\geq 1$.)

\begin{remark}[{Why this was not a problem for dimensions $d=2,3$}]
In dimensions $d=1,2,3$ the analoguous bound reads $\abs{\Gamma^3_{\pi,G}} \leq C s a^d \rho^4 \log(b/a) (\log N)^d$
(if $d=1$ then there is no $s$)
for any subleading diagram, see \Cref{eqn.bound.linked.sum.k.ng.cluster.3.body}.
This bound should be compared to the energy density of the leading interaction term of order $a^d \rho^{2 + 2/d}$.
Considering just the power of $\rho$, we see that 
such terms are subleading compared to the interaction term for $d\ne 1$.
\end{remark}



\noindent
Similarly the argument for $\Gamma^2$ is also slightly different compared to that of \Cref{lem.derivative.sum.linked.diagrams}.
We have the bounds 
\begin{lemma}\label{lem.derivative.sum.linked.diagrams.d=1}
There exists a constant $c > 0$ such that if $a\rho \log (b/a)\log N < c$, then 
\begingroup
\allowdisplaybreaks
\[
\begin{aligned}
  \abs{\sum_{p= 1}^\infty \frac{1}{p!} \sum_{(\pi, G)\in\mcL_p^2} \Gamma_{\pi, G}^{2}}
	& 
		\leq 
    C a^2 \rho^4 \left[ a\rho (\log(b/a))^3 (\log N)^3 + b^4\rho^4\right]
	\\ & \quad 
		+ 
    C a \rho^5 |x_1-x_2|^2 \left[b^2\rho^2 + Nab^4\rho^5 + \log(b/a)\right]
\end{aligned}
\]
and
\[
\begin{aligned}
\rho^{(3)}_{\textnormal{Jas}} 
	& 
		\leq C f_{12}^2f_{13}^2f_{23}^2
		\Bigl[
    \rho^{7}|x_1-x_2|^2 |x_2-x_3|^2
		+ a^2 \rho^5 (\log(b/a))^2 (\log N)^2
	\\ & \qquad 
    + a \rho^6 \left((b^2\rho^2 + \log(b/a))\right)\left[|x_1-x_2|^2 + |x_1-x_3|^2 + |x_2-x_3|^2\right]
		\Bigr].
\end{aligned}
\]
\endgroup
\end{lemma}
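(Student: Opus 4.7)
The plan is to adapt the arguments of Sections \ref{sec.proof.subleading.2.body.diagrams} and \ref{sec.proof.subleading.3.body.diagrams} to the one-dimensional setting. In 1 dimension the Fermi ball is already an interval, so the polyhedral approximation of Section \ref{sec.polyhedron} is unnecessary and every factor of $s$ in the proof disappears; on the other hand, the leading interaction contribution is only of order $a\rho^4$, so three-body diagrams that were comfortably subleading in 3 dimensions must now be controlled more sharply.

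For the $\Gamma^{2}$-sum I would classify diagrams $(\pi,G)\in\mcL_p^2$ by the triple $(k,\nu,\nu^*)$ as in the proof of Lemma \ref{lem.derivative.sum.linked.diagrams}, and establish the 1D analogue of \eqref{eqn.bound.linked.sum.k.ng.cluster},
\[
\abs{\tfrac{1}{p!}\sum_{\substack{(\pi,G)\in\mcL_p^2 \\ (k,\nu,\nu^*)=(k_0,\nu_0,\nu_0^*)}} \Gamma_{\pi,G}^{2}}
\leq C\rho^2 (C\log N)^{k_0}(Ca\rho\log(b/a))^{k_0+\nu_0+\nu_0^*},
\]
using the bounds \eqref{eqn.1d.bound.int.g} inside the proof of Lemma \ref{lem.linked.sum.abs.conv}. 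The decomposition $\xi_{\textnormal{small},0}+\xi_0+\xi_{\textnormal{small},\geq 1}+\xi_{\geq 1}$ as in \eqref{eqn.2.body.linked.diagrams.decompose} is then performed with the threshold between ``small'' and ``large'' retuned to match the target error. The coincidence identity \eqref{eqn.claimed.xi.x1=x2} still holds, since the Slater determinant vanishes at $x_1=x_2$, so the zeroth-order Taylor coefficient of $\xi_0+\xi_{\textnormal{small},0}$ is controlled by the direct bounds on the complementary terms. For the second derivative of $\xi_0$ I would use formula \eqref{eqn.partial.rhot} together with the elementary 1D estimates
\[
\int_\Lambda |\partial\gamma^{(1)}_N|\,\ud x \lesssim \rho\log N,\qquad \int_\Lambda |\partial^2\gamma^{(1)}_N|\,\ud x \lesssim \rho^2\log N,
\]
which follow directly from $|k|\leq k_F\sim\rho$ on $B_F$ and Lemma \ref{lem.lebesgue.constant.fermi.d=1}, thereby bypassing the elaborate analysis of Appendix \ref{sec.derivative.lebesgue.constant}. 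The small diagrams are evaluated by direct computations parallel to those of Appendix \ref{sec.small.diagrams}; the $b^4\rho^4$ and $Nab^4\rho^5$ terms arise from the 1D moment integrals $\int|x|^{n}(1-f^2)\,\ud x \lesssim a b^{n+1}$ and integrals of $|x|^n \gamma^{(1)}_N$ for $n=2,4$.

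For the three-body bound the strategy of Section \ref{sec.proof.subleading.3.body.diagrams} must be sharpened. Decomposing diagrams in $\tilde{\mcL}_p^3$ by $(k,\nu,\nu^*)$ and applying the 1D version of \eqref{eqn.bound.linked.sum.k.ng.cluster.3.body} (again with no $s$), the large diagrams contribute the $a^2\rho^5(\log(b/a))^2(\log N)^2$ term by a direct estimate. The remaining small diagrams however produce only $Ca\rho^4\log(b/a)\log N$, which is not subleading relative to the leading interaction $a\rho^4$. To refine this one observes that each such diagram, multiplied by $f_{12}^2 f_{13}^2 f_{23}^2$, vanishes whenever any two external coordinates coincide, since the Slater determinant does. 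A first-order Taylor expansion of each small diagram around each coincidence $x_i = x_j$, using the analogue of \eqref{eqn.partial.rhot} for $\partial^\mu_{x_i}\rho_{\rm t}$ together with $|\partial\gamma^{(1)}_N|\leq C\rho^2$, then extracts the prefactor $[|x_{12}|^2+|x_{13}|^2+|x_{23}|^2]$ and produces the coefficient $a\rho^6(b^2\rho^2+\log(b/a))$ after a short computation analogous to the 2-body small-diagram evaluations.

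The main technical obstacle lies in this second part, the enumeration and first-order Taylor expansion of the small 3-body diagrams. The three external vertices can be distributed among the clusters in several combinatorially distinct ways (all in one cluster, two in one and one in another, all in different clusters), each producing a small family of diagrams that must be computed by hand; the individual calculations parallel those in Appendix \ref{sec.small.diagrams} but are more numerous. Care is needed to ensure that the first derivatives land on $\gamma^{(1)}_N$-edges and generate moments of type $\int|x|^{n} f\partial f$ with $n\geq 2$, rather than on $g$-edges, where they would produce the more singular factor $\int f\partial f\sim a$ and spoil the scaling.
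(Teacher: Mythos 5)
Your treatment of the $\Gamma^{2}$-sum is essentially the paper's: the classification by $(k,\nu,\nu^*)$, the 1D analogue of the cluster bound \Cref{eqn.bound.linked.sum.k.ng.cluster.d=1}, the coincidence identity at $x_1=x_2$ to control the zeroth Taylor coefficient, and direct evaluation of the remaining small diagrams all match; your extra step of Taylor-expanding also the large $\nu^*=0$ diagrams with elementary 1D derivative Lebesgue bounds is harmless but unnecessary, since in one dimension the paper simply bounds all large diagrams pointwise by \Cref{eqn.bound.linked.sum.k.ng.cluster.d=1}. (One small correction of attribution: the $Nab^4\rho^5$ term comes from the type-$(B_2)$ small diagram, i.e.\ a $5$-particle truncated correlation with an extra internal cluster contributing a volume factor $L=N/\rho$ and two factors $\int |x|^2|g|\lesssim ab^2$, not from moments of $\gamma^{(1)}_N$.)

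The genuine gap is in your treatment of the small $3$-body diagrams. Your mechanism rests on the claim that each small diagram, multiplied by $f_{12}^2f_{13}^2f_{23}^2$, vanishes at coincidences ``since the Slater determinant does''. This is false on two counts: individual diagram values $\Gamma^{3}_{\pi,G}(x_1,x_2,x_3)$ do not vanish when two external coordinates coincide (only the \emph{full} sum $\rho^{(3)}+\sum_{p\geq1}\frac{1}{p!}\sum_{\tilde\mcL_p^3}\Gamma^3_{\pi,G}$ does, which is exactly the content of the analogue of \Cref{eqn.claimed.xi.x1=x2}), and the $f$-prefactors do not help either, because for soft potentials $f(0)\neq 0$ in one dimension. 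Consequently a per-diagram expansion around coincidences has no vanishing zeroth order to exploit; the coincidence identity can be spent only once, on one group of diagrams, and the complementary groups must be bounded independently with the quadratic prefactors already built in. Moreover, a \emph{first}-order expansion would at best produce a factor $|x_i-x_j|$, not $|x_i-x_j|^2$; one needs a second-order expansion with vanishing zeroth and first order, which is what the paper does for the group $\xi^3_{\textnormal{small},0}$ (external vertices in three distinct clusters), bounding $\partial_{x_1}^2\xi^3_{\textnormal{small},0}$ as in \Cref{lem.Taylor.small.diagrams} and symmetrizing.

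For the other small group — the single internal vertex $g$-attached to two (or three) external vertices — the correct source of the prefactor $[|x_{12}|^2+|x_{13}|^2+|x_{23}|^2]$ is not differentiation at all: one writes the truncated correlation as a difference of Slater reduced densities via \Cref{eqn.truncated.2.clusters} and uses that $\rho^{(3)},\rho^{(4)},\dots$ vanish quadratically at coincidence points (the 1D analogues of \Cref{prop.2.density,lem.bound.rho3}), together with $\int|x|^2|g|\,\ud x\lesssim ab^2$; this is \Cref{lem.small.diagrams.d=1} and \Cref{sec.small.diagrams.d=1}, and it yields the $a\rho^6 b^2\rho^2$ piece with no derivative ever hitting a $g$-edge. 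This also defuses the worry you raise at the end: you cannot ``ensure'' where derivatives land when expanding in $x_1$, but for the group that \emph{is} Taylor-expanded, a $g$-edge at $\{1\}$ is harmless because after integrating out the internal vertex the convolution $\int e^{-ik(x_1-x_4)}g(x_1-x_4)\,\ud x_4=\hat g(k)$ is independent of $x_1$ by translation invariance, so derivatives act only on exponential prefactors, exactly as in the proof of \Cref{lem.Taylor.small.diagrams}. As written, your proposal does not supply a valid route to the $3$-body estimate without importing this two-pronged structure.
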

\noindent
The proof is similar to that of \Cref{lem.derivative.sum.linked.diagrams,lem.bound.3.body.term}.
We postpone it to the end of this section.
Note here that the $N$-dependence is not just via logarithmic factors.
Thus, we need to be more careful in choosing the size of the smaller boxes when applying the box method arguments of \Cref{sec.box.method}.
With this we get  the analogue of \Cref{eqn.energy.w.all.errors} in $1$ dimension,
\begin{equation}\label{eqn.energy.w.all.errors.d=1}
\begin{aligned}
& \longip{\psi_N}{H_N}{\psi_N}
	\\ & \quad = \frac{\pi^2}{3} \rho^{2} N\left(1 + O(N^{-1})\right) 
		+ L\int \ud x
		\left(
			|\partial f(x)|^2 + \frac{1}{2}v(x) f(x)^2
		\right) 
	\\ & \qquad \quad \times 
		\Bigg[
		 \frac{\pi^2}{3} \rho^{4} |x|^2
	\left(
	1 + O(N^{-1}) + O(\rho^{2}|x|^2)
	\right)
	+ O \left(a \rho^{5} |x|^2 
				\Bigl[
          b^2\rho^2 
          + N a b^4 \rho^5
					+ \log(b/a)
				\Bigr]\right)
	\\ & \qquad \qquad 
			+ O\left( 
			a^2 \rho^4
		\Bigl[ 
			a\rho  (\log b/a)^3 (\log N)^3 
      + b^4\rho^4
		\Bigr]\right)
		\Bigg]
	\\ & \qquad 
		+ \iiint \ud x_1 \ud x_2 \ud x_3 \, 
		f_{12}\partial f_{12} f_{23}\partial f_{23} f_{13}^2 
	\\ & \qquad \quad \times
		\Biggl[
    O(\rho^7 |x_1-x_2|^2|x_2-x_3|^2)
		+ O\left(a^2\rho^{5} (\log (b/a))^2 (\log N)^2
			\right)
	\\ & \qquad  \qquad
		+ O\left(
      a\rho^6 \left[b^2\rho^2 + \log(b/a)\right]\left[|x_1-x_2|^2 + |x_1-x_3|^2 + |x_2-x_3|^2\right]\right)
		\Biggr].
\end{aligned}
\end{equation}
For the $2$-body error terms we may follow the type of arguments of \Cref{sec.energy.in.box}, 
namely \Cref{eqn.integral.v.scat.len,eqn.integral.v.zero.moment,eqn.integral.v.moments,eqn.integral.fdf,eqn.integral.fdf.moments} 
exactly as for the $2$-dimensional case.
By using \Cref{prop.bound.f0.hc.d=1} we  get the bounds 
\[
\begin{aligned}
\int \left(|\partial f(x)|^2 + \frac{1}{2}v(x) f(x)^2\right) |x|^n \ud x 
	& \leq \begin{cases}
	2a, & n=2, 
	\\
	Ca^2b, & n=4,
	\end{cases}
\\
\int |x|^n f \partial f \ud x 
	& \leq \begin{cases}
	C, & n=0, 
	\\
	Ca\log(b/a), & n=1, 
	\\
	Cab, & n= 2.
	\end{cases}
\end{aligned}
\]
Define $a_0$ by 
\[	
	\frac{1}{2a_0} = \int \left(|\partial f(x)|^2 + \frac{1}{2}v(x) f(x)^2\right) \ud x
\]
and recall by assumption on $v$ that $a_0 > 0$, i.e. that $1/a_0 < \infty$. 
For the $3$-body terms we may do as for the $3$-dimensional case, \Cref{sec.energy.in.box}. 
For the first term we bound 
$f_{13} \leq 1$.
By the translation invariance one integration gives a volume (i.e. length) factor $L$. 
That is,
\[
\begin{aligned}
& \iiint \ud x_1 \ud x_2 \ud x_3 \, 
		\abs{f_{12}\partial f_{12} f_{23}\partial f_{23} f_{13}^2}
	\\ &  \qquad \times
		\Biggl[
    O(\rho^7 |x_1-x_2|^2|x_2-x_3|^2)
		+ O\left(a^2\rho^{5} (\log (b/a))^2 (\log N)^2
			\right)
	\\ & \qquad \quad 
		+ O\left(
      a\rho^6 \left[b^2\rho^2 + \log(b/a)\right]\left[|x_1-x_2|^2 + |x_1-x_3|^2 + |x_2-x_3|^2\right]\right)
		\Biggr]
	\\ & \quad 
		\leq 
    C N \rho^{6} \left(\int_0^b |x|^2 f \partial f \ud x\right)^2
		+ CNa^2 \rho^4 (\log(b/a))^2 (\log N)^2 \left(\int_0^b f \partial f\right)^2 
	\\ & \qquad 
    + C N a \rho^5 \left[b^2 \rho^2 + \log(b/a)\right] 
    \left[
    \left(\int_0^b |x|^2 f \partial f \ud x\right) \left(\int_0^b f \partial f \ud x\right)
    + \left(\int_0^b |x| f \partial f \ud x\right)^2
    \right].
  \\ & \quad \leq 
  CN a^2 \rho^4 \left[b^2 \rho^2 + (\log(b/a))^2 (\log N)^2 + b\rho \left[b^2 \rho^2 + \log(b/a)\right]\right].
\end{aligned}
\]
We conclude the analogue of \Cref{eqn.energy.density.w.all.errors} in dimension $1$
\begin{equation}
\label{eqn.energy.density.w.all.errors.d=1}
\begin{aligned}
\frac{\longip{\psi_N}{H_N}{\psi_N}}{L}
	& 
	= \frac{\pi^2}{3}\rho^{3} + \frac{2\pi^2}{3}a\rho^{4} 
		+ O\left(N^{-1}\rho^{3}\right) 
		+ O\left(a^2b^{-1}\rho^{4}\right)
		+ O\left(a^2 b \rho^6\right)
	\\ & \quad
		+ O\left( 
      a^2 \rho^4 a_0^{-1} \left[a \rho (\log (b/a))^3 (\log N)^3 + b^4\rho^4\right]
			\right)
	\\ & \quad  
	+ O \left(a^2\rho^5
				\left[
          b^2\rho^{2}
          + N a b^4\rho^5
					+ \log(b/a)
				\right]\right)
	\\ & \quad 
	+ O\left(
  a^2 \rho^5 \left[b^2 \rho^2 + (\log(b/a))^2 (\log N)^2 + b\rho \left[b^2 \rho^2 + \log(b/a)\right]\right]
	\right).
\end{aligned}
\end{equation}
We need to be careful how we choose $N$ (i.e. how we choose $L$), 
since the error depends on $N$ not just via logarithmic terms.
We choose 
\[
	N = (a\rho)^{-\alpha}, \quad \alpha > 1
	\qquad 
	b = a(a\rho)^{-\beta}, \quad 0 < \beta < 1
\]
where the bounds on $\alpha, \beta$ are immediate for all the error-terms to be smaller than the desired accuracy 
(there is similarly also an upper limit for $\alpha$, which we do not write). 
Keeping then only the leading error terms we get 
\begin{equation}
\label{eqn.energy.density.w.all.errors.d=1.reduced}
\begin{aligned}
\frac{\longip{\psi_N}{H_N}{\psi_N}}{L}
	&  
	= \frac{\pi^2}{3}\rho^{3} + \frac{2\pi^2}{3}a\rho^{4} 
		+ O\left(N^{-1}\rho^{3}\right) 
		+ O\left(a^2b^{-1}\rho^{4}\right)
		+ O\left( 
      a^2 a_0^{-1} b^2 \rho^6
			\right) 
  + O \left(N a^3 b^4 \rho^{10} \right).
\end{aligned}
\end{equation}
Using the box method similarly as in \Cref{sec.box.method} we also have to be careful with how we choose the parameter $d$. 
As in \Cref{eqn.energy.density.infinite.box.trial.state} we get 
\[
\begin{aligned}
	e(\tilde\rho) 
	& \leq \frac{\longip{\psi_n}{H_n}{\psi_n}}{\ell} \left[1 + O(d/\ell) + O(b/\ell)\right] + O\left(\rho d^{-2}\right)
	\\
	& \leq \frac{\pi^2}{3}\rho^{3} + \frac{2\pi^2}{3}a\rho^{4} 
		+ O\left(n^{-1}\rho^{3}\right) 
		+ O\left(a^2b^{-1}\rho^{4}\right)
		+ O\left( 
      a^2 a_0^{-1} b^2 \rho^6
			\right) 
  + O \left(n a^3 b^4 \rho^{10} \right)
	\\ & \quad
	+ O\left(d \ell^{-1} \rho^3\right)
	+ O\left(b \ell^{-1} \rho^3\right)
	+ O\left(\rho d^{-2}\right).
\end{aligned}
\] 
Here we change notation from $N$ to $n$ and choose $d = a(a\rho)^{-\delta}$.
To get the error smaller than desired, we see that we need to choose $\delta > 3/2$. 
In particular then 
the error is $O(\rho^3 (a\rho)^{\gamma})$, where 
\[
  \gamma = \min\{1+\beta, 5-4\beta, 7-\alpha-4\beta, \alpha + 1 - \delta, 2\delta - 2\}.
\]
Then, also $\tilde\rho = \rho\left(1 + O( (a\rho)^\gamma)\right)$ so $\rho = \tilde \rho(1 + O( (a\tilde\rho)^\gamma))$.
Optimising in $\alpha,\beta,\delta$ we see that for 
\begin{equation}\label{eqn.optimal.choices.d=1}
  \alpha = \frac{33}{13},
  \qquad 
  \beta = \frac{9}{13},
  \qquad 
  \delta = \frac{24}{13}
\end{equation}
we get $\gamma = 22/13$, i.e.
\[
\begin{aligned}
	e(\tilde\rho) 
	& \leq \frac{\pi^2}{3}\tilde\rho^{3} + \frac{2\pi^2}{3}a\tilde\rho^{4}\left(1 + O\left( (a\tilde\rho)^{9/13}\right)\right).
\end{aligned}
\] 
This concludes the proof of \Cref{thm.main.d=1}.

It remains to give the
\begin{proof}[{Proof of \Cref{lem.derivative.sum.linked.diagrams.d=1}}]
Note first that, 
completely analogously to \Cref{eqn.bound.linked.sum.k.ng.cluster,eqn.bound.linked.sum.k.ng.cluster.3.body}, we have 
\begin{equation}\label{eqn.bound.linked.sum.k.ng.cluster.d=1}
\begin{aligned}
  \frac{1}{p!}
		\abs{\sum_{\substack{(\pi,G)\in \mcL_p^2 \\ \nu(\pi,G) + \nu^*(\pi,G) = \nu_0 \\ k(\pi,G) = k_0}} \Gamma_{\pi,G}^2}
	& \leq C\rho^2 (C \log N )^{k_0} (C a \rho \log (b/a))^{k_0 + \nu_0},
  && p = 2k_0 + \nu_0
	\\
  \frac{1}{p!}
		\abs{\sum_{\substack{(\pi,G)\in \tilde\mcL_p^3 \\ \nu(\pi,G) + \nu^*(\pi,G) = \nu_0 \\ k(\pi,G) = k_0}} \Gamma_{\pi,G}^3}
	& \leq C\rho^3 (C \log N )^{k_0} (C a \rho \log (b/a))^{k_0 + \nu_0},
  && p = 2k_0 + \nu_0.
\end{aligned}
\end{equation}
We will  use this to split the diagrams of $\mcL_p^2$ and $\tilde\mcL_p^3$ into groups.
We split diagrams in $\mcL_p^2$ into three (exhaustive) groups:
\begin{enumerate}
\item Small diagrams with $1\leq k + \nu + \nu^* \leq 2$, $\{1\}$ and $\{2\}$ in different clusters 
\begin{enumerate}[$(A)$]
\item and $k\geq 1$,
\item and $k=0, \nu^* = 1$.
\end{enumerate}
\item Small diagrams with $1\leq k + \nu + \nu^* \leq 2$ and
\begin{enumerate}[$(A)$]
\item $\{1\}$ and $\{2\}$ in different clusters and $k=0, \nu^* = 2$,
\item $\{1\}$ and $\{2\}$ in the same cluster, 
\end{enumerate} 
\item Large diagrams with $k + \nu + \nu^* \geq  3$. 
\end{enumerate}
We then split 
\[
	\sum_{p=1}^\infty \frac{1}{p!} \sum_{(\pi,G)\in \mcL_p^2} \Gamma_{\pi,G}^2
	= \xi_{\textnormal{small},0} + \xi_{\textnormal{small},\geq 1} + \xi_{\geq 1},
\]
where $\xi_{\textnormal{small},0}$ is the contribution of all small diagram in the first group, 
$\xi_{\textnormal{small},\geq1}$ is the contribution of all small diagrams in the second group 
and $\xi_{\geq1}$ is the contribution of all large diagrams.
We will then do a Taylor expansion of $\xi_{\textnormal{small},0}$ but not of the other terms.

We split diagrams in $\tilde\mcL_p^3$ into three (exhaustive) groups:
\begin{enumerate}
\item Small diagrams with $k + \nu + \nu^* = 1$ and $\{1\}$, $\{2\}$ and $\{3\}$ in $3$ different clusters. (Then $\nu = 0$.)
\item Small diagrams with $k + \nu + \nu^* = 1$ and $\{1\}$, $\{2\}$ and $\{3\}$ in $<3$ different clusters. (Then $k=\nu = 0$.)
\item Large diagrams with $k + \nu + \nu^* \geq 2$. 
\end{enumerate}
We then split 
\[
	\sum_{p=1}^\infty \frac{1}{p!} \sum_{(\pi,G)\in \tilde\mcL_p^3} \Gamma_{\pi,G}^3
	= \xi_{\textnormal{small},0}^3 + \xi_{\textnormal{small},\geq 1}^3 + \xi_{\geq 1}^3,
\]
where $\xi_{\textnormal{small},0}^3$ is the contribution of all small diagram in the first group, 
$\xi_{\textnormal{small},\geq1}^3$ is the contribution of all small diagrams in the second group 
and $\xi_{\geq1}^3$ is the contribution of all large diagrams.
Again,  we do a Taylor expansion of $\xi_{\textnormal{small},0}^3$ but not of the other terms.
For simplicity 
we will only compute the derivatives $\partial_{x_1}^2$.
With this bound the error term for the energy density is $O(a^2b \rho^6 \log(b/a))$ 
and so it is even smaller than the accuracy $a^2\rho^5$ with $b$ chosen as in \Cref{eqn.optimal.choices.d=1}.
(By the symmetry, we could bound $\xi_{\textnormal{small},0}$ by bounding its $6$th derivative 
$\partial_{x_1}^{2}\partial_{x_2}^2\partial_{x_3}^2$ instead.)
To keep the result symmetric in $x_1,x_2,x_3$ we will symmetrize the result afterwards.

We have immediately by \Cref{eqn.bound.linked.sum.k.ng.cluster.d=1} that 
\begin{equation}\label{eqn.bound.xi.geq1.d=1}
	\abs{\xi_{\geq 1}} \leq C a^3 \rho^5 (\log(b/a))^3 (\log N)^3,
	\qquad 
	\abs{\xi_{\geq 1}^3} \leq C a^2 \rho^5 (\log(b/a))^2 (\log N)^2.
\end{equation}
Similarly as in the proof of \Cref{lem.derivative.sum.linked.diagrams}
we have for $x_1=x_2$
\[
\begin{aligned}
	\xi_{\textnormal{small},0}(x_2,x_2) + \xi_{\textnormal{small},\geq 1}(x_2,x_2) + \xi_{\geq 1}(x_2,x_2) 
	& = 0,
	\\
	\xi_{\textnormal{small},0}^3(x_2,x_2,x_3) + \xi_{\textnormal{small},\geq 1}^3(x_2,x_2,x_3) + \xi_{\geq 1}^3(x_2,x_2,x_3) 
	& = 0.
\end{aligned}
\]
Hence we may bound the zeroth order by 
\[
\begin{aligned}
	\abs{\xi_{\textnormal{small},0}(x_2,x_2)}
	& \leq \abs{\xi_{\textnormal{small},\geq 1}(x_2,x_2)} + \abs{\xi_{\geq 1}(x_2,x_2)},
	\\
	\abs{\xi_{\textnormal{small},0}^3(x_2,x_2,x_3)}
	& \leq \abs{\xi_{\textnormal{small},\geq 1}^3(x_2,x_2,x_3)} + \abs{\xi_{\geq 1}^3(x_2,x_2,x_3)}.
\end{aligned}	
\]
For the diagrams in $\xi_{\textnormal{small},0}$ and $\xi_{\textnormal{small},0}^3$
we have similarly to \Cref{lem.Taylor.small.diagrams} that 
\begin{equation}\label{eqn.bound.partial.xi0.d=1}
	\abs{\partial_{x_1}^2\xi_{\textnormal{small},0}} 
	\leq C a \rho^{5} \log(b/a),
	\qquad 
	\abs{\partial_{x_1}^2\xi_{\textnormal{small},0}^3} 
	\leq C a \rho^{6} \log(b/a)	
\end{equation}
uniformly in $x_1,x_2,x_3$.
For the diagrams in $\xi_{\textnormal{small},\geq 1}$ and $\xi_{\textnormal{small},\geq 1}^3$ 
the analysis is somewhat similar to the proof of \Cref{lem.small.diagrams.type.C}.
We have 
\begin{lemma}\label{lem.small.diagrams.d=1}
For the small diagrams in $\xi_{\textnormal{small},\geq 1}$ and $\xi_{\textnormal{small},\geq 1}^3$ we have the bounds 
\begin{align}
	\abs{\xi_{\textnormal{small},\geq 1}} 
	& \leq 
  C a^2 b^4 \rho^8 + C a b^2 \rho^7 |x_1-x_2|^2\left[1 + N a b^2 \rho^3\right],
	\label{eqn.2body.small.diagrams.d=1}
	\\ 
	\abs{\xi_{\textnormal{small},\geq 1}^3}
  & \leq C a b^2 \rho^{8} \left(|x_1-x_2|^2 + |x_1-x_3|^2 + |x_2-x_3|^2\right)
	\label{eqn.3body.small.diagrams.d=1}
\end{align}
uniformly in $x_1,x_2,x_3$. 
\end{lemma}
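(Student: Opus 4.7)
The plan is to enumerate the relevant small diagrams and compute (or directly estimate) each one, exploiting that $\gamma^{(1)}_N$ integrates to $1$ rather than to the Lebesgue-constant-type bound. This is analogous to the treatment of small diagrams in \Cref{sec.small.diagrams.2.particle,sec.small.diagrams.3.particle}, adapted to one dimension, but with extra care because in $d=1$ the bound on $\abs{\Gamma^3_{\pi,G}}$ obtained from \Cref{eqn.bound.linked.sum.k.ng.cluster.d=1} is not better than the leading interaction energy and so more diagrams must be Taylor expanded.

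First I would list the diagrams. For $\xi_{\textnormal{small},\geq 1}$ (type (A): $\{1\},\{2\}$ in different clusters with $k=0$, $\nu^{*}=2$; type (B): $\{1\},\{2\}$ in the same cluster with $k+\nu+\nu^{*}\leq 2$) there are only finitely many topologies: a single $g$-edge between each external vertex and a distinct internal vertex, two $g$-edges forming a chain, and so on, with the permutation $\pi$ supplying one or two $\gamma^{(1)}_N$-cycles linking everything. For $\xi_{\textnormal{small},\geq 1}^3$ (with $k+\nu+\nu^{*}=1$ and the three external vertices in at most two clusters), the only diagrams are those in which exactly one external pair is linked by a single $g$-edge, possibly with one additional internal vertex attached, the third external vertex sitting in a separate $\rho^{(1)}$-factor. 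In every case the integrand factorizes into a product of $g_{ij}$'s and $\gamma^{(1)}_N$'s specified by the diagram.

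Second, for each diagram I would express the value as an explicit integral and estimate it using the following toolbox: (i) for $g$-integrals, $\int |g(x)||x|^n\,\ud x$ can be evaluated with \Cref{prop.bound.f0.hc.d=1}, giving $\lesssim a\log(b/a)$ for $n=0$, $\lesssim ab$ for $n=1$, $\lesssim ab^{2}$ for $n=2$, and so on; (ii) for $\gamma^{(1)}_N$-integrals, use $\int \gamma^{(1)}_N(x;y)\,\ud y =1$ whenever the integrand involves a single $\gamma^{(1)}_N$ in the relevant variable, rather than invoking $\int |\gamma^{(1)}_N|\lesssim \log N$; (iii) use $\gamma^{(1)}_N(x;x)=\rho$ when two $\gamma^{(1)}_N$-edges meet at an integrated vertex and can be contracted; (iv) when $|x_{i}-x_{j}|^{2}$ factors are needed (as in the $|x_1-x_2|^{2}$ term of \Cref{eqn.2body.small.diagrams.d=1} and the analogous terms in \Cref{eqn.3body.small.diagrams.d=1}), Taylor expand the $\gamma^{(1)}_N$-factors carrying $x_1,x_2$ (or $x_1,x_3$, $x_2,x_3$) to second order, using that odd orders vanish by the reflection symmetry of $B_F$, exactly as in the proof of \Cref{prop.2.density.d=1}. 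The $b^{2}\rho^{2}$ enhancements appear from $\int |g||x|^{2}\lesssim ab^{2}$ paired with the $\rho^{2}$ from contracted $\gamma^{(1)}_N$'s.

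The one slightly non-standard bound is the $Nab^{2}\rho^{3}$ term in \Cref{eqn.2body.small.diagrams.d=1}. The hard part is that this term traces back to the diagram with $\{1\},\{2\}$ in the same cluster and $\nu^{*}=2$ of the form where both internal vertices are connected by $g$-edges and the permutation $\pi$ is a $4$-cycle through $\{1\},\{2\}$ and both internal vertices. Here two of the four $\gamma^{(1)}_N$-integrations really do involve $\abs{\gamma^{(1)}_N}$ (once one extracts $|x_1-x_2|^{2}$), and Parseval's identity gives $\int |\gamma^{(1)}_N(x;y)|^{2}\ud y = \rho$, but the remaining integration over the $g$-chain does not benefit from cancellation and produces the volume factor $L\sim N/\rho$. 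All other diagrams I expect to fit into the two listed bounds without generating such an $N$-factor; combining them yields \Cref{eqn.2body.small.diagrams.d=1,eqn.3body.small.diagrams.d=1}, and the final symmetrization restores the form stated in the lemma.
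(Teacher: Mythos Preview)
Your proposal has the right spirit (enumerate and bound the small diagrams directly, gaining over the crude estimate \Cref{eqn.bound.linked.sum.k.ng.cluster.d=1}), but the key mechanism is not the one the paper uses, and this matters for getting the stated powers.

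The paper does \emph{not} Taylor-expand individual $\gamma^{(1)}_N$-factors here; that device is reserved for $\xi_{\textnormal{small},0}$ and $\xi^3_{\textnormal{small},0}$. For $\xi_{\textnormal{small},\geq 1}$ and $\xi^3_{\textnormal{small},\geq 1}$ the paper instead sums over all $\pi$ compatible with a given $g$-graph and recognizes the result as a truncated correlation $\rho_{\rm t}^{(A_1,A_2)}$ (or simply $\rho^{(p)}$ when there is a single cluster). It then uses the identity $\rho_{\rm t}^{(A_1,A_2)}=\rho^{(|A_1|+|A_2|)}-\rho^{(|A_1|)}\rho^{(|A_2|)}$ together with the pointwise bounds on the reduced densities coming from their quadratic vanishing at coincidences (the $1$-dimensional analogues of \Cref{lem.bound.rho3}), e.g.\ $\rho^{(3)}(x_1,x_2,x_3)\leq C\rho^{7}|x_1-x_2|^{2}|x_1-x_3|^{2}$. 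This is what produces the $|x_i-x_j|^{2}$ factors automatically and converts each $g$-integration into $\int |g(x)|\,|x|^{2}\,\ud x\lesssim ab^{2}$ rather than $\int|g|\lesssim a\log(b/a)$.

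Your identification of the $N$-factor is also off. It does not come from a $4$-cycle inside the distinguished cluster, but from the type-$(B_2)$ diagrams: $\{1\},\{2\}$ share a cluster via one internal vertex $\{3\}$ (so $\nu^{*}=1$) and there is an additional internal two-vertex cluster $\{4,5\}$ (so $k=1$). Bounding $|\rho_{\rm t}^{(\{1,2,3\},\{4,5\})}|\leq C\rho^{11}|x_1-x_2|^{2}|x_1-x_3|^{2}|x_4-x_5|^{2}$ and integrating $x_3,x_4,x_5$ leaves one free translation of the internal cluster, i.e.\ a factor $L=N/\rho$, giving $CNa^{2}b^{4}\rho^{10}|x_1-x_2|^{2}=Cab^{2}\rho^{7}|x_1-x_2|^{2}\cdot Nab^{2}\rho^{3}$. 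Finally, your enumeration of the $3$-body diagrams slips: by definition of $\mcG_p^{3}$ there are no $g$-edges between external vertices; the relevant $g$-graphs have a single internal vertex joined to two (or all three) of the externals.
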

\noindent
We give the proof of \Cref{lem.small.diagrams.d=1} in \Cref{sec.small.diagrams.d=1}.
Combining \Cref{lem.small.diagrams.d=1} and \Cref{eqn.bound.xi.geq1.d=1,eqn.bound.partial.xi0.d=1}
concludes the proof of \Cref{lem.derivative.sum.linked.diagrams.d=1}.
\end{proof}

\subsubsection*{Acknowledgements}
A.B.L. would like to thank Johannes Agerskov and Jan Philip Solovej for valuable discussions.
We thank Alessandro Giuliani for helpful discussions and for pointing out the reference \cite{Giuliani.Mastropietro.ea.2021}. 
Funding from the European Union's Horizon 2020 research and innovation programme under
the ERC grant agreement No~694227 is acknowledged. 
Financial support by the Austrian Science Fund (FWF) through project number~I~6427-N (as part of the SFB/TRR~352) is gratefully acknowledged.


\appendix

\section{Small diagrams}\label{sec.small.diagrams}
In this appendix we compute the contributions of all the small diagrams of 
\Cref{lem.Taylor.small.diagrams,lem.bound.3.body.small.diagrams,lem.small.diagrams.type.C,lem.small.diagrams.d=1}. 
We first consider those of \Cref{lem.Taylor.small.diagrams,lem.small.diagrams.type.C}.
\subsection{Small \texorpdfstring{$2$}{2}-particle diagrams 
(proof of \texorpdfstring{\Cref{lem.Taylor.small.diagrams,lem.small.diagrams.type.C}}{Lemmas \ref*{lem.small.diagrams.type.C} and \ref*{lem.Taylor.small.diagrams}})}
\label{sec.small.diagrams.2.particle}
Recall from the proof of \Cref{lem.derivative.sum.linked.diagrams}, \Cref{sec.proof.subleading.2.body.diagrams} that 
\[
	\xi_{\textnormal{small},0} + \xi_{\textnormal{small},\geq 1}
	=
	\sum_{p=1}^\infty \frac{1}{p!} \sum_{\substack{(\pi,G)\in \mcL_p^2 \\ (\pi,G) \textnormal{ small}}} \Gamma_{\pi,G}^2.
\]
The criterion for being small is defined in the proof of \Cref{lem.derivative.sum.linked.diagrams} around \Cref{eqn.2.body.linked.diagrams.decompose},
and will be recalled below.
The diagrams are split into types $(A), (B)$ and $(C)$  according their underlying graphs $G$ as in the proof of \Cref{lem.derivative.sum.linked.diagrams}.
We further split the type $(B)$ into two types $(B_1)$ and $(B_2)$.
The diagrams of type $(B_1)$ are those diagrams for which the extra vertex $\{3\}$ 
in the distinguished clusters is in the cluster containing $\{1\}$, 
i.e. connected to $\{1\}$.
The diagrams of type $(B_2)$ are those diagrams for which the extra vertex $\{3\}$ 
is in the cluster containing $\{2\}$, i.e. connected to $\{2\}$. 
That is, the different types are as follows.
See also \Cref{fig.small.diagrams}.
\begin{enumerate}[$(A)$]
\item $\{1\}$ and $\{2\}$ in different clusters and $1\leq k \leq 4, \nu = 0, \nu^* = 0$,
\item $\{1\}$ and $\{2\}$ in different clusters and $0\leq k \leq 2, \nu = 0, \nu^* = 1$,
\begin{enumerate}
\item[$(B_1)$] and $n_* = 1, n_{**}=0$,
\item[$(B_2)$] and $n_* = 0, n_{**} = 1$,
\end{enumerate}
\item $\{1\}$ and $\{2\}$ in the same cluster and $0\leq k\leq 2, \nu = 0, \nu^* = 1$.
\end{enumerate} 
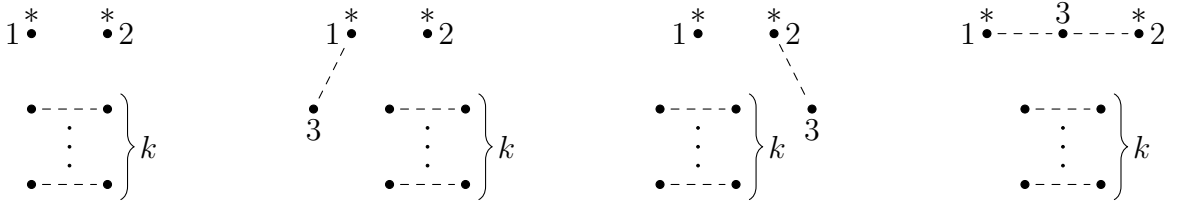
\begin{figure}[htb]
\centering
	\begin{subfigure}[t]{0.24\textwidth} 
		\centering
		\begin{tikzpicture}[line cap=round,line join=round,>=triangle 45,x=1.0cm,y=1.0cm]
		\node (1) at (0,0) {};
		\node (2) at (1,0) {};
		\node (3) at (0,-1) {};
		\node (4) at (1,-1) {};
		\node (5) at (0,-2) {};
		\node (6) at (1,-2) {};
		\draw[dashed] (3) -- (4);
		\draw[dashed] (5) -- (6);
		\foreach \i in {1,...,6} \draw[fill] (\i) circle [radius=1.5pt];
		\foreach \i in {1,...,3} \draw[fill] (0.5,-1 - \i/4) circle [radius=0.5pt];
		\draw[decoration={brace,raise=5pt,amplitude=5pt},decorate] (1,-0.8) -- node[right=6pt] {$\, k$} (1,-2.2);
		\foreach \i in {2} \node[anchor=west] at (\i) {$\i$};
    \foreach \i in {1} \node[anchor=east] at (\i) {$\i$};
    \foreach \i in {1,2} \node[anchor=south] at (\i) {$*$};
		\end{tikzpicture}
		\caption{Type $(A), 1\leq k\leq 4$}
	\end{subfigure}\hfill
	\begin{subfigure}[t]{0.25\textwidth} 
		\centering
		\begin{tikzpicture}[line cap=round,line join=round,>=triangle 45,x=1.0cm,y=1.0cm]
		\node (3) at (0.5,-1) {};
		\node (1) at (1,0) {};
		\node (2) at (2,0) {};
		\node (4) at (1.5,-1) {};
		\node (5) at (2.5,-1) {};
		\node (6) at (1.5,-2) {};
		\node (7) at (2.5,-2) {};
		\draw[dashed] (3) -- (1);
		\draw[dashed] (4) -- (5);
		\draw[dashed] (6) -- (7);
		\foreach \i in {1,...,7} \draw[fill] (\i) circle [radius=1.5pt];
		\foreach \i in {1,...,3} \draw[fill] (2,-1 - \i/4) circle [radius=0.5pt];
		\draw[decoration={brace,raise=5pt,amplitude=5pt},decorate] (2.5,-0.8) -- node[right=6pt] {$\, k$} (2.5,-2.2);
		\foreach \i in {2} \node[anchor=west] at (\i) {$\i$};
    \foreach \i in {1} \node[anchor=east] at (\i) {$\i$};
    \foreach \i in {1,2} \node[anchor=south] at (\i) {$*$};
		\node[anchor = north] at (3) {$3$};
		\end{tikzpicture}
		\caption{Type $(B_1), 0\leq k\leq 2$}
	\end{subfigure}\hfill
	\begin{subfigure}[t]{0.24\textwidth}
		\centering	
		\begin{tikzpicture}[line cap=round,line join=round,>=triangle 45,x=1.0cm,y=1.0cm]
		\node (1) at (0,0) {};
		\node (2) at (1,0) {};
		\node (3) at (1.5,-1) {};
		\node (4) at (-0.5,-1) {};
		\node (5) at (0.5,-1) {};
		\node (6) at (-0.5,-2) {};
		\node (7) at (0.5,-2) {};
		\draw[dashed] (2) -- (3);
		\draw[dashed] (4) -- (5);
		\draw[dashed] (6) -- (7);
		\foreach \i in {1,...,7} \draw[fill] (\i) circle [radius=1.5pt];
		\foreach \i in {1,...,3} \draw[fill] (0,-1 - \i/4) circle [radius=0.5pt];
		\draw[decoration={brace,raise=5pt,amplitude=5pt},decorate] (0.5,-0.8) -- node[right=6pt] {$\, k$} (0.5,-2.2);
		\foreach \i in {2} \node[anchor=west] at (\i) {$\i$};
    \foreach \i in {1} \node[anchor=east] at (\i) {$\i$};
    \foreach \i in {1,2} \node[anchor=south] at (\i) {$*$};
		\node[anchor = north] at (3) {$3$};
		\end{tikzpicture}
		\caption{Type $(B_2), 0\leq k\leq 2$}
	\end{subfigure}\hfill
	\begin{subfigure}[t]{0.24\textwidth}
		\centering	
		\begin{tikzpicture}[line cap=round,line join=round,>=triangle 45,x=1.0cm,y=1.0cm]
		\node (1) at (0,0) {};
		\node (2) at (2,0) {};
		\node (3) at (1,0) {};
		\node (4) at (0.5,-1) {};
		\node (5) at (1.5,-1) {};
		\node (6) at (0.5,-2) {};
		\node (7) at (1.5,-2) {};
		\draw[dashed] (4) -- (5);
		\draw[dashed] (6) -- (7);
		\draw[dashed] (1) -- (3) -- (2);
		\foreach \i in {1,...,7} \draw[fill] (\i) circle [radius=1.5pt];
		\foreach \i in {1,...,3} \draw[fill] (1,-1 - \i/4) circle [radius=0.5pt];
		\draw[decoration={brace,raise=5pt,amplitude=5pt},decorate] (1.5,-0.8) -- node[right=6pt] {$\, k$} (1.5,-2.2);
		\foreach \i in {2} \node[anchor=west] at (\i) {$\i$};
    \foreach \i in {1} \node[anchor=east] at (\i) {$\i$};
    \foreach \i in {3} \node[anchor=south] at (\i) {$\i$};
    \foreach \i in {1,2} \node[anchor=south] at (\i) {$*$};
		\end{tikzpicture}
		\caption{Type $(C), 0\leq k \leq 2$}
	\end{subfigure}
\caption{$g$-graphs of small diagrams of different types. 
For each diagram only the graph $G$ is drawn. 
The relevant diagrams come with permutations $\pi$ such that the diagrams are linked.}
\label{fig.small.diagrams}
\end{figure}

\noindent
We first give the
\begin{proof}[{Proof of \Cref{lem.small.diagrams.type.C}}]
Consider first all diagrams of type $(C)$ of smallest size, i.e. with $g$-graph 
\[
	\begin{tikzpicture}[line cap=round,line join=round,>=triangle 45,x=1.0cm,y=1.0cm]
		\node[anchor = east] at (-0.5,0) {$G_0=$};
		\node (1) at (0,0) {};
		\node (2) at (2,0) {};
		\node (3) at (1,0) {};
		\draw[dashed] (1) -- (3) -- (2);
		\foreach \i in {1,...,3} \draw[fill] (\i) circle [radius=1.5pt];
		\foreach \i in {2} \node[anchor=west] at (\i) {$\i$};
    \foreach \i in {1} \node[anchor=east] at (\i) {$\i$};
    \foreach \i in {1,2} \node[anchor=south] at (\i) {$*$};
		\end{tikzpicture}
\]
Since this graph is connected, all $\pi\in \mcS_3$ give rise to a linked diagram $(\pi,G_0)$.
By Wick's rule, the $\pi$-sum then gives the factor $\rho^{(3)}$.
That is,
\[
	\sum_{\pi \in \mcS_3 : (\pi, G_0)\in\mcL_1^2} \Gamma_{\pi,G_0}^2 
	= 
		\int 
		g_{13}g_{23}
		\sum_{\pi \in \mcS_3} 
		(-1)^\pi 
		\prod_{j=1}^{3} \gamma^{(1)}_N(x_j, x_{\pi(j)}) \ud x_{3}
	= 
	\int g_{13}g_{23} \rho^{(3)} \ud x_3.
\]
Recall the bound $\rho^{(3)}\leq C\rho^{3+4/3} |x_1-x_2|^2 |x_1-x_3|^2$ from \Cref{lem.bound.rho3}.
Now we bound $|g_{23}|\leq 1$.
Thus 
\[
	\abs{\sum_{\pi \in \mcS_3 : (\pi, G_0)\in\mcL_1^2} \Gamma_{\pi,G_0}^2}
	\leq 
  C\rho^{3+4/3} |x_1-x_2|^2 \int\left(1 - f(x)^2\right)|x|^2 \ud x.
\]
Recalling \Cref{prop.bound.f0.hc} we may bound 
\begin{equation}\label{eqn.bound.int.g.|x|2}
	\int\left(1 - f(x)^2\right)|x|^2 \ud x
	\leq Ca^5 + \frac{C}{(1-a^3/b^3)^2} \int_a^b \left[\left(1 - \frac{a^3}{b^3}\right)^2 - \left(1 - \frac{a^3}{r^3}\right)^2 \right] r^4 \ud r
	\leq C a^3 b^2.
\end{equation}
We conclude that all diagrams of smallest size contribute $\leq Ca^3 b^2 \rho^{3+4/3} |x_1-x_2|^2$.

For the larger diagrams, we consider an example diagram
\[
\begin{tikzpicture}[line cap=round,line join=round,>=triangle 45,x=1.0cm,y=1.0cm]
		\node[anchor = east] at (-0.5,-0.5) {$(\pi, G) = $};
		\node (1) at (0,0) {};
		\node (3) at (1,0) {};
		\node (2) at (2,0) {};
		\node (4) at (0.5,-1) {};
		\node (5) at (1.5,-1) {};
		\draw[dashed] (4) -- (5);
		\draw[dashed] (1) -- (3) -- (2);
		\draw[->] (1) to (4);
		\draw[->] (4) to (3);
		\draw[->] (3) to[bend right] (1);
		\draw[->] (2) to[bend right] (5);
		\draw[->] (5) to[bend right] (2);
		\foreach \i in {1,...,5} \draw[fill] (\i) circle [radius=1.5pt];
		\foreach \i in {2} \node[anchor=west] at (\i) {$\i$};
    \foreach \i in {1} \node[anchor=east] at (\i) {$\i$};
    \foreach \i in {3} \node[anchor=south] at (\i) {$\i$};
    \foreach \i in {1,2} \node[anchor=south] at (\i) {$*$};
		\end{tikzpicture}
\]
For this diagram we have 
\[
\begin{aligned}
	\Gamma_{\pi,G}^2
	& 
	= (-1)^{\pi} \iiint \gamma^{(1)}_N(x_1;x_4)\gamma^{(1)}_N(x_4;x_3)\gamma^{(1)}_N(x_3;x_1)\gamma^{(1)}_N(x_2;x_5)\gamma^{(1)}_N(x_5;x_2) 
		g_{13}g_{23}g_{45} \ud x_3 \ud x_4 \ud x_5
	\\ & 
	= \frac{-1}{L^{15}} \sum_{k_1,\ldots,k_5\in P_F}
		\iiint e^{ik_1(x_1-x_4)}e^{ik_2(x_4-x_3)}e^{ik_3(x_3-x_1)}e^{ik_4(x_2-x_5)}e^{ik_5(x_5-x_2)} g_{13}g_{23}g_{45} \ud x_3 \ud x_4 \ud x_5
	\\ & 
	= \frac{-1}{L^{15}} \sum_{k_1,\ldots,k_5\in P_F} e^{i(k_1-k_3)x_1}e^{i(k_4-k_5)x_2} 
		\int \ud x_3\, e^{i(k_3-k_2) x_3} g(x_1 - x_3) g(x_2-x_3)
	\\ & \quad 
		\times 
		\int \ud x_4\, \left[e^{i(k_2 - k_1 + k_5 - k_4)x_4} 
			\int \ud x_5\, e^{-i(k_5-k_4)(x_4-x_5)}g(x_4-x_5)\right]
	\\ & 
	= \frac{-1}{L^{12}} \sum_{k_1,\ldots,k_5\in P_F} e^{i(k_1-k_3)x_1}e^{i(k_4-k_5)x_2} 
		\int \ud x_3\, e^{i(k_3-k_2) x_3} g(x_1 - x_3) g(x_2-x_3)
	\\ & \quad 
		\times 
		\chi_{(k_2-k_1 = k_4-k_5)} \hat g(k_5-k_4),
\end{aligned}
\]
where $\hat g(k) := \int_\Lambda g(x) e^{-ikx} \ud x$.
Bounding $|g_{23}|\leq 1$ and $|\hat g(k)| \leq \int |g| \leq a^3 \log(b/a)$ we get that 
\[
	\abs{\Gamma_{\pi,G}^2} \leq C a^6 \rho^4 (\log(b/a))^2.
\]
One may do a similar computation for all the remaining diagrams.
By computing the integrations of the vertices in the internal clusters first, these give some factor $\hat g(k_i-k_j)$
and a factor $L^3\chi_{(k_i - k_j = k_{i'} - k_{j'})}$. 
By bounding as above we conclude that the contribution of small diagrams of type $(C)$ is bounded as desired.
\end{proof}

\begin{proof}[{Proof of \Cref{lem.Taylor.small.diagrams}}]
As with the larger diagrams of type $(C)$ we only give calculations for a few example diagrams and explain how
the calculation for the remaining diagrams are similar. We consider the examples in \Cref{fig.diagrams.type.A}.
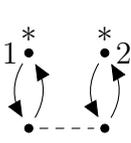
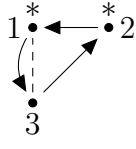
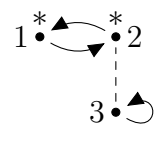
\begin{figure}[htb]
\centering
	\begin{subfigure}[t]{0.3\textwidth}
		\centering
		\begin{tikzpicture}[line cap=round,line join=round,>=triangle 45,x=1.0cm,y=1.0cm]
		\node (1) at (0,0) {};
		\node (2) at (1,0) {};
		\node (3) at (0,-1) {};
		\node (4) at (1,-1) {};
		\draw[dashed] (3) -- (4);
		\draw[->] (1) to[bend right] (3);
		\draw[->] (3) to[bend right] (1);
		\draw[->] (2) to[bend right] (4);
		\draw[->] (4) to[bend right] (2);
		\foreach \i in {1,...,4} \draw[fill] (\i) circle [radius=1.5pt];
		\foreach \i in {2} \node[anchor=west] at (\i) {$\i$};
    \foreach \i in {1} \node[anchor=east] at (\i) {$\i$};
    \foreach \i in {1,2} \node[anchor=south] at (\i) {$*$};
		\end{tikzpicture}
		\caption{Example of a type $(A)$ diagram of smallest size}\label{fig.diagram.A.smallest}
	\end{subfigure}\hfill
  \begin{subfigure}[t]{0.3\textwidth}
    \centering
    \begin{tikzpicture}[line cap=round,line join=round,>=triangle 45,x=1.0cm,y=1.0cm]
    \node (3) at (1,-1) {};
    \node (1) at (1,0) {};
    \node (2) at (2,0) {};
    \draw[dashed] (3) -- (1);
    \foreach \i in {1,...,3} \draw[fill] (\i) circle [radius=1.5pt];
    \foreach \i in {2} \node[anchor=west] at (\i) {$\i$};
    \foreach \i in {1} \node[anchor=east] at (\i) {$\i$};
    \foreach \i in {1,2} \node[anchor=south] at (\i) {$*$};
    \node[anchor = north] at (3) {$3$};
    \draw[->] (1) to[bend right] (3);
    \draw[->] (3) to (2);
    \draw[->] (2) to (1);
    \end{tikzpicture}
    \caption{Example of a type $(B_1)$ diagram of smallest size}\label{fig.diagram.B1.smallest}
  \end{subfigure}
  \hfill
	\begin{subfigure}[t]{0.3\textwidth}
		\centering
		\begin{tikzpicture}[line cap=round,line join=round,>=triangle 45,x=1.0cm,y=1.0cm]
		\node (3) at (2,-1) {};
		\node (1) at (1,0) {};
		\node (2) at (2,0) {};
		\draw[dashed] (3) -- (2);
		\draw[->] (1) to[bend right] (2);
		\draw[->] (2) to[bend right] (1);
		\draw[->] (3) to[out=-30,in=30,loop] (3);
		\foreach \i in {1,...,3} \draw[fill] (\i) circle [radius=1.5pt];
		\foreach \i in {2} \node[anchor=west] at (\i) {$\i$};
    \foreach \i in {1} \node[anchor=east] at (\i) {$\i$};
    \foreach \i in {1,2} \node[anchor=south] at (\i) {$*$};
		\node[anchor = east] at (3) {$3$};
	\end{tikzpicture}
	\caption{Example of a type $(B_2)$ diagram of smallest size}\label{fig.diagram.B2.smallest}
	\end{subfigure}
\caption{Exemplary diagrams of types $(A),(B_1)$ and $(B_2)$. 
The dashed lines denote $g$-edges, and the arrows denote (directed) edges of the permutation.}
\label{fig.diagrams.type.A}
\end{figure}

The contribution of the diagram in \Cref{fig.diagram.A.smallest} to 
$\partial^\mu_{x_1}\partial^\nu_{x_1} \xi_{\textnormal{small}}$ is
\[
\begin{aligned}
	& \frac{1}{2}\partial^\mu_{x_1}\partial^\nu_{x_1} \Gamma_{\pi,G}^2
	\\
	& = 
		\frac{-1}{2L^{12}} \sum_{k_1,\ldots,k_4\in P_F} (k_1^\mu - k_2^\mu)(k_1^\nu - k_2^\nu) \iint 
		e^{ik_1(x_1-x_3)}e^{ik_2(x_3-x_1)}e^{ik_3(x_2-x_4)}e^{ik_4(x_4-x_2)} g_{34} \ud x_3 \ud x_4 
	\\
	& = 
		\frac{-1}{2L^{12}} \sum_{k_1,\ldots,k_4\in P_F} (k_1^\mu - k_2^\mu)(k_1^\nu - k_2^\nu)
		e^{i(k_1-k_2)x_1}e^{i(k_3-k_4)x_2}
	\\
	& \quad \times \iint e^{-i(k_4-k_3)(x_3-x_4)} e^{i(k_2-k_1+k_4-k_3)x_3} g(x_3-x_4) \ud x_3 \ud x_4
	\\
	& = 
		\frac{-1}{2L^{9}} \sum_{k_1,\ldots,k_4\in P_F} (k_1^\mu - k_2^\mu)(k_1^\nu - k_2^\nu)
		e^{i(k_1-k_2)x_1}
		e^{i(k_3-k_4)x_2}
		\chi_{(k_2-k_1=k_3-k_4)} \hat g(k_4-k_3)
	\\
	& = 
		O(\rho^{3+2/3} a^3 \log (b/a))
\end{aligned}
\]
using that $\hat g(k) = \int_\Lambda g(x) e^{-ikx} \ud x$ satisfies $|\hat g(k)|\leq \int |g| \leq C a^3 \log (b/a)$.
The same type of computation is valid for all other diagrams of type $(A)$.

Consider now the diagram in \Cref{fig.diagram.B2.smallest} of type $(B_2)$.
This contributes 
\[
\begin{aligned}
&
	\partial^\mu_{x_1}\partial^{\nu}_{x_1}
	\frac{-1}{L^9} \sum_{k_1,k_2,k_3\in P_F}
	\int e^{ik_1(x_1-x_2)}e^{ik_2(x_2-x_1)}g(x_2-x_3) \ud x_3
\\
&	
	= \frac{1}{L^9} \sum_{k_1,k_2,k_3\in P_F} 
		(k_1^\mu - k_2^\mu)(k_1^\nu - k_2^\nu)
		e^{i(k_1-k_2)x_1}
		e^{i(k_2-k_1)x_2} 
		\int 
		g(x_2-x_3)
		\ud x_3
\\
& 
	= O(\rho^{3+2/3} a^3 \log(b/a))
\end{aligned}
\]
exactly as for type $(A)$.
Similarly, all other diagrams of type $(B_2)$ may be bounded using the same method as for types $(A)$.

Finally, we consider the diagram in \Cref{fig.diagram.B1.smallest} of type $(B_1)$. 
Here we have 
\[
\begin{aligned}
	\partial^\mu_{x_1}\partial^{\nu}_{x_1}\Gamma_{\pi,G}^2
& =
	\partial^\mu_{x_1}\partial^{\nu}_{x_1}
	\frac{1}{L^9} \sum_{k_1,k_2,k_3\in P_F}
	\int e^{ik_1(x_1-x_3)}e^{ik_2(x_3-x_2)}e^{ik_3(x_2-x_1)}g(x_1-x_3) \ud x_3
\\
&	
  = \partial^\mu_{x_1}\partial^{\nu}_{x_1}
  \frac{1}{L^9} \sum_{k_1,k_2,k_3\in P_F}
  e^{i(k_2-k_3)x_1} e^{i(k_3-k_2)x_2}
  \int e^{-i(k_2 - k_1)(x_1-x_3)}g(x_1-x_3) \ud x_3
\\ & 
  = 
  \frac{-1}{L^9} \sum_{k_1,k_2,k_3\in P_F}
  (k_2^\mu - k_3^\mu)(k_2^\nu - k_3^\nu)
  e^{i(k_2-k_3)x_1} e^{i(k_3-k_2)x_2}
  \hat g(k_2-k_1)
\\ & = O(\rho^{3+2/3} a^3\log(b/a)).
\end{aligned}
\]
All larger diagrams of type $(B_1)$ 
may be bounded similarly.
We conclude the desired.
\end{proof}


\subsection{Small \texorpdfstring{$3$}{3}-particle diagrams 
(proof of \texorpdfstring{\Cref{lem.bound.3.body.small.diagrams}}{Lemma \ref*{lem.bound.3.body.small.diagrams}})}
\label{sec.small.diagrams.3.particle}
We now give the 
\begin{proof}[{Proof of \Cref{lem.bound.3.body.small.diagrams}}]
Recall that 
\[
	\xi_{\textnormal{small}}^3
	=
	\sum_{p=1}^\infty \frac{1}{p!} \sum_{\substack{(\pi,G)\in \tilde\mcL_p^3 \\ (\pi,G) \textnormal{ small}}} \Gamma_{\pi,G}^3,
\]
where \lq\lq small\rq\rq\  refers to diagrams with $G$-graph 
\[
\begin{tikzpicture}[line cap=round,line join=round,>=triangle 45,x=1.0cm,y=1.0cm]
		\node[anchor=east] at (-0.5,-1) {$G = $};
		\node[anchor=west] at (3,-1) {$k=1,2$};
		\node (1) at (0,0) {};
		\node (2) at (1,0) {};
		\node (3) at (2,0) {};
		\node (4) at (0.5,-1) {};
		\node (5) at (1.5,-1) {};
		\node (6) at (0.5,-2) {};
		\node (7) at (1.5,-2) {};
		\draw[dashed] (4) -- (5);
		\draw[dashed] (6) -- (7);
		\foreach \i in {1,...,7} \draw[fill] (\i) circle [radius=1.5pt];
		\foreach \i in {1,...,3} \draw[fill] (1,-1 - \i/4) circle [radius=0.5pt];
		\draw[decoration={brace,raise=5pt,amplitude=5pt},decorate] (1.5,-0.8) -- node[right=6pt] {$\, k$} (1.5,-2.2);
		\foreach \i in {1,2,3} \node[anchor=north] at (\i) {$\i$};
    \foreach \i in {1,2,3} \node[anchor=south] at (\i) {$*$};
		\end{tikzpicture}	
\]
and permutation $\pi$ such that $(\pi,G)$ has at most two linked components,
both of which contain at least one external vertex.
As in the proof of \Cref{lem.Taylor.small.diagrams,lem.small.diagrams.type.C} in \Cref{sec.small.diagrams.2.particle}
we compute the value of a few examples and explain how to compute the value of the remaining diagrams.
We consider the examples of \Cref{fig.3.body.small.diagrams}
\begin{figure}[htb]
\centering
	\begin{subfigure}[t]{0.4\textwidth} 
		\centering
		\begin{tikzpicture}[line cap=round,line join=round,>=triangle 45,x=1.0cm,y=1.0cm]
		\node (1) at (0,0) {};
		\node (2) at (1,0) {};
		\node (3) at (2,0) {};
		\node (4) at (0.5,-1) {};
		\node (5) at (1.5,-1) {};
		\draw[dashed] (4) -- (5);
		\draw[->] (1) to (4);
		\draw[->] (4) to (2);
		\draw[->] (2) to (1);
		\draw[->] (3) to[bend right] (5);
		\draw[->] (5) to[bend right] (3);
		\foreach \i in {1,...,5} \draw[fill] (\i) circle [radius=1.5pt];
		\foreach \i in {1} \node[anchor=east] at (\i) {$\i$};
    \foreach \i in {2,3} \node[anchor=west] at (\i) {$\i$};
    \foreach \i in {1,2,3} \node[anchor=south] at (\i) {$*$};
		\end{tikzpicture}
		\caption{Example of a diagram of smallest size with one linked component}\label{fig.3.body.small.diagram.linked}
	\end{subfigure}\hspace{2em}
	\begin{subfigure}[t]{0.4\textwidth} 
		\centering
		\begin{tikzpicture}[line cap=round,line join=round,>=triangle 45,x=1.0cm,y=1.0cm]
		\node (1) at (0,0) {};
		\node (2) at (1,0) {};
		\node (3) at (2,0) {};
		\node (4) at (0.5,-1) {};
		\node (5) at (1.5,-1) {};
		\draw[dashed] (4) -- (5);
		\draw[->] (1) to (4);
		\draw[->] (4) to[bend right] (5);
		\draw[->] (5) to (1);
		\draw[->] (2) to[bend right] (3);
		\draw[->] (3) to[bend right] (2);
		\foreach \i in {1,...,5} \draw[fill] (\i) circle [radius=1.5pt];
		\foreach \i in {1,2} \node[anchor=east] at (\i) {$\i$};
    \foreach \i in {3} \node[anchor=west] at (\i) {$\i$};
    \foreach \i in {1,2,3} \node[anchor=south] at (\i) {$*$};
		\end{tikzpicture}
		\caption{Example of a diagram of smallest size with two linked components}\label{fig.3.body.small.diagram.not.linked}
	\end{subfigure}
\caption{Exemplary small diagrams in $\tilde\mcL_2^3$}
\label{fig.3.body.small.diagrams}
\end{figure}
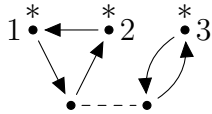
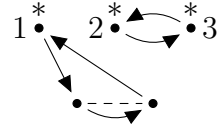

The contribution of the diagram in \Cref{fig.3.body.small.diagram.linked} is 
\[
\begin{aligned}
\Gamma_{\pi, G}^3
	& = \frac{-1}{L^{15}} \sum_{k_1,\ldots,k_5\in P_F} \iint 
		e^{ik_1(x_1-x_4)}e^{ik_2(x_4-x_2)}e^{ik_3(x_2-x_1)} e^{ik_4(x_3-x_5)}e^{ik_5(x_5-x_3)} g_{45} \ud x_4 \ud x_5 
	\\
	& = \frac{-1}{L^{12}}\sum_{k_1,\ldots,k_5\in P_F} e^{i(k_1-k_3)x_1} e^{i(k_3-k_2)x_2} e^{i(k_4-k_5)x_3}
		\chi_{(k_2-k_1 = k_4-k_5)} \hat g(k_5-k_4)
	\\
	& = O(a^3\rho^4 \log(b/a)).
\end{aligned}
\]
Similarly, the contribution of the diagram in \Cref{fig.3.body.small.diagram.not.linked} is 
\[	
\begin{aligned}
\Gamma_{\pi,G}^3 
	& = \frac{-1}{L^{15}} \sum_{k_1,\ldots,k_5\in P_F} \iint 
		e^{ik_1(x_1-x_4)}e^{ik_2(x_4-x_5)}e^{ik_3(x_5-x_1)} e^{ik_4(x_2-x_3)}e^{ik_5(x_3-x_2)} g_{45} \ud x_4 \ud x_5 
	\\
	& = \frac{-1}{L^{12}}\sum_{k_1,\ldots,k_5\in P_F} e^{i(k_1-k_3)x_1} e^{i(k_4-k_5)x_2} e^{i(k_5-k_4)x_3}
		\chi_{(k_2-k_1 = k_2-k_3)} \hat g(k_3-k_2)
	\\
	& = O(a^3\rho^4 \log(b/a)).
\end{aligned}
\]
One may follow this kind of computation for any diagram. 
The central property we used is that the internal vertices are all 
in the same linked component as some external vertex. 
This means that the integrals 
over internal vertices either gives a factor of $\hat g(k_i - k_j)$ or a factor of $L^3\chi_{(k_i - k_j = k_{i'} - k_{j'})}$. 
We conclude the desired.
\end{proof}

\subsection{Small diagrams in \texorpdfstring{$1$}{1} dimension 
	(proof of \texorpdfstring{\Cref{lem.small.diagrams.d=1}}{Lemma \ref*{lem.small.diagrams.d=1}})}
\label{sec.small.diagrams.d=1}
We now give the 
\begin{proof}[{Proof of \Cref{lem.small.diagrams.d=1}}]
We first give the proof of \Cref{eqn.2body.small.diagrams.d=1}.
We split the two cases $(A)$ and $(B)$ of small diagrams further.
They are given as follows. 
\begin{enumerate}[$(A)$]
\item $\{1\}$ and $\{2\}$ in different clusters and $k=0, \nu^* = 2$,
\begin{enumerate}
\item[$(A_1)$] $n_* = 2, n_{**} = 0$ (or $n_* = 0, n_{**} = 2$),
\item[$(A_2)$] $n_* = 1, n_{**} = 1$.
\end{enumerate}
\item $\{1\}$ and $\{2\}$ in the same cluster and $1\leq k + \nu + \nu^* \leq 2$,
\begin{enumerate}
\item[$(B_1)$] $k=0$,
\item[$(B_2)$] $k=1$.
\end{enumerate}
\end{enumerate} 
See also \Cref{fig.small.2body.diagrams.d=1}.
\begin{figure}[htb]
\centering
	\begin{subfigure}[t]{0.23\textwidth} 
		\centering
		\begin{tikzpicture}[line cap=round,line join=round,>=triangle 45,x=1.0cm,y=1.0cm]
		\node (1) at (0,0) {};
		\node (2) at (1,0) {};
		\node (3) at (-0.5,-1) {};
		\node (4) at (0.5,-1) {};
		\draw[dashed] (1) -- (3) -- (4) -- (1);
		\foreach \i in {1,...,4} \draw[fill] (\i) circle [radius=1.5pt];
		\foreach \i in {1} \node[anchor=east] at (\i) {$\i$};
    \foreach \i in {2} \node[anchor=west] at (\i) {$\i$};
    \foreach \i in {1,2} \node[anchor=south] at (\i) {$*$};
		\end{tikzpicture}
		\caption{Type $(A_1)$}
		\label{fig.small.2body.diagrams.d=1.type.A1}
	\end{subfigure}\hfill
	\begin{subfigure}[t]{0.23\textwidth} 
		\centering
		\begin{tikzpicture}[line cap=round,line join=round,>=triangle 45,x=1.0cm,y=1.0cm]
		\node (1) at (0,0) {};
		\node (2) at (1,0) {};
		\node (3) at (0,-1) {};
		\node (4) at (1,-1) {};
		\draw[dashed] (1) -- (3);
		\draw[dashed] (2) -- (4);
		\foreach \i in {1,...,4} \draw[fill] (\i) circle [radius=1.5pt];
		\foreach \i in {1} \node[anchor=east] at (\i) {$\i$};
    \foreach \i in {2} \node[anchor=west] at (\i) {$\i$};
    \foreach \i in {1,2} \node[anchor=south] at (\i) {$*$};
		\end{tikzpicture}
		\caption{Type $(A_2)$}
		\label{fig.small.2body.diagrams.d=1.type.A2}
	\end{subfigure}\hfill
	\begin{subfigure}[t]{0.23\textwidth}
		\centering	
		\begin{tikzpicture}[line cap=round,line join=round,>=triangle 45,x=1.0cm,y=1.0cm]
		\node (1) at (0,0) {};
		\node (2) at (2,0) {};
		\node (3) at (1,0.5) {};
		\node (4) at (1,-0.5) {};
		\draw[dashed] (1) -- (3) -- (2) -- (4) -- (1);
		\draw[dashed] (3) -- (4);
		\foreach \i in {1,...,3} \draw[fill] (\i) circle [radius=1.5pt];
		\draw (4) circle [radius=1.5pt];
		\foreach \i in {1} \node[anchor=east] at (\i) {$\i$};
    \foreach \i in {2} \node[anchor=west] at (\i) {$\i$};
    \foreach \i in {1,2} \node[anchor=south] at (\i) {$*$};
		\end{tikzpicture}
		\caption{Type $(B_1)$}
		\label{fig.small.2body.diagrams.d=1.type.B1}
	\end{subfigure}\hfill
	\begin{subfigure}[t]{0.23\textwidth}
		\centering	
		\begin{tikzpicture}[line cap=round,line join=round,>=triangle 45,x=1.0cm,y=1.0cm]
		\node (1) at (0,0) {};
		\node (2) at (2,0) {};
		\node (3) at (1,0) {};
		\node (4) at (0.5,-1) {};
		\node (5) at (1.5,-1) {};
		\draw[dashed] (4) -- (5);
		\draw[dashed] (1) -- (3) -- (2);
		\foreach \i in {1,...,5} \draw[fill] (\i) circle [radius=1.5pt];
		\foreach \i in {1} \node[anchor=east] at (\i) {$\i$};
    \foreach \i in {2} \node[anchor=west] at (\i) {$\i$};
    \foreach \i in {1,2} \node[anchor=south] at (\i) {$*$};
		\end{tikzpicture}
		\caption{Type $(B_2)$}
		\label{fig.small.2body.diagrams.d=1.type.B2}
	\end{subfigure}
\caption{$g$-graphs of small diagrams of different types. For each diagram only the graph $G$ is drawn. 
The relevant diagrams come with permutations $\pi$ such the the diagrams are linked.
The diagrams of type $(A_1)$ and $(B_1)$ may have some of the drawn $g$-edges not present, but the same connected components.
Moreover, the diagrams of type $(B_1)$ may have one of the internal vertices drawn not present
(indicated by a $\circ$).
With the modification of the drawings described here these are all small diagrams.}
\label{fig.small.2body.diagrams.d=1}
\end{figure}
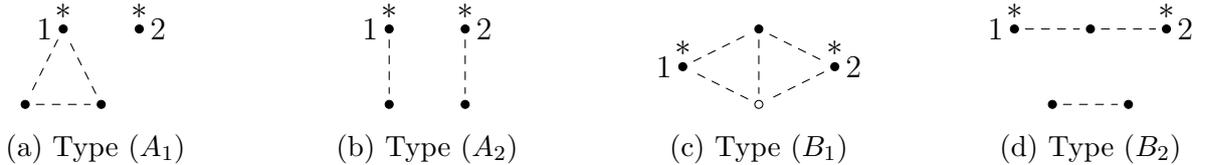

We will consider some examples of diagrams. 
Namely those drawn in \Cref{fig.small.2body.diagrams.d=1} (but not modified as described in the caption), 
except for the diagram of type $(B_1)$, where 
we will consider diagrams of smallest size, with $g$-graph 
\begin{equation}\label{eqn.fig.small.2.body.diagram.type.B1}
\begin{tikzpicture}[line cap=round,line join=round,>=triangle 45,x=1.0cm,y=1.0cm]
		\node (1) at (0,0) {};
		\node (2) at (2,0) {};
		\node (3) at (1,0) {};
		\draw[dashed] (1) -- (3) -- (2);
		\foreach \i in {1,...,3} \draw[fill] (\i) circle [radius=1.5pt];
		\foreach \i in {1} \node[anchor=east] at (\i) {$\i$};
    \foreach \i in {2} \node[anchor=west] at (\i) {$\i$};
    \foreach \i in {1,2} \node[anchor=south] at (\i) {$*$};
		\node[anchor = east] at (-0.5,0) {$G_0 = $};
		\end{tikzpicture}
\end{equation}
All other diagrams can be treated in a similar fashion.
For the argument we will need a different formula for $\rho_{\rm t}$.
Recall the definition in \Cref{eqn.define.truncated.correlation}.
We may write the characteristic function as 
\[
	\chi( (\pi,\cup G_\ell)\textnormal{ linked}) = 1 - \chi( (\pi,\cup G_\ell) \textnormal{ not linked}).
\]
That is, 
\begin{equation*}
\rho_{\rm t}^{(A_1,\ldots,A_k)} = \sum_{\pi \in \mcS_{\cup A_\ell}}
			(-1)^\pi 
			\prod_{j\in \cup A_\ell} \gamma^{(1)}_N(x_j; x_{\pi(j)})
			- \sum_{\pi \in \mcS_{\cup A_\ell}}
			(-1)^\pi 
			\chi_{\left((\pi, \cup G_\ell) \textnormal{ not linked}\right)}
			\prod_{j\in \cup A_\ell} \gamma^{(1)}_N(x_j; x_{\pi(j)}).
\end{equation*}
For our case we only need to consider cases where there are at most two clusters. 
If there is just one cluster then $\rho_{\rm t}^{(A)} = \rho^{(|A|)}( (x_j)_{j\in A})$. 
So suppose we have two clusters $A_1, A_2$.
Here, all the  $\pi$'s for which $(\pi, \cup G_\ell)$ is not linked are exactly those arising as products $\pi = \pi_1\pi_2$,
where $\pi_1 \in \mcS_{A_1}$ and $\pi_2 \in \mcS_{A_1}$ are permutations of the vertices in the 2 clusters. 
Thus,
\begin{equation}\label{eqn.truncated.2.clusters}
\begin{aligned}
\rho_{\rm t}^{(A_1,A_2)}( (x_j)_{j\in A_1\cup A_2})
	& = \sum_{\pi \in \mcS_{A_1\cup A_2}}
			(-1)^\pi 
			\prod_{j\in A_1\cup A_2} \gamma^{(1)}_N(x_{j}; x_{\pi(j)})
	\\ & \quad 
			- 
			\sum_{\pi_1 \in \mcS_{A_1}}
			(-1)^{\pi_1}
			\prod_{j \in A_1} \gamma^{(1)}_N(x_{j}; x_{\pi_1(j)})
			\sum_{\pi_2 \in \mcS_{A_2}}
			(-1)^{\pi_2}			
			\prod_{j\in A_2} \gamma^{(1)}_N(x_{j}; x_{\pi_2(j)})
	\\
	& = \rho^{(|A_1| + |A_2|)}( (x_j)_{j\in A_1\cup A_2})
		- \rho^{(|A_1|)}( (x_j)_{j\in A_1})
			\rho^{(|A_2|)}((x_j)_{j\in A_2}).
\end{aligned}
\end{equation}

\noindent
We now consider the diagrams in 
\Cref{fig.small.2body.diagrams.d=1.type.A1,fig.small.2body.diagrams.d=1.type.A2,fig.small.2body.diagrams.d=1.type.B2} 
and \eqref{eqn.fig.small.2.body.diagram.type.B1}. 
We get 
\begingroup
\allowdisplaybreaks
\begin{equation}\label{eqn.eval.small.1d.diagrams}
\begin{aligned}
\textnormal{Type $(A_1):$} && 
	\sum_{\substack{\pi \in \mcS_4 : (\pi, G_0)\in\mcL_2^2}} \Gamma_{\pi,G_0}^2 
	& = 
		\iint 
		g_{13}g_{14}g_{34}
		\rho_{\rm t}^{(\{1,3,4\},\{2\})}
		\ud x_{3} \ud x_4,
\\
\textnormal{Type $(A_2):$} &&
	\sum_{\substack{\pi \in \mcS_4 : (\pi, G_0)\in\mcL_2^2}} \Gamma_{\pi,G_0}^2 
	& = 
		\iint 
		g_{13}g_{24}
		\rho_{\rm t}^{(\{1,3\}, \{2,4\})}
		\ud x_{3} \ud x_4,
\\
\textnormal{Type $(B_1):$} &&
	\sum_{\substack{\pi \in \mcS_3 : (\pi, G_0)\in\mcL_1^2}} \Gamma_{\pi,G_0}^2 
	& = 
		\int 
		g_{13}g_{23}
		\rho^{(3)}
		\ud x_{3},
\\
\textnormal{Type $(B_2):$} &&
	\sum_{\substack{\pi \in \mcS_5 : (\pi, G_0)\in\mcL_3^2}} \Gamma_{\pi,G_0}^2 
	& = 
		\iiint 
		g_{13}g_{23} g_{45}
		\rho^{(\{1,2,3\},\{4,5\})}_{\rm t}
		\ud x_{3} \ud x_4 \ud x_5.
\end{aligned}
\end{equation}
\endgroup
Using \Cref{eqn.truncated.2.clusters} and (the $1$-dimensional versions of) \Cref{lem.bound.rho3,prop.2.density} 
and similar bounds for the $4$- and $5$-particle reduced densities 
we get the  bounds on the truncated correlations
\begin{align*}
(A_1)
&& \abs{\rho_{\rm t}^{(\{1,3,4\},\{2\})}
} 
& \leq 
\rho^{(4)}(x_1,\ldots,x_4) + \rho^{(3)}(x_1,x_3,x_4) \rho^{(1)}(x_2)
\nonumber
\\ 
&&& \leq 
C \rho^{8} |x_1-x_3|^2 |x_1-x_4|^2,
\\
(A_2)
&&
\abs{\rho_{\rm t}^{(\{1,3\},\{2,4\})}
}
& \leq \rho^{(4)}(x_1,\ldots,x_4) + \rho^{(2)}(x_1,x_3)\rho^{(2)}(x_2,x_4)
\nonumber
\\ &&& 
\leq C \rho^8 |x_1-x_3|^2 |x_2-x_4|^2,
\\
(B_1)
&&
\rho^{(3)}
& \leq C\rho^7 |x_1-x_2|^2 |x_1-x_3|^2,
\\
(B_2)
&&
\abs{\rho^{(\{1,2,3\},\{4,5\})}_{\rm t}
}
& \leq \rho^{(5)}(x_1,\ldots,x_5) + \rho^{(3)}(x_1,x_2,x_3) \rho^{(2)}(x_4,x_5)
\\ &&&
\leq C\rho^{11} |x_1-x_2|^2 |x_1-x_3|^2 |x_4-x_5|^2.
\end{align*}
Bounding moreover, $g_{34}\leq 1$ for the diagram of type $(A_1)$ 
we thus get by the translation invariance
\[
\abs{\sum_{\substack{\pi \in \mcS_4 : (\pi, G_0)\in\mcL_2^2 \\ \textnormal{type $(A_1)$}}} \Gamma_{\pi,G_0}^2 }
  \leq C \rho^{8} \left(\int |g(x)| |x|^2 \ud x\right)^2.
\]
For the diagram of type $(A_2)$ we get 
\[
\abs{\sum_{\substack{\pi \in \mcS_4 : (\pi, G_0)\in\mcL_2^2 \\ \textnormal{type $(A_2)$}}} \Gamma_{\pi,G_0}^2 }
	\leq C \rho^{8} \left(\int |g(x)| |x|^2 \ud x\right)^2.
\]
For the diagram of type $(B_1)$ we get by bounding
$g_{23}\leq 1$ 
(as in the proof of \Cref{lem.small.diagrams.type.C})
\[
\abs{\sum_{\substack{\pi \in \mcS_3 : (\pi, G_0)\in\mcL_1^2 \\ \textnormal{type $(B_1)$}}} \Gamma_{\pi,G_0}^2 }
  \leq C \rho^{7} |x_1-x_2|^2 \int |g(x)| |x|^2 \ud x.
\]
Finally, for the diagram of type $(B_2)$ we get in the same way 
\[
\abs{\sum_{\substack{\pi \in \mcS_5 : (\pi, G_0)\in\mcL_3^2 \\ \textnormal{type $(B_2)$}}} \Gamma_{\pi,G_0}^2 }
  \leq C N \rho^{10} |x_1-x_2|^2 \left(\int |g(x)| |x|^2 \ud x\right)^2.
\]
We may bound $\int |x|^2 |g| \ud x$ similarly as in $3$ and $2$ dimensions, 
\[
	\int_{\R}\left(1 - f(x)^2\right)|x|^2 \ud x
	\leq Ca^3 + \frac{C}{(1-a/b)^2} \int_a^b \left[\left(1 - \frac{a}{b}\right)^2 - \left(1 - \frac{a}{r}\right)^2 \right] r^2 \ud r
	\leq C a b^2.
\]
The other diagrams of types $(A_1)$ and $(B_1)$ (there are no other diagrams of type $(A_2)$ or $(B_2)$) we may treat  similarly
by bounding some of the $g$-edges by $|g|\leq 1$.
Combining these bounds we conclude the proof of \Cref{eqn.2body.small.diagrams.d=1}.

To  prove \Cref{eqn.3body.small.diagrams.d=1} we recall that we consider all diagrams with $g$-graph 
\[
	\begin{tikzpicture}[line cap=round,line join=round,>=triangle 45,x=1.0cm,y=1.0cm]
	\node[anchor = east] at (-0.5,0) {$G_0=$};
	\node (1) at (0,0) {};
	\node (2) at (2,0) {};
	\node (4) at (1,0) {};
	\node (3) at (3,0) {};
	\draw[dashed] (1) -- (4) -- (2);
	\foreach \i in {1,...,4} \draw[fill] (\i) circle [radius=1.5pt];
	\foreach \i in {1,2,3} \node[anchor=north] at (\i) {$\i$};
  \foreach \i in {1,2,3} \node[anchor=south] at (\i) {$*$};
  \node at (5,0) {or};
  \node[anchor = east] at (7.5,0) {$G_1=$};
  \node (n1) at (8,-0.5) {};
  \node (n2) at (8,0.5) {};
  \node (n4) at (8.85,0) {};
  \node (n3) at (9.85,0) {};
  \draw[dashed] (n1) -- (n4) -- (n2);
  \draw[dashed] (n3) -- (n4);
  \foreach \i in {1,...,4} \draw[fill] (n\i) circle [radius=1.5pt];
  \foreach \i in {1,2,3} \node[anchor=north] at (n\i) {$\i$};
  \foreach \i in {1,2,3} \node[anchor=south] at (n\i) {$*$};
	\end{tikzpicture}
\]
(and graphs that look like $G_0$ where $\{1,2,3\}$ are permuted).
One may treat this similarly as the diagrams above, with the result that 
\[
\begin{aligned}
\abs{\sum_{\substack{\pi \in \mcS_4 : (\pi, G_0)\in\mcL_1^3}} \Gamma_{\pi,G_0}^3 }
	& \leq 
		\int |g_{14}||g_{24}| \abs{\rho_{\rm t}^{(\{1,2,4\},\{3\})}
    } \ud x_4
  \leq C a b^2 \rho^{8} |x_1-x_2|^2
\\
\abs{\sum_{\substack{\pi \in \mcS_4 : (\pi, G_1)\in\mcL_1^3}} \Gamma_{\pi,G_1}^3 }
  & \leq 
    \int |g_{14}||g_{24}||g_{34}| \rho^{(4)} \ud x_4
  \leq C a b^2 \rho^{8} |x_1-x_2|^2.
\end{aligned}
\]
Summing this over all the permutations of $\{1,2,3\}$ we conclude the proof of \Cref{eqn.3body.small.diagrams.d=1}.
\end{proof}

\section{Derivative Lebesgue constants 
(proof of Lemma~\texorpdfstring{\ref{lem.derivative.lebesgue.constant}}{\ref*{lem.derivative.lebesgue.constant}})}
\label{sec.derivative.lebesgue.constant}
In this appendix we give the proof of \Cref{lem.derivative.lebesgue.constant}. 
We recall the statement in slightly different notation for convenience.
\begin{replemma}{lem.derivative.lebesgue.constant}
The polyhedron $P$ from \Cref{defn.P_F.true} satisfies for any $\mu,\nu=1,2,3$ that 
\[
  	\int_{[0,2\pi]^3} \abs{\sum_{k \in RP\cap \Z^3} k^\mu e^{ikx}} \ud x \leq C s R (\log R)^3,
  	\quad 
	\int_{[0,2\pi]^3} \abs{\sum_{k \in RP\cap \Z^3} k^\mu k^\nu e^{ikx}} \ud x \leq C s R^{2} (\log R)^4
\]
for sufficiently large $R= \frac{Lk_F}{2\pi}$.
\end{replemma}


\noindent
Recall that by construction $R\sim N^{1/3}$ is rational.

The proof follows quite closely the argument in \cite{Kolomoitsev.Lomako.2018}. 
In particular the structure is that  of induction. The $3$-dimensional integral is bounded one dimension at a time.
We start by introducing some notation from \cite{Kolomoitsev.Lomako.2018}.
\begin{notation}\label{notation.[]}
For any real number $x$ we will write $[x]$ for either $\lfloor x \rfloor$ or $\lceil x \rceil$. 
Similarly we will write $\expect{x} = x - [x]$, i.e. $\expect{x}$ is either the fractional part $\{ x\} = x - \lfloor x \rfloor$ or $x - \lceil x \rceil$.
For any computation we do below, the definition of $[x]$ is fixed, but the computations hold with either choice.

Additionally for a $d$-dimensional vector $x = (x^1, \ldots, x^d)$ we write 
$x^{(\tilde d)} = (x^1, \ldots, x^{\tilde d})$ for the first $\tilde d \leq d$ components.

We emphasize that expressions like $k^2,x^3,\ldots$ do not denote squares or cubes of numbers $k,x$, but instead refer to coordinates 
of vectors $k,x$. 
The instances where we do want to denote a square, cube or higher power should be clear.
\end{notation}

\noindent
By potentially relabelling the coordinates 
it suffices to consider the cases $\mu=1$,  $\mu = \nu = 1$ and $\mu = 1, \nu = 2$.
(Alternatively, by appealing to \Cref{lem.preserve.symmetry} and choosing $Q \gtrsim N^{4}$ in \Cref{defn.P_F.true}
we have a symmetry of coordinates up to error-terms which are subleading compared to \Cref{lem.derivative.lebesgue.constant}.)
Hence define 
\[
	t_1(k) = k^1,
	\qquad 
	t_2(k) = k^1k^1 = (k^1)^2, 
	\qquad 
	t_3(k) = k^1k^2. 
\]
We want to show that
\[
\int_{[0,2\pi]^3} \abs{\sum_{k \in RP\cap \Z^3} t_j(k) e^{ikx}} \ud x 
\leq 
\begin{cases}
C s R (\log R)^{3} & j=1, \\ 
C s R^2 (\log R)^{4} & j=2,3.
\end{cases}
\]
As in the proof of \Cref{lem.lebesgue.constant.fermi.polyhedron} 
we write $RP$ as a union of $O(s)$ closed tetrahedra. 
We also recall that $Rz\notin \Z^3$.
As in the proof of \Cref{lem.lebesgue.constant.fermi.polyhedron}
we get by the inclusion exclusion principle  $O(s)$ terms with tetrahedra of lower dimension (triangles or line segments).
All the $3$-dimensional (closed) tetrahedra are convex and hence of the form 
\[
  T = \left\{
    k\in \Z^3 : \lambda_1\leq k^1 \leq \Lambda_1, 
    \lambda_2(k^1)\leq k^2\leq \Lambda_2(k^1),
    \lambda_3(k^1,k^2)\leq k^3 \leq \Lambda_3(k^1,k^2)
    \right\},
\]
for some piecewise affine functions $\lambda_i,\Lambda_i, i=1,2,3$.
They are the equations of the planes bounding the tetrahedron $T$.
Since any $k\in T$ has integer coordinates we can replace $\Lambda_j$ by $\lfloor \Lambda_j \rfloor$
and $\lambda_j$ by $\lceil \lambda_j \rceil$.
It will be convenient to not distinguish between $\lfloor \cdot \rfloor$ and $\lceil \cdot \rceil$
and use instead the notation $[\cdot]$ introduced in \Cref{notation.[]}.
Then the tetrahedra are of the form
\begin{equation}\label{eqn.generic.tetrahedron}
	T = \left\{
		k\in \Z^3 : [\lambda_1]\leq k^1 \leq [\Lambda_1], 
		[\lambda_2(k^1)]\leq k^2\leq [\Lambda_2(k^1)],
		[\lambda_3(k^1,k^2)]\leq k^3 \leq [\Lambda_3(k^1,k^2)]
		\right\},
\end{equation}
where we allow $[\cdot]$ to be different in any of the $6$ instances it appears. 

Sums over lower-dimensional tetrahedra can be written as differences of sums over $3$-dimensional tetrahedra 
(with potentially different meanings of $[\cdot]$).
We will thus only consider $3$-dimensional tetrahedra. 
That is, for a tetrahedron $T$ of the form \Cref{eqn.generic.tetrahedron}, we need to bound 
\begin{equation}\label{eqn.derivative.lebesgue.constant.general.tetrahedron}
\int_{[0,2\pi]^3} \abs{\sum_{k \in T\cap \Z^3} t_j(k) e^{ikx}} \ud x 
\leq 
\begin{cases}
C R (\log R)^{3} & j=1, \\ 
C R^2 (\log R)^{4} & j=2,3.
\end{cases}
\end{equation}
Gluing together tetrahedra  as in \Cref{lem.lebesgue.constant.fermi.polyhedron} we conclude the desired 
bound, \Cref{lem.derivative.lebesgue.constant}. 
The remainder of this section gives the proof of \Cref{eqn.derivative.lebesgue.constant.general.tetrahedron}.

\subsection{Reduction to simpler tetrahedron}\label{sec.reduction.tetrahedron}
We first reduce to the case of a simpler tetrahedron $T$.
Consider  what happens by shifting all $k$'s by some fixed lattice vector $\kappa \in \Z^3$ with $|\kappa|\leq CR$.
For $t_2$ we have 
\[
\begin{aligned}
	\sum_{k \in T\cap \Z^3} (k^1)^2 e^{ikx}
	& = \sum_{k \in (T - \kappa)\cap \Z^3} (k^1 + \kappa^1)^2 e^{ikx}
	\\ & = \sum_{k \in (T - \kappa)\cap \Z^3} (k^1)^2 e^{ikx} 
		+ 2\kappa^1 \sum_{k \in (T - \kappa)\cap \Z^3} k^1 e^{ikx} 
		+ (\kappa^1)^2 \sum_{k \in (T - \kappa)\cap \Z^3} e^{ikx}.
\end{aligned}
\]
A similar computation holds for $t_1,t_3$.
We may bound $|\kappa| \leq CR$ and thus
we may assume that $T\subset [0, CR]^3$.
(Recall that $\int_{[0,2\pi]^3} \abs{\sum_{k \in T\cap \Z^3} e^{ikx}} \ud x\leq C (\log R)^3$ by \cite[Theorem 4.1]{Kolomoitsev.Lomako.2018}, 
see the proof of \Cref{lem.lebesgue.constant.fermi.polyhedron}.)

For any tetrahedron of the form \eqref{eqn.generic.tetrahedron} we may write the $k$-sum as three $1$-dimensional sums 
\[
	\sum_{k \in T\cap \Z^3} 
	= \sum_{k^1 = [\lambda_1]}^{[\Lambda_1]} 
		\sum_{k^2 = [\lambda_2(k^1)]}^{[\Lambda_2(k^1)]} 
		\sum_{k^3 = [\lambda_3(k^1, k^2)]}^{[\Lambda_3(k^1, k^2)]}
	= \sum_{k^1 = [\lambda_1]}^{[\Lambda_1]} 
		\sum_{k^2 = [\lambda_2(k^1)]}^{[\Lambda_2(k^1)]} 
		\left(\sum_{k^3 = 0}^{[\Lambda_3(k^1, k^2)]} - \sum_{k^3 = 0}^{[\lambda_3(k^1, k^2)-1]}\right),
\]
where the $\lambda_j$'s and $\Lambda_j$'s are the equations of the planes bounding the tetrahedron $T$, 
i.e. piecewise affine functions. 
As in \Cref{eqn.generic.tetrahedron} each instance of $[\cdot]$ may be either of the definitions of \Cref{notation.[]}.
By splitting the $k^1, k^2$ sums into at most 4 parts, we may ensure that both $\Lambda_3$ and $\lambda_3 - 1$ are only from one bounding plane,
i.e. they are affine functions. 
When we do this splitting, we have to choose (in each new tetrahedron) which definition of $[\cdot]$ to use for the new bounding plane. 
This may give rise to some ``boundary term'', if we choose definitions of $[\cdot]$ in the new tetrahedra 
such that the $k$'s on the splitting face are either in both or in neither of the two tetrahedra sharing this face. 
These boundary terms are sums over lower-dimensional tetrahedra, and may thus be bounded by sums over $3$-dimensional ones as above.

\begin{remark}\label{rmk.problem.k2.split}
One may similarly let the $k^1$- and $k^2$-sums go from $0$ by writing e.g. 
\[
	\sum_{k^2 = [\lambda_2(k^1)]}^{[\Lambda_2(k^1)]} = \sum_{k^2 = 0}^{[\Lambda_2(k^1)]} - \sum_{k^2 = 0}^{[\lambda_2(k^1)-1]}.
\]
However, the upper limits $\Lambda_3(k^1, k^2)$ and $\lambda_3(k^1,k^2)-1$ for the $k^3$-sum 
may become much larger than $R$ 
for $k^2 \leq \lambda_2(k^1)$.
This is why we don't do this.
\end{remark}

\noindent
The terms with $\Lambda_3$ and $\lambda_3 - 1$ may be treated the same way, so we just look at the one with $\Lambda_3$.
We thus want to bound 
\[
\int_{[0,2\pi]^3} \abs{
		\sum_{k^1 = [\lambda_1]}^{[\Lambda_1]} 
		\sum_{k^2 = [\lambda_2(k^1)]}^{[\Lambda_2(k^1)]} t_j(k^1, k^2)
		\sum_{k^3 = 0}^{[\Lambda_3(k^1, k^2)]}  e^{ikx}
			} \ud x
	\lesssim \begin{cases}
  R (\log R)^3 & j=1,
  \\
	R^2(\log R)^4 & j=2,3. 
	\end{cases}
\]

\subsection{Reduction from \texorpdfstring{$d=3$ to $d=2$}{d=3 to d=2}}\label{sec.induction.d=3}
We show that we may bound the three-dimensional integrals by analogous two-dimensional integrals 
up to a factor of $(\log R + \log Q) \sim \log N$.

First, before shifting by a constant $\kappa\in\Z^3$, $\Lambda_3$ is given by either the plane through 3 close corners of $RP$ 
(points $R\sigma(p^1/Q_1, p^2/Q_2, p^3/Q_3)$)
or of two close corners and the centre $Rz$. 
This follows from the construction of $P$ in \Cref{defn.P_F.true}, since forming the edges between pairs of close points constructs a triangulation of $P$.

The equation for a plane through the three points $R\sigma(p^1_{j}/Q_1, p^2_{j}/Q_2, p^3_{j}/Q_3)$, $j=1,2,3$ is given by 
\[
	\frac{\alpha_1}{Q_2Q_3} k^1 + \frac{\alpha_2}{Q_1Q_3}k^2 + \frac{\alpha_3}{Q_1Q_2}k^3 = R\sigma \gamma
\]
where by construction of $P$, see \Cref{defn.P_F.true}, we have
\[
	\sigma \notin \Q, \qquad \gamma \in \Q, 
	\qquad \alpha_j \in \Z, 
	\qquad |\alpha_j| \leq C \sqrt{Q},
	\quad j=1,2,3.
\]
We might have that $\alpha_j = 0$. 
If  $\alpha_3=0$ then this plane is parallel to the $k^3$-axis 
and so does not give rise to a bound on the $k^3$-sum. Hence $\alpha_3 \ne 0$.
By choice of $L$, we have that $R$ is rational, and so $R\sigma \gamma \notin \Q$.
(The choice of $L$ such that $R$ is rational, is exactly so that $R\sigma \gamma \notin \Q$.)
The equation for $\Lambda_3$ is an integer shift of this plane, hence it is of the form
\begin{equation}
\label{eqn.Lambda3}
\begin{aligned}
	\Lambda_3(k^1, k^2) 
 	= n_3 - m^1 k^1 - m^2 k^2
	= n_3 - \frac{Q_1\alpha_1}{Q_3 \alpha_3} k^1 - \frac{Q_2\alpha_2}{Q_3 \alpha_3} k^2,
  \\ 
  n_3\notin \Q, 
	\quad |\alpha_j| \leq C\sqrt{Q}, \, j=1,2,3.
\end{aligned}
\end{equation}


\noindent
Define  for $j=1,2,3$ the quantities
\begingroup
\allowdisplaybreaks
\begin{equation}\label{eqn.define.D_d^j}
\begin{aligned}
	D_3^j(x) & := \sum_{k^1 = [\lambda_1]}^{[\Lambda_1]} 
		\sum_{k^2 = [\lambda_2(k^1)]}^{[\Lambda_2(k^1)]} t_j(k^1, k^2)
		\sum_{k^3 = 0}^{[\Lambda_3(k^1, k^2)]}  e^{ik^{(3)}x^{(3)}},
		\\	
	\tilde D_2^j(x) & := \sum_{k^1 = [\lambda_1]}^{[\Lambda_1]} 
		\sum_{k^2 = [\lambda_2(k^1)]}^{[\Lambda_2(k^1)]} t_j(k^1, k^2) e^{ik^{(2)}x^{(2)}},
		\\
	G_3^j(x) & := \frac{1}{e^{ix^3} - 1} \left(e^{i (n_3 + 1) x^3} \tilde D_{2}^j(x^{(2)} - m^{(2)} x^3 ) - \tilde D_{2}^j(x^{(2)})\right),
		\\
	F_3^j(x) & := \frac{e^{i(n_3 + 1)x^3}}{e^{ix^3}-1} 
		\sum_{k^1 = [\lambda_1]}^{[\Lambda_1]} 
		\sum_{k^2 = [\lambda_2(k^1)]}^{[\Lambda_2(k^1)]} t_j(k^1,k^2) e^{ik^{(2)}(x^{(2)} - m^{(2)}x^3)} 
		\left(e^{-i\expect{\Lambda_3(k^{(2)})} x^3} - 1\right),
\end{aligned}
\end{equation}%
\endgroup
where $m^{(2)} = (m^1, m^{2})$ is defined in \Cref{eqn.Lambda3}. 
We shall prove the following bound.
\begin{lemma}\label{lemma.derivative.lebesgue.induction}
We have for some  $k_0^{(2)}\in \Z^2$, some (non-zero)
$\kappa=\kappa^{(2)} \in \Z^2$ and a $h\in \Z$, $h\geq 0$ 
with  $|k^{(2)}_0| \leq CR$ and $h|\kappa^{(2)}| \leq CR$ that 
for any $j=1,2,3$
\[
\begin{aligned}
	& \int_{[0,2\pi]^3} \abs{ D_3^j(x^{(3)})} \ud x^{(3)} 
	\\ & \quad 
	\lesssim (\log R + \log Q )\int_{[0,2\pi]^2} \abs{\tilde D_2^j(x^{(2)})} \ud x^{(2)} 
	+1
	+ \int_{0}^{2\pi} \abs{\sum_{\tau = 0}^h t_j \left(k_0^{(2)} + \tau \kappa^{(2)}\right) e^{i \tau |\kappa^{(2)}| x^{1}}} \ud x^{1}.
\end{aligned}
\]
\end{lemma}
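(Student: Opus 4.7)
The plan is to evaluate the innermost $k^3$-sum explicitly as a geometric series,
\[
\sum_{k^3 = 0}^{[\Lambda_3(k^{(2)})]} e^{ik^3 x^3}
  = \frac{e^{i([\Lambda_3(k^{(2)})] + 1) x^3} - 1}{e^{ix^3} - 1},
\]
substitute $[\Lambda_3(k^{(2)})] = n_3 - m^{(2)} \cdot k^{(2)} - \expect{\Lambda_3(k^{(2)})}$ with $m^{(2)} = (m^1, m^2)$ from \Cref{eqn.Lambda3}, and split the resulting exponential factor as $e^{-i\expect{\Lambda_3}x^3} = 1 + (e^{-i\expect{\Lambda_3}x^3}-1)$. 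This produces precisely the decomposition $D_3^j = G_3^j + F_3^j$ with $G_3^j$ and $F_3^j$ as defined in \Cref{eqn.define.D_d^j}, reducing the lemma to separately estimating $\int |G_3^j|$ and $\int |F_3^j|$.

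For $G_3^j$, I would exploit the translation invariance of the $x^{(2)}$-integral on the torus to bound the inner integral of $|G_3^j|$ by $\tfrac{2}{|e^{ix^3}-1|}\int|\tilde D_2^j|$. The $x^3$-integral of $1/|e^{ix^3}-1|\sim 1/|x^3|$ is not integrable on its own, but the difference inside $G_3^j$ vanishes at $x^3 = 0$. Splitting the $x^3$-integral at a cutoff $|x^3| \sim 1/(|n_3|+R|m^{(2)}|)$, on the outer region one obtains a logarithmic factor $\log(|n_3|+R|m^{(2)}|)$ times $\int|\tilde D_2^j|$; on the inner region one Taylor-expands the difference to first order in $x^3$ and bounds the Lipschitz constant of $\tilde D_2^j$ using Bernstein's inequality for a trigonometric polynomial of degree $\lesssim R$. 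Since $|n_3| \lesssim R$ and $|m^{(2)}| \lesssim \sqrt{Q}$, this produces the claimed factor $\log R + \log Q$.

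For $F_3^j$, the key observation is that $|e^{-i\expect{\Lambda_3}x^3}-1|\lesssim |x^3|$ cancels the singularity in $1/(e^{ix^3}-1)$, so the integrand is pointwise bounded and no logarithmic loss is incurred. The main arithmetic content is that since $m^{(2)} = (Q_1\alpha_1, Q_2\alpha_2)/(Q_3\alpha_3)$ has rational entries with common denominator $Q_3 \alpha_3$, the map $k^{(2)} \mapsto \expect{\Lambda_3(k^{(2)})}$ is constant along integer translations by any vector in the rank-one sublattice of $\Z^2$ perpendicular to $(\alpha_1 Q_1, \alpha_2 Q_2)$. Choosing $\kappa^{(2)}\in\Z^2$ to generate the complementary direction and parametrizing $k^{(2)} = k_0^{(2)} + \tau\kappa^{(2)} + r$ with $r$ ranging over a fundamental domain of the invariant sublattice, I would sum out $r$ into a shifted $\tilde D_2^j$-like quantity (the resulting bounded contribution absorbs the $+1$ in the statement) and recognize the remaining $\tau$-sum, after the $x^2$- and $x^3$-integrals are performed, as the 1D expression in the statement.

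The principal obstacle I foresee is the arithmetic bookkeeping in the $F_3^j$ estimate: explicitly identifying the period vector $\kappa^{(2)}$ and base point $k_0^{(2)}$ from the primes $Q_1, Q_2, Q_3$ and the integers $\alpha_1, \alpha_2, \alpha_3$, verifying the required size bound $h|\kappa^{(2)}| \leq CR$, and matching the resulting 1D sum precisely to the one in the lemma statement. By contrast, the handling of the singular kernel $1/(e^{ix^3}-1)$ in $G_3^j$ is technical but essentially follows the pattern of the 1D Lebesgue-constant estimates already developed in \cite{Kolomoitsev.Lomako.2018}.
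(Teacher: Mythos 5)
Your derivation of the splitting $D_3^j=G_3^j+F_3^j$ and your treatment of $G_3^j$ (splitting the $x^3$-integral at a scale set by the degree, Lipschitz/Bernstein bound near $x^3=0$, translation invariance away from it) match the paper's route via \Cref{eqn.lemma.3.1} and \Cref{lem.lemma.3.3}, so that part is fine. The genuine gap is in $F_3^j$. Pointwise boundedness of the integrand buys you nothing: after cancelling $1/\abs{e^{ix^3}-1}$ against $\abs{e^{-i\expect{\Lambda_3}x^3}-1}\lesssim\abs{x^3}$ you are left with $\sum_{k^{(2)}}\abs{t_j(k^{(2)})}$, of size $R^3$ or worse, so everything hinges on extracting cancellation from the oscillatory $k^{(2)}$-sum carrying the rough weight $\expect{\Lambda_3(k^{(2)})}$ — and your proposal provides no mechanism for this; in particular it gives no source for the term $\log Q\int\abs{\tilde D_2^j}$, which in the paper comes precisely from this part.

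Moreover, the arithmetic structure you invoke is the wrong one. The directions along which $\expect{\Lambda_3}$ is invariant, i.e.\ perpendicular to $(Q_1\alpha_1,Q_2\alpha_2)$, are generated by a vector of length $\gtrsim Q\gg R$ (the vector $\kappa_0$ in the proof of \Cref{lem.extra.term.integers}), so inside the tetrahedron each such line contains at most one summation point: there is no fundamental domain to ``sum out'', and along your complementary direction the weight $e^{-i\expect{\Lambda_3}x^3}-1$ is not constant, so the remaining $\tau$-sum cannot be identified with the weight-free one-dimensional sum appearing in the statement. What the paper actually does for $F_3^j$ is: expand $e^{-i\expect{\Lambda_3}x^3}-1$ in powers of $x^3$ (\Cref{lem.lemma.3.7}), reduce the powers $\expect{\cdot}^r$ to fractional parts $\{\cdot\}^\nu$ (\Cref{lem.lemma.3.6}), and replace $\Lambda_3$ by a Dirichlet rational approximation $\tilde\Lambda_3$ with denominator $qQ_\infty\le CQ^{3/2}$ (\Cref{lem.lemma.3.9}); the discreteness of the values of $\{\tilde\Lambda_3\}$ is what yields the factor $\log(rqQ_\infty)\sim\log Q$ via a Fourier-coefficient argument, at the price of $O(2^r)$ errors (absorbed, after the $1/r!$ summation, into the $+1$ of the statement) and of an extra term supported on the level set $\{k^{(2)}:\tilde\Lambda_3(k^{(2)})\in\Z\}$. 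It is this level set — not the invariance direction — that \Cref{lem.extra.term.integers} identifies, using the prime structure of $Q_1,Q_2,Q_3$ and the coprimality reductions on $\alpha_1,\alpha_2,\alpha_3$, as a short arithmetic progression $k_0^{(2)}+\tau\kappa^{(2)}$, $0\le\tau\le h$, with $h\abs{\kappa^{(2)}}\lesssim R$, producing exactly the one-dimensional integral in the lemma. The Taylor expansion, the rational approximation, and this number-theoretic analysis of the integer level set are the core of the proof and are missing from your proposal.
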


\noindent
As a first step, consider the case where both $\alpha_1=\alpha_2=0$ in \Cref{eqn.Lambda3}.
Then the $k^3$-sum and $x^3$-integral in \Cref{lemma.derivative.lebesgue.induction} factors out.
Using \cite[Lemma 3.2]{Kolomoitsev.Lomako.2018}
to evaluate the $k^3$-sum and $x^3$-integral we conclude the desired.
Hence we can assume that at most one of $\alpha_1, \alpha_2$ is $0$.
(This will be relevant for \Cref{lem.extra.term.integers}, but only then.)

A simple calculation shows that \cite[Lemma 3.1]{Kolomoitsev.Lomako.2018}
\begin{equation}
\label{eqn.lemma.3.1}
D_3^j(x) = G_3^j(x) + F_3^j(x), \quad j=1,2,3.
\end{equation}
By a straightforward modification of the argument in \cite[Lemma 3.3]{Kolomoitsev.Lomako.2018} (including the factor $t_j$) we have
\begin{lemma}[{\cite[Lemma 3.3]{Kolomoitsev.Lomako.2018}}]\label{lem.lemma.3.3}
For any $j=1,2,3$ we have
\[
\int_{[0,2\pi]^{3}} \abs{G_3^j(x)} \ud x \lesssim 	 \log R \int_{[0,2\pi]^{2}} \abs{\tilde D_{2}^j(x^{(2)})} \ud x^{(2)}.
\]
\end{lemma}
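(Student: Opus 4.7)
The plan is to apply the triangle inequality to the numerator of $G_3^j$, writing
\[
|G_3^j(x)| \leq \frac{|e^{i(n_3+1)x^3}-1|}{|e^{ix^3}-1|}\,|\tilde D_2^j(x^{(2)} - m^{(2)} x^3)| + \frac{|\tilde D_2^j(x^{(2)} - m^{(2)} x^3) - \tilde D_2^j(x^{(2)})|}{|e^{ix^3}-1|},
\]
and to bound the two pieces separately. Both pieces will be integrated in $x^{(2)}$ first, exploiting that by the $2\pi$-periodicity of $\tilde D_2^j$ in each component of $x^{(2)}$, the shift $x^{(2)} \mapsto x^{(2)} - m^{(2)}x^3$ preserves $\int_{[0,2\pi]^2}|\tilde D_2^j|\,dx^{(2)}$, even though $m^{(2)}$ is not integer.

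The first piece is straightforward. Integration in $x^{(2)}$ produces $\frac{|e^{i(n_3+1)x^3}-1|}{|e^{ix^3}-1|}\|\tilde D_2^j\|_{L^1}$, and the remaining factor is (up to a phase) the Dirichlet kernel $\left|\sum_{k^3=0}^{n_3} e^{ik^3 x^3}\right|$. Its $L^1$-norm on $[0,2\pi]$ is $\lesssim \log(n_3+1)$. The reduction of \Cref{sec.reduction.tetrahedron} placed $T \subset [0, CR]^3$, and the bound $|m^{(2)}| \lesssim \sqrt{Q}$ coming from $|\alpha_j| \leq C\sqrt{Q}$ in \Cref{eqn.Lambda3}, together with $\Lambda_3(k^{(2)}) \in [0, CR]$ for $k^{(2)}$ in the projected tetrahedron, yields $|n_3| \lesssim R\sqrt{Q}$. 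Hence $\log(n_3+1) \lesssim \log R + \log Q \lesssim \log R$, since $Q \leq CN^C$ and $R \sim N^{1/3}$ by \Cref{defn.P_F.true} and \Cref{rmk.choose.L.not.N}.

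The second piece is where the cancellation near $x^3 = 0$ is harnessed. Because $T \subset [0, CR]^3$, the function $\tilde D_2^j$ is a trigonometric polynomial of degree $\lesssim R$ in each of $x^1,x^2$ (the polynomial prefactor $t_j$ of degree at most two in $k^{(2)}$ affects the size of the Fourier coefficients but not the degree in $x^{(2)}$). Bernstein's inequality therefore yields the translation estimate
\[
\|\tilde D_2^j(\cdot - h) - \tilde D_2^j\|_{L^1([0,2\pi]^2)} \leq \min(2,\, CR|h|)\,\|\tilde D_2^j\|_{L^1([0,2\pi]^2)}
\]
for every $h \in \R^2$. Taking $h = m^{(2)} x^3$ with $|m^{(2)}| \lesssim \sqrt{Q}$ and using $|e^{ix^3}-1| \gtrsim \min(x^3, 2\pi - x^3)$ reduces the remaining $x^3$-integral to $\int_0^\pi \min(2, CR\sqrt{Q}\,x^3)\,dx^3/x^3 \lesssim \log(R\sqrt{Q}) \lesssim \log R$, and the interval $[\pi,2\pi]$ is handled symmetrically.

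Adding the two contributions gives the claimed bound. I do not expect any serious obstacle: the one point requiring care is that the Bernstein degree of $\tilde D_2^j$ matches the tetrahedron size, which is already arranged by the translation step of \Cref{sec.reduction.tetrahedron}, and that the loss $\log Q$ produced by the nonintegral shift $m^{(2)} x^3$ can be absorbed into $\log R$, which is precisely the polynomial-in-$N$ control on $Q$ imposed in \Cref{defn.P_F.true}.
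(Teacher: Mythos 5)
Your overall strategy is exactly the standard argument behind the cited Lemma 3.3 of Kolomoitsev--Lomako, which the paper invokes without reproducing: split $G_3^j$ into the incomplete-Dirichlet factor times a shifted copy of $\tilde D_2^j$ plus a translation-difference term, use that $\int_{[0,2\pi]^2}\abs{\tilde D_2^j}$ is invariant under arbitrary real shifts (periodicity with integer frequencies), bound the kernel factor by $\min\bigl(C(\abs{n_3}+1),\,C/\abs{x^3}\bigr)$, and control the difference term by the $L^1$ Bernstein/modulus-of-continuity estimate $\norm{\tilde D_2^j(\cdot-h)-\tilde D_2^j}_{L^1}\leq\min(2,CR\abs{h})\,\norm{\tilde D_2^j}_{L^1}$. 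Your bookkeeping is also right: after the reduction $T\subset[0,CR]^3$ the degree of $\tilde D_2^j$ is $\lesssim R$ in each variable, $\abs{m^{(2)}}\lesssim\sqrt{Q}$ and $\abs{n_3}\lesssim R\sqrt{Q}$ follow from \Cref{eqn.Lambda3}, and $\log Q\lesssim\log R$ by the polynomial control $Q\leq CN^C$, $R\sim N^{1/3}$.

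The one step that fails as written is the treatment of the second zero of $e^{ix^3}-1$, at $x^3=2\pi$, where you say the interval $[\pi,2\pi]$ is "handled symmetrically". It is not: since $n_3\notin\Q$ and $m^{(2)}$ is not an integer vector, neither numerator is small near $x^3=2\pi$. Indeed, at $x^3=2\pi$ the numerator of $G_3^j$ equals $\sum_{k^{(2)}}t_j(k^{(2)})e^{ik^{(2)}x^{(2)}}\bigl(e^{2\pi i\Lambda_3(k^{(2)})}-1\bigr)$, which is generically not identically zero, so on $[0,2\pi]$ both of your pieces behave like a nonzero constant times $(2\pi-x^3)^{-1}$ there; in particular $\frac{e^{i(n_3+1)x^3}-1}{e^{ix^3}-1}$ is not a Dirichlet kernel (non-integer $n_3$) and its $L^1([0,2\pi])$ norm diverges, and there is no smallness of $\tilde D_2^j(x^{(2)}-m^{(2)}x^3)-\tilde D_2^j(x^{(2)})$ near $x^3=2\pi$. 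The repair is the fundamental-domain convention implicit in the statement (and in the source it is borrowed from): since the quantity ultimately being estimated, $\int\abs{D_3^j}$, is over the torus, one performs the decomposition with $x^3$ ranging over $[-\pi,\pi]$, where $e^{ix^3}-1$ vanishes only at $x^3=0$; there both numerators do vanish, your bounds $\min\bigl(2,(\abs{n_3}+1)\abs{x^3}\bigr)$ and $\min\bigl(2,CR\sqrt{Q}\,\abs{x^3}\bigr)$ apply on the whole interval, and the two $x^3$-integrals give $\log(\abs{n_3}+2)+\log(R\sqrt{Q})\lesssim\log R$ as you intended. With that adjustment (and noting the kernel is only Dirichlet-like, not a genuine Dirichlet kernel), your proof is complete and coincides with the intended one.
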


\noindent
We thus want to bound the integral of $F_3^j$. 
Again, by a straightforward modification of the argument in \cite[Lemma 3.7]{Kolomoitsev.Lomako.2018} (including the factor $t_j$) we have 
\begin{lemma}[{\cite[Lemma 3.7]{Kolomoitsev.Lomako.2018}}]\label{lem.lemma.3.7}
For any $j=1,2,3$ we have 
\[
	\int_{[0,2\pi]^{3}} \abs{F_3^j(x)} \ud x 
		\lesssim \sum_{r = 1}^\infty \frac{(2\pi)^r}{r!} 
			\int_{[0,2\pi]^{2}} 
			\abs{\sum_{k^1 = [\lambda_1]}^{[\Lambda_1]} 
		\sum_{k^2 = [\lambda_2(k^1)]}^{[\Lambda_2(k^1)]} t_j(k^1,k^2) 
		e^{i k^{(2)}x^{(2)}} \expect{\Lambda_{3}(k^{(2)})}^r } \ud x^{(2)}.
\]
\end{lemma}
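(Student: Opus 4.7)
The plan is to follow the proof of \cite[Lemma 3.7]{Kolomoitsev.Lomako.2018} essentially verbatim, since the extra factor $t_j(k^1, k^2)$ depends only on the summation variables $k^1, k^2$ (and not on $k^3$ or on $x^3$), and therefore behaves as an inert coefficient throughout all steps of that argument.

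The main step is to expand the factor $e^{-i\expect{\Lambda_3(k^{(2)})}x^3} - 1$ in its Taylor series in $x^3$. Inserting this into the definition of $F_3^j$ and interchanging the order of summation (justified by absolute convergence, see below) gives
\[
F_3^j(x) = \sum_{r=1}^\infty \frac{(-i x^3)^r e^{i(n_3+1)x^3}}{r!\, (e^{ix^3} - 1)} \sum_{k^1 = [\lambda_1]}^{[\Lambda_1]} \sum_{k^2 = [\lambda_2(k^1)]}^{[\Lambda_2(k^1)]} t_j(k^1,k^2)\, e^{ik^{(2)}(x^{(2)} - m^{(2)}x^3)}\, \expect{\Lambda_3(k^{(2)})}^r.
\]
After taking absolute values term-by-term, I would use the substitution $y^{(2)} = x^{(2)} - m^{(2)} x^3$, which by $2\pi$-periodicity leaves the $x^{(2)}$-integral invariant and removes all $x^3$-dependence from the inner integrand. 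The remaining scalar factor $|x^3|^r/|e^{ix^3}-1|$ has an $O(|x^3|^{r-1})$ singularity at the origin (and, by periodicity, at $2\pi$), which is integrable for $r \geq 1$, and its integral over $[0,2\pi]$ is bounded by a constant multiple of $(2\pi)^r$. Combining these estimates yields the factor $(2\pi)^r/r!$ in the claimed bound.

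There is no genuine obstacle: the factor $t_j$ does not interact with the $x^3$-variable or the Taylor expansion at all, so every manipulation of the cited proof goes through unchanged. The only point one needs to check is absolute convergence of the series, which is immediate since $\expect{\Lambda_3(k^{(2)})} \in [0,1)$ and $|x^3| \leq 2\pi$, so the series is majorised term-wise by $\sum_r (2\pi)^r/r!$ times the bounded quantity $\abs{\sum_{k^{(2)}} t_j(k^{(2)}) e^{ik^{(2)} y^{(2)}}}$, whose integral over $[0,2\pi]^2$ is finite; this justifies the interchange of sum and integral as well.
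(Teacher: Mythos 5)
Your overall strategy is exactly the one behind the cited result that the paper invokes: expand $e^{-i\expect{\Lambda_3(k^{(2)})}x^3}-1$ in its Taylor series (producing the $1/r!$), interchange the $r$-sum with the $k$-sums and the integration, shift $x^{(2)}\mapsto x^{(2)}-m^{(2)}x^3$ using $2\pi$-periodicity in $x^{(2)}$ (legitimate since the frequencies $k^{(2)}$ are integers), and reduce everything to a one-dimensional integral in $x^3$; and you are right that the factor $t_j(k^1,k^2)$ is completely inert. However, your final one-dimensional estimate is wrong as written: the integral $\int_0^{2\pi}\abs{x^3}^r\,\abs{e^{ix^3}-1}^{-1}\ud x^3$ is \emph{divergent}, not bounded by $C(2\pi)^r$. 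Indeed $\abs{e^{ix^3}-1}=2\abs{\sin(x^3/2)}\sim \abs{2\pi-x^3}$ as $x^3\to 2\pi$, while the numerator $\abs{x^3}^r\to(2\pi)^r$ does not vanish there -- ``periodicity'' does not apply to the non-periodic factor $\abs{x^3}^r$ -- so the integrand behaves like $(2\pi)^r/\abs{2\pi-x^3}$ near the right endpoint. In other words, the Taylor factor $(x^3)^r$ only cures the zero of $e^{ix^3}-1$ at $x^3=0$, not the one at $x^3=2\pi$ sitting at the other end of your integration interval, so the step ``its integral over $[0,2\pi]$ is bounded by a constant multiple of $(2\pi)^r$'' fails.

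The repair is to carry out the $x^3$-integration over a fundamental period interval in which the only zero of $e^{ix^3}-1$ is the point where the numerator of $F_3^j$ vanishes, i.e.\ $x^3\in[-\pi,\pi]$; this is how the decomposition $D_3^j=G_3^j+F_3^j$ is meant to be used, since the quantity one ultimately estimates, $\int\abs{D_3^j}$, is $2\pi$-periodic, so one may fix this fundamental domain \emph{before} splitting off $G_3^j$ and $F_3^j$ (the same convention is implicitly needed for \Cref{lem.lemma.3.3}). On $[-\pi,\pi]$ one has $\abs{\sin(x^3/2)}\geq\abs{x^3}/\pi$, hence $\int_{-\pi}^{\pi}\abs{x^3}^r\abs{e^{ix^3}-1}^{-1}\ud x^3\leq \pi^{r+1}/r$, and together with the $1/r!$ from the Taylor expansion this produces precisely the factor $(2\pi)^r/r!$ appearing in the statement. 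With that adjustment (and noting the harmless point that $\expect{\Lambda_3}\in(-1,1)$ rather than $[0,1)$, depending on which definition of $[\cdot]$ is in force), your argument coincides with the intended proof.
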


\noindent
To bound the right hand side of \Cref{lem.lemma.3.7} we bound either definition of $\expect{\cdot}$ by the fractional part $\{\cdot\}$.
This follows the strategy in \cite{Kolomoitsev.Lomako.2018}.
In analogy with \cite[Lemma 3.6]{Kolomoitsev.Lomako.2018} we have
\begin{lemma}[{\cite[Lemma 3.6]{Kolomoitsev.Lomako.2018}}]\label{lem.lemma.3.6}
For either definition of $\expect{\cdot}$ we have the bound 
\[
\begin{aligned} 	
	& 
	\int_{[0,2\pi]^{2}} 
	\abs{\sum_{k^1 = [\lambda_1]}^{[\Lambda_1]} 
		\sum_{k^2 = [\lambda_2(k^1)]}^{[\Lambda_2(k^1)]} t_j(k^1,k^2) 
		e^{i k^{(2)}x^{(2)}} \expect{\Lambda_{3}(k^{(2)})}^r } 
	\ud x^{(2)}
	\\ & \quad \leq 
		\int_{[0,2\pi]^{2}} 
		\abs{\tilde D_2^j(x^{(2)})}
		\ud x^{(2)}
	+ 
	 	\sum_{\nu = 1}^r \binom{r}{\nu} 
		\int_{[0,2\pi]^{2}} 
		\abs{
			\sum_{k^1 = [\lambda_1]}^{[\Lambda_1]} 
			\sum_{k^2 = [\lambda_2(k^1)]}^{[\Lambda_2(k^1)]} t_j(k^1,k^2) 
			e^{i k^{(2)}x^{(2)}} \{\Lambda_{3}(k^{(2)})\}^\nu
			}
		\ud x^{(2)}
\end{aligned}
\]
uniformly in (integer) $r\geq 1$.
\end{lemma}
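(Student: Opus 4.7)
The plan is to expand $\expect{\Lambda_3(k^{(2)})}^r$ as a polynomial in the fractional part $\{\Lambda_3(k^{(2)})\}$, and then apply the triangle inequality. The key preliminary observation is that $\Lambda_3(k^{(2)}) = n_3 - m^1 k^1 - m^2 k^2$ is never integer-valued on $k^{(2)} \in \Z^2$: indeed, the construction in \Cref{defn.P_F.true} and the irrationality of $\sigma$ noted in \Cref{rmk.comment.def.P_F} force $n_3 \notin \Q$, while $m^1, m^2 \in \Q$. Consequently $\lceil \Lambda_3(k^{(2)}) \rceil = \lfloor \Lambda_3(k^{(2)}) \rfloor + 1$ for every $k^{(2)}\in \Z^2$, so the two possible definitions of the bracket $[\Lambda_3(k^{(2)})]$ differ by exactly $1$.

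First I would dispose of the case $\expect{\Lambda_3} = \{\Lambda_3\}$ (arising when $[\Lambda_3] = \lfloor \Lambda_3 \rfloor$): here $\expect{\Lambda_3(k^{(2)})}^r = \{\Lambda_3(k^{(2)})\}^r$ is exactly the $\nu = r$ summand on the right-hand side of the claim, so the bound is immediate. In the remaining case $\expect{\Lambda_3} = \{\Lambda_3\} - 1$, I would apply the binomial theorem to write
\[
\expect{\Lambda_3(k^{(2)})}^r = \sum_{\nu=0}^r \binom{r}{\nu} (-1)^{r-\nu} \{\Lambda_3(k^{(2)})\}^\nu.
\]
Substituting into the integrand on the left-hand side, linearity of the double sum allows one to pull the $\nu$-summation outside the double sum over $k^{(2)}$ and isolate the $\nu = 0$ contribution, which by the definition of $\tilde D_2^j$ in \Cref{eqn.define.D_d^j} equals $(-1)^r \tilde D_2^j(x^{(2)})$. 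Applying the triangle inequality to the remaining terms $\nu = 1, \ldots, r$ then yields exactly the stated estimate, with the binomial coefficients accounted for.

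There is no serious obstacle here; the lemma is essentially the binomial theorem combined with the non-integrality of $\Lambda_3(k^{(2)})$ on $\Z^2$. The only delicate point, and the reason the lemma holds with the same right-hand side for \emph{either} choice of $\expect{\cdot}$, is the irrationality of $n_3$, which in turn traces back to the construction of the centre $z = \sigma(1/Q_1, 1/Q_2, 1/Q_3)$ with the irrational scaling factor $\sigma$ in \Cref{defn.P_F.true}. Were $\Lambda_3(k^{(2)})$ ever an integer, the two candidate values of $[\Lambda_3(k^{(2)})]$ would coincide, forcing $\expect{\Lambda_3} = 0$ at such points, and the clean binomial expansion above would require a separate boundary argument. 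This is precisely one of the reasons the more elaborate \Cref{defn.P_F.true} was needed in place of the simpler \Cref{defn.P_F}.
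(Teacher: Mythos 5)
Your proposal is correct and follows essentially the same route as the paper: dispose of the case $\expect{\cdot}=\{\cdot\}$ immediately, use the non-integrality of $\Lambda_3(k^{(2)})$ on $\Z^2$ (guaranteed since $n_3\notin\Q$ while $m^1,m^2\in\Q$) to write $\expect{\cdot}=\{\cdot\}-1$ in the other case, expand by the binomial theorem, identify the $\nu=0$ term with $(-1)^r\tilde D_2^j$, and finish with the triangle inequality. Your explicit justification of why $\Lambda_3(k^{(2)})\notin\Z$ is a slightly fuller account of what the paper compresses into ``by construction,'' but the argument is the same.
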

\begin{proof}
If $\expect{\cdot} = \{\cdot\}$ this is clear. Hence suppose that $\expect{x} = x - \lceil x \rceil$.
Then
\[
	\expect{x} = \{ x \} - 1 + \begin{cases}
	1 & \textnormal{if } x \in \Z \\ 0 & \textnormal{otherwise}.
	\end{cases}
\]
By construction $\Lambda_3(k^1,k^2)\notin \Z$ for $k^{1},k^2\in\Z$. Thus, $\expect{\Lambda_3(k^1,k^2)} = \{\Lambda_3(k^1,k^2)\}-1$.
Then
\begin{align*}
& 	\sum_{k^1 = [\lambda_1]}^{[\Lambda_1]} 
	\sum_{k^2 = [\lambda_2(k^1)]}^{[\Lambda_2(k^1)]} 
	t_j(k^1,k^2) e^{i k^{(2)}x^{(2)}} \expect{\Lambda_{3}(k^{(2)})}^r
\\ & \quad = 
	\sum_{\nu = 0}^r (-1)^{r - \nu} \binom{r}{\nu} 
	\sum_{k^1 = [\lambda_1]}^{[\Lambda_1]} 
	\sum_{k^2 = [\lambda_2(k^1)]}^{[\Lambda_2(k^1)]} 
	t_j(k^{1},k^2) e^{i k^{(2)}x^{(2)}} \{\Lambda_{3}(k^{(2)})\}^\nu
\\ & \quad = 
	(-1)^r \tilde D_2^j(x^{(2)})
	+ \sum_{\nu = 1}^r (-1)^{r - \nu} \binom{r}{\nu} 
	\sum_{k^1 = [\lambda_1]}^{[\Lambda_1]} 
	\sum_{k^2 = [\lambda_2(k^1)]}^{[\Lambda_2(k^1)]} 
	t_j(k^{1},k^2) e^{i k^{(2)}x^{(2)}} \{\Lambda_{3}(k^{(2)})\}^\nu.
\qedhere
\end{align*}
\end{proof}

\noindent
We now bound the second summand of \Cref{lem.lemma.3.6} 
similarly to \cite[Lemmas 3.8 and 3.9]{Kolomoitsev.Lomako.2018}. 
We first define $\tilde\Lambda_3$, a rational approximation of $\Lambda_3$.
Recall the definition of $\Lambda_3$ in \Cref{eqn.Lambda3}. 
By Dirichlet's approximation theorem we may for any $Q_\infty$ find integers $p,q$ with $1\leq q\leq Q_\infty$ such that 
\[
	\gamma_3 := n_3 - \frac{p}{q} \qquad \textnormal{satisfies} \qquad  \abs{\gamma_3} < \frac{1}{qQ_\infty}.
\]
We will choose $Q_\infty = Q_3 \alpha_3$.
Define then 
\begin{equation}\label{eqn.define.Lambda3tilde}
	\tilde \Lambda_3(k^{(2)})
		= \Lambda_3(k^{(2)}) - \gamma_3 
		= \frac{p}{q} - \frac{Q_1\alpha_1}{Q_3 \alpha_3} k^1 - \frac{Q_2\alpha_2}{Q_3 \alpha_3} k^2.
\end{equation}
Note that this takes values in $\frac{1}{qQ_\infty}\Z$ for integers $k^1,k^2$.
In particular 
(for integers $k^1, k^2$)
$\{\tilde \Lambda_3(k^{(2)})\} \in \{0, \frac{1}{qQ_\infty}, \ldots, \frac{qQ_\infty - 1}{qQ_\infty}\}$.
Thus, since $|\gamma_3| < \frac{1}{qQ_\infty}$ we have
\begin{equation}\label{eqn.Lambda3.Lambda3tilde.fractional.part}
	\{\Lambda_3(k^{(2)})\} = \gamma_3 + \{ \tilde{\Lambda}_3(k^{(2)}) \} + \begin{cases}
	1 & \textnormal{if }  \gamma_3 < 0 \textnormal{ and } \tilde \Lambda_3(k^{(2)}) \in \Z, 
	\\ 0 & \textnormal{otherwise}.
	\end{cases}
\end{equation}
We claim that 
\begin{lemma}\label{lem.lemma.3.9}
For $N$ sufficiently large, we have uniformly in (integer) $r\geq 1$ that 
\[
\begin{aligned}
	& \int_{[0,2\pi]^2} 
	\abs{ \sum_{k^1 = [\lambda_1]}^{[\Lambda_1]} 
	\sum_{k^2 = [\lambda_2(k^1)]}^{[\Lambda_2(k^1)]} 
	t_j(k^{1},k^2) e^{i k^{(2)}x^{(2)}} \{\Lambda_{3}(k^{(2)})\}^r 
	} \ud x^{(2)} 
	\\ & \quad 
	\lesssim \log(rQ) \int_{[0,2\pi]^2} \abs{ \tilde D_2^j(x^{(2)})} \ud x^{(2)} 
	+ 
	\int_{[0,2\pi]^2}  
		\abs{\sum_{k^1 = [\lambda_1]}^{[\Lambda_1]} 
			\sum_{\substack{k^2 = [\lambda_2(k^1)] \\ \tilde \Lambda_3(k^1,k^2) \in \Z}}^{[\Lambda_2(k^1)]}
 			t_j(k^{1},k^2) e^{i k^{(2)}x^{(2)}} }
 		\ud x^{(2)}
	+ 2^r 
\end{aligned}
\]
\end{lemma}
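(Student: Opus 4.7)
The plan is to follow the strategy of [Kolomoitsev-Lomako, Lemma 3.9], carrying through the additional weight $t_j(k^{(2)})$. The starting point is the pointwise identity
\begin{equation*}
\{\Lambda_3(k^{(2)})\}^r
= \sum_{a=0}^r \binom{r}{a}\gamma_3^{a}\{\tilde\Lambda_3(k^{(2)})\}^{r-a}
+ \chi_{\{\tilde\Lambda_3(k^{(2)})\in\Z,\, \gamma_3 < 0\}}\bigl[(1+\gamma_3)^r - \gamma_3^r\bigr],
\end{equation*}
which is a direct consequence of \eqref{eqn.Lambda3.Lambda3tilde.fractional.part}: on the support of the indicator one has $\{\tilde\Lambda_3\} = 0$, so the binomial sum collapses to $\gamma_3^r$ and the correction supplies the missing $(1+\gamma_3)^r$. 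Substituting into the integrand and applying the triangle inequality splits the left-hand side into two contributions.

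For the correction contribution, observe that on the support of the indicator $\gamma_3 \in (-1/(qQ_\infty), 0)$, so $|(1+\gamma_3)^r - \gamma_3^r| \leq 2$ uniformly in $r$; this yields at most twice the second integral in the stated bound, with no $r$-dependence. For the main contribution, the idea is to expand $\{\tilde\Lambda_3(k^{(2)})\}^{r-a}$ as a finite Fourier series on the cyclic group $\tfrac{1}{qQ_\infty}\Z/\Z$,
\begin{equation*}
\{\tilde\Lambda_3(k^{(2)})\}^{r-a}
= \sum_{\ell = 0}^{qQ_\infty - 1} c^{(r-a)}_\ell\, e^{2\pi i \ell \tilde\Lambda_3(k^{(2)})},
\qquad |c^{(r-a)}_\ell| \lesssim \frac{1}{1+|\ell|},
\end{equation*}
using the standard Fourier estimate for powers of the sawtooth function. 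Since $\tilde\Lambda_3$ is affine in $k^{(2)}$ (see \eqref{eqn.define.Lambda3tilde}), each $e^{2\pi i \ell \tilde\Lambda_3(k^{(2)})}$ is a character in $k^{(2)}$ times a constant, so absorbing it into $e^{ik^{(2)}x^{(2)}}$ amounts to translating $x^{(2)}$ on the torus, under which $\int |\tilde D_2^j|\,dx^{(2)}$ is invariant. Summing in $\ell$ then contributes a factor $\log(qQ_\infty) \lesssim \log(rQ)$ after using $qQ_\infty \leq CQ^{3/2}$.

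Finally, summing over $a \in \{0,\ldots,r\}$ with $|\gamma_3|^a \leq (qQ_\infty)^{-a}$ against the binomial coefficient, the $a=0$ (and $a=r$) term yields the stated $\log(rQ)\int|\tilde D_2^j|$ contribution, while the intermediate $1 \leq a \leq r-1$ contributions are absorbed into the $2^r$ additive term on the right-hand side via $\sum_a \binom{r}{a} \leq 2^r$. The main obstacle will be establishing the uniform-in-$r$ bound $|c^{(r-a)}_\ell| \lesssim 1/(1+|\ell|)$ for the discrete Fourier coefficients of $\{y\}^{r-a}$ on the cyclic group: this is standard for the sawtooth function itself but requires careful bookkeeping for higher powers, particularly to separate the constant term (which recovers the $\int |\tilde D_2^j|$ contribution cleanly) from the oscillatory part. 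A secondary point, as in \cite[Lemmas~3.6, 3.9]{Kolomoitsev.Lomako.2018}, is verifying that the $\ell$-dependent phase shifts in $x^{(2)}$ are compatible with the $2\pi$-periodicity of the integrand — this is exactly where the rational structure of $\tilde\Lambda_3$ (and hence of the construction of $P$ in \Cref{defn.P_F.true}) is exploited.
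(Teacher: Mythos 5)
Your proposal is correct and follows the paper's proof in its overall structure: the decomposition based on \eqref{eqn.Lambda3.Lambda3tilde.fractional.part}, the crude bound on every term carrying at least one power of $\gamma_3$ (using $|\gamma_3|<1/(qQ_\infty)\leq Q^{-1}$ together with $\sum_a\binom{r}{a}\leq 2^r$ and the fact that the trivial bound on the $k^{(2)}$-sum is $\lesssim R^4\ll Q$, which is exactly how the paper produces the $O(2^r)$ term uniformly in $x^{(2)}$), and the exact treatment of the set $\{\tilde\Lambda_3\in\Z\}$ (your coefficient $(1+\gamma_3)^r-\gamma_3^r$, bounded by $2$, versus the paper's coefficient $1$, is immaterial) all coincide with the paper. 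The one place you genuinely deviate is the core estimate for $\int\bigl|\sum_{k^{(2)}} t_j(k^{(2)})e^{ik^{(2)}x^{(2)}}\{\tilde\Lambda_3(k^{(2)})\}^r\bigr|\ud x^{(2)}$: the paper invokes a modification of \cite[Lemma 3.8]{Kolomoitsev.Lomako.2018}, replacing $u^r$ by a smoothly cut-off function that agrees with $u^r$ on the grid $\tfrac{1}{qQ_\infty}\Z\cap[0,1)$ and estimating its continuous Fourier coefficients (whence the $r$ inside the logarithm), whereas you expand $\{y\}^{r-a}$ directly in the discrete Fourier basis of $\Z/(qQ_\infty)\Z$. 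Your route works, and the ``main obstacle'' you flag is in fact harmless: writing $M=qQ_\infty$ and $c_\ell=\tfrac1M\sum_{j=0}^{M-1}(j/M)^m e^{-2\pi i j\ell/M}$, summation by parts together with the monotonicity of $j\mapsto (j/M)^m$ (total variation at most $1$) gives $|c_\ell|\lesssim 1/|\ell|$ for representatives $1\leq|\ell|\leq M/2$ uniformly in the exponent $m$, and $|c_0|\leq1$; summing over $\ell$ then gives $\log(qQ_\infty)\lesssim\log Q$, so your variant even removes the $r$ from the logarithm. Your ``secondary point'' needs no care at all: since $\tilde\Lambda_3$ is affine in $k^{(2)}$, each character produces a real translate $x^{(2)}\mapsto x^{(2)}+v_\ell$ of $\tilde D_2^j$, and $\int_{[0,2\pi]^2}|\tilde D_2^j|$ is invariant under arbitrary real translations because $\tilde D_2^j$ is a trigonometric polynomial with integer frequencies; the rational structure of $P$ is what makes $\{\tilde\Lambda_3\}$ live on the grid and drives the integer-set analysis of \Cref{lem.extra.term.integers}, not this shift.
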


\noindent
The proof differs from that of \cite[Lemmas 3.8 and 3.9]{Kolomoitsev.Lomako.2018} in a few key location, 
so we give it here.
\begin{proof}
Using \Cref{eqn.Lambda3.Lambda3tilde.fractional.part} we have 
\[
\begin{aligned}
	& \sum_{k^1 = [\lambda_1]}^{[\Lambda_1]} 
	\sum_{k^2 = [\lambda_2(k^1)]}^{[\Lambda_2(k^1)]} 
	t_j(k^{1},k^2) e^{i k^{(2)}x^{(2)}} \{\Lambda_{3}(k^{(2)})\}^r
	\\ & \quad 
		= \sum_{k^1 = [\lambda_1]}^{[\Lambda_1]} 
			\sum_{k^2 = [\lambda_2(k^1)]}^{[\Lambda_2(k^1)]} 
			t_j(k^{1},k^2) e^{i k^{(2)}x^{(2)}} \left(\gamma_3 + \{ \tilde{\Lambda}_3(k^{(2)}) \}\right)^r 
	\\ & \qquad
		+ \chi_{(\gamma_3 < 0)}
			\sum_{k^1 = [\lambda_1]}^{[\Lambda_1]} 
			\sum_{\substack{k^2 = [\lambda_2(k^1)] \\ \tilde \Lambda_3(k^1,k^2) \in \Z}}^{[\Lambda_2(k^1)]}
 			t_j(k^{1},k^2) e^{i k^{(2)}x^{(2)}} 
		+ \textnormal{mixed terms}.
\end{aligned}
\]
All the mixed terms have at least one power of $\gamma_3 + \{ \tilde{\Lambda}_3(k^{(2)})\} = \gamma_3$. 
(Indeed, in the mixed terms we have $\tilde\Lambda_3(k^{(2)})\in \Z$ so $\{ \tilde{\Lambda}_3(k^{(2)})\} = 0$.)
Since $|\gamma_3| < 1/(qQ_\infty) \leq 1/Q$
the sum of all mixed terms may be bounded by $2^r R^4 Q^{-1} \lesssim 2^r$ for $N$ sufficiently large 
(independent of $r$) by our choice of $Q$, see \Cref{defn.P_F.true}.
Similarly expanding the first summand, all the terms with at least one power of $\gamma_3$ may be bounded the same way. 
We thus have 
\begin{equation}
\label{eqn.decompose.lemma.3.9}
\begin{aligned}
	& \sum_{k^1 = [\lambda_1]}^{[\Lambda_1]} 
	\sum_{k^2 = [\lambda_2(k^1)]}^{[\Lambda_2(k^1)]}  
	t_j(k^{1},k^2) e^{i k^{(2)}x^{(2)}} \{\Lambda_{3}(k^{(2)})\}^r
	\\ & \quad 
		= \sum_{k^1 = [\lambda_1]}^{[\Lambda_1]} 
		\sum_{k^2 = [\lambda_2(k^1)]}^{[\Lambda_2(k^1)]}  
		t_j(k^{1},k^2) e^{i k^{(2)}x^{(2)}} \{ \tilde{\Lambda}_3(k^{(2)}) \}^r 
	\\ & \qquad 
		+ \chi_{(\gamma_3 < 0)}\sum_{k^1 = [\lambda_1]}^{[\Lambda_1]} 
			\sum_{\substack{k^2 = [\lambda_2(k^1)] \\ \tilde \Lambda_3(k^1,k^2) \in \Z}}^{[\Lambda_2(k^1)]}
 			t_j(k^{1},k^2) e^{i k^{(2)}x^{(2)}}  
		+ O(2^r),
\end{aligned}
\end{equation}
where the error is $O(2^r)$ uniform in $x^{(2)}$.
For the first summand we have by a simple modification of \cite[Lemma 3.8]{Kolomoitsev.Lomako.2018} (including the factor $t_j$) that 
\[
	\int_{[0,2\pi]^2} 
	\abs{\sum_{k^1 = [\lambda_1]}^{[\Lambda_1]} 
		\sum_{k^2 = [\lambda_2(k^1)]}^{[\Lambda_2(k^1)]}  
		t_j(k^{1},k^2) e^{i k^{(2)}x^{(2)}} \{ \tilde{\Lambda}_3(k^{(2)}) \}^r} \ud x^{(2)}
	\lesssim \log(rqQ_\infty) \int_{[0,2\pi]^2} \abs{ D_2^j(x^{(2)}) } \ud x^{(2)}.
\]
This importantly uses that 
$\{\tilde \Lambda_3(k^{(2)})\} \in \{0, \frac{1}{qQ_\infty}, \ldots, \frac{qQ_\infty - 1}{qQ_\infty}\}$ 
for integers $k^1, k^2$, so that on can find some smartly chosen function $h(u)\approx u^r$ 
on $[0,1]$ but with a smooth cut-off at $1$ and $h(\{\tilde \Lambda_3(k^{(2)})\}) = \{\tilde \Lambda_3(k^{(2)})\}^r$ 
for which one can bound Fourier coefficients, see \cite[Lemma 3.8]{Kolomoitsev.Lomako.2018}.

We have $q\leq Q_\infty = Q_3 \alpha_3 \leq C Q^{3/2}$. 
We conclude the desired.
\end{proof}

\noindent
Next we bound the second term in \Cref{lem.lemma.3.9}, where $\tilde \Lambda_3$ is integer.
If there are no valid choices of $k^1,k^2$ for which $\tilde \Lambda_3(k^1,k^2)$ is an integer, then this term is clearly zero.
Otherwise we have the following.
\begin{lemma}\label{lem.extra.term.integers}
Let $N$ be sufficiently large and suppose that the set 
\[
I_0 = \left\{ (k^1,k^2)\in \Z^2 \, : \,  [\lambda_1] \leq k^1 \leq [\Lambda_1], \,  [\lambda_2(k^1)]\leq k^2 \leq [\Lambda_2(k^1)], \, \tilde \Lambda_3(k^1,k^2) \in \Z\right\}
\]
is non-empty. 
Then we may find a point $k_0^{(2)} \in I_0$, 
a (non-zero) lattice vector $\kappa = \kappa^{(2)}\in \Z^2$ and an integer $h \geq 0$ with $k^{(2)}_0 + h\kappa \in I_0$ 
(in particular $h|\kappa|\lesssim R$) such that 
$I_0 = \{ k^{(2)}_0 + \tau \kappa^{(2)} : \tau \in \{0, \ldots, h\}\}$. In particular
\begin{equation}\label{eqn.lem.number.theory}
\int_{[0,2\pi]^2}  
		\abs{\sum_{k^1 = [\lambda_1]}^{[\Lambda_1]} 
			\sum_{\substack{k^2 = [\lambda_2(k^1)] \\ \tilde \Lambda_3(k^1,k^2) \in \Z}}^{[\Lambda_2(k^1)]}
 			t_j(k^{1},k^2) e^{i k^{(2)}x^{(2)}} }
 		\ud x^{(2)}
 	\lesssim
 	\int_{0}^{2\pi} \abs{\sum_{\tau = 0}^h t_j \left(k_0^{(2)} + \tau \kappa^{(2)}\right) e^{i \tau |\kappa^{(2)}| x}} \ud x.
\end{equation}
\end{lemma}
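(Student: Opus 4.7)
The proof rests on the fact that $\tilde\Lambda_3$ takes values in $\frac{1}{qQ_3\alpha_3}\Z$ with $qQ_3\alpha_3$ much larger than $R$, so the condition $\tilde\Lambda_3(k^{(2)})\in\Z$ cuts out a very sparse sublattice in $\Z^2$. First I would identify the algebraic structure of $I_0$: from \Cref{eqn.define.Lambda3tilde}, the condition $\tilde\Lambda_3(k^1,k^2)\in\Z$ is equivalent to the linear congruence $qQ_1\alpha_1 k^1+qQ_2\alpha_2 k^2\equiv pQ_3\alpha_3\pmod{qQ_3\alpha_3}$, so the set $S:=\{k^{(2)}\in\Z^2:\tilde\Lambda_3(k^{(2)})\in\Z\}$ is, if non-empty, a coset of the rank-$2$ sublattice $L:=\{v\in\Z^2:Q_1\alpha_1 v^1+Q_2\alpha_2 v^2\equiv 0\pmod{Q_3\alpha_3}\}$. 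The index $M:=[\Z^2:L]$ equals $Q_3\alpha_3/d$ with $d:=\gcd(Q_1\alpha_1,Q_2\alpha_2,Q_3\alpha_3)$. Because $Q_3$ is prime, distinct from $Q_1,Q_2$, and dominates $|\alpha_j|\le C\sqrt Q$, we have $\gcd(d,Q_3)=1$ and $d\mid\alpha_3$, hence $M\ge Q_3\gtrsim Q$; in particular $M\gg R^2$ by the assumption $N^{4/3}\ll Q$ of \Cref{defn.P_F.true}.

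Next I would deduce colinearity of $I_0$. If $|I_0|\le 1$ the lemma is trivial ($h=0$, with $\kappa^{(2)}$ any nonzero vector). Otherwise, if $I_0$ contained three non-colinear points, then the two corresponding difference vectors $v_1,v_2\in L$ would span a rank-$2$ sublattice $\Z v_1+\Z v_2\subseteq L\subseteq\Z^2$, yielding $M\le|\det(v_1,v_2)|\le|v_1|\,|v_2|\le(CR)^2$, which contradicts $M\gg R^2$. Hence $I_0$ lies on a single line, and I would take $\kappa^{(2)}$ to be the primitive vector of $\Z^2$ along this line (which consequently also lies in $L$). Picking $k_0^{(2)}$ as an extremal element of $I_0$, the remaining elements are of the form $k_0^{(2)}+\tau\kappa^{(2)}$ with $\tau\ge 1$.

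The interval structure is then immediate: substituting $k^{(2)}=k_0^{(2)}+\tau\kappa^{(2)}$ into the defining inequalities $[\lambda_1]\le k^1\le[\Lambda_1]$ and $[\lambda_2(k^1)]\le k^2\le[\Lambda_2(k^1)]$ (with $\lambda_j,\Lambda_j$ affine in $k^1$) yields two pairs of affine inequalities in the integer parameter $\tau$, so the valid integers form a block $\{0,1,\dots,h\}$. The bounds $|k_0^{(2)}|\lesssim R$ and $h|\kappa^{(2)}|\lesssim R$ are automatic since both endpoints lie in the translated tetrahedron of diameter $O(R)$ obtained in \Cref{sec.reduction.tetrahedron}. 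For the integral in \Cref{eqn.lem.number.theory}, I would factor out the unimodular $e^{ik_0^{(2)}\cdot x^{(2)}}$; the remaining integrand depends on $x^{(2)}$ only through $\kappa^{(2)}\cdot x^{(2)}$, and a Fubini argument (integrating first in the direction perpendicular to $\kappa^{(2)}$, using $2\pi$-periodicity along a coordinate axis of $[0,2\pi]^2$) reduces the $2$-dimensional integral to $2\pi\int_0^{2\pi}\bigl|\sum_\tau t_j(k_0^{(2)}+\tau\kappa^{(2)})e^{i\tau y}\bigr|\,dy$, which is comparable up to an absolute constant to the $1$-dimensional integral of \Cref{eqn.lem.number.theory} via $y=|\kappa^{(2)}|x^1$ together with the $2\pi$-periodicity of the integrand in $y$.

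The main obstacle is the lattice-index bound $M\gg R^2$ of the first step; everything downstream is essentially forced once this is in hand. The construction of $P$ in \Cref{defn.P_F.true}, in particular the introduction of the three large distinct primes $Q_1,Q_2,Q_3$ with $Q\gg N^{4/3}$, was engineered precisely so that this index bound holds.
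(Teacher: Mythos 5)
Your route to the key structural fact is genuinely different from the paper's, and in my view cleaner. The paper parametrizes the solution lattice explicitly: it constructs, via B\'ezout, generators $\kappa_0$ (along which $\tilde\Lambda_3$ is constant, of length $\gtrsim Q$) and $\kappa_1$ (realizing the minimal integer increment $\gcd(\alpha_1,\alpha_2)$ of $\tilde\Lambda_3$), and then shows any two admissible displacements are parallel by comparing their components along $\kappa_0$ and using $R^3/Q\ll 1$. You instead note that $\{\tilde\Lambda_3\in\Z\}$ is a coset of the kernel $L$ of $v\mapsto Q_1\alpha_1v^1+Q_2\alpha_2v^2 \bmod Q_3\alpha_3$, compute its index $Q_3|\alpha_3|/d\ge Q_3\gg R^2$ (your step $\gcd(d,Q_3)=1$ uses that not both $\alpha_1,\alpha_2$ vanish, which the paper arranges just before the lemma), and exclude three non-colinear points of $I_0$ by the determinant--index inequality. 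The convexity argument for the contiguous block and the Fubini/periodicity reduction of the two-dimensional integral to the one-dimensional one are then essentially the paper's (the paper integrates over an enlarged square aligned with $\kappa^{(2)}$ rather than "along a coordinate axis", but this is cosmetic). Your index argument buys a shorter, more conceptual proof that avoids the paper's explicit generator construction and rational-approximation estimate.

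There is, however, one genuine slip: you take $\kappa^{(2)}$ to be the primitive vector $u$ of $\Z^2$ along the line and assert parenthetically that it "consequently also lies in $L$". That does not follow. The generator of $L\cap\R u$ is $t_0u$ with $t_0=Q_3|\alpha_3|/\gcd(Q_3\alpha_3,\,Q_1\alpha_1u^1+Q_2\alpha_2u^2)$, and (e.g.\ when $|\alpha_3|\ge 2$, $Q_3\mid Q_1\alpha_1u^1+Q_2\alpha_2u^2$ but $Q_3\alpha_3\nmid Q_1\alpha_1u^1+Q_2\alpha_2u^2$) one can have $t_0\ge 2$, so $u\notin L$ even though two points of $I_0$ differ by $t_0u$. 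With your choice of $\kappa^{(2)}$, the identity $I_0=\{k_0^{(2)}+\tau\kappa^{(2)}:\tau\in\{0,\ldots,h\}\}$, with $h$ read off from the affine inequalities, would then wrongly include lattice points of the region that violate $\tilde\Lambda_3\in\Z$ --- precisely the points the lemma must exclude. The repair is immediate and stays entirely within your framework: take $\kappa^{(2)}$ to be the shortest vector of $L$ in the direction of the line (equivalently, the minimal difference of two points of $I_0$). Then every $k_0^{(2)}+\tau\kappa^{(2)}$ satisfies the congruence, convexity of the region gives the block structure \eqref{eqn.form.Ik.claimed}, and the bounds $h|\kappa^{(2)}|\lesssim R$ and the integral reduction \eqref{eqn.lem.number.theory} go through unchanged; note the lemma never requires $\kappa^{(2)}$ to be primitive.
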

\noindent
The proof is an exercise in elementary number theory analysing the set $I_0$.
\begin{proof}
Define $k^{(2)}_0$ to be any point in the (non-empty) set $I_0$. 
Recall \Cref{eqn.define.Lambda3tilde}, and that 
$\abs{\tilde \Lambda_3(k^1,k^2)}\leq CR$ for any $k^{(2)} \in I_0$.
(This follows since the relevant tetrahedron is contained in $[0,CR]^3$.)
By redefining $\alpha_j$ as $\alpha_j/\gcd(\alpha_1,\alpha_2,\alpha_3)$ we may assume that $\alpha_1,\alpha_2,\alpha_3$ have no shared prime factors.
(This only decreases their values, so that still $|\alpha_j| \leq C \sqrt{Q}$.)
In case one of the $\alpha_j$'s is zero we will use the convention that $\gcd(\alpha, \beta,0) = \gcd(\alpha,\beta)$ and $\gcd(\alpha,0) = \alpha$
for $\alpha, \beta> 0$.

\paragraph*{Solving the general problem.}
We first consider the general problem of finding all $k^1,k^2 \in \Z$ for which $\tilde \Lambda_3(k^1,k^2)$ is an integer.
This set has the form $k^{(2)}_0 + \Gamma$ for some two-dimensional lattice $\Gamma$.
We now find spanning lattice vectors of $\Gamma$.

Define $\alpha_{ij} = \gcd(\alpha_i,\alpha_j)$ for $i\ne j$. 
(Note that the $\alpha_j$'s are not necessarily pairwise coprime, only all $3$ $\alpha_j$'s have no shared factor by the reduction above.
Also, since $\alpha_1$ and $\alpha_2$ are not both $0$, we have $\alpha_{12} \ne 0$ is well-defined.)
Shifting $k^{(2)}_0$ by $\kappa_0 := (Q_2 \frac{\alpha_2}{\alpha_{12}}, -Q_1 \frac{\alpha_1}{\alpha_{12}})$
we have 
\[
	\tilde \Lambda_3 ( k^{(2)}_0 + b \kappa_0) = \tilde \Lambda_3 ( k^{(2)}_0 ) \in \Z, \qquad b\in \Z
\]
and $\kappa_0$ is the shortest lattice vector with this property. 
One should note here that $\kappa_0$ is not ``short''. Indeed $|\kappa_0| \gtrsim Q$ since both $Q_1,Q_2\gtrsim Q$, see \Cref{defn.P_F.true}, 
and $\alpha_1,\alpha_2$ are not both $0$.
We now look for  
the lattice vector in $\Gamma$ giving the smallest possible (integer) increase of $\tilde \Lambda_3$.
This lattice vector together with $\kappa_0$ spans $\Gamma$.
Note that 
\begin{equation}\label{eqn.def.delta.Lambda}
	\delta \tilde \Lambda_3(\kappa)
	:= 
	\tilde \Lambda_3(k^{(2)}_0 + \kappa) - \tilde \Lambda_3(k^{(2)}_0)
		=  \frac{-Q_1\alpha_1\kappa^1 - Q_2\alpha_2\kappa^2}{Q_3 \alpha_3}.
\end{equation}
Suppose first that either $\alpha_1=0$ or $\alpha_2=0$, say $\alpha_2=0$. 
Now, $Q_1 \ne Q_3$ and $|\alpha_j|\leq C\sqrt{Q}$ so $Q_j$ is not a factor of $\alpha_i$ for any $i=1,3, j=1,2,3$.
Thus, $\gcd(Q_3\alpha_3,Q_1\alpha_1) = \gcd(\alpha_1,\alpha_3) = 1$ since $\alpha_2 = 0$.
For the ratio $\delta \tilde \Lambda_3(\kappa)$ to be an integer we need that the numerator is some multiple of $Q_3\alpha_3$,
and thus that $|\kappa| \gtrsim Q_3 \gg R$.
Thus there is at most one $k_0^{(2)}\in I_0$ and the lemma is clear.

Suppose then that $\alpha_1\ne 0$, $\alpha_2\ne 0$.
Varying $\kappa\in \Z^2$ we have by Bézout's lemma that the numerator in \Cref{eqn.def.delta.Lambda} assumes as values 
all multiples of $\gcd(Q_1\alpha_1, Q_2\alpha_2)$.
We have  $\gcd(Q_1\alpha_1, Q_2\alpha_2) = \gcd(\alpha_1, \alpha_2) = \alpha_{12}$.
For the ratio $\delta \tilde \Lambda_3(\kappa)$ to be an integer we need that the numerator is some multiple of $Q_3\alpha_3$. 
Since by assumption there are no prime factors shared by all $\alpha_j$'s and $Q_3$ is not a factor of $\alpha_{12}$ 
we have $\gcd(\alpha_{12}, Q_3\alpha_3) = 1$.
Thus, the smallest integer increase of $\tilde \Lambda_3$ is $\alpha_{12}\geq 1$
and this happens along some lattice vector $\kappa_1$.
Immediately then $\Gamma \supset \{a\kappa_1 + b\kappa_0 : a,b\in \Z\}$.
To see that $\Gamma \subset \{a\kappa_1 + b\kappa_0 : a,b\in \Z\}$ note that by Bézout's lemma the (integer) solutions  to the equation 
\[
  -Q_1\alpha_1\kappa^1 - Q_2\alpha_2\kappa^2 = Q_3\alpha_3 A,
\]
for some integer $A\in \Z$, is exactly $(\kappa^1,\kappa^2)\in \left\{\frac{A}{\alpha_{12}}\kappa_1 + b\kappa_0 : b\in \Z\right\}$
if $\alpha_{12}$ divides $A$ and there are no solutions otherwise.
In summary then
\begin{equation}\label{eqn.delta.Lambda.integer}
\Gamma = \{ a \kappa_1 + b \kappa_0 : a,b \in \Z\},
\qquad 
\tilde \Lambda_3( k^{(2)}_0 + a\kappa_1 + b \kappa_0) = \tilde \Lambda_3(k^{(2)}_0) + a \alpha_{12},
\quad a,b\in \Z.
\end{equation}
Moreover
\[
	I_0 = \left(k^{(2)}_0 + \Gamma\right) \cap \left\{ (k^1,k^2)\in \Z^2 \,:\, [\lambda_1] \leq k^1 \leq [\Lambda_1],\,  [\lambda_2(k^1)]\leq k^2 \leq [\Lambda_2(k^1)]\right\}.
\]
\paragraph*{Finding the candidate for $\kappa$.}
We now find the candidate for the $\kappa$ in the lemma. 
Either $I_0 = \{k^{(2)}_0\}$, in which case the lemma is clear (take $h=0$), 
or there exists some (non-zero) $\kappa = a\kappa_1 + b\kappa_0\in \Gamma$ such that $k^{(2)}_0 + \kappa \in I_0$. 
For such $\kappa$ we have (for sufficiently large $N$) that $a\ne 0$ as $|\kappa_0| \gtrsim Q \gg R$ and any such $\kappa$ has $|\kappa| \leq CR$.
Let $\kappa_2 = a_2 \kappa_1 + b_2 \kappa_0$ be the $\kappa$ such that $k^{(2)}_0 + \kappa \in I_0$
with minimal value of $|a_2|$. 
($\kappa_2$ is unique up to potentially a sign if both $k^{(2)}_0 - \kappa_2\in I_0$ and $k^{(2)}_0 + \kappa_2 \in I_0$.)
It follows from \Cref{eqn.delta.Lambda.integer} that 
$|a_2|\leq CR/\alpha_{12} \leq CR$ since $|\delta\tilde\Lambda_3(\kappa_2)|\leq CR$ as the tetrahedron is contained in $[0,CR]^3$.

If $b_2 = 0$ then $a_2=\pm 1$, else if $b_2\ne 0$ then $\gcd(a_2, b_2) = 1$. 
Indeed, if $a_2$ and $b_2$ shared some common factor, 
we could factor this out to find a $\kappa$ with smaller value $|a|$ contradicting the minimality of $|a_2|$.

\paragraph*{Characterizing all allowed $\kappa$'s.}
We  claim that by potentially redefining $k^{(2)}_0$ to $k^{(2)}_0 - a\kappa_2$ with $a\in \Z$ largest such that still $k^{(2)}_0 - a\kappa_2 \in I_0$
we have that 
\begin{equation}\label{eqn.form.Ik.claimed}
\begin{aligned}
	I_0 
	& = \{ k^{(2)}_0 + \tau \kappa_2 : \tau\in \{0,\ldots,h\} \},
	\qquad \textnormal{for some }  h \in \Z, h\geq 0.
\end{aligned}
\end{equation}
(The intuition for the remainder of the argument is as follows.
Essentially, if some $\kappa$ had $k^{(2)}_0 + \kappa \in I_0$ but was not a multiple of $\kappa_2$, 
it would have to differ from some multiple of $\kappa_2$ by at least $\kappa_0$ or $\kappa_1$. 
Since $|\kappa_0| \gg R$ and either $\kappa_1 = \kappa_2$ or $|\kappa_1|\gg R$, this is impossible.)
To prove \Cref{eqn.form.Ik.claimed} we first introduce the following notation.
We view  a lattice vector $\kappa\in\Z^2$ as a vector $\kappa\in\R^2$ and write
$\kappa^\parallel$ for its component parallel to $\kappa_0$. 
Note that $\kappa^\parallel$ need not have integer coordinates.
Define the constant $A$ such that $\kappa_1^\parallel = A \kappa_0$. (Note that $A$ need not be an integer.)
Let  $0\ne \kappa = a\kappa_1 + b\kappa_0\in \Gamma$ with $k^{(2)}_0 + \kappa \in I_0$.
We have 
\[
  \kappa^{\parallel} = a\kappa_1^\parallel + b \kappa_0
    = (aA + b)\kappa_0.
\]
Thus, since $|\kappa_0| \gtrsim Q$, $|\kappa|\lesssim CR$ and $|a| \geq 1$ (since $\kappa \ne 0$) 
we have $\abs{\frac{b}{a} + A} \leq \frac{CR}{Q}$.


Using this also for $\kappa_2 = a_2\kappa_1 + b_2\kappa_0$ we get 
\[
	\abs{ba_2 - b_2 a}
	= 
	\abs{\frac{b}{a} - \frac{b_2}{a_2}}|a a_2| \leq |a a_2| \left(\abs{\frac{b}{a} + A} + \abs{-\frac{b_2}{a_2} - A}\right) \leq CR^2\frac{R}{Q}
	\ll 1.
\]
But $ba_2 - b_2 a$ is an integer. Hence (for $N$ sufficiently large) we have $ba_2 = b_2 a$.
Now, if $b_2=0$ then $b=0$ and so $a_2=\pm 1$ is a divisor of $a$ so $\kappa = \pm a\kappa_2$.
If $b_2\ne 0$ then $\gcd(a_2,b_2)=1$ and thus $a_2$ is again a divisor of $a$ and $a/a_2 = b/b_2$.
Then $\kappa = \frac{a}{a_2} \kappa_2$ is a multiple of $\kappa_2$. This shows the desired.

\paragraph*{Integral form.}
To prove \Cref{eqn.lem.number.theory} 
we do the following. 
Define $e_2 = \kappa_2/|\kappa_2|$ as the unit vector parallel to $\kappa_2$ and 
$e_2^\perp$ as the unit vector perpendicular to $\kappa_2$.
Then define the domain
\[
	S_0 := \left\{ x^{(2)}\in \R^2 : \abs{x^{(2)} \cdot e_2} \leq 4\pi, \abs{x^{(2)}\cdot e_2^\perp}\leq 4\pi \right\}
\]
and note that $[0,2\pi]^2 \subset S_0$. Thus, using \Cref{eqn.form.Ik.claimed}
\[
\begin{aligned}
\int_{[0,2\pi]^2}  
		\abs{\sum_{k\in I_0}
 			t_j(k^{1},k^2) e^{i k^{(2)}x^{(2)}} }
 		\ud x^{(2)}
 	& \leq
 	\int_{S_0} \abs{\sum_{\tau = 0}^h t_j \left(k_0^{(2)} + \tau \kappa^{(2)}\right) e^{i \tau \kappa^{(2)} x^{(2)}}} \ud x^{(2)}.
\end{aligned}
\]
The integrand is constant in the $e_2^\perp$-direction, and $2\pi$-periodic in the $e_2$-direction. 
Thus, computing the integral in these coordinates we have
\[
\begin{aligned}
 	\int_{S_0} \abs{\sum_{\tau = 0}^h t_j \left(k_0^{(2)} + \tau \kappa^{(2)}\right) e^{i \tau \kappa^{(2)} x^{(2)}}} \ud x^{(2)}
 	& =
 	32\pi  \int_{0}^{2\pi} \abs{\sum_{\tau = 0}^h t_j \left(k_0^{(2)} + \tau \kappa^{(2)}\right) e^{i \tau |\kappa^{(2)}| x}} \ud x.
\end{aligned}
\]
This concludes the proof.
\end{proof}

\noindent
Combining  \Cref{lem.lemma.3.7,lem.lemma.3.9,lem.lemma.3.6,lem.extra.term.integers} the $r$- and $\nu$-sums 
in \Cref{lem.lemma.3.7,lem.lemma.3.6}
are readily bounded because of the factor $1/r!$ from 
\Cref{lem.lemma.3.7}. We conclude that 
\[
	\int_{[0,2\pi]^3} \abs{ F_3^j(x)} \ud x 
	\lesssim \log Q \int_{[0,2\pi]^2} \abs{\tilde D_2^j(x)} \ud x 
	+ 1 
	+ 
	\int_{0}^{2\pi} \abs{\sum_{\tau = 0}^h t_j \left(k_0^{(2)} + \tau \kappa^{(2)}\right) e^{i \tau |\kappa^{(2)}| x^{1}}} \ud x^{1},
\]
where $k^{(2)}_0$ and $\kappa^{(2)}$ are as in \Cref{lem.extra.term.integers}.
If the set $I_0$ from \Cref{lem.extra.term.integers} is empty, then the bound is valid without the last term. 
In particular it is valid with any $k^{(2)}_0 \in [0,CR]^2$, (non-zero) $\kappa=\kappa^{(2)} \in \Z^2$ and $h=0$.
Thus, by \Cref{lem.lemma.3.3,eqn.lemma.3.1} we prove the desired bound, \Cref{lemma.derivative.lebesgue.induction}.

\subsection{Reduction from \texorpdfstring{$d=2$ to $d=1$}{d=2 to d=1}}\label{sec.induction.d=2}
For $j=1,2$ we will do one more step reducing the dimension. 
The argument is basically the same as for going from dimension $d=3$ to $d=2$ in \Cref{sec.induction.d=3}. 
We sketch the main differences.

As we did in \Cref{sec.reduction.tetrahedron} for $d=3$ by adding and subtracting the lower tail of the sum, 
we may assume that the $k^2$-sum is $\sum_{k^2 = 0}^{[\Lambda_2(k^1)]}$. 
\begin{remark}\label{rmk.k2.split}
It is valid here to make the $k^2$-sum go from $0$, 
since now the $k^2$-sum is the innermost sum and we do not risk 
values of $k^3$ much larger that $R$ by doing so (as in \Cref{rmk.problem.k2.split}).
Indeed, we already computed the sum over the relevant $k^3$.
We could at this point also do the same splitting of the $k^1$-sum, but we would have the same problems 
that $\Lambda_2(k^1)$ or $\lambda_2(k^1)$ might be much larger than $R$ for $k^1 \leq \lambda_1$
as in \Cref{rmk.problem.k2.split}.
\end{remark}

\noindent
Additionally, by splitting the $k^1$-sum into at most $2$ parts, 
we may assume that $\Lambda_2$ is just the equation for a line.
Here again one needs to be careful with what to do with the boundary terms. 
This gives some sums over $1$-dimensional tetrahedra (i.e. line segments),
which we can write as differences of sums over $2$-dimensional tetrahedra exactly as for the $3$-dimensional case.
We are led to define the quantities 
\begingroup
\allowdisplaybreaks
\begin{align*}
	D_2^j(x) & := \sum_{k^1 = [\lambda_1]}^{[\Lambda_1]} t_j(k^1)
		\sum_{k^2 = 0}^{[\Lambda_2(k^1)]} e^{ik^{(2)}x^{(2)}},
		\\	
	\tilde D_1^j(x) & := \sum_{k^1 = [\lambda_1]}^{[\Lambda_1]} t_j(k^1) e^{ik^{1}x^{1}},
		\\
	G_2^j(x) & := \frac{1}{e^{ix^2} - 1} \left(e^{i (n_2 + 1) x^2} \tilde D_{1}^j(x^{1} - m_1 x^2 ) - \tilde D_{1}^j(x^{1})\right),
		\\
	F_2^j(x) & := \frac{e^{i(n_2 + 1)x^2}}{e^{ix^2}-1} 
		\sum_{k^1 = [\lambda_1]}^{[\Lambda_1]} t_j(k^1) e^{ik^{1}(x^{1} - m_1x^2)} 
		\left(e^{-i\expect{\Lambda_2(k^{1})} x^2} - 1\right).
\end{align*}%
\endgroup
We  claim the following inductive bound. 
\begin{lemma}\label{lemma.derivative.lebesgue.induction.d=2}
For  $j=1,2$ we have for $N$ sufficiently large that 
\[
	\int_{[0,2\pi]^2} \abs{ D_2^j(x^{(2)})} \ud x^{(2)} \lesssim (\log R + \log Q) \int_{[0,2\pi]} \abs{\tilde D_{1}^j(x^{1})} \ud x^{1} 
	+ \begin{cases}
	R & j=1, \\ R^2 & j=2.
	\end{cases}	
\]
\end{lemma}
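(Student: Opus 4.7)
The plan is to follow the same strategy as the $d=3 \to d=2$ reduction of \Cref{sec.induction.d=3}, adapted one dimension lower. First I would note the direct analogue of \Cref{eqn.lemma.3.1}, namely $D_2^j = G_2^j + F_2^j$, which comes from summing the geometric series in $k^2$ and isolating the rounding error caused by replacing $\Lambda_2(k^1)$ with $[\Lambda_2(k^1)]$. I would then bound $\int_{[0,2\pi]^2}|G_2^j|$ by a straightforward adaptation of \Cref{lem.lemma.3.3}: change variables $x^1 \mapsto x^1 + m_1 x^2$ in the first summand and estimate $\int_0^{2\pi} |e^{ix^2} - 1|^{-1}\, dx^2 \lesssim \log R$, using that $\tilde D_1^j$ is a trigonometric polynomial of degree $\lesssim R$ so that the singularity at $x^2 = 0$ may be truncated at scale $1/R$. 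This yields $\int |G_2^j| \lesssim \log R \, \int |\tilde D_1^j|$.

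For $F_2^j$ I would Taylor expand $e^{-i\expect{\Lambda_2(k^1)}x^2} - 1 = \sum_{r \geq 1}\frac{(-ix^2)^r}{r!}\expect{\Lambda_2(k^1)}^r$ as in \Cref{lem.lemma.3.7}, integrate out $x^2$ against the factor $|e^{ix^2}-1|^{-1}$ (the $1/r!$ from the expansion tames the resulting $(2\pi)^r$), and reduce matters to bounding
\[
\int_{0}^{2\pi} \Bigl| \sum_{k^1 = [\lambda_1]}^{[\Lambda_1]} t_j(k^1) \, e^{ik^1 x^1} \, \expect{\Lambda_2(k^1)}^r \Bigr| \, dx^1
\]
for each $r \geq 1$. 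Following \Cref{lem.lemma.3.6}, the bracket $\expect{\cdot}$ can be replaced by the ordinary fractional part $\{\cdot\}$ at the cost of an additional $\int |\tilde D_1^j|$ contribution, since $\Lambda_2(k^1) \notin \Q$ for $k^1 \in \Z$ by the irrationality of $\sigma$ built into \Cref{defn.P_F.true}. A Dirichlet-approximation argument as in \Cref{lem.lemma.3.9} then produces a rational affine $\tilde\Lambda_2$ with denominator $\lesssim Q^C$, and the corresponding sum with $\{\tilde\Lambda_2(k^1)\}^r$ is controlled by $\log(rQ) \int |\tilde D_1^j|$; what remains is an ``integer'' term coming from those $k^1$ with $\tilde\Lambda_2(k^1) \in \Z$.

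The main new step, and the one-dimensional analogue of \Cref{lem.extra.term.integers}, is to control this remainder. Because $\tilde\Lambda_2$ is an affine function of $k^1$ alone with rational coefficients of denominator $\lesssim Q^C$, the set $I_0 = \{k^1 \in [[\lambda_1], [\Lambda_1]] \cap \Z : \tilde\Lambda_2(k^1)\in \Z\}$ is an arithmetic progression $\{k_0^1 + \tau\kappa : \tau = 0,\ldots, h\}$ with $h|\kappa| \lesssim R$. This follows from a short B\'ezout argument in one variable, in the spirit of the identification of $\kappa_1, \kappa_2$ in the proof of \Cref{lem.extra.term.integers} but considerably simpler. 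It then remains to bound
\[
\int_0^{2\pi} \Bigl| \sum_{\tau = 0}^h t_j(k_0^1 + \tau \kappa) \, e^{i\tau|\kappa| x^1} \Bigr| \, dx^1,
\]
which, by periodicity after the substitution $y = |\kappa| x^1$, equals the same integral without the $|\kappa|$ factor. For $j=1$ I would expand $t_1(k_0^1 + \tau\kappa) = k_0^1 + \tau\kappa$ and estimate the two pieces using $\|D_h\|_{L^1} \lesssim \log h$ for the Dirichlet kernel, together with the Abel-summation identity $\sum_{\tau=0}^h \tau z^\tau = h S_h(z) - \sum_{m=0}^{h-1} S_m(z)$ to get $\int_0^{2\pi}|\sum_\tau \tau e^{i\tau y}|\, dy \lesssim h \log h$; combined with $|k_0^1| \lesssim R$ and $|\kappa| h \lesssim R$ this yields $\lesssim R \log R$. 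For $j=2$, expanding $t_2(k_0^1 + \tau\kappa)$ into three monomials in $\tau$ and iterating the Abel argument to obtain $\int_0^{2\pi}|\sum_\tau \tau^2 e^{i\tau y}|\, dy \lesssim h^2\log h$ gives a bound of order $R^2 \log R$. The stray $\log R$ is absorbed into the $(\log R + \log Q)$ prefactor of the main term.

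The hard part will be the number-theoretic bookkeeping: making sure that the rational approximation $\tilde\Lambda_2$ has denominator controlled by $Q$, that $I_0$ really is a single arithmetic progression with $h|\kappa| \lesssim R$, and that the irrationality of $\sigma$ (and hence of the constant term of $\Lambda_2$) is used consistently so that $\Lambda_2(k^1) \notin \Z$ and the replacement $\expect{\cdot}\to \{\cdot\}$ is exact. Once these ingredients are in place, the Fourier-analytic estimates on the resulting Dirichlet-kernel-type sums are routine, and summing the Taylor series in $r$ (using the factor $1/r!$) produces the stated bound.
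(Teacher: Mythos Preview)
Your approach is essentially the same as the paper's, but you over-complicate the handling of the ``integer term'' and, as written, do not quite prove the lemma as stated.

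The paper's key observation for this step is that the set $I_0 = \{k^1 : \tilde\Lambda_2(k^1) \in \Z\}$ contains \emph{at most one} point, not a full arithmetic progression. The reason is purely arithmetic: $\tilde\Lambda_2(k^1) = p/q - \tfrac{Q_1\alpha_1}{Q_2\alpha_2}k^1$, and since $Q_2$ is a large prime with $|\alpha_1| \leq CQ^{1/4} \ll Q_2$, we have $\gcd(Q_1\alpha_1, Q_2) = 1$. Hence the minimal shift $\delta k^1$ making $\tilde\Lambda_2$ change by an integer is at least $Q_2 \gg R$, so at most one $k^1$ in $[[\lambda_1],[\Lambda_1]]$ can satisfy $\tilde\Lambda_2(k^1)\in\Z$. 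The single surviving term is then bounded trivially by $|t_j(k^1)| \leq CR^j$, which is exactly the $R$ (resp.\ $R^2$) appearing in the lemma.

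Your arithmetic-progression argument would also work but yields $R\log R$ (resp.\ $R^2\log R$) for this remainder, which is weaker than the stated $R$ (resp.\ $R^2$). Your claim that ``the stray $\log R$ is absorbed into the $(\log R + \log Q)$ prefactor of the main term'' is not immediate from the lemma's structure, since the two terms are additive; it requires the lower bound $\int|\tilde D_1^j| \gtrsim R^j$, which does hold (e.g.\ via $\|\tilde D_1^j\|_2^2 \leq \|\tilde D_1^j\|_1 \|\tilde D_1^j\|_\infty$) but is not something you mention. So your argument proves a slightly weaker intermediate statement that nonetheless suffices for the final application; the paper's route is cleaner and gives the lemma exactly as stated.
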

\begin{proof}
As for $\Lambda_3$, we have that  
the equation of a line between any two points $(p^1_i/Q_1, p^2_i/Q_2)$, $i=1,2$ is given by 
\[
	\frac{p_1^1 - p_2^1}{Q_2} k^1 + \frac{p^2_2 - p^2_1}{Q_1}k^2 = \textnormal{const}\,.
\]
If we choose the points to be either corners of $RP$ or the central point $Rz$
we get the equation 
\[
	\frac{\alpha_1}{Q_2} k^1 + \frac{\alpha_2}{Q_1} k^2 = R\sigma \gamma \notin \Q.
\]
Here we might have that $\alpha_1=0$ or $\alpha_2=0$. 

If $\alpha_2=0$ this line is parallel to the $k^2$-axis and so does not give rise to a bound for the $k^2$-sum.
Thus $\alpha_2\ne 0$.
If $\alpha_1=0$ the sum in $D_2^j(x)$ and integral thereof factorizes, and hence by \cite[Lemma 3.2]{Kolomoitsev.Lomako.2018}
we have that 
\[
	\int_{[0,2\pi]^2} \abs{ D_2^j(x^{(2)})} \ud x^{(2)} \leq C \log R \int_{0}^{2\pi} \abs{\tilde D_1^j(x^1)} \ud x^1.
\]
Hence, this case yields the desired inductive bound, \Cref{lemma.derivative.lebesgue.induction.d=2}.
Suppose then $\alpha_1,\alpha_2 \ne 0$.

Then 
\[
	\Lambda_2(k^1) = n_2 - m_1 k^1 = n_2 - \frac{Q_1\alpha_1}{Q_2 \alpha_2} k^1, \qquad n_2 \notin \Q, \qquad |\alpha_j|\leq C Q^{1/4}, \quad j=1,2.
\]
\Cref{lem.lemma.3.3,lem.lemma.3.7,lem.lemma.3.6} are readily adapted and proven as before. 
The adaptation of \Cref{lem.lemma.3.9} is then mostly analogous. One chooses $Q_\infty=Q_2\alpha_2$
and finds the rational approximation of $\Lambda_2$ as 
\[
	\tilde\Lambda_2(k^1) = \Lambda_2(k^1) - \gamma_2 = \frac{p}{q} - \frac{Q_1\alpha_1}{Q_2\alpha_2}k^1,
	\qquad 
	|\gamma_2| < \frac{1}{qQ_\infty}\leq Q^{-1}.	
\]
The rest of the argument follows exactly as for $d=3$ only that the extra term of the sum where $\tilde\Lambda_2(k^1)\in \Z$ may be bounded as follows.
\[
	\abs{\sum_{\substack{k^1 = [\lambda_1] \\ \tilde \Lambda_2(k^1) \in \Z}}^{[\Lambda_1]}
 			t_j(k^{1}) e^{i k^{1}x^{1}}  
 	}
 	\leq \begin{cases}
 	R & j=1 \\ R^2 & j=2
 	\end{cases}
\]
since there is at most one $k^1$ such that $\tilde \Lambda_2(k^1)$ is an integer. 
To see this note that $\gcd(\alpha_1,Q_2) = 1$ since $|\alpha_1| \leq CQ^{1/4} \ll Q_2$, hence the change in $k^1$ to change $\tilde \Lambda_2(k^1)$ by an integer is at least $Q_3 \gg R$.
We thus conclude the desired bound.
\end{proof}

\subsection{Bounding the one-dimensional integrals}\label{sec.1d.integrals}
Now we bound $\int |\tilde D_1^j|$ and $\int \abs{\sum_{\tau = 0}^h t_j \left(k_0^{(2)} + \tau \kappa\right) e^{i \tau |\kappa| x}} \ud x$
from the right-hand-sides of \Cref{lemma.derivative.lebesgue.induction.d=2,lemma.derivative.lebesgue.induction}.
For $\tilde D_1^j$ we may assume that the lower bound of the summations are at $0$ 
by the same procedure as in \Cref{sec.reduction.tetrahedron}.
Expanding $t_j \left(k_0^{(2)} + \tau \kappa\right)$ we see that $j=1$ gives an affine expression in $\tau$ 
and $j=2,3$ give quadratic expressions in $\tau$.
For instance,
\[
	t_2\left(k_0^{(2)} + \tau \kappa\right) = (k_0^1)^2 + 2 k_0^1 \kappa^1 \tau + (\kappa^1)^2 \tau^2.
\]
Thus, bounding both the integrals amounts to bounding 
the following:
\begin{lemma}\label{lem.eval.1d.integrals}
Let $M\geq 2$ be an integer. Then
\begin{enumerate}[(1)]
\item $\displaystyle \int_0^{2\pi} \abs{ \sum_{k=0}^{M} e^{ikx}} \ud x \leq C \log M$,
\item $\displaystyle \int_0^{2\pi} \abs{ \sum_{k=0}^{M} k e^{ikx}} \ud x \leq CM \log M$,
\item $\displaystyle \int_0^{2\pi} \abs{ \sum_{k=0}^{M} k^2 e^{ikx}} \ud x \leq CM^2 \log M$.
\end{enumerate}
\end{lemma}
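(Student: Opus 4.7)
My plan is to prove the three bounds by reducing (2) and (3) to (1) via summation by parts (Abel summation), so the essential work is concentrated in (1), which is the classical Lebesgue-constant estimate for the Dirichlet kernel.

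For (1), I would use the explicit geometric sum $\sum_{k=0}^M e^{ikx} = \frac{e^{i(M+1)x}-1}{e^{ix}-1}$ and the two-sided bound $|e^{i(M+1)x}-1| \leq \min(2, (M+1)|x|)$ combined with $|e^{ix}-1| \geq c|x|$ on $[-\pi, \pi]$. Splitting the integral on $[0,\pi]$ (and using symmetry to extend to $[0,2\pi]$) into $[0, 1/M]$ and $[1/M, \pi]$, the first piece contributes $O(1)$ since the integrand is bounded by $C(M+1)$ on an interval of length $1/M$, and the second piece contributes $\int_{1/M}^\pi \frac{C}{x}\, dx \leq C\log M$. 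This gives $\int_0^{2\pi} |S_M(x)|\, dx \leq C\log M$, where I write $S_j(x) := \sum_{k=0}^{j} e^{ikx}$.

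For (2), I apply Abel summation. Writing $e^{ikx} = S_k(x) - S_{k-1}(x)$ (with $S_{-1} = 0$), a short computation gives
\[
\sum_{k=0}^M k e^{ikx} = M S_M(x) - \sum_{k=0}^{M-1} S_k(x).
\]
Integrating and using part (1) yields
\[
\int_0^{2\pi} \Bigl|\sum_{k=0}^M k e^{ikx}\Bigr|\, dx \leq M \cdot C\log M + \sum_{k=0}^{M-1} C\log(k+2) \leq C M \log M.
\]
For (3), the same manoeuvre with weights $k^2$ produces
\[
\sum_{k=0}^M k^2 e^{ikx} = M^2 S_M(x) - \sum_{k=0}^{M-1}(2k+1)\, S_k(x),
\]
and integrating gives the bound $C M^2 \log M + \sum_{k=0}^{M-1}(2k+1)\, C\log(k+2) \leq C M^2 \log M$, as required.

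There is no real obstacle here; the only subtle point is being careful with the near-origin behaviour of the Dirichlet kernel in step (1), which is handled by the dichotomy $|x| \leq 1/M$ versus $|x| \geq 1/M$. Once (1) is established, (2) and (3) are immediate consequences of one and two applications of summation by parts respectively, with no arithmetic or number-theoretic input (in contrast to the two- and three-dimensional estimates earlier in the appendix).
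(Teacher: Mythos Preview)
Your proof is correct, but takes a different route from the paper for parts (2) and (3). The paper derives closed-form expressions for $\sum_{k=0}^M kq^k$ and $\sum_{k=0}^M k^2q^k$ (by differentiating the geometric series), and then estimates the resulting rational functions of $e^{ix}$ directly, again splitting at $x=1/M$. Your Abel summation argument is arguably cleaner for this lemma in isolation: it reduces (2) and (3) to a telescoping sum of Dirichlet-kernel integrals with no need for the explicit formulas.

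The reason the paper does it the longer way is not visible from the lemma alone: the explicit identities for $\sum kq^k$ and $\sum k^2 q^k$ are reused immediately afterwards in the proof of \Cref{lem.compute.kl.sum.integral}, both to evaluate the inner $\ell$-sum and to bound $|\partial D|$, $|\partial^2 D|$, $|\partial^3 D|$. So the paper's approach amortises the cost of those formulas across two lemmas, whereas your approach would leave that work still to be done later.
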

\begin{proof}
The bound $(1)$ is elementary, see also \cite[Lemma 3.2]{Kolomoitsev.Lomako.2018}.
For any $M\in \N$ and $q \in \C\setminus\{1\}$ we have
\begingroup
\allowdisplaybreaks
\begin{equation}
\label{eqn.sum.kq^k}
\begin{aligned}
	\sum_{k=0}^M q^k 
	& = \frac{q^{M+1}-1}{q-1}
	\\
	\sum_{k=0}^M kq^k 
	& = \frac{q}{(q-1)^2} \left[q^M(Mq - M-1) + 1\right]
	\\
	\sum_{k=0}^M k^2 q^k 
	& = \frac{q}{(q-1)^3}\left[ q^M \left(M^2 (q-1)^2 - 2M (q-1) + q + 1\right) - q - 1\right].
\end{aligned}
\end{equation}
\endgroup
Consider now the integrals $(2)$ and $(3)$. 
By symmetry of complex conjugation $\int_0^{2\pi} = 2\int_0^\pi$. 
We split the integrals according according to whether $x\leq 1/M$ or $x\geq 1/M$. 
For $x \leq 1/M$ we have 
\[
\begin{aligned}
\int_0^{1/M} \abs{ \sum_{k=0}^{M} k e^{ikx}} \ud x 
	& \lesssim \int_0^{1/M} M^2 \ud x \lesssim M,
\\
\int_0^{1/M} \abs{ \sum_{k=0}^{M} k^2 e^{ikx}} \ud x 
	& \lesssim \int_0^{1/M} M^3 \ud x \lesssim M^2.
\end{aligned}
\]
For $x \geq 1/M$ we use \Cref{eqn.sum.kq^k} and note that $\abs{e^{ix} - 1} \geq c x$ for $x\leq \pi$.
Expanding the exponentials $e^{ix} = 1 + O(x)$ we thus have 
\begin{align*}
	\int_{1/N}^{\pi} 
	\abs{ \sum_{k=0}^{M} k e^{ikx}} \ud x 
	& \lesssim \int_{1/M}^\pi \frac{1}{x^2} \left[Me^{iMx}(e^{ix} - 1) + 1 - e^{iMx}\right] \ud x
	\\ &
		\lesssim \int_{1/M}^\pi \left(\frac{M}{x} + \frac{1}{x^2}\right) \ud x \lesssim M \log M
	\\ \intertext{and}
	\int_{1/N}^{\pi}
	\abs{ \sum_{k=0}^{M} k^2 e^{ikx}} \ud x  
	& \lesssim \int_{1/M}^\pi \frac{1}{x^3} \left[ e^{iMx}\left(M^2 (e^{ix}-1)^2 - 2M (e^{ix}-1) + e^{ix} + 1\right) - e^{ix} - 1\right] \ud x
	\\ & 
		\lesssim \int_{1/M}^\pi \left(\frac{M^2}{x} + \frac{M}{x^2} + \frac{1}{x^3}\right) \ud x \lesssim M^2 \log M.
\end{align*}
This concludes the proof.
\end{proof}

\noindent
With this we may thus bound for ($j=2$, say)
\[
\begin{aligned}
 & \int_{0}^{2\pi} \abs{\sum_{\tau = 0}^h t_2 \left(k_0^{(2)} + \tau \kappa\right) e^{i \tau |\kappa| x}} \ud x
 \\ & \quad 
 = \int_{0}^{2\pi} \abs{\sum_{\tau = 0}^h \left((k_0^1)^2 + 2 k_0^1 \kappa^1 \tau + (\kappa^1)^2 \tau^2\right) e^{i \tau |\kappa| x}} \ud x
 \\ & \quad 
 \leq CR^2 \int_{0}^{2\pi} \abs{\sum_{\tau = 0}^h e^{i \tau |\kappa| x}} \ud x
  + CR|\kappa| \int_{0}^{2\pi} \abs{\sum_{\tau = 0}^h \tau e^{i \tau |\kappa| x}} \ud x
  + C|\kappa|^2 \int_{0}^{2\pi} \abs{\sum_{\tau = 0}^h \tau^2 e^{i \tau |\kappa| x}} \ud x\,.
\end{aligned}
\]
Substituting $y = |\kappa|x$, using \Cref{lem.eval.1d.integrals} and recalling that $h|\kappa|\lesssim R$ and $|\kappa|\geq 1$ by \Cref{lem.extra.term.integers}
we may bound this by $R^2\log R$.
An analoguous bound holds for $j=1$.
This takes care of all the one-dimensional integrals.
In combination  with \Cref{lemma.derivative.lebesgue.induction,lemma.derivative.lebesgue.induction.d=2}
we get the bounds for $j=1,2$ of \Cref{eqn.derivative.lebesgue.constant.general.tetrahedron}.
It remains to consider the two-dimensional integral for $j=3$.

\subsection{Bounding the \texorpdfstring{$j=3$}{j=3} two-dimensional integral}
We are left with bounding the integral $\int |\tilde D_2^3|$ on the right-hand-side of \Cref{lemma.derivative.lebesgue.induction}.
We first reduce to the case of a simpler tetrahedron (triangle).
By shifting the sums by a fixed $\kappa = (\kappa^1,0)\in\Z^2$ 
and using the bounds in \Cref{lem.eval.1d.integrals} to evaluate the extra contributions of the shift, 
we may assume that the $k^1$-sum starts at $0$.
By splitting the $k^2$-sum as in \Cref{sec.reduction.tetrahedron}
we may assume that that $k^2$-sum also starts at $0$.
That is, we  need to evaluate the integral
\[	
	\iint_{[0,2\pi]^2} \abs{ \sum_{k=0}^{[\Lambda_1]} \sum_{\ell =0}^{[\Lambda_2(k)]} k \ell e^{ikx} e^{i\ell y}} \ud x \ud y, 
\]
where 
$\Lambda_2(k) = n_2 - \frac{Q_1\alpha_1}{Q_2 \alpha_2} k$ for an irrational $n_2$. 
Recall that $|\Lambda_1|\leq CR$ and for any $0\leq k\leq [\Lambda_1]$ we have $|\Lambda_2(k)|\leq CR$.

The analysis given here is in spirit the same as given in \Cref{sec.induction.d=3,sec.induction.d=2,sec.1d.integrals}.
It is sufficiently different
that we find it easier to do the arguments separately. We shall show the following. 
\begin{lemma}\label{lem.compute.kl.sum.integral}
We have the following bound
\[ 
	\displaystyle \iint_{[0,2\pi]^2} \abs{ \sum_{k=0}^{[\Lambda_1]} \sum_{\ell =0}^{[\Lambda_2(k)]} k \ell e^{ikx} e^{i\ell y}} \ud x \ud y 
	\leq CR^2 (\log R)^2 \log Q.
\]
\end{lemma}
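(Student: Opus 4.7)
The plan is to follow the same Kolomoitsev--Lomako-style decomposition as in \Cref{sec.induction.d=3,sec.induction.d=2}, adapted to the extra $k\ell$ weight; the whole proof reduces to controlling a few one-dimensional sums handled by \Cref{lem.eval.1d.integrals}, plus a Dirichlet-approximation step analogous to \Cref{lem.lemma.3.6,lem.lemma.3.9,lem.extra.term.integers}. First I would use the closed form
\[
\sum_{\ell=0}^M \ell e^{i\ell y}
= \frac{1}{i}\partial_y\,\frac{e^{i(M+1)y}-1}{e^{iy}-1}
= \frac{(M+1)e^{i(M+1)y}(e^{iy}-1)-e^{iy}(e^{i(M+1)y}-1)}{(e^{iy}-1)^2}
\]
with $M=[\Lambda_2(k)]$, and then write $[\Lambda_2(k)]=\Lambda_2(k)-\expect{\Lambda_2(k)}$. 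Splitting $e^{-i\expect{\Lambda_2(k)}y}=1+(e^{-i\expect{\Lambda_2(k)}y}-1)$ gives a decomposition of the integrand as $H^{G}+H^{F}$, with $H^{G}(x,y)=\sum_{k=0}^{[\Lambda_1]} k\,\Phi^{G}(k,y)\,e^{ikx}$ (the part in which $\expect{\Lambda_2(k)}$ is replaced by $0$) and $H^{F}$ the correction.

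For the main part $H^{G}$, the crucial observation is that after the substitution $x\mapsto x+m_1y$ (where $m_1=Q_1\alpha_1/(Q_2\alpha_2)$), the factor $e^{-im_1ky}$ recombines with $e^{ikx}$ into $e^{ikx'}$, and so $H^{G}(x,y)$ splits into three terms of the form $c(y)\sum_{k=0}^{[\Lambda_1]}p(k)e^{ikx'}$ (or $e^{ikx}$ in one piece that did not require a shift), with $p$ a polynomial in $k$ of degree at most $2$. Each such $k$-sum is controlled by \Cref{lem.eval.1d.integrals} and its immediate extension to degree-$3$ weights, provided one uses the geometric constraint $|m_1|\,[\Lambda_1]\leq C R$ (inherited from $0\leq \Lambda_2(k)\leq C R$) in order to prevent the polynomial weights from producing an $R^3$ bound. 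The individual coefficients $c(y)$ have singularities of order $1/(e^{iy}-1)^2$ at $y=0$, but the combined expression $\Phi^{G}(k,y)$ has a removable singularity there with value $\Lambda_2(k)(\Lambda_2(k)+1)/2$; splitting the $y$-integral at $|y|=1/R$ and using $|\Phi^{G}(k,y)|\leq CR^2$ uniformly on the inner region and $|c(y)|\lesssim R/|y|+1/|y|^2$ on the outer region produces the required single logarithmic factor, giving $\iint|H^{G}|\leq CR^{2}(\log R)^{2}$.

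For the error part $H^{F}$, I would Taylor expand $e^{-i\expect{\Lambda_2(k)}y}-1=\sum_{r\geq 1}(-iy\expect{\Lambda_2(k)})^r/r!$ as in \Cref{lem.lemma.3.7}; the factor $y^r$ softens the $1/(e^{iy}-1)^2$ singularity. Then I would replace $\expect{\cdot}^r$ by $\{\tilde{\Lambda}_2(k)\}^r$ exactly as in \Cref{lem.lemma.3.6,lem.lemma.3.9}, with $\tilde\Lambda_2(k)=p/q-(Q_1\alpha_1/(Q_2\alpha_2))k$ the Dirichlet-type rational approximation constructed in \Cref{sec.induction.d=2} (so $\tilde\Lambda_2(k)\in (qQ_\infty)^{-1}\Z$ with $q\leq Q_\infty\sim Q$). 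This yields a factor $\log(rqQ_\infty)\lesssim \log R+\log Q$ times a one-dimensional sum of the form $\int|\sum_k k\,a_k e^{ikx'}|dx'$ with $|a_k|\leq CR$, which is $\leq CR^{2}\log R$ by \Cref{lem.eval.1d.integrals}; the $r$-sum converges because of $1/r!$. The leftover ``integer'' term, where $\tilde{\Lambda}_2(k)\in\Z$, is an analog of \Cref{lem.extra.term.integers} in dimension one: the set of such $k$ is either a single point or an arithmetic progression $k_0+\tau\kappa$, and the resulting one-dimensional integral $\int|\sum_{\tau}(k_0+\tau\kappa)e^{i\tau|\kappa|x}|dx$ is bounded by \Cref{lem.eval.1d.integrals} by $CR^{2}\log R$. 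Putting the pieces together produces the claimed bound $CR^{2}(\log R)^{2}\log Q$.

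The hard part will be steering the numerous $1/(e^{iy}-1)^p$ factors: individually the pieces of the decomposition are not $L^1$ in $y$, and one must keep them combined so that the explicit cancellation at $y=0$ (of order $2$ for $\Phi^G$, of order $r$ after the Taylor step) is available. A second delicate point is polynomial-weight book-keeping: after the shift $x\mapsto x+m_1y$ the inner sums acquire $k^2$-weights and would naively yield $CR^3\log R$; the only way to land at the correct $CR^2(\log R)^2\log Q$ is to use the tetrahedral geometry in the form $|m_1|\cdot[\Lambda_1]\leq CR$, which is what forces the higher-degree sums to fit inside a factor of $R^2$. The remaining details, including the mild extension of \Cref{lem.eval.1d.integrals} to a $k^3$-weighted sum (which follows line by line from \Cref{eqn.sum.kq^k}) and the one-dimensional analog of \Cref{lem.extra.term.integers}, are routine.
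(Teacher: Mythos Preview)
Your handling of $H^F$ via Taylor expansion and Dirichlet approximation is essentially the paper's treatment of its terms (II) and (III), and is fine (modulo the minor point that the correction also contains a piece $-\langle\Lambda\rangle e^{i[\Lambda]y}/(e^{iy}-1)$ coming from the \emph{linear} occurrence of $[\Lambda]$, not just the exponential one; this is the paper's (III) and is dispatched by a direct estimate).

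The genuine gap is in your treatment of the main piece $H^G$ near $y=0$. You correctly note that the three pieces $c_j(y)\sum_k p_j(k)e^{ikx'}$ are individually not integrable at $y=0$, and propose to keep $\Phi^G(k,y)$ combined on the inner region $|y|\le 1/R$, using $|\Phi^G(k,y)|\le CR^2$. But once you keep $\Phi^G$ combined, the sum $\sum_k k\,\Phi^G(k,y)\,e^{ikx}$ no longer factors as a function of $y$ times a one-dimensional exponential sum in $x$, so \Cref{lem.eval.1d.integrals} is no longer applicable to the $x$-integral. The only remaining bound is the trivial one,
\[
\Bigl|\sum_{k=0}^{M} k\,\Phi^G(k,y)\,e^{ikx}\Bigr|\;\le\; CR^2\sum_{k=0}^{M} k\;\lesssim\; R^2 M^2\;\le\; R^4,
\]
and integrating over $x\in[0,2\pi]$ and $y\in[0,1/R]$ yields $CR^3$, which is too large by a factor $R/(\log R)^2$.

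The paper resolves this not by a one-dimensional split in $y$ but by a genuinely two-dimensional decomposition of $[-\pi,\pi]\times[0,\pi]$ into five regions according to the sizes of $|x|$, $y$, and $|x-my|$. Where $y$ is small one also forces $|x|$ to be small (regions $I_1,I_2$: trivial $k$-bound on a strip of width $\sim 1/M$) or $|x|$ and $|x-my|$ to be large (regions $I_3,I_4$: pointwise bounds $|\partial^j D(z)|\lesssim M^j/|z|+\cdots+1/|z|^{j+1}$ on the Dirichlet kernel). In particular, in $I_3$ ($y\le 1/n$, $|x|\ge 2/M$) one Taylor-expands to \emph{second} order in $y$ so that a further cancellation between $\partial D(x)$, $\partial D(x-my)$ and $my\,\partial^2 D(x-my)$ becomes visible, and only then do the pointwise Dirichlet-kernel estimates close. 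Your purely $y$-based split cannot see this interplay; the missing idea is that the small-$y$ region must be subdivided according to $x$ and $x-my$ as well.
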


\noindent
Combining then \Cref{lemma.derivative.lebesgue.induction,lem.eval.1d.integrals,lem.compute.kl.sum.integral,lemma.derivative.lebesgue.induction.d=2} 
and choosing $Q$ some sufficiently large power of $N$ as required in \Cref{defn.P_F.true}
we conclude the proof of \Cref{eqn.derivative.lebesgue.constant.general.tetrahedron} 
and thus of 
\Cref{lem.derivative.lebesgue.constant}.
It remains to give the proof of \Cref{lem.compute.kl.sum.integral}.
\begin{proof}
Denote $M = [\Lambda_1]$ and recall
$\Lambda_2(k) = n_2-m_1k = n_2 - \frac{Q_1\alpha_1}{Q_2 \alpha_2} k$.
First note that by mapping $\ell \mapsto [\Lambda_2(k)] - \ell$ we may assume that $m_1 \geq 0$ . 
If $m_1=0$ the sum factors, and so does the integral into two one-dimensional sums/integrals. These may be bounded 
using \Cref{lem.eval.1d.integrals}. 
In this case we get the bound $\leq CR^2 (\log R)^2$ as desired. Hence assume that $m_1 > 0$. 
Moreover, if $n_2 > m_1 M$ we may split the $(k,\ell)$-sum into two parts, 
\[
\sum_{k=0}^{M}\sum_{\ell = 0}^{[\Lambda_2(k)]} 
	= \sum_{k=0}^{M}\sum_{\ell = 0}^{[n_2-m_1M]} + \sum_{k=0}^{M}\sum_{\ell = [n_2-m_1M]+1}^{[\Lambda_2(k)]}.
\]
The first sum factors into one-dimensional integrals which we may bound using \Cref{lem.eval.1d.integrals} again. 
The second we may shift by a constant $\ell$ (again then using \Cref{lem.eval.1d.integrals} to evaluate the contribution of the shift) and assume that 
the lower limit of the $\ell$-sum is $0$. 
The upper limit then becomes $[\Lambda(k)]$, where 
\[
	\Lambda(k) = n_2 - ([n_2 - m_1M] + 1) - m_1 k := n - mk\,.
\]
Geometrically, this means that the domain of the $(k,\ell)$-sum is 
a triangle with two sides along the axes. We thus need to bound 
\[
\iint_{[0,2\pi]^2} \abs{ \sum_{k=0}^{M} \sum_{\ell =0}^{[\Lambda(k)]} k \ell e^{ikx} e^{i\ell y}} \ud x \ud y,
\]
where 
\[
\Lambda(k) = n - mk, \qquad 
M \leq R, \qquad mM = n + O(1), \qquad n \leq R. 
\]
By the symmetries of translation invariance and complex conjugation we may integrate over the domain $[-\pi, \pi]\times [0, \pi]$ instead.
We  evaluate the $\ell$-sum using \Cref{eqn.sum.kq^k}. 
Recall that $[\Lambda(k)] = \Lambda(k) - \expect{\Lambda(k)}$.
We thus have 
\[
\begin{aligned}
	\sum_{\ell=0}^{[\Lambda(k)]} \ell e^{i\ell y} 
	& = \frac{e^{iy}}{(e^{iy} - 1)^2} \left[e^{i[\Lambda(k)] y} ([\Lambda(k)] e^{iy} - [\Lambda(k)] - 1) + 1\right]
	\\
	& = \frac{e^{iy}}{(e^{iy} - 1)^2} \left[
		\left(e^{i\Lambda(k) y}(\Lambda(k) e^{iy} - \Lambda(k) - 1) + 1\right)
		\right.
	\\ & \qquad 
		+ \left(e^{-i\expect{\Lambda(k)}y} - 1\right)\left(e^{i\Lambda(k) y}(\Lambda(k) e^{iy} - \Lambda(k) - 1)+ 1\right)
	\\ & \qquad 
		\left.
		- \left(\expect{\Lambda(k)}e^{i(\Lambda(k) - \expect{\Lambda(k)})y}(e^{iy} - 1) + (e^{-i\expect{\Lambda(k)}y} - 1)\right)
		\right]
	\\ & =: (\textrm{I}) + (\textrm{II}) + (\textrm{III}).
\end{aligned}
\]
The third summand $(\textrm{III})$ may be calculated as
\[
	\frac{- e^{iy}}{(e^{iy} - 1)^2}\left[ \expect{\Lambda(k)} \left(e^{i[\Lambda(k)]y} - 1\right) iy + O(y^2)\right].
\]
The factor $\frac{- e^{iy}}{(e^{iy} - 1)^2}$ may be bounded by $1/y^2$.
For this term we split the $y$-integral according to whether $y \leq 1/n$ or $y \geq 1/n$.
For $y \leq 1/n$ we expand additionally $e^{i[\Lambda(k)]y} - 1 = O(ny)$. We get the contribution
\[
\int_{-\pi}^\pi \ud x \int_{0}^{1/n} \ud y \frac{1}{y^2}
	\abs{ \sum_{k=0}^{M} k e^{ikx}\left[ \expect{\Lambda(k)} \left(e^{i[\Lambda(k)]y} - 1\right)y + O(y^2)\right]}
	\lesssim \frac{1}{n} M^2 n + \frac{1}{n} M^2 \lesssim R^2.
\]
For $y \geq 1/n$ we bound $e^{i[\Lambda(k)]y} - 1 = O(1)$. We get 
\[
\int_{-\pi}^\pi \ud x \int_{1/n}^\pi \ud y \frac{1}{y^2}\abs{ \sum_{k=0}^{M} k e^{ikx}\left[ \expect{\Lambda(k)} \left(e^{i[\Lambda(k)]y} - 1\right)y + O(y^2)\right]}
	\lesssim (\log n) M^2 + M^2 \lesssim R^2 \log R.
\]
For the second summand $(\textrm{II})$ we again split the integral according to whether $y \leq 1/n$ or $y \geq 1/n$.
If $y \leq 1/n$ we have 
\[
\frac{e^{iy}}{(e^{iy} - 1)^2} \left(e^{-i\expect{\Lambda(k)}y} - 1\right)\left(e^{i\Lambda(k) y}(\Lambda(k) e^{iy} - \Lambda(k) - 1)+ 1\right) 
= O(\Lambda(k)^2 y) = O(n).
\]
Hence this contributes the term 
\[
	\int_{-\pi}^\pi \ud x \int_0^{1/n} \ud y \abs{ \sum_{k=0}^M k e^{ikx} O(\Lambda(k)^2 y)} 
	\lesssim \frac{1}{n} M^2 n \lesssim R^2.
\]
For $y \geq 1/n$ we write 
\[
\begin{aligned}
	& \frac{e^{iy}}{(e^{iy} - 1)^2} \left(e^{-i\expect{\Lambda(k)}y} - 1\right)\left(e^{i\Lambda(k) y}(\Lambda(k) e^{iy} - \Lambda(k) - 1)+ 1\right)
	\\ & \quad 
		= \frac{e^{iy}}{(e^{iy} - 1)^2} \sum_{\nu =1}^{\infty} \frac{(-iy)^\nu }{\nu !} 
			\expect{\Lambda(k)}^\nu  \left(e^{i\Lambda(k) y}(\Lambda(k) e^{iy} - \Lambda(k) - 1)+ 1\right).
\end{aligned}
\]
Again we bound the factor $\frac{e^{iy}}{(e^{iy} - 1)^2}$ as $1/y^2$.
We treat each summand 
similarly as in \Cref{lem.lemma.3.6,lem.lemma.3.9}
(or rather, the $2$-dimensional version of these as used in \Cref{sec.induction.d=2}.)
Completely analogously to \Cref{lem.lemma.3.6} we see that for any integer $r\geq 1$ we have 
\[
\begin{aligned}
& 
	\iint \abs{\sum_{k=0}^M k e^{ikx} \frac{e^{i\Lambda(k) y}(\Lambda(k) e^{iy} - \Lambda(k) - 1)+ 1}{y^2} \expect{\Lambda(k)}^r} \ud x \ud y
\\ & \quad 
	\leq \iint \abs{\sum_{k=0}^M k e^{ikx} \frac{e^{i\Lambda(k) y}(\Lambda(k) e^{iy} - \Lambda(k) - 1)+ 1}{y^2}} \ud x \ud y 
\\& \qquad 
	+ \sum_{\nu = 1}^{r}  \binom{r}{\nu} \iint \abs{\sum_{k=0}^M k e^{ikx} \frac{e^{i\Lambda(k) y}(\Lambda(k) e^{iy} - \Lambda(k) - 1)+ 1}{y^2} \{\Lambda(k)\}^\nu}\ud x \ud y,
\end{aligned}
\]
for either definition of $\expect{\cdot}$ (i.e. either $\expect{\cdot} = \{\cdot\}$ or $\expect{\cdot} = \cdot - \lceil\cdot\rceil$).
Also the application of \Cref{lem.lemma.3.9} is analogous to its use in \Cref{sec.induction.d=2}.
There is at most one $k$ such that $\tilde\Lambda (k) \in \Z$ for the appropriate rational approximation $\tilde\Lambda$ of $\Lambda$. 
Using that $e^{iy} = 1 + O(y)$ we obtain the 
bound
\[  
\abs{k e^{ikx}\frac{e^{i\Lambda(k) y}(\Lambda(k) e^{iy} - \Lambda(k) - 1)+ 1}{y^2}}
\lesssim M\frac{ny + 1}{y^2} \lesssim \frac{R^2}{y} + \frac{R}{y^2},
\]
valid for any $k$. 
Hence this error term contributes at most 
\[
	\int_{-\pi}^\pi \ud x \int_{1/n}^\pi \ud y \left(\frac{R^2}{y} + \frac{R}{y^2}\right)
	\lesssim R^2\log n + Rn \lesssim R^2 \log R.
\]
The rest of the argument in \Cref{lem.lemma.3.9} is the same.
We  conclude that we may bound the contribution of the term $(\textrm{II})$ by 
that of $({\textrm I})$ up to a factor of $\log Q$ and an error $R^2\log R$, i.e.
\[
	\iint \abs{\sum_k k e^{ikx} (\textrm{II})} \ud x \ud y \lesssim \log Q \iint \abs{\sum_k k e^{ikx} ({\textrm I})} \ud x \ud y + R^2 \log R.
\]
In particular  
\begin{equation}
\label{eqn.compute.kl.integral.reduce.bracket}
\begin{aligned}
& \iint_{[0,2\pi]^2} \abs{ \sum_{k=0}^{M} \sum_{\ell =0}^{[\Lambda(k)]} k \ell e^{ikx} e^{i\ell y}} \ud x \ud y
\\ & \quad \lesssim \log Q \int_{-\pi}^\pi \ud x \int_0^\pi \ud y 
	\abs{ \sum_{k=0}^{M} k e^{ikx} \frac{e^{i\Lambda(k) y}(\Lambda(k) e^{iy} - \Lambda(k) - 1) + 1}{y^2}} 
	+ R^2\log R. 
\end{aligned}
\end{equation}
In order to evaluate the integral on the right-hand side, we split the integration domain into $5$ regions, see \Cref{fig.split.integrals.sum.kl}.
\[
\begin{aligned}
	I_1 & = \{ |x| \leq 2/M, y \leq 2/n\},
	&&&
	I_2 & = \{|x| \leq 1/M, y \geq 2/n\},
\\
	I_3 & = \{ y \leq 1/n, |x| \geq 2/M\},
	&&&
	I_4 & = \{y\geq 1/n, |x| \geq 1/M, |x-my| \geq 1/M\}
\\
	I_5 & = \{|x-my| \leq 1/M, (x,y)\notin I_1\}.
\end{aligned}
\]
We will be a bit sloppy with notation and refer to both the domain of integration 
and the value of the integration over that domain by $I_j$.
\begin{figure}[htb]
\center 
\begin{tikzpicture}[line cap=round,line join=round,>=triangle 45,x=1cm,y=1cm]
\draw  (-6.,0.)-- (6.,0.);
\draw  (6.,0.)-- (6.,6.);
\draw  (6.,6.)-- (-6.,6.);
\draw (-6,0) -- (-6,6);
\draw  (2.,1.5)-- (6.,1.5);
\draw (-6,1.5)-- (-2,1.5);
\draw  (6.,1.5)-- (6.,0.);
\draw  (2.,0.)-- (2.,3.);
\draw (-2,0) -- (-2,3);
\draw  (2.,1.5)-- (5.,6.);
\draw (-2,3) -- (2,3);
\draw (1,3) -- (1,6);
\draw (-1,3) -- (-1,6);
\draw (1,3) -- (3,6);
\draw (6,0) node[anchor=north west] {$x$};
\draw (0,6) node[anchor=south east] {$y$};
\draw (2,0) node[anchor=north] {$2/M$};
\node[anchor=north] at (-2,0) {$-2/M$};
\draw (0,3) node[anchor=north east] {$2/n$};
\draw (4,6) node[anchor=south] {$x=my$};
\node (1) at (0,1.5) {$I_1$};
\node (2) at (0,4.5) {$I_2$};
\draw (4,0.75) node {$I_3$};
\node at (-4,0.75) {$I_3$};
\draw (5,3) node {$I_4$};
\node at (-4,4) {$I_4$};
\node at (1.5,5) {$I_4$};
\node (5) at (3,4.5) {$I_5$};
\draw[dashed] (0,0) -- (5) -- (4,6);
\draw[dashed] (0.,0.) -- (1) -- (2) -- (0.,6.);
\end{tikzpicture}
\caption{Decomposition of the domain $[-\pi,\pi]\times [0,\pi]$ into different regions.
}
\label{fig.split.integrals.sum.kl}
\end{figure}
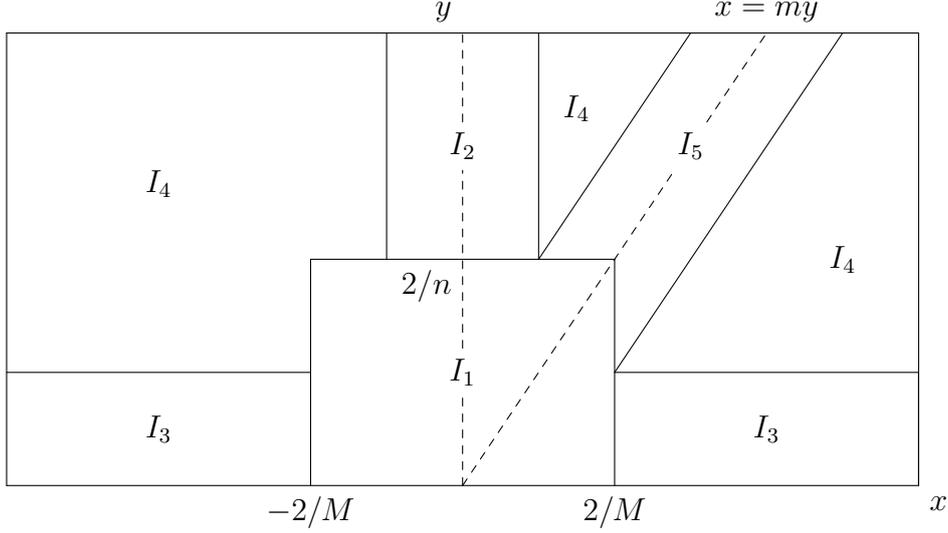

$(I_1)$.
We expand 
\[
(*) := \sum_{k=0}^{M} k e^{ikx} \frac{e^{i\Lambda(k) y}(\Lambda(k) e^{iy} - \Lambda(k) - 1) + 1}{y^2}
\] 
(or rather the numerator) to second order in $y$.
Using that $\Lambda(k) = O(n)$ we get that $(*)  \lesssim M^2n^2$.
Thus the integral gives 
\[
	I_1 \lesssim \int_{-2/M}^{2/M} \ud x \int_0^{2/n} \ud y M^2 n^2 \lesssim Mn \lesssim R^2.
\]

$(I_2)$.
We expand $e^{iy} = 1 + O(y)$ in $(*)$. Then 
$(*) \lesssim \frac{M^2n}{y} + \frac{M^2}{y^2}$. The integral is then 
$I_2 \lesssim R^2\log R$.

$(I_3,I_4,I_5)$.
For the remaining integrals we use the explicit formula for $\Lambda(k) = n - mk$. 
Then 
\begin{equation}
\label{eqn.explicit.evaluation.sum.kl}
\begin{aligned}
 (*)
 	& 
 	= 
 	\frac{1}{y^2}\sum_{k=0}^{M} k e^{ikx} (e^{i\Lambda(k) y}(\Lambda(k) e^{iy} - \Lambda(k) - 1) + 1)
 \\ &  
 	= 
 	\frac{1}{y^2}\sum_{k=0}^{M} \left(k e^{ikx} + k e^{ik(x-my)} e^{iny}(ne^{iy} - n - 1) + k^2 e^{ik(x-my)} e^{iny} (m-me^{iy})\right)
 \\ & 
 	= 
 	\frac{1}{y^2} \left( -i\partial D(x) -i e^{iny}(ne^{iy} - n - 1) \partial D(x-my) + m e^{iny} (e^{iy} - 1)\partial^2 D(x-my)\right),
\end{aligned}
\end{equation}
where we introduced $D(z) = \sum_{k=0}^M e^{ikz} = \frac{e^{i(M+1)z} - 1}{e^{iz} - 1}$. 
From \Cref{eqn.sum.kq^k} we conclude that we may bound derivatives of $D$ as 
\begin{equation}\label{eqn.bound.partialD}
	|\partial D(z)| \lesssim \frac{M}{z} + \frac{1}{z^2},
	\quad 
	|\partial^2 D(z)| \lesssim \frac{M^2}{z} + \frac{M}{z^2} + \frac{1}{z^3},
	\quad 
	|\partial^3 D(z)| \lesssim \frac{M^3}{z} + \ldots + \frac{1}{z^4}.
\end{equation}

$(I_3)$. 
We have $y \leq 1/n$ and $|x| \geq 2/M$. 
We expand \Cref{eqn.explicit.evaluation.sum.kl} to second order in $y$. 
Expanding first the exponentials and then derivatives of $D$ where needed we get 
\[
\begin{aligned}
	\eqref{eqn.explicit.evaluation.sum.kl} 
	& = \frac{1}{y^2}\Big(-i\partial D(x) + i\partial D(x-my) + imy \partial^2 D(x-my) 
	\\ & \qquad 
		+ O(n^2y^2\partial D(x-my)) + O(nmy^2\partial^2 D(x-my))\Big)	
	\\
	& \lesssim n^2 \sup_y|\partial D(x-my)| + nm \sup_y|\partial^2 D(x-my)| + m^2 \sup_y |\partial^3 D(x-my)|.
\end{aligned}
\]
Now we use the bounds \Cref{eqn.bound.partialD} and use that $z := x - my$ has $|z|\geq |x| - m/n = |x| - 1/M + O(1/(Mn))$ 
(recall that $mM = n + O(1)$) and $|x| \geq 2/M$.
Thus 
\[
	I_3 \lesssim \frac{1}{n} \int_{1/M}^\pi \ud z \left(n^2 |\partial D(z)| + nm |\partial^2 D(z)| + m^2  |\partial^3 D(z)|\right)
	\lesssim R^2 \log R.
\]

$(I_4)$. We expand the exponentials $e^{iy} = 1 + O(y)$. 
Then 
\[
	\abs{\eqref{eqn.explicit.evaluation.sum.kl}}
	\leq \frac{\abs{\partial D(x)}}{y^2} + \frac{n|\partial D(x-my)|}{y} + \frac{|\partial D(x-my)|}{y^2} + \frac{m |\partial^2 D(x-my)|}{y}.
\]
Using the bounds \Cref{eqn.bound.partialD} as before 
and noting that $|x|\geq 2/M$ and $z = x-my$ has $|z|\geq 1/M$
one easily sees that $I_4 \lesssim R^2 (\log R)^2$.


$(I_5)$. 
Again, expanding the exponentials $e^{iy} = 1 + O(y)$ we have as for $I_4$ that 
\[
	\abs{\eqref{eqn.explicit.evaluation.sum.kl}}
	\leq \frac{\abs{\partial D(x)}}{y^2} + \frac{n|\partial D(x-my)|}{y} + \frac{|\partial D(x-my)|}{y^2} + \frac{m |\partial^2 D(x-my)|}{y}.
\]
We use the bounds 
\[
	|\partial D(z)| = \abs{\sum_{k=0}^M k e^{ikz}} \leq M^2, 
	\qquad 
	|\partial^2 D(z)| = \abs{\sum_{k=0}^M k^2 e^{ikz}} \leq M^3.
\]
Thus 
\[
	I_5 \lesssim \frac{1}{M} \int_{1/n}^{\pi} \frac{M^2}{y^2} + \frac{nM^2 + mM^3}{y} \ud y \lesssim R^2 \log R.
\]
We conclude that 
\[
\int_{-\pi}^\pi\ud x \int_0^\pi \ud y \abs{ \sum_{k=0}^{M} k e^{ikx} \frac{e^{i\Lambda(k) y}(\Lambda(k) e^{iy} - \Lambda(k) - 1) + 1}{y^2}} 
\lesssim R^2 (\log R)^2.
\]
Together with \Cref{eqn.compute.kl.integral.reduce.bracket} this concludes the proof.
\end{proof}

\printbibliography
\end{document}